\newtheorem{theorem}{Theorem}[section]
\newtheorem{lemma}{Lemma}[section]
\newtheorem{definition}{Definition}[section]
\newtheorem{corollary}{Corollary}[section]
\newtheorem{proposition}{Proposition}[section]
\newtheorem{assumption}{Assumption}[section]
\newtheorem{problem}{Problem}[section]
\newtheorem{standard}{Standard}[section]
\newtheorem{question}{Question}[section]
\newtheorem{example}{Example}[section]
\newtheorem{remark}{Remark}[section]
\newtheorem*{mainmessage}{Main Message}
\newtheorem*{namedthm}{\namedthmname}
\newcounter{namedthm}
\newenvironment{named}[1]
  {\def\namedthmname{#1}%
   \refstepcounter{namedthm}%
   \namedthm\def\@currentlabel{#1}}
  {\endnamedthm}
\newcommand{\eqn}[1]{(\ref{eqn:#1})}
\newcommand{\eq}[1]{(\ref{eq:#1})}
\newcommand{\prb}[1]{\hyperref[prb:#1]{Problem~\ref*{prb:#1}}}
\newcommand{\thm}[1]{\hyperref[thm:#1]{Theorem~\ref*{thm:#1}}}
\newcommand{\cor}[1]{\hyperref[cor:#1]{Corollary~\ref*{cor:#1}}}
\newcommand{\defn}[1]{\hyperref[defn:#1]{Definition~\ref*{defn:#1}}}
\newcommand{\hypo}[1]{\hyperref[hypo:#1]{Hypothesis~\ref*{hypo:#1}}}
\newcommand{\lem}[1]{\hyperref[lem:#1]{Lemma~\ref*{lem:#1}}}
\newcommand{\prop}[1]{\hyperref[prop:#1]{Proposition~\ref*{prop:#1}}}
\newcommand{\assum}[1]{\hyperref[assum:#1]{Assumption~\ref*{assum:#1}}}
\newcommand{\stand}[1]{\hyperref[stand:#1]{Standard~\ref*{stand:#1}}}
\newcommand{\fig}[1]{\hyperref[fig:#1]{Figure~\ref*{fig:#1}}}
\newcommand{\tab}[1]{\hyperref[tab:#1]{Table~\ref*{tab:#1}}}
\newcommand{\algo}[1]{\hyperref[algo:#1]{Algorithm~\ref*{algo:#1}}}
\renewcommand{\sec}[1]{\hyperref[sec:#1]{Section~\ref*{sec:#1}}}
\newcommand{\append}[1]{\hyperref[append:#1]{Appendix~\ref*{append:#1}}}
\newcommand{\fac}[1]{\hyperref[fac:#1]{Fact~\ref*{fac:#1}}}
\newcommand{\lin}[1]{\hyperref[lin:#1]{Line~\ref*{lin:#1}}}
\newcommand{\fnote}[1]{\hyperref[fnote:#1]{Footnote~\ref*{fnote:#1}}}
\newcommand{\rmk}[1]{\hyperref[rmk:#1]{Remark~\ref*{rmk:#1}}}
\newcommand{\ques}[1]{\hyperref[ques:#1]{Question~\ref*{ques:#1}}}
\newcommand{\examp}[1]{\hyperref[examp:#1]{Example~\ref*{examp:#1}}}
\newcommand{\slog}{\hyperref[slog:1]{Main Message}}
\newcommand{\arxiv}[1]{arXiv:\href{https://arxiv.org/abs/#1}{\ttfamily{#1}}}
\def\>{\rangle}
\def\<{\langle}
\def\trans{^{\top}}
\newcommand{\R}{\mathbb{R}}
\newcommand{\C}{\mathbb{C}}
\newcommand{\N}{\mathbb{N}}
\newcommand{\y}{\rangle}
\newcommand{\z}{\langle}
\newcommand{\E}{\mathcal{E}}
\newcommand{\F}{\mathcal{F}}
\newcommand{\bra}[1]{\left\langle #1 \right|}
\newcommand{\ket}[1]{\left| #1 \right\rangle}
\newcommand{\ip}[2]{\left\langle{#1}\right|\left.\!{#2}\right\rangle}
\renewcommand{\d}{\mathrm{d}}
\newcommand{\tr}{\mathrm{tr}}
\renewcommand{\bf}{\mathbf}
\begin{document}
\title{On Quantum Speedups for Nonconvex Optimization via Quantum Tunneling Walks}

\author{Yizhou Liu}
\affiliation{Department of Engineering Mechanics, Tsinghua University, 100084 Beijing, China}
\email{liuyz18@tsinghua.org.cn}
\orcid{0000-0003-2105-0894}

\author{Weijie J. Su}
\affiliation{Department of Statistics and Data Science, University of Pennsylvania}
\email{suw@wharton.upenn.edu}
\homepage{http://stat.wharton.upenn.edu/~suw}

\author{Tongyang Li}
\email{tongyangli@pku.edu.cn}
\homepage{https://www.tongyangli.com}
\affiliation{Center on Frontiers of Computing Studies, Peking University, 100871 Beijing, China}
\affiliation{School of Computer Science, Peking University, 100871 Beijing, China}

\maketitle


\begin{abstract}
Classical algorithms are often not effective for solving nonconvex optimization problems where local minima are separated by high barriers. In this paper, we explore possible quantum speedups for nonconvex optimization by leveraging the \emph{global} effect of quantum tunneling. Specifically, we introduce a quantum algorithm termed the quantum tunneling walk (QTW) and apply it to nonconvex problems where local minima are approximately global minima.
We show that QTW achieves quantum speedup over classical stochastic gradient descents (SGD) when the barriers between different local minima are high but thin and the minima are flat. 
Based on this observation, we construct a specific double-well landscape, where classical algorithms cannot efficiently hit one target well knowing the other well but QTW can when given proper initial states near the known well.
Finally, we corroborate our findings with numerical experiments.
\end{abstract}




\section{Introduction}\label{sec:intro}
Nonconvex optimization plays a central role in machine learning because the training of many modern machine learning models, especially those from deep learning, requires optimization of nonconvex loss functions. Among algorithms for solving nonconvex optimization problems, stochastic gradient descent (SGD) and its variants, such as Adam~\cite{ADAM}, Adagrad~\cite{duchi2011adagrad}, etc., are widely used in practice. In theory, their provable guarantee has been studied from various perspectives.

In this paper, we adopt the perspective of studying gradient descents via the analysis of their behavior in continuous-time limits as differential equations, following a recent line of work in~\cite{su2016differential,wibisono2016variational,jordan2018dynamical,shi2021understanding}. In particular, let $f\colon\R^{d}\to\R$ be the objective function constructed via all data. The SGD $x_{k+1} = x_k - s\tilde{\nabla} f(x_k)$
with learning rate $s$ and estimated gradient, $\tilde{\nabla} f$, evaluated from a mini-batch can be modeled by $x_{k+1} = x_k - s\nabla f(x_k) - s\xi_{k}$ 
with normally distributed noise $\xi_{k}$ (this is also known as the unadjusted Langevin dynamics). In the continuous-time limit, we can obtain a learning-rate-dependent stochastic differential equation (SDE), approximating the discrete algorithm:
\begin{align}\label{eq:SDE}
    \d x = -\nabla f(x)\d t + \sqrt{s}\d W,
\end{align}
where $W$ is a standard Brownian motion. Such approach enjoys clear intuition from physics. In particular, Eq.~\eq{SDE} is essentially a non-equilibrium thermodynamic process: gradient descent provides driving forces, the stochastic term serves as thermal motions, and a combination of these two ingredients enables convergence to the thermal distribution, also known as the Gibbs distribution. A systematic study of Eq.~\eq{SDE} was conducted in a recent work by~\cite{SSJ20}. See more details in \sec{classicalpre}.

Nevertheless, algorithms based on gradient descents also have limitations because they only have access to local information about the function, which suffers from fundamental difficulties when facing landscapes with intricate local structures such as vanishing gradient~\cite{hochreiter1998vanishing}, nonsmoothness~\cite{KS21}, negative curvature~\cite{CB21}, etc. In terms of optimization, we are mostly interested in points with zero gradients, and they can be categorized as \emph{saddle points}, \emph{local optima}, and \emph{global optima}. It is known that variants of SGD can escape from saddle points \cite{ge2015escaping,jin2017escape,allen2018neon2,fang2018spider,fang2019sharp,jin2019stochastic,zhang2021escape}, but one of the most prominent issues in nonconvex optimization is to escape from local minima and reach global minima. Up to now, theoretical guarantee of escaping from local minima by SGD has only been known for some special nonconvex functions~\cite{kleinberg2018alternative}. In general, SGD has to climb through high barriers in landscapes to reach global minima, and this is typically intractable using only gradients that descend the function. In all, fundamentally different ideas, especially those that explores beyond local information, are expected to derive better algorithms for nonconvex optimization in general.

This paper aims to study nonconvex optimization via dynamics from \emph{quantum mechanics}, which can leverage global information about a function $f\colon\R^{d}\to\R$. The fundamental rule in quantum mechanics is the \emph{Schr{\"o}dinger Equation:}\footnote{The standard Schr{\"o}dinger Equation in quantum mechanics is typically written as $i \hbar\frac{\partial}{\partial t} \Phi = \left(-\frac{\hbar^2}{2m}\Delta + f(x)\right)\Phi$. In this paper, we use the form in \eq{Schrodingereq} by setting the Planck constant $\hbar = 1$ and $h = \hbar/\sqrt{2m}$ which is a variable. See also \sec{quantumpre}.}
\begin{align}
    i \frac{\partial}{\partial t} \Phi = \left(-h^2\Delta + f(x)\right) \Phi,
    \label{eq:Schrodingereq}
\end{align}
where $i$ is the imaginary unit, $h$ is defined as the \emph{quantum learning rate}, $\Delta=\sum_{j=1}^{d}\frac{\partial^2}{\partial x_{j}^2}$ is the Laplacian, and $\Phi(t,x)\colon\R\times\R^{d}\to\C$ is a quantum wave function satisfying $\int_{\R^{d}}|\Phi(t,x)|^2\d x=1$ for any $t$.
Measuring the wave function at time $t$, $|\Phi(t,x)|^2$ is the probability density of finding the particle at position $x$.
In Eq.~\eq{Schrodingereq}, the time evolution of wave functions is governed by the Hamiltonian\footnote{In this paper, we refer Hamiltonian to either the total energy of a system or the operator corresponding to the total energy of the system, depending on the context.} $H:=-h^2\Delta + f$, where $-h^2\Delta$ corresponds to the classical kinetic energy and $f$ the potential energy.

In sharp contrast to classical particles, quantum wave functions can tunnel through high potential barriers with significant probability, and this is formally known as \emph{quantum tunneling.} Take a one-dimensional double-well potential $f\colon\R\to\R$ in \fig{1-dimexample} as an example, the goal is to move from the local minimum $x_{-}$ in the left region to the local minimum $x_{+}$ in the right region. Classically, the SDE in \eq{SDE} has to climb through the barrier with height $H_{f}$, and it can take $\exp(\Theta(H_{f}/s))$ time to reach $x_{+}$ (see Section 3.4 of \cite{SSJ20}).

Quantumly, we denote $\Phi_-(x)$ and $\Phi_+(x)$ to be the ground state (i.e., the eigenstate corresponds to the smallest eigenvalue) of the left and the right region, respectively. These states $\Phi_\pm(x)$ are localized near $x_{\pm}$, respectively. We let the wave function be initialized at $\Phi_-(x)$, i.e., $\Phi(0,x)=\Phi_-(x)$.
Under proper conditions, two eigenfunctions with eigenvalues $E_0$ and $E_1$ of $H=-h^2\Delta + f$ can be represented by superposition states
\begin{align}
\Phi_0(x):=(\Phi_+(x) + \Phi_-(x))/\sqrt{2},\\
\Phi_1(x):=(\Phi_+(x) - \Phi_-(x))/\sqrt{2},
\end{align}
respectively. Note that $\Phi_0(x)$ and $\Phi_1(x)$ are \emph{not} localized because they have probability $1/2$ of reaching both $x_{+}$ and $x_{-}$. Specifically, given $\Phi(0,x) = \Phi_-(x) = (\Phi_0(x) + \Phi_1(x))/\sqrt{2}$,
and because the dynamics of the Schr\"odinger equation \eq{Schrodingereq} is $\Phi(t,x) = e^{-iHt}\Phi(0,x)$, we have
\begin{align}
\Phi(t,x) = (e^{-iE_0t}\Phi_0(x) + e^{-iE_1t}\Phi_1(x))/\sqrt{2}.
\end{align}
As a result, after time $t$ where $|E_0-E_1|t =\pi$, we have $\Phi(t,x)\propto \Phi_+(x)$ localized near $x_{+}$. Intuitively, this can be viewed as global evolution and superposition of quantum states, which is capable of acquiring global information of the function $f$ and explains why for various choices of $f$, this quantum evolution time $t$ is much shorter than the classical counterpart by SDE which only takes gradients locally.

\begin{figure}
\centering
\includegraphics[width=0.5\textwidth]{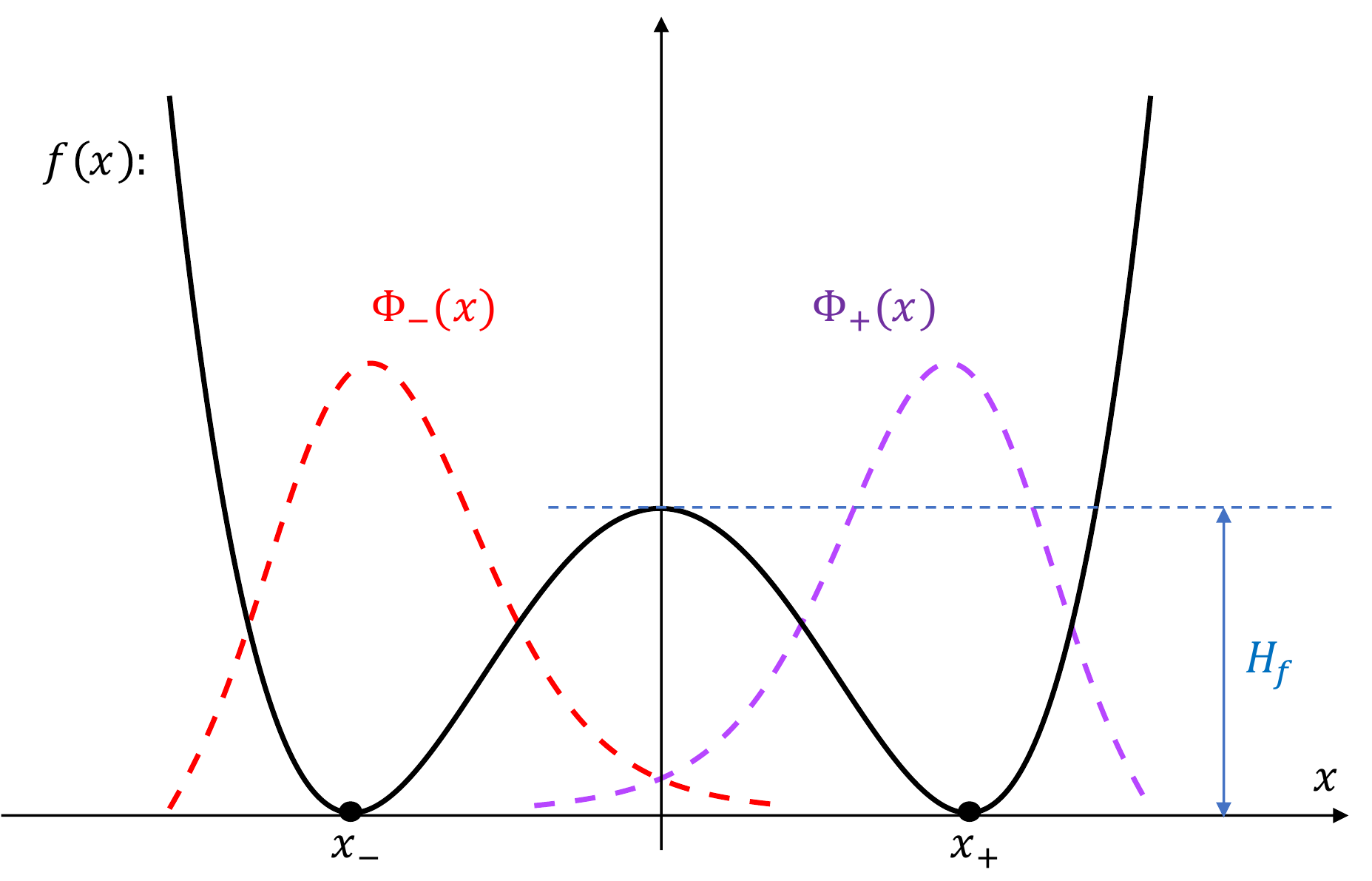}
\caption{An example of quantum tunneling in a double-well optimization function.}
\label{fig:1-dimexample}
\end{figure}

It is a natural intuition to design quantum algorithms using quantum tunneling. Previously, \cite{FGS94,MAL16,CH16,BZ18} studied the phenomenon of quantum tunneling in quantum annealing algorithms \cite{FGS94,FGG01}. However, most of these results studied Boolean functions, which is essentially different from continuous optimization. In addition, quantum annealing focused on ground state preparation instead of the dynamics for quantum tunneling. Up to now, it is in general unclear when we can design  quantum algorithms for optimization by adopting quantum tunneling. Therefore, we ask:

\begin{question}\label{ques:configuration}
On what kind of landscapes can we design algorithms efficiently using quantum tunneling?
\end{question}

To answer this question, we need to figure out specifications of the quantum algorithm, such as the initialization of the quantum wave packet, the landscape's parameters, the measurement strategy, etc.

The next question is to understand the advantage of quantum algorithms based on quantum tunneling. A main reason of studying quantum computing is because it can solve various problems with significant speedup compared to classical state-of-the-art algorithms. In optimization, prior quantum algorithms have been devoted to semidefinite programs~\cite{brandao2016quantum,vanApeldoorn2017quantum,vanApeldoorn2018SDP,brandao2017SDP}, convex optimization~\cite{vanApeldoorn2020optimization,chakrabarti2020optimization}, escaping from saddle points~\cite{zhang2021quantum}, polynomial optimization~\cite{rebentrost2019quantum,li2021optimizing}, finding negative curvature directions~\cite{zhang2019quantum}, etc., but quantum algorithms for nonconvex optimization with provable guarantee in general is widely open as far as we know. Here we ask:

\begin{question}\label{ques:accelaration}
When do algorithms based on quantum tunneling give rise to quantum speedups?
\end{question}

\paragraph{Contributions.}
We systematically study quantum algorithms based on quantum tunneling for a wide range of nonconvex optimization problems. Throughout the paper, we consider benign nonconvex landscapes where \emph{local minima are (approximately) global minima.} We point out that many common nonconvex optimization problems indeed yield objective functions satisfying such benign behaviors, such as tensor decomposition~\cite{ge2015escaping,GM20}, matrix completion~\cite{ge2016matrix,ma2018implicit}, and dictionary learning~\cite{qu2019analysis}, etc. In general, nonconvex problems with discrete symmetry satisfy this assumption, see the surveys by~\cite{Ma21,ZQW21}.

In this paper, we demonstrate the power of quantum computing for the following main problem:
\begin{named}{Main Problem}\label{prb:main}
On a landscape whose local minima are (approximately) global minima, starting from one local minimum, find all local minima with similar function values or find a certain target minimum.
\end{named}
Such a problem is crucial for understanding the \emph{generalization} property of nonconvex landscapes, and in general it also sheds light on nonconvex optimization. First, local minima with similar function values can have dramatically different generalization performance (see Section 6.2.3 of \cite{Sun19}), and solving this \ref{prb:main} can be viewed as a subsequent step of optimization for finding the minimum which generalizes the best. Second, \ref{prb:main} implies the mode connectivity of landscapes, which has been applied to understanding the loss surfaces of various machine learning models including neural networks both empirically~\cite{draxler2018essentially,garipov2018loss} and theoretically~\cite{kuditipudi2019explaining,nguyen2019connected,shevchenko2020landscape}. Third, nonconvex landscapes where the \ref{prb:main} can be efficiently solved can also lead to efficient Monte Carlo sampling, which can be even faster than optimization~\cite{ma2019sampling,talwar2019computational}.

Landscapes whose local minima are (approximately) global significantly facilitate quantum tunneling.
Roughly speaking, since the total energy during our quantum evolution \eq{Schrodingereq} is conserved, quantum tunneling can only efficiently send a state from one minimum to another minimum with similar values. As a conclusion, if the quantum wave function is initialized near a local minimum, we can focus on quantum tunneling between the local ground state of each well, i.e., the tunneling of the particle from the bottom of a well to that of another well. To avoid complicated discussions on the value of the quantum learning rate $h$, we further restrict ourselves to functions whose local minima are global, which would not provide less intuition. Now, an answer to \ques{configuration} can be given as follows:

\begin{theorem}[Quantum tunneling walks, informal]\label{thm:informalQTW}
On landscapes whose local minima are global minima, we have an algorithm called quantum tunneling walks (QTW) which 
initiates the simulation of Eq.~\eq{Schrodingereq} from the local ground state at a minimum, and
measures the position at a time which is chosen uniformly from $[0,\tau]$.
To solve the \ref{prb:main} we can take
\begin{align}\label{eq:tau-informal}
\tau=O(\mathrm{poly}(N)/\Delta E),
\end{align}
where $N$ is the number of
global minima and $\Delta E$ is the minimal spectral gap of the Hamiltonian restricted in a low-energy subspace.
For sufficiently small $h$, we have
\begin{align}\label{eq:QTW}
    \Delta E = \sqrt{h}(b + O(h))e^{-\frac{S_0}{h}},
\end{align}
where $b, S_0>0$ are constants that depend only on $f$.
\end{theorem}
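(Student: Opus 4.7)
The plan is to reduce the full Schr{\"o}dinger dynamics to a finite-dimensional effective problem on the low-energy subspace, then analyze the time-averaged measurement probability, and finally estimate the effective spectral gap by semiclassical tunneling. First I would construct approximate local ground states $\Phi_i$, $i=1,\dots,N$, each exponentially localized near one of the $N$ global minima $x_i$. Because all local minima are global, their local ground-state energies agree with a common value $E^{*}$ up to exponentially small corrections. Let $\mathcal{V}:=\mathrm{span}\{\Phi_1,\dots,\Phi_N\}$; projecting $H=-h^2\Delta+f$ onto $\mathcal{V}$ gives an $N\times N$ effective matrix whose diagonal entries are close to $E^{*}$ and whose off-diagonal entries $t_{ij}=\langle\Phi_i|H|\Phi_j\rangle$ are the inter-well tunneling couplings. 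Agmon-type decay estimates on eigenfunctions in the classically forbidden region show that each $t_{ij}=O(e^{-S_0/h})$, while any state orthogonal to $\mathcal{V}$ lies above $E^{*}$ by at least $\Theta(h)$ (the vibrational spacing at a well), so the $N$ lowest eigenvalues of $H$ and their eigenstates are faithfully captured by diagonalizing this matrix, producing the spectral gap $\Delta E$ in the sense of the theorem.

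Next I would expand the initial state $\Phi(0)=\Phi_{-}$ as $\sum_k c_k\Psi_k$ in the basis of the low-energy eigenstates $\Psi_k$ of $H$, so that $\Phi(t,x)=\sum_k c_k e^{-iE_k t}\Psi_k(x)$. Its probability density decomposes into diagonal terms $|c_k|^2|\Psi_k(x)|^2$ plus oscillatory cross terms $c_k\bar c_{k'} e^{-i(E_k-E_{k'})t}\Psi_k(x)\bar\Psi_{k'}(x)$; averaging $t$ uniformly over $[0,\tau]$ suppresses each cross term by $O(1/(\tau|E_k-E_{k'}|))$. Choosing $\tau=\mathrm{poly}(N)/\Delta E$ forces the total cross-term error below $1/\mathrm{poly}(N)$, leaving a diagonal contribution that, by the approximately uniform overlaps of $\Phi_{-}$ with the tight-binding eigenstates, yields probability $\Omega(1/\mathrm{poly}(N))$ in a neighborhood of any designated target well. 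This establishes both the success of the QTW sampler and the scaling of $\tau$ asserted in Eq.~\eq{tau-informal}.

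Finally, the asymptotic formula $\Delta E=\sqrt{h}(b+O(h))e^{-S_0/h}$ comes from the semiclassical analysis of $t_{ij}$. The leading exponent is the Agmon distance $S_0=\int\sqrt{f(x)-E^{*}}\,\d x$ along the minimal-action path between two neighboring wells, and the prefactor $b$ is determined by the Hessian of $f$ at the wells and at the saddle on that path (Herring's formula in one dimension and its multidimensional generalization via Laplace's method on the separating hyperplane). The spectral gap of the resulting $N\times N$ tight-binding matrix is then a fixed combinatorial factor times this leading amplitude. The main obstacle is rigorously controlling the $O(h)$ relative error in $b$: one must globally match the exponential tails of the single-well WKB states across the barrier, verify that the tight-binding truncation retains this precision, and evaluate a surface-integral representation of $t_{ij}$ on the separating hyperplane — a delicate semiclassical matched-asymptotic computation that is the technical crux underlying Eq.~\eq{QTW}.
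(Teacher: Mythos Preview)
Your proposal is correct and follows essentially the same route as the paper: the tight-binding reduction to an $N\times N$ interaction matrix via Agmon-localized quasimodes, the time-averaging argument suppressing cross terms by $O(1/(\tau|E_k-E_{k'}|))$, and the WKB surface-integral (Herring) computation of the tunneling amplitude with Laplace's method on the separating hypersurface are precisely the paper's ingredients (its \prop{dimEdimF}, \prop{spePF}, \lem{qtwmixingtime}, \prop{speW}, and \thm{energygap}).

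One small correction is worth making: you justify the $\Omega(1/\mathrm{poly}(N))$ hitting probability by asserting ``approximately uniform overlaps of $\Phi_-$ with the tight-binding eigenstates,'' but this uniformity does not hold in general---already in the paper's tensor-decomposition example (\sec{tendecom}) the limit probabilities $p(\infty,\beta\mid\alpha)$ range from $\Theta(1/N^2)$ to $\Theta(1)$ depending on the target well, and on a path graph the overlaps $|\langle E_k|\Phi_-\rangle|$ are sinusoidal rather than flat. The paper sidesteps this by bounding the mixing error $\sum_{E_k\ne E_{k'}}|c_k c_{k'}|/(|E_k-E_{k'}|\tau)\le N/(\Delta E\,\tau)$ via Cauchy--Schwarz, which gives the $\mathrm{poly}(N)/\Delta E$ bound on $\tau$ without any structural assumption on the overlaps; the per-well hitting probability $p(\infty,j)$ is then left as a landscape-dependent factor absorbed into the $\mathrm{poly}(N)$ of the informal statement. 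Replacing your uniformity heuristic with this Cauchy--Schwarz step makes the argument complete.
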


Formal description of the QTW can be found in \sec{QTW}. Here we highlight two important properties of QTW: Quantum mixing time and quantum hitting time.

\emph{Quantum mixing time (\lem{qtwmixingtime} in \sec{Qmixing}).}
Since quantum evolutions are unitary, QTW never converges, a fundamental distinction from SGD. Therefore, to study the mixing properties of QTW, we follow quantum walk literature~\cite{CCD+03} by employing the measurement strategy, where we measure at $t$ uniformly chosen
from $[0,\tau]$. The measured results obey a distribution which is a function of $\tau$, and when $\tau\to +\infty$, the distribution tends to its limit, $\mu_{\rm QTW}$. 
Quantum mixing time is the minimal $\tau$ enabling us to sample from $\mu_{\rm QTW}$ up to some small error. Alternatively speaking, the mixing time evaluates how fast the distribution yielded by QTW converges.
We prove that $\mu_{\rm QTW}$ concentrates near minima, so that
sampling from $\mu_{\rm QTW}$ repeatedly can give positions of all minima. In addition, $\mu_{\rm QTW}$ gives the upper bound on $\tau$ in \eq{tau-informal}.

\emph{Quantum hitting time (\lem{qtwhitting} in \sec{Qhitting})}.
Hitting time is the duration it takes to hit a target region (usually a neighborhood of some minimum).
Quantum hitting time is the minimum evolution time needed for hitting the region of interest once. Despite this straightforward intuition, the formal definition of
quantum hitting time is very different from that of classical hitting time.
Intuitively, repeatedly sampling from $\mu_{\rm QTW}$ can ensure the hitting
of neighborhoods of particular minima, and thus we can use the mixing time to bound the hitting time.
In short, to solve the \ref{prb:main}, we bound the quantum mixing and hitting time to obtain \thm{informalQTW}.

The minimal spectral gap $\Delta E$ in \thm{informalQTW} is calculated in \append{interactionmatrix}.
The quantity $S_0$ is called the \emph{minimal Agmon distance} between different wells, formally defined in \defn{Agmon-distance},
which is related to both the height and width of potential barriers.
The smaller $h$ is, the closer the measured results are to the minima (i.e., the more accurate QTW is), but the longer evolution time the Schr{\"o}dinger equation takes.

As an application of \thm{informalQTW} and a justification of the practicability of QTW, we show how to use QTW to solve the orthogonal tensor decomposition problem.
This problem asks to find all orthogonal components of a tensor. After transforming into a single optimization problem \cite{CLX+09,Hyv99},
the aim is to find all global minima. We present below a bound on the time cost of QTW on decomposing fourth-order tensors and details can be found in \sec{tendecom}.
\begin{proposition}[Tensor decomposition, informal version of \prop{TenDecTtot}]\label{prop:TenDecTtot-main}
Let $d$ be the dimension of the components of the fourth-order tensor $T \in \mathbb{R}^{d^4}$ satisfying \eq{4-tensor}, $\delta$ be the expected risk yielded by the limit distribution $\mu_{\rm QTW}$, and $\epsilon$ be the
maximum error between $\mu_{\rm QTW}$ and the actual obtained distribution (quantified by $L^1$ norm).
For sufficiently small $\epsilon$ and sufficiently small $\delta$, the
total time $T_{\rm tot}$ for finding all orthogonal components of $T$ by QTW satisfies
\begin{align}
    T_{\rm tot} = O(\mathrm{poly}(1/\delta, e^d, 1/\epsilon)) e^{\frac{(d-1) +o_{\delta}(1)}{2\delta}}.
\end{align}
\end{proposition}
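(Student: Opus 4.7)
The plan is to instantiate \thm{informalQTW} on the landscape of fourth-order orthogonal tensor decomposition and then calibrate the quantum learning rate $h$ against the desired expected-risk level $\delta$. First I would pin down the optimization problem: for a tensor $T$ satisfying~\eq{4-tensor}, the standard reformulation recasts the decomposition as minimization of an objective $f$ whose stationary set consists (up to sign) of the orthogonal components $\{\pm a_i\}_{i=1}^d$, all of which are global minima with identical function value. This places us in the benign regime of the \ref{prb:main} and activates \thm{informalQTW} with $N = 2d$.

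Next I would extract the local data required in~\eq{QTW}. A Taylor expansion of $f$ around each $\pm a_i$ yields a positive-definite Hessian whose eigenvalues determine the semiclassical local ground state $\Phi_{\pm}$ and the prefactor $b$, both of which are $O(1)$ quantities independent of $d$. The geometric input that controls the exponent is the minimal Agmon distance $S_0$ between neighbouring wells. Because the components are orthonormal, the minimizing Agmon-geodesic between two nearest wells lies in the two-dimensional plane they span, so I would evaluate $S_0 = \inf_\gamma\int\sqrt{\max(f-E,0)}\,|\dot\gamma|\,dt$ explicitly on this slice and substitute the resulting constant into $\Delta E = \sqrt{h}(b+O(h))e^{-S_0/h}$.

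With these parameters in hand, \thm{informalQTW} delivers a mixing-time upper bound $\tau = O(\mathrm{poly}(d)/\Delta E)$. To upgrade single-well hitting to the recovery of \emph{all} $2d$ components within total-variation error $\epsilon$, I would invoke a coupon-collector argument: each measurement at a time drawn uniformly from $[0,\tau]$ hits a fixed well with probability $\Omega(1/d)$, so $O(d\log(d/\epsilon))$ independent repetitions suffice, contributing a $\mathrm{poly}(d, 1/\epsilon)$ overhead. The remaining ingredient is the calibration $h \leftrightarrow \delta$: the limit distribution $\mu_{\rm QTW}$ concentrates within a ball of radius $\sim\sqrt{h}$ around each minimum, so the expected risk scales as $\delta\sim h\cdot c_f$ for a Hessian-dependent constant $c_f$; inverting gives $h=\Theta(\delta)$. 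Substituting into $e^{S_0/h}$ and tracking the dimension dependence of $S_0$ on the orthogonal-decomposition landscape, which yields leading coefficient $(d-1)/2$, and then combining with the $\mathrm{poly}(1/\delta, e^d, 1/\epsilon)$ overhead collected above produces the claimed bound.

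The principal difficulty is the explicit evaluation of $S_0$ together with the subleading $o_\delta(1)$ corrections. Since the integrand $\sqrt{\max(f-E,0)}$ degenerates at the classical turning points and the minimizing path is sensitive to the global shape of $f$ away from the wells, obtaining the exact leading constant $(d-1)/2$ requires a careful asymptotic analysis along the optimal geodesic rather than a soft upper bound. Equally, controlling the $o_\delta(1)$ terms demands that the semiclassical approximation used in~\eq{QTW} be tight uniformly as $h, \delta \to 0$, which is the subtle step where the interplay between measurement resolution and the tunnelling rate must be handled carefully.
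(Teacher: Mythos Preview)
Your high-level structure is right, but you have misattributed where the factor $(d-1)$ in the exponent comes from, and this misattribution would not survive the calculation.

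In the paper, the landscape is $f(u)=1-\sum_i(u^\top a_i)^4$ on $\mathbb{S}^{d-1}$, and the minimal Agmon distance between two neighbouring wells $\pm a_i,\pm a_j$ is computed explicitly (\lem{TenDecagmondis}) to be $S_0=\sqrt{2}/2$, a constant \emph{independent of $d$}. The $(d-1)$ in the exponent enters instead through the calibration $h\leftrightarrow\delta$: the Hessian at each minimum satisfies $\tr\sqrt{\nabla^2 f(a_i)}=2(d-1)$, so by \lem{expectedrisk} one has $h=\sqrt{2}\,\delta/((d-1)+o_\delta(1))$, not $h=\Theta(\delta)$ as you wrote. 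Substituting gives $S_0/h=\tfrac{\sqrt{2}}{2}\cdot\tfrac{(d-1)+o_\delta(1)}{\sqrt{2}\,\delta}=\tfrac{(d-1)+o_\delta(1)}{2\delta}$. Your claim that ``$h=\Theta(\delta)$'' and that ``the dimension dependence of $S_0$ \ldots\ yields leading coefficient $(d-1)/2$'' is exactly backwards.

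A second, smaller point: the prefactor $b$ in \eq{QTW} is \emph{not} $O(1)$ independent of $d$. The paper shows (\lem{TenDecw}) that the tunneling amplitude carries a factor $C_1 C_2^{d-1}$, coming from the transport equation for the WKB amplitude along the geodesic on the $(d-1)$-sphere. This is the origin of the $e^d$ inside the $\mathrm{poly}(1/\delta,e^d,1/\epsilon)$ prefactor; with your assumption of an $O(1)$ prefactor you would obtain only $\mathrm{poly}(d)$ there, which does not match the stated bound.
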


Next, we explore the advantages of the quantum tunneling mechanism comparing QTW with SGD and shown by describing landscapes where QTW outperforms SGD.
The time cost for SGD to converge to global minima is loosely $O(1/\lambda_{s})$
and
\begin{align}\label{eq:SGD}
    \lambda_{s} = (a + o(s))e^{-\frac{2H_f}{s}}
\end{align}
by~\cite{SSJ20}. Here, $s$ is the step size or learning rate of SGD. The constants $a>0$ and $H_f>0$ depend only on $f$. Interestingly, running time of QTW and that of SGD have similar form. In \eq{QTW} and \eq{SGD}, there are exponential terms $e^{S_0/h}$ and $e^{2H_f/s}$, respectively. Intuitively, the quantity $H_f$ is the characteristic height of potential barriers, and the quantity $S_0$ depends on not only the height but also the width of potential barriers. For the one-dimensional example in \fig{1-dimexample},
\begin{align}
H_{f}=\max_{\xi\in[x_{-},x_{+}]}f(\xi),\qquad S_{0}=\int_{x_{-}}^{x_{+}} \sqrt{f(\xi)} \d\xi.
\end{align}
(Proof details are given in \sec{onedimexp}.)
Other terms in the bounds, $\mathrm{poly}(N)/\Delta E$ and $1/\lambda_{s}$, are referred to as polynomial coefficients. We make the following comparisons:
\begin{figure}
  \centerline{
  \includegraphics[width=0.94\textwidth]{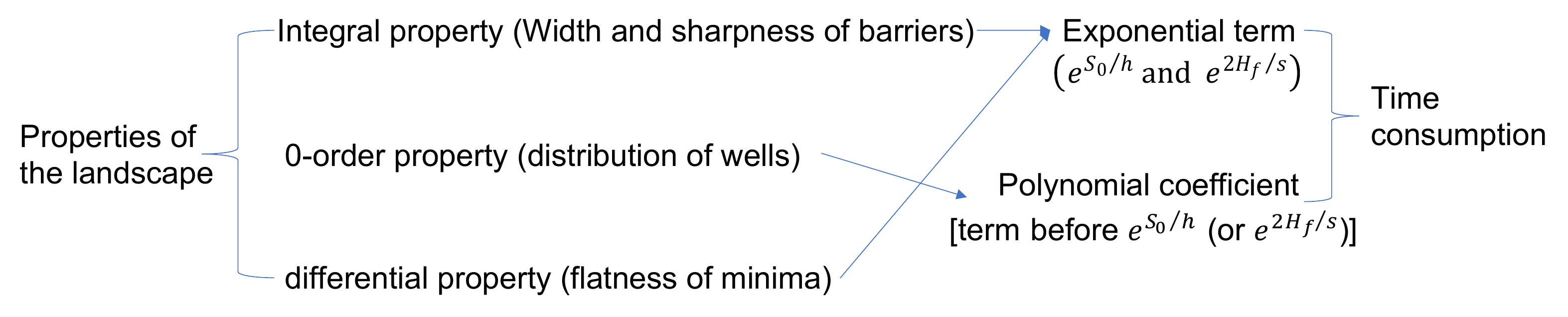}}
    \caption{Flowchart of the \slog.
    }
\label{fig:abstract1}
\end{figure}

\begin{itemize}
\item Regarding the exponential terms $S_0$ and $H_f$, tall barriers means that $H_f$ is large, whereas if the barriers are thin enough, $S_0$ can still be small. This is consistent with the long-standing intuition that tall and thin barriers are easy for tunneling but difficult for climbing \cite{CH16}.
\item Regarding the polynomial coefficients, they are mainly influenced by the distribution or relative positions of the wells. We observe that a symmetric distribution of wells, which can make (the local ground state in) any one well interacts with (the local ground states in) other wells, may reduce the running time of QTW but has no explicit impact on SGD.
\item Flatness of wells is another important factor that influences the running time of both QTW and SGD. We propose standards for comparison (see \sec{standard}), which studies their running time when reaching the same accuracy $\delta$. Same to the effect of $h$, a smaller learning rate $s$ permits more accurate outputs but makes SGD more time consuming. For sufficiently flat minima, $h$ is larger than $s$, leading to a smaller running time for QTW.
\end{itemize}

In summary, we illustrate above observations in \fig{abstract1} and conclude the following:
\begin{mainmessage}[Advantages of the quantum tunneling mechanism, a summary of \sec{standard} and \sec{illustration}]\label{slog:1}
\emph{On landscapes whose local minima are global minima, QTW outperforms SGD on solving the \ref{prb:main} if
barriers of the landscape $f$ is high but thin, wells are distributed symmetrically, and global minima are flat.}
\end{mainmessage}
\begin{remark}
As is indicated above, we compare the costs of QTW and SGD under the same accuracy $\delta$. We introduce two definitions of  accuracy in \sec{standard}: \stand{risk} concerns the expected risk, and \stand{distance} concerns the expected distance to some minima. Mathematically, \stand{risk} and \stand{distance} establish a relationship between the quantum and classical learning rates $h$ and $s$, respectively, enabling direct comparisons.
\end{remark}

Having introduced the general performance of the quantum tunneling walk, we further investigate \ques{accelaration} on some specific scenarios of the \ref{prb:main}. We focus on comparison between query complexities, namely the classical query complexity to local information and the quantum query complexity to the evaluation oracle\footnote{Query complexity of QTW is directly linked to the evolution time, in particular, evolving QTW for time $t$ needs $\tilde{O}(t)$ queries to $U_f$ (see details in \append{quantumsimulation}). As a result, it suffices to analyze the evolution time of QTW. Nevertheless, we state the query complexities for direct comparison.}
\begin{align}\label{eqn:quantumquery}
\hspace{-1.5mm}U_f (\ket{x}\otimes \ket{z}) = \ket{x}\otimes\ket{f(x)+z}\ \ \forall x\in\R^{d},z\in\R.
\end{align}
This is the standard assumption in existing literature on quantum optimization algorithms~\cite{vanApeldoorn2020optimization,chakrabarti2020optimization,zhang2021quantum}.
Different from classical queries that only learn local information of the landscape of $f$, quantum evaluation queries are essentially nonlocal as they can extract information of $f$ at different locations in superposition. Based on this fundamental difference, we are able to prove that QTW can solve a variant of the \ref{prb:main} with exponentially fewer queries than any classical counterparts:

\begin{theorem}\label{thm:provableinformal}
For any dimension $d$, there exists a landscape $f\colon \mathbb{R}^d \to \mathbb{R}$ such that its local minima are global minima, and on which, with high probability, QTW can hit the neighborhood of an unknown global minimum
from the local ground state associated to a known minimum using queries polynomial in $d$, while no
classical algorithm knowing the same minimum can hit the same target region with queries subexponential in $d$.
\end{theorem}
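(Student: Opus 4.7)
The plan is to prove \thm{provableinformal} by explicitly constructing a family of $d$-dimensional landscapes and then pairing a QTW upper bound (via \thm{informalQTW}) with a classical lower bound by a standard information-theoretic / reduction argument. Concretely, I would construct a symmetric double-well $f_v\colon\R^d\to\R$ parametrized by a ``secret'' direction $v$ on the unit sphere. Two global minima $x_-$ and $x_+ = x_- + R v$ are placed at distance $R=\Theta(1)$, separated by a barrier that is \emph{tall} (height $B=\Omega(1)$) but \emph{thin} (width $w=O(1/\mathrm{poly}(d))$), with $f_v$ equal to a fixed, informationless constant $c$ everywhere outside a thin tube of radius $w$ around the segment $[x_-,x_+]$ and small caps around each well. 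One concrete choice: let $\phi\colon\R\to\R$ be a smooth one-dimensional double-well with the desired height and width, let $\pi_v(x)$ denote the projection of $x-x_-$ onto $\mathrm{span}(v)$ and $\pi_v^\perp(x)$ the orthogonal component, and set
\begin{equation*}
    f_v(x) \;=\; \phi(\pi_v(x)) \cdot \chi(\|\pi_v^\perp(x)\|) \;+\; c\bigl(1-\chi(\|\pi_v^\perp(x)\|)\bigr),
\end{equation*}
where $\chi$ is a smooth bump equal to $1$ on $[0,w/2]$ and $0$ on $[w,\infty)$. The known minimum is $x_-$, and the target region is a neighborhood of $x_+$.

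For the QTW upper bound, I would apply \thm{informalQTW} to $f_v$. Because the one-dimensional reduction along $v$ is a double-well of fixed height and width, the minimal Agmon distance satisfies $S_0=\int_{x_-}^{x_+}\sqrt{(f_v)_+}\,\d\xi = O(w\sqrt{B}) = O(1/\mathrm{poly}(d))$; only two wells exist, so $N=2$ and the polynomial prefactor in \eq{tau-informal} is harmless. Choosing the quantum learning rate $h$ so that $S_0/h = O(\log d)$, the spectral gap bound \eq{QTW} gives $\Delta E \ge 1/\mathrm{poly}(d)$, hence an evolution time (and thus query count via the simulation of $U_f$) polynomial in $d$ suffices to produce a measurement in the target well with constant probability. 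One small technical step here is to verify that the cutoff $\chi$ does not spoil the semiclassical estimate \eq{QTW} from \append{interactionmatrix}; since the tube contains the classically allowed region connecting the wells, the Agmon-distance computation is unchanged up to lower-order corrections.

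For the classical lower bound I would use an indistinguishability argument over the family $\{f_v : v\in\mathcal{V}\}$ where $\mathcal{V}$ is a $(1/2)$-separated subset of the unit sphere in $\R^d$ of size $|\mathcal{V}|=\exp(\Omega(d))$. The key geometric observation is that, for any two distinct $v,v'\in\mathcal{V}$, the tubes around $[x_-,x_-+Rv]$ and $[x_-,x_-+Rv']$ intersect only inside an $O(w)$-ball around $x_-$; outside this ball, $f_v$ and $f_{v'}$ are identical (both equal to $c$ or both equal to $\phi(\pi\cdot)$ inside the respective tube, but a classical query point lies in at most one tube w.h.p.). Therefore, for any classical algorithm $\mathcal{A}$ that makes queries to the value/gradient oracle (even adaptively) starting from $x_-$, each query either lies in the common region around $x_-$ --- in which case it returns the same answer for all $v$ --- or lies in a single tube, distinguishing at most one $v\in\mathcal{V}$. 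By a standard Yao-style argument, if $v$ is drawn uniformly from $\mathcal{V}$, then $q$ classical queries identify the correct tube with probability at most $O(q/|\mathcal{V}|)$; hitting the target well requires entering the correct tube, so $q = \exp(\Omega(d))$ is necessary. Averaging implies the existence of a particular $v^\ast$ for which $f_{v^\ast}$ exhibits the claimed separation.

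The main obstacle will be the classical lower bound, not the quantum upper bound: it must hold against \emph{any} classical algorithm with local (value and gradient) access, not merely SGD, and it must also handle the possibility that the algorithm descends into the tube by chance. I would address this by (i) making $c$ exactly constant on a ball of radius $\Omega(1)$ around $x_-$ (outside the small common cap) so that the gradient there is zero and carries no information about $v$, and (ii) bounding the measure of the union of tubes so that a random query has probability $\exp(-\Omega(d))$ of landing in any tube, which together with the packing bound $|\mathcal{V}|=\exp(\Omega(d))$ yields the claimed subexponential lower bound via a union bound over the $q$ queries. A minor secondary issue is matching the smoothness assumptions used in \thm{informalQTW}, which I would handle by mollifying $\chi$ and $\phi$ with a kernel of scale $o(w)$; this changes $S_0$ and $B$ only by lower-order terms and preserves both conclusions.
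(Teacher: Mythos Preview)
Your overall architecture---a symmetric double-well indexed by a secret direction $v$, a tunneling upper bound for QTW, and an information-theoretic lower bound over a packing of directions---matches the paper. The classical lower bound sketch is essentially the same as the paper's measure-concentration argument and should go through.

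The quantum upper bound, however, has a genuine gap. You invoke \eq{QTW}, $\Delta E=\sqrt{h}(b+O(h))e^{-S_0/h}$, and then treat $b$ as a harmless constant once $S_0/h=O(\log d)$. But here $f$ varies with $d$, so $b=b(d)$, and in fact the semiclassical prefactor in \prop{speW} is
\[
b^{(0)}_{jk}=2(2\pi)^{(d-1)/2}\sqrt{\frac{f(x_{jk})}{\det'\nabla^2 d_{jk}(x_{jk})}}\,a_0^{(j)}(x_{jk})a_0^{(k)}(x_{jk}),
\]
which contains the $(d-1)$-dimensional transverse Gaussian, a $(d-1)\times(d-1)$ Hessian determinant, and WKB amplitudes whose transport equation picks up a factor proportional to $d$. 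Each of these is generically exponential in $d$, and nothing in your construction controls them. The paper does \emph{not} make $S_0$ small; it keeps all geometric scales $\Theta(1)$, takes $h=\Theta(1/d)$ (so $e^{-S_0/h}$ is exponentially small in $d$), computes the full tunneling amplitude explicitly (\lem{provablenu}), and then tunes the free parameters so that the combined exponent $\Gamma\cdot d$ collapses to $-\ln\mathrm{poly}(d)$. That balancing act, together with a separate check that $|\nu|$ stays below the local first-excited gap so the two-level picture is valid, is the substance of \prop{provable-poly} and is missing from your argument.

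Two related symptoms of the same gap: (i) your estimate $S_0=O(w\sqrt{B})$ counts only the barrier spike, but the Agmon integral runs from $x_-$ to $x_+$ and includes the well regions; for wells of width $\Theta(1)$ this already forces $S_0=\Theta(1)$. (ii) The fix you propose for the classical side---a constant-$c$ shell of radius $\Omega(1)$ around $x_-$ so that local queries are uninformative---adds an Agmon contribution $\sqrt{c}\cdot\Theta(1)$ to every geodesic leaving $W_-$, again forcing $S_0=\Theta(1)$. In short, the very features that make the landscape classically hard prevent $S_0$ from being $1/\mathrm{poly}(d)$; you are pushed back to the paper's regime, where the prefactor must be computed and tuned rather than ignored. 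Finally, note that your explicit formula for $f_v$ has degenerate transverse Hessian at $x_\pm$ (since $\phi(0)=0$ kills the $\chi''$ term), violating \assum{quantum1}; the ``caps'' would have to repair this, but then the Agmon distance and prefactor computations change again.
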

\begin{figure}
  \centerline{
  \includegraphics[width=0.83\textwidth]{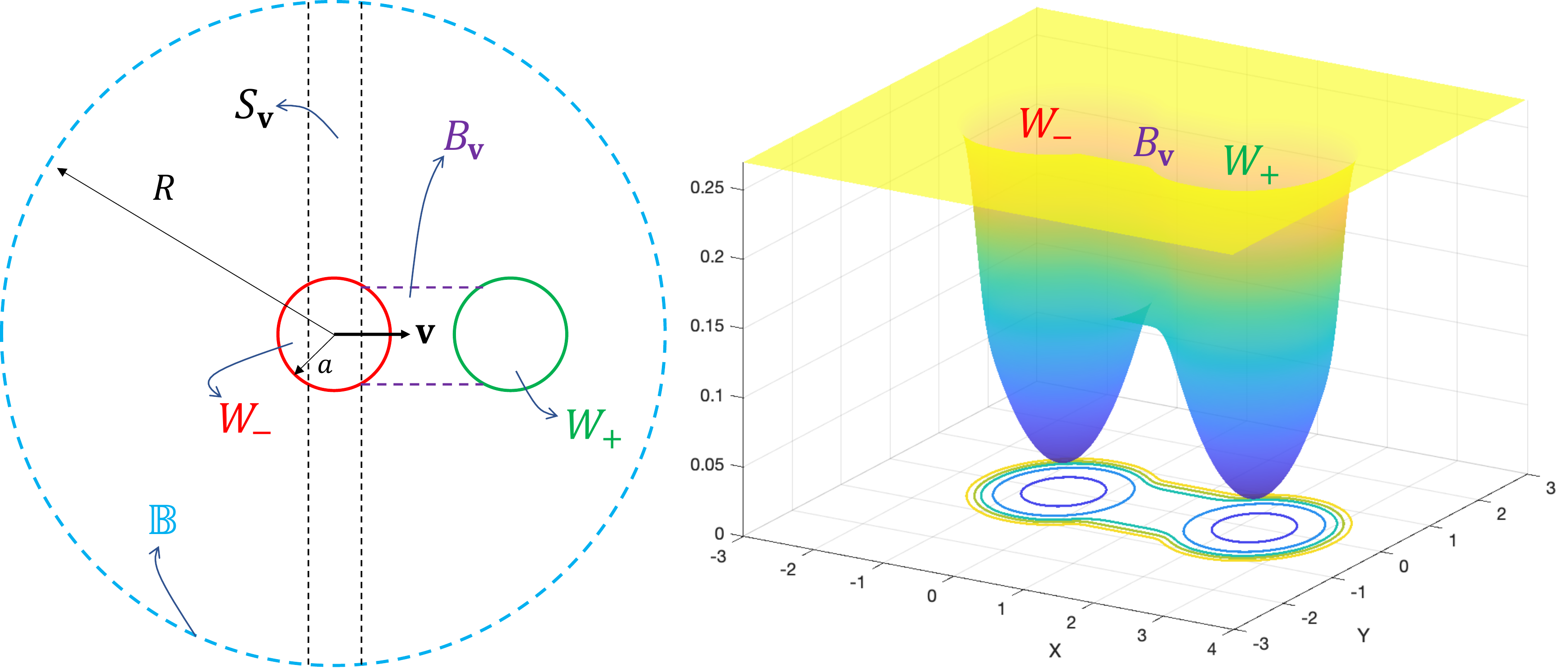}}
    \caption{Sketch of the function in \thm{provableinformal}. The left figure explains the construction in the domain and the right figure plots the landscape of a two-dimensional example.
    }
\label{fig:provable-acceleration}
\end{figure}

Details of \thm{provableinformal} are presented in \sec{separation}.
Following similar idea to~\cite{JLG+18}, our construction
relies on locally non-informative regions.
Main structures of the constructed landscape are illustrated
in \fig{provable-acceleration}, which has two global minima. $W_-$ and $W_+$ are two symmetric wells containing one global minimum respectively, $B_{\mathbf{v}}$ is a plateau connecting $W_-$ and $W_+$, and other places form a much higher plateau. The region $W_+$ is our target.
We show that the landscape satisfies the following properties:
\begin{itemize}
    \item $S_{\mathbf{v}}$, which is a band $\{\bf{x}\mid\bf{x}\in \mathbb{B}, |\bf{x}\cdot \bf{v}|\leq w\}$ with $w$ a constant and $\bf{v}$ a unit vector, occupies dominating measure in the ball $\mathbb{B}$.
    \item In $S_{\mathbf{v}}$, local queries (see \defn{localquery}) do not reveal information about the direction $\mathbf{v}$.
    \item Local queries outside $\mathbb{B}$ do not reveal information about the region inside $\mathbb{B}$.
\end{itemize}
Restricted in the ball $\mathbb{B}$, the first two properties make classical algorithms intractable to escape from $S_{\mathbf{v}}$ and thus cannot hit $W_+$ efficiently.
The last property ensures that, without being restricted in $\mathbb{B}$, classical algorithms are still unable to hit $W_+$ efficiently. See \sec{clb} for details.

Nevertheless, quantum tunneling can be efficient if we carefully design the function values and the parameter $h$.
The design of the parameters should establish the following
main conditions (See \sec{qub} for details):
\begin{itemize}
    \item The wave function always concentrates in $W_-$ or $W_+$.
    \item The quantum learning rate $h$ is small such that our theory based on semi-classical analysis is valid.
    \item Quantum tunneling from $W_-$ to $W_+$ is always easy (can happen within time polynomial in $d$).
\end{itemize}

\paragraph{Organization.}
\sec{prelim} introduces our assumptions and problem settings, both classical and quantum.
In \sec{quantumal}, we explore QTW in details and state the formal version of \thm{informalQTW}. This includes a one-dimensional example, the formal definition of QTW, the mixing and hitting time of QTW, and the example on tensor decomposition (\prop{TenDecTtot-main}).
\sec{comparison} covers detailed quantum-classical comparisons. First, we introduce fair criteria of the comparison. Second, we illustrate the advantages of quantum tunneling and give a detailed view of our \slog. Third, we prove \thm{provableinformal}.
We corroborate our findings with numerical experiments
in \sec{num}. At last, the paper is concluded with discussions in
\sec{discussion}.


\section{Preliminaries}\label{sec:prelim}
\subsection{Notations}
Throughout this paper, the space we consider is either $\mathbb{R}^d$ or a $d$-dimensional smooth compact Riemannian manifold denoted $M$.
Bold lower-case letters $\mathbf{x}$, $\mathbf{y}$,\ldots, are used to denote vectors.
If there is no ambiguity, we use normal lower-case letters, $x$, $y$,\ldots, to denote these vectors for simplicity.
Depending on the context, $\d x$ may refer to either line differential or volume differential.
We use $A_{jj'}$ to denote the element of the matrix $A$ at of row $j$ and column $j'$. Conversely, given all matrix elements $A_{jj'}$, we use the notation $(A_{jj'})$ to denote the matrix.
Unless otherwise specified,
$\| \cdot\|$ is used to denote the $\ell^2$ norm of vectors, spectral norm of matrices, and $L^2$ norm of functions.
Similarly, $\| \cdot\|_1$ is used to denote the $\ell^1$ norm of vectors and $L^1$ norm of functions.

For a function $f$, $\nabla f$ and $\nabla^2 f$ denote the gradient vector and
Hessian matrix, respectively. $\Delta:=\sum_{j=1}^{d}\frac{\partial^2}{\partial x_{j}^2}$ is the Laplacian operator. $C^{\infty}(\mathbb{R}^d)$ is the set of all functions $f\colon\R^{d}\to\R$ that are continuous and differentiable up to any order.
Notations about upper and lower bounds, $O(\cdot)$, $o(\cdot)$, $\Omega(\cdot)$, and $\Theta(\cdot)$, follow common definitions. We also write $f\ll g$ if $f=o(g)$, and $f\sim g$ if $f=\Theta(g)$.
The $\tilde{O}$ notation omits poly-logarithmic terms, namely, $\tilde{O}(f):=O(f\mathrm{poly}(\log f))$ (in this paper, $\log$ denotes the logarithm with base $2$ and $\ln$ denotes the natural logarithm with base $e$).
We write $f=O(g^{\infty})$ if
\begin{align}\label{eqn:infty-power}
\forall N>0,\quad f/g^{N} \to 0~(g\to 0).
\end{align}
Throughout the paper, we write $f\approx g$ if
\begin{align}\label{eqn:approx-definition}
f(x)=g+o(g)
\end{align}
when $g\neq 0$. When $g\to 0$, $f\to 0$ means $f=o(1)$ or, to stress the dependence, $f=o_g(1)$.

For quantum mechanics,
we use the \emph{Dirac notation} throughout the paper. Quantum states are vectors from a Hilbert space with unit norm.  Let $\ket{\phi}$ denote a state vector, and $\bra{\phi}=(\ket{\phi})^{\dagger}$ denote the dual vector that equals to its conjugate transpose. The inner product of two states can be written as $\ip{\psi}{\phi}$.
In the coordinate representation, for each $x\in\R^{d}$ 
we have the wave function $\phi(x):=\ip{x}{\phi}$, where $\ket{x}$ denotes the state localized at $x$.
More basics on quantum mechanics and quantum computing can be found in standard textbooks, for instance~\cite{NC10}.

\subsection{Classical preparations}\label{sec:classicalpre}
Classical algorithms only have access to local information about the objective function at different sites, which is formalized as follows:

\begin{definition}[Algorithms based on local queries]\label{defn:localquery}
Denote a sequence of points and corresponding queries with size $T$
by $\{x_i, q(x_i)\}_{i=1}^T$, where each $q(x_i)$ can include the function value and arbitrary order derivatives (if exist). 
Algorithms based on local queries are those which determine the $j$th point $x_j$ by $\{x_i, q(x_i)\}_{i=1}^{j-1}$.
\end{definition}

As an example, the classical algorithm SGD can be mathematically described by
\begin{definition}[Discrete model of SGD]
Given a function $f(x)$, starting from an initial point $x_0$, discrete SGD iterations are modeled by the following:
\begin{align}
    x_{k+1} = x_k - s\nabla f(x_k) - s\xi_{k},
\end{align}
where $s$ is the \emph{learning rate} and $\xi_k$ is the noise term at the $k$th step.
\end{definition}
The local information in SGD is gradients.
Since $s$ is small, define time $t_k = ks$, the points $\{x_k\}$ can be approximated by points on a smooth curve $\{X(t_k)\}$.
The curve, which can be regarded as the continuous-time limit of discrete SGD,
is determined by a learning-rate-dependent stochastic differential equation
(lr-dependent SDE):
\begin{definition}[SDE approximation of SGD]
\begin{align}
    \d x = -\nabla f(x)\d t + \sqrt{s}\d W,
    \label{eq:lrsde}
\end{align}
where $W$ is a standard Brownian motion.
\end{definition}
The solution of \eq{lrsde}, $X(t)$, is a stochastic process whose probability density $\rho_{\rm SGD}(t,\cdot)$ evolves according to the Fokker–Planck–Smoluchowski equation
\begin{align}
    \frac{\partial \rho_{\rm SGD}}{ \partial t} = \nabla\cdot (\rho_{\rm SGD} \nabla f) + \frac{s}{2} \Delta \rho_{\rm SGD}.
    \label{eq:FPS}
\end{align}
The validity of this SDE approximation has been discussed and verified in previous literature \cite{KY03,CS18,SSJ20,LMA21}.

The results used in the present paper about SGD are based on analyses on \eq{lrsde}.
For SGD, we consider an objective function $f$ in $\mathbb{R}^d$ and assume the following:
\begin{assumption}[Confining condition~\cite{markowich1999trend,pavliotis2014stochastic}]\label{assum:confining}
The objective function $f \in C^{\infty}(\mathbb{R}^d)$ should satisfy
$\lim_{\|x\|\to +\infty } f(x) = +\infty$, and $\forall s>0,\exp(-2f/s)$ is integrable:
\begin{align}
    \int_{\mathbb{R}^d} e^{-\frac{2f(x)}{s}} \d x < \infty.
\end{align}
\end{assumption}
\begin{assumption}[Villani condition~\cite{villani2009hypocoercivity}]\label{assum:Villani}
The following equation holds for all $s>0$:
\begin{align}
    \|\nabla f(x) \|^2 - s\Delta f(x) \to \infty \quad(\|x\|\to \infty).
\end{align}
\end{assumption}
\begin{assumption}[Morse function]\label{assum:morse}
For any critical point $x$ of $f$ (i.e., $\nabla f(x) = 0$), the Hessian matrix $\nabla^2 f(x)$  is nondegenerate (i.e., all the eigenvalues of the Hessian are nonzero).
\end{assumption}
\begin{remark}[Justification of assumptions]
    \assum{confining} is mild, essentially requiring that the function grows sufficiently rapidly when $x$ is far from the origin. Adding $\ell_2$ regularization term to the objective function, or equivalently, employing weight decay in the SGD update can make the condition satisfied.
    \assum{Villani}, giving discrete spectrum of Witten-Laplacian \eq{defnWL}, demands that the gradient has a sufficiently large squared norm compared with the Laplacian of the function. Some loss functions used for training neural networks might not satisfy this condition. However, the Villani condition is not stringent in practice since the SGD iterations are bounded while this condition essentially concerns about the function at infinity. \assum{morse} is made for calculations in spectrum analysis, which is generally not restrictive.
\end{remark}

Under \assum{confining}, Eq.~\eq{FPS} admits a unique invariant Gibbs distribution
\begin{align}
    \mu_{\rm SGD}(x) := \frac{e^{-\frac{2f(x)}{s}}}{\int_{\mathbb{R}^d}e^{-\frac{2f(x)}{s}} \d x}.
    \label{eq:Gibbs}
\end{align}
\begin{definition}
A measurable function $g$ belongs to $L^2(\mu_{\rm SGD}^{-1})$, if
\begin{align}
    \| g\|_{\mu_{\rm SGD}^{-1}} := \Big(\int_{\mathbb{R}^d} g^2 \mu_{\rm SGD}^{-1}(x) \d x\Big)^{\frac{1}{2}} < +\infty.
\end{align}
\end{definition}

Under such measure, we have:
\begin{lemma}[Lemma 2.2 and 5.2 of \cite{SSJ20}]
Under \assum{confining}, if the initial distribution $\rho_{\rm SGD}(0,\cdot)\in L^2(\mu_{\rm SGD}^{-1})$, the lr-dependent SDE \eq{lrsde} admits a weak solution 
whose probability density
\begin{align}
    \rho_{\rm SGD}(t,\cdot)
\in C^1([0,+\infty),
L^2(\mu_{\rm SGD}^{-1})),
\end{align}
is the unique solution to \eq{FPS} and $\rho_{\rm SGD}(t,\cdot) \to \mu_{\rm SGD}~(t \to \infty)$.
\end{lemma}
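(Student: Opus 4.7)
The plan is to reduce \eq{FPS} to a self-adjoint spectral problem via the similarity transformation $u(t,x):=\rho_{\rm SGD}(t,x)\,\mu_{\rm SGD}^{-1/2}(x)$, which is an isometry from $L^2(\mu_{\rm SGD}^{-1})$ onto $L^2(\mathbb{R}^d)$. A direct calculation turns \eq{FPS} into $\partial_t u=-\mathcal{W}u$ with the Witten-type Laplacian
\[
\mathcal{W}:=-\tfrac{s}{2}\Delta+\tfrac{1}{2s}\|\nabla f\|^2-\tfrac{1}{2}\Delta f,
\]
which is symmetric and nonnegative on $C_c^\infty(\mathbb{R}^d)$ and annihilates $\mu_{\rm SGD}^{1/2}$; thus the invariant Gibbs distribution corresponds to the distinguished zero mode in the transformed picture.

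For well-posedness, under \assum{confining} a Friedrichs-extension argument yields a self-adjoint realization of $\mathcal{W}$ on a dense domain in $L^2(\mathbb{R}^d)$, and by the spectral theorem this realization generates a strongly continuous contraction semigroup $\{e^{-t\mathcal{W}}\}_{t\ge 0}$. Taking $u(t):=e^{-t\mathcal{W}}u(0)$ and pushing back through the isometry produces $\rho_{\rm SGD}(t,\cdot)\in C^1([0,\infty),L^2(\mu_{\rm SGD}^{-1}))$ satisfying \eq{FPS} weakly, with uniqueness in this class following from the semigroup identity and density of test functions. The existence of a weak solution to the SDE \eq{lrsde} itself is classical under \assum{confining} (via e.g.\ the Stroock--Varadhan martingale problem), and its one-time marginal coincides with $\rho_{\rm SGD}$ by uniqueness at the level of \eq{FPS}.

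For the convergence $\rho_{\rm SGD}(t,\cdot)\to\mu_{\rm SGD}$, the critical input is a strictly positive spectral gap of $\mathcal{W}$ above its kernel. \assum{Villani} is tailored precisely for this: it forces the effective potential $\tfrac{1}{2s}\|\nabla f\|^2-\tfrac{1}{2}\Delta f$ to tend to $+\infty$ as $\|x\|\to\infty$, so by Persson's criterion the essential spectrum of $\mathcal{W}$ is empty and the resolvent is compact. \assum{morse} then prevents accumulation of eigenvalues at zero, making zero a simple isolated eigenvalue with gap $\lambda_1>0$. Mass conservation $\int\rho_{\rm SGD}(t,x)\,\d x\equiv 1$ pins the zero-mode component of $u(t)$ to $\mu_{\rm SGD}^{1/2}$, so applying the functional calculus to $u(t)-\mu_{\rm SGD}^{1/2}$ yields $\|\rho_{\rm SGD}(t,\cdot)-\mu_{\rm SGD}\|_{\mu_{\rm SGD}^{-1}}\le e^{-\lambda_1 t}\|\rho_{\rm SGD}(0,\cdot)-\mu_{\rm SGD}\|_{\mu_{\rm SGD}^{-1}}$, which is in fact strictly stronger than the qualitative convergence claimed.

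The main obstacle I expect is the discreteness step: without any convexity hypothesis on $f$, the cancellation between $\|\nabla f\|^2$ and $\Delta f$ inside the Witten potential can be subtle, and matching the quadratic-form domain of $\mathcal{W}$ to the weighted space $L^2(\mu_{\rm SGD}^{-1})$ while invoking Persson's criterion requires careful bookkeeping around the critical points of $f$ where $\|\nabla f\|^2$ vanishes. This is exactly the regime that Villani's hypocoercivity framework is engineered to handle, and I would invoke that machinery to close this step cleanly; once the spectral gap is in hand, the rest is routine functional calculus.
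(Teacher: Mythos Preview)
The paper does not prove this lemma at all; it is quoted verbatim from \cite{SSJ20} (Lemmas 2.2 and 5.2 there), so there is no ``paper's own proof'' to compare against. Your overall strategy---the ground-state transformation $u=\rho_{\rm SGD}\,\mu_{\rm SGD}^{-1/2}$ turning \eq{FPS} into the heat flow for the Witten Laplacian, then semigroup theory---is exactly the mechanism the paper itself records in \eq{pseudo}--\eq{defnWL}, and it is the right framework.

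There is, however, a genuine mismatch between your argument and the hypotheses. The lemma assumes \emph{only} the confining condition (\assum{confining}). Your convergence step explicitly invokes \assum{Villani} (for compact resolvent via Persson's criterion) and \assum{morse} (to isolate the zero eigenvalue). Those assumptions are not available here; they enter only in the \emph{next} result, \prop{mic19} and \cor{sgdmixing}, where the exponential rate $e^{-\delta_{s,1}t/(2s)}$ is established. You have effectively merged the two statements.

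The fix is that the lemma asks only for \emph{qualitative} convergence $\rho_{\rm SGD}(t,\cdot)\to\mu_{\rm SGD}$, and this does not require a spectral gap. Once $\mathcal{W}\ge 0$ is self-adjoint with one-dimensional kernel spanned by $\mu_{\rm SGD}^{1/2}$ (which follows from the factorization $\mathcal{W}=\tfrac{1}{2s}d_f^*d_f$ and positivity of the ground state), the spectral theorem gives $\|e^{-t\mathcal{W}}v\|^2=\int_{[0,\infty)}e^{-2t\lambda}\,\d\|E_\lambda v\|^2\to\|E_{\{0\}}v\|^2$ by dominated convergence, with no gap needed. Mass conservation then pins the limit to $\mu_{\rm SGD}$ exactly as you say. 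So drop the Persson/Villani/Morse paragraph entirely and replace it with this two-line spectral argument; the rest of your sketch stands.
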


If we set $\psi_{\rm SGD}(t,\cdot) := \rho_{\rm SGD}(t,\cdot)/\sqrt{\mu_{\rm SGD}} \in L^2(\mathbb{R}^d)$, Eq.~\eq{FPS} is equivalent to
\begin{align}
    s\frac{\partial \psi_{\rm SGD}}{ \partial t} = -\frac{\Delta_{f}^s}{2}\psi_{\rm SGD},
    \label{eq:pseudo}
\end{align}
where $\Delta_{f}^s$ is called the \emph{Witten-Laplacian}, more specifically,
\begin{align}
    \Delta_{f}^s:= -s^2 \Delta  + \| \nabla f\|^2 - s\Delta f.
    \label{eq:defnWL}
\end{align}
Let $\delta_{s,1}$ be the smallest non-zero eigenvalue of $\Delta_{f}^s$,
the following convergence guarantee for SGD holds:
\begin{proposition}[Part of Theorem 2.8 in \cite{Mic19} and Lemma 5.5 in \cite{SSJ20}]\label{prop:mic19}
Under \assum{confining}, \ref{assum:Villani}, and \ref{assum:morse}, for sufficiently small $s$,
\begin{align}
\|\rho_{\rm SGD} (t,\cdot) - \mu_{\rm SGD}\|_{\mu_{\rm SGD}^{-1}} \leq
e^{-\frac{\delta_{s,1}}{2s}t}\|\rho_{\rm SGD} (0,\cdot) - \mu_{\rm SGD}\|_{\mu_{\rm SGD}^{-1}},
\end{align}
where
the smallest positive eigenvalues of the Witten-Laplacian $\Delta_{f}^s$ associated with $f$ satisfies
\begin{align}
    \delta_{s,1} = s(\gamma_1 + o(s))e^{-\frac{2H_f}{s}}.
\end{align}
Here, $H_f$ and $\gamma_1$ are constants depending only on the function $f$.
\end{proposition}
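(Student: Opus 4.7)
The plan is to follow the standard Witten-Laplacian route used in \cite{Mic19,SSJ20}, treating the statement in two parts: an exponential decay estimate driven by the spectral gap of $\Delta_f^s$, and a semi-classical Arrhenius-type asymptotic for that gap.

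First I would unitarily conjugate the Fokker--Planck evolution \eq{FPS} into a self-adjoint form via the substitution $\psi_{\rm SGD}=\rho_{\rm SGD}/\sqrt{\mu_{\rm SGD}}$, obtaining \eq{pseudo} with the Witten-Laplacian $\Delta_f^s$ of \eq{defnWL}. Under \assum{confining} one checks that $\sqrt{\mu_{\rm SGD}}\in L^2(\R^d)$ and is the (unique, up to normalization) ground state of $\Delta_f^s$ with eigenvalue zero, so the invariant density of the original equation corresponds to $\ker\Delta_f^s$. Under \assum{Villani}, the effective potential $\|\nabla f\|^2-s\Delta f$ tends to $+\infty$ at spatial infinity, so the Schr\"odinger-type operator $\Delta_f^s=-s^2\Delta+(\|\nabla f\|^2-s\Delta f)$ has compact resolvent and hence discrete spectrum $0=\delta_{s,0}<\delta_{s,1}\le\delta_{s,2}\le\cdots$.

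Second, I would derive the exponential contraction. Expanding the initial deviation $\rho_{\rm SGD}(0,\cdot)-\mu_{\rm SGD}$ in the orthonormal eigenbasis $\{\phi_k\}$ of $\Delta_f^s$ (which is orthonormal in $L^2(\R^d)$, equivalently in the $\mu_{\rm SGD}^{-1}$-weighted norm after the conjugation), the component along $\phi_0\propto\sqrt{\mu_{\rm SGD}}$ vanishes because both $\rho_{\rm SGD}(0,\cdot)$ and $\mu_{\rm SGD}$ integrate to one. Solving \eq{pseudo} term by term gives
\begin{align*}
\|\rho_{\rm SGD}(t,\cdot)-\mu_{\rm SGD}\|_{\mu_{\rm SGD}^{-1}}^2=\sum_{k\ge 1}|c_k|^2 e^{-\delta_{s,k}t/s}\le e^{-\delta_{s,1}t/s}\|\rho_{\rm SGD}(0,\cdot)-\mu_{\rm SGD}\|_{\mu_{\rm SGD}^{-1}}^2,
\end{align*}
which, after taking square roots, is exactly the claimed contraction (the factor $1/(2s)$ in the exponent matches the $1/2$ in front of $\Delta_f^s$ in \eq{pseudo}).

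The main obstacle, and the place where \assum{morse} is needed, is the semi-classical asymptotic $\delta_{s,1}=s(\gamma_1+o(s))e^{-2H_f/s}$. The plan here is to import the Helffer--Sj\"ostrand / Helffer--Klein--Nier analysis as used in \cite{Mic19}. The steps are: (i) construct WKB quasi-modes localized in each well of $f$ by anchoring Gaussian approximations around the minima using the Hessians $\nabla^2 f$ (nondegenerate by \assum{morse}); (ii) show these quasi-modes span an almost-invariant low-lying subspace and that the remaining spectrum is separated by an $\Omega(s)$ gap, so the small eigenvalues of $\Delta_f^s$ are captured by a finite interaction matrix between wells; (iii) estimate the off-diagonal entries of this matrix by Laplace's method along minimal-energy paths through the separating saddle points, producing the Arrhenius factor $e^{-2H_f/s}$ where $H_f$ is the lowest barrier height between two wells containing the relevant minima; (iv) diagonalize this interaction matrix to extract the prefactor $\gamma_1$, which depends explicitly on $\det\nabla^2 f$ at the minima and the saddle and the negative eigenvalue of $\nabla^2 f$ at the saddle (Kramers' formula). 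This semiclassical step is where all the analytic work lives; steps one and two are essentially bookkeeping once the spectral picture is in place, so I would cite \cite{Mic19} for the fine asymptotics and present only the conjugation and spectral-expansion arguments in detail.
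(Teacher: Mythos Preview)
Your proposal is correct and aligns with how the paper handles this statement. The paper does not prove this proposition at all: it is imported directly from \cite{Mic19} and \cite{SSJ20}, with \append{W-L} supplying only the background geometric labeling of separating saddle points that pins down $H_f$ (\defn{morse-saddle}) and restating the eigenvalue asymptotic as \prop{Mic19Thm} without proof. Your outline---conjugating the Fokker--Planck flow to the self-adjoint Witten-Laplacian via $\psi_{\rm SGD}=\rho_{\rm SGD}/\sqrt{\mu_{\rm SGD}}$, reading off exponential decay from the spectral gap, and then invoking the semiclassical quasi-mode/interaction-matrix analysis for the Arrhenius asymptotic---is exactly the standard route taken in those references, and your explicit plan to cite \cite{Mic19} for the fine asymptotics is precisely what the paper does.
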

\begin{corollary}\label{cor:sgdmixing}
Assume the assumptions of \prop{mic19} are satisfied,
for sufficiently small $s$ and any $\epsilon>0$,
if
\begin{align}
    t> \frac{2s}{\delta_{s,1}}\ln \frac{\|\rho_{\rm SGD} (0,\cdot) - \mu_{\rm SGD}\|_{\mu_{\rm SGD}^{-1}}}{\epsilon},
\end{align}
then
\begin{align}
    \|\rho_{\rm SGD} (t,\cdot) - \mu_{\rm SGD}\|_{\mu_{\rm SGD}^{-1}}<\epsilon.
\end{align}
\end{corollary}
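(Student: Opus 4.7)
The plan is to derive this corollary as a direct algebraic consequence of \prop{mic19}, which already gives an exponential contraction bound
$\|\rho_{\rm SGD}(t,\cdot) - \mu_{\rm SGD}\|_{\mu_{\rm SGD}^{-1}} \leq e^{-\frac{\delta_{s,1}}{2s}t}\|\rho_{\rm SGD}(0,\cdot) - \mu_{\rm SGD}\|_{\mu_{\rm SGD}^{-1}}$
valid for all sufficiently small $s$. The corollary merely translates this convergence rate into a ``time-to-error-$\epsilon$'' statement, so essentially no new analytic machinery should be required.

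The one-step execution would be to demand that the right-hand side of the contraction bound be strictly less than $\epsilon$, namely $e^{-\delta_{s,1}t/(2s)} \|\rho_{\rm SGD}(0,\cdot) - \mu_{\rm SGD}\|_{\mu_{\rm SGD}^{-1}} < \epsilon$, then take the logarithm of both sides. Rearranging yields exactly the stated threshold $t > \frac{2s}{\delta_{s,1}} \ln\frac{\|\rho_{\rm SGD}(0,\cdot) - \mu_{\rm SGD}\|_{\mu_{\rm SGD}^{-1}}}{\epsilon}$, after which monotonicity of the exponential in $t$ gives the claimed bound for all larger $t$ as well.

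I do not anticipate any genuine obstacle here: there is no issue of whether the logarithm is well defined, because if $\|\rho_{\rm SGD}(0,\cdot) - \mu_{\rm SGD}\|_{\mu_{\rm SGD}^{-1}} \le \epsilon$ already, then \prop{mic19} yields the conclusion at $t = 0$ automatically (since the norm on the left-hand side is non-increasing), and otherwise the logarithm is positive and the inequality is a meaningful lower bound on $t$. The hypothesis that $s$ is sufficiently small is inherited verbatim from \prop{mic19}, and the assumptions (\assum{confining}, \assum{Villani}, \assum{morse}) propagate without modification. In short, the proof is a two-line rearrangement that I would record explicitly for the reader's convenience.
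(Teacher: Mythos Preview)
Your proposal is correct and matches the paper's treatment: the paper states the corollary without proof, since it is exactly the two-line algebraic rearrangement of the exponential contraction bound in \prop{mic19} that you describe.
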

That is, the convergence time of SGD is loosely $O(s/\delta_{s,1})$ whose magnitude is largely related to $H_f$.
The constant $H_f$ is called the \emph{Morse saddle barrier}, characterizing the largest height of barriers.
Rigorous results about eigenvalues of the Witten-Laplacian are reviewed in \append{W-L}.

\subsection{Quantum preparations}\label{sec:quantumpre}


Our quantum algorithmic method essentially relies on the simulation of the Schr\"odinger equation:
\begin{align}
    i \frac{\partial}{\partial t} \ket{\Phi(t)}= H  \ket{\Phi(t)},
    \label{eq:true}
\end{align}
where $i$ is the imaginary unit, $\hbar$ is the Planck constant, and $H$ is the Hamiltonian.
Physically, we simulate a particle moving under a potential function $f$.
Then, in the coordinate representation, the Schr\"odinger equation is specified:
\begin{align}
    i \hbar\frac{\partial}{\partial t} \Phi = \Big(-\frac{\hbar^2}{2m}\Delta + f(x)\Big) \Phi.
\end{align}
Throughout this paper, we set $h = \hbar/\sqrt{2m}$.
The spectrum of the Hamiltonian will highly depends on the variable $h$. More interestingly, by comparing \eq{pseudo} and \eq{true}, $h$ plays a similar role to that of the learning rate $s$.
Thus we refer to $h$ as the quantum learning rate.
In the reality, $\hbar$ is a fixed constant.
However, since we are simulating quantum evolution by quantum computers, proper rescaling the simulation can equivalently be seen as varying the value of $\hbar$.
The value of $\hbar$ affects the evolution time needed.
However, rescaling $\hbar$ has no impact on quantum query complexity.
Therefore, in this paper, $\hbar$ is an unimportant constant, i.e., $\hbar = 1$.

As is introduced, we consider quantum tunneling from the bottom of a well
to that of another well, in other words, tunneling between \emph{local ground states}.
A local ground state of a well is the local eigenstate of the well with minimum eigenvalue.
Technically, several kinds of local eigenstates are defined (see \defn{localeigen}, \defn{localcutoff}, and \defn{ortholocal} in \append{interactionmatrix}).
Despite of the number of definitions,
different kinds of local eigenstates are close to each other and share the same intuition: eigenstates of the Hamiltonian restricted in regions only contain one well.
For convenience, if no otherwise specified, local eigenstates stand for
orthonormalized eigenstates defined by \defn{ortholocal}.
Actually, there are also tunneling effects between local excited states. However, excited states are difficult to approximate accurately for general landscapes. Besides, due to interference, tunneling effects between different local excited states may cancel each other out. We restrict our attention to tunneling between local ground states in order to obtain explicit results along with a clear physical picture.

Two local ground states can interact strongly with each other only if the difference between their energies is small relative to $h$ \cite{Hel88} (see also discussions after \prop{genPF} in \append{interactionmatrix}). In other words, this requires the function values between two local minima to be close and there is little resonance
between the first (local) excited state in one well
and the (local) ground state of the other \cite{Ras12,SCC91}.
Therefore, our algorithms based on tunneling between local ground states are essentially restricted on landscapes where local minima are approximately global minima.
Note that we can always find small enough $h$ to make two local ground states nonresonant, if the corresponding local minima are not exactly equal. As a result, to avoid more complicated restrictions on $h$, without loss of generality we assume that local minima are global minima, and they all have function value 0. More precisely: 
\begin{assumption}\label{assum:quantum1}
The smooth objective function $f\colon \mathbb{R}^{d}\to \mathbb{R}$ satisfies
\begin{equation}
    0 = \min f < \liminf_{\|x\|\to \infty} f,
\end{equation}
namely, there exists a radius $r$ such that $\inf_{\|x\|>r}f > \min f$.
In addition, $f$ has finite number of local minima, and they can be decomposed as follows:
\begin{equation}
    f^{-1}(0) = U_1 \cup U_2 \ldots \cup U_N,
\end{equation}
\begin{equation}
    U_j = \{x_j\}~\mathrm{is~a~point,}~\nabla f(x_j) = 0,~\mathrm{and}~\nabla^2 f(x_j) > 0~\mathrm{for}~j=1,\ldots,N.
\end{equation}
Each $U_j$ is called a \emph{well}.
\end{assumption}
This assumption will not affect the explicit forms of convergence time or present less physical insights. To further characterize the distance on such landscapes, an important geometric tool we use is the Agmon distance.
\begin{definition}[Agmon distance]\label{defn:Agmon-distance}
Under \assum{quantum1},
the Agmon distance $d(x,y)$ is defined as 
\begin{align}
    d(x,y) := \inf_{\gamma}\int_{\gamma} \sqrt{f(x)}\d x,
\end{align}
where $\gamma$ denotes pairwise $C^1$ paths connecting $x$ and $y$.
For a set $U$, $d(x,U) = d(U,x) := \inf_{y\in U}d(x,y)$.
And for two sets $U_1$ and $U_2$, $d(U_1,U_2) = \inf_{x\in U_1,y\in U_2}d(x,y)$.
\end{definition}
The minimal Agmon distance between wells are defined as
\begin{align}
S_0 :=\min_{j\neq k}d(U_j,U_k).
\end{align}
We only consider resonant wells by assuming the following for simplicity:
\begin{assumption}\label{assum:quantum2}
The energy (eigenvalue) difference between any two local ground states are of the order $O(h^{\infty})$. In addition,
for any well $U_j$, there exists another well $U_k$ ($k\neq j$) such that $d(U_j,U_k) = S_0$.
\end{assumption}
At last, to obtain explicit results we demand that
\begin{assumption}\label{assum:quantum3}
There are a finite number of paths of the Agmon length $S_0$ connecting $U_j$ and $U_k$ if $d(U_j,U_k) = S_0$.
\end{assumption}
\begin{remark}[Physical meaning of \assum{quantum3}]
    Agmon distance serves as an ``action" in dynamics. By the principle of least action, the paths with Agmon length $S_0$ will be real trajectories of the particle (wave) in classical limit. If there are infinite possible trajectories between two points, either the two points will be indeed the same, or the problem can be reduced to one in a lower-dimensional space.
\end{remark}

\assum{quantum1}, \ref{assum:quantum2}, and \ref{assum:quantum3} informally present \assum{TunEnCon}, \ref{assum:3}, and \ref{assum:4} in \append{theory-Schrodinger} which pay more efforts in rigorous descriptions of ``wells", ``local ground states", etc.
Under \assum{quantum1}, \ref{assum:quantum2}, and \ref{assum:quantum3},
we state the main results of \append{theory-Schrodinger} as follows.
For sufficiently small $h$,
the orthonormalized local ground states $\ket{e_j},~j=1,\ldots,N$ almost localize near the wells $U_j,~j=1,\ldots,N$, respectively.
The space $\F$ spanned by $\{\ket{e_j}:j=1,\ldots,N\}$ is exactly a low-energy invariant subspace of the Hamiltonian $H$.
In other words, in the low-energy space $\F$, the particle walks between wells by quantum tunneling.
The Hamiltonian restricted in $\F$, i.e., $H_{|\F}$, determines the strength of the quantum tunneling effect and is called the \emph{interaction matrix}.
To explore $H_{|\F}$, we use the WKB method to estimate local ground states (\append{WKB}).
Any local ground state function decays exponentially with respect to the Agmon distance to its corresponding well (\append{agmon}).
Consequently, the tunneling effects would decay exponentially with respect to $S_0$.
Having captured theses properties, \append{interactionmatrix} can give explicit estimations about $H_{|\F}$, namely, \prop{spePF}, \prop{speW} (with $N^+ = N$), and \thm{energygap}.

Finally, we restate a more formal version of our \ref{prb:main}:

\begin{named}{Main Problem (restated)}\label{prb:mainformal}
Given an objective function $f$ that satisfies \assum{quantum1}, \ref{assum:quantum2}, and \ref{assum:quantum3}, starting from one local minimum, find all local minima or find a certain target minimum.
\end{named}
We make the following remarks for clarification:
\begin{remark}
Assumptions in \sec{classicalpre} and \sec{quantumpre} are not contradictory. When considering SGD, we naturally add to the \ref{prb:mainformal} that the assumptions in \sec{classicalpre} should also be satisfied.
\end{remark}
\begin{remark}
The assumption of starting from one local minimum enables quantum algorithms to prepare a local ground state, or more generally, a state largely in the aforementioned subspace $\F$.
\end{remark}
\begin{remark}
Because finding a precise global minimum is impractical in general, it suffices to find points sufficiently close to the minima of interest. Later in \sec{standard}, we use two different measures of accuracy: 1. the function value difference; and 2. the distance to one of the minimum.
\end{remark}


\section{Quantum Tunneling Walks}\label{sec:quantumal}
In this section, we present full details of the quantum tunneling walk (QTW). We begin with a one-dimensional example in \sec{onedimexp}, and then in \sec{QTW} we formally define QTW. In \sec{Qmixing} and \sec{Qhitting}, we study the mixing and hitting time of QTW, respectively. As an example, we give full details of applying QTW to tensor decomposition in \sec{tendecom}.

\subsection{A one-dimensional example}\label{sec:onedimexp}
We start the introduction of the quantum algorithm QTW with a one-dimensional example which quantifies the
intuitions provided in \sec{intro}.
\sec{onedimexp} also serves as a map connecting each step
of the analysis to the needed mathematical tools.
Later sections can be seen as generalizing results here for high dimensional and multi-well cases.
General descriptions and results begin at \sec{QTW}.

Consider the potential $f(x)$ in \fig{1-dimexample}, which has two global minima,
$x_{\pm} = \pm a$. For simplicity, we take
\begin{equation}
    f(x) = \frac{1}{2} \omega^2 (x + a)^2,\quad x \leq -\epsilon,
    \label{eq:eq01}
\end{equation}
where $\epsilon$ is a small number. In this way, the potential satisfies that $\min f =0$, and $f(x)$ is quadratic near minima. Besides, the symmetry of wells demands $f(x) = f(-x)$. The $f(x)$ for $x\in [-\epsilon,\epsilon]$ can always be made to be smooth.

Near the two minima, $\pm a$, whose local harmonic frequencies are $\omega$, we can solve the Schr\"odinger equation locally and get two local ground states, $\Phi_{\pm}(x)$. For instance, around $-a$, if we set $y = x+ a$, the local ground state is determined by
\begin{equation}
    H \Phi_{-}(x) = \varepsilon_0 \Phi_{-}(x),\quad H = - h^2 \frac{\d^2}{\d^2 y} + \frac{1}{2} \omega^2 y^2,
\end{equation}
where $\varepsilon_0 = h\omega/\sqrt{2}$.
Physically, the demand of localization is equivalent to $\varepsilon_0 \ll f(0)$, indicating that the particle nearly cannot pass through the energy barrier. Concrete mathematical definitions and discussions on local ground states can be found in \append{agmon}.

From a high-level perspective, the main idea of the present paper is to unite the interaction or tunneling between local ground states to realize algorithmic speedups.
As we want to investigate the evolution of states, we need to determine the relationship between local ground states and the eigenstates of the Hamiltonian $H$.
We set the eigenstates of $H$ as $\ket{n},~n=0,1,\ldots$ with energies $E_0 \leq E_1 \leq \cdots$, respectively (i.e., $\ket{0}$ is the global ground state, $\ket{1}$ is the first excited state, etc.).
The overlap of states $\Phi_{\pm}$ is small (namely, $\ip{\Phi_{+}}{\Phi_{-}}\approx 0$), as they are local and separated by a high barrier.
Denote the subspace spanned by $\Phi_{-}$ and $\Phi_{+}$ as $\E$, and that spanned by $\ket{0}$ and $\ket{1}$ as $\F$. Both $\E$ and $\F$ are 2-dimensional and contains states with low energies. It is intuitive that $\E \approx \F$ (which is guaranteed by \prop{dimEdimF}). For the one-dimensional case, we just take $\E = \F$ for simplicity.
In this way, we can represent $\ket{0}$ and $\ket{1}$ by $\ket{\Phi_{\pm}}$ in the following general way:
\begin{align}
    \ket{0} &= \cos\theta \ket{\Phi_{-}} + \sin\theta \ket{\Phi_{+}},\\
    \ket{1} &= \sin\theta \ket{\Phi_{-}} - \cos\theta \ket{\Phi_{+}}.
\end{align}
Restricted in the subspace $\F$, the two-level system Hamiltonian can be written as
\begin{align}
    H|_{\F} &= \left(
    \begin{array}{cc}
        \varepsilon_{-} & -\nu  \\
         -\nu & \varepsilon_{+}
    \end{array}
    \right),\quad \mathrm{under~basis}~\{\ket{\Phi_{-}},\ket{\Phi_{+}}\},\\
     H|_{\F} &= \left(
    \begin{array}{cc}
        E_0 & 0  \\
         0 & E_1
    \end{array}
    \right),\quad\ \mathrm{under~basis}~\{\ket{0},\ket{1}\},
\end{align}
where $\nu$ is called the tunneling amplitude, measuring the interaction between wells.
Because the $f(x)$ we choose is symmetric, $\varepsilon_{-} = \varepsilon_{+} = \varepsilon_{0} = h\omega/\sqrt{2}$. Therefore, we have
$\theta  = \pi/4$ and the energy gap $\Delta E := E_1 - E_0 = 2\nu$.
We will refer to this Hamiltonian restricted in subspace $\F$ as the \emph{interaction matrix}, indicating that $H_{|\F}$ characterizes the interaction between wells.

In our setting, we can begin at a local minimum, where the local ground state is easy to prepare (see justifications in \append{initial}). Without loss of generality, let us begin the quantum simulation at the state $\Phi_{-}$, namely, setting $\ket{\Phi(0)} = \ket{\Phi_{-}}$. After evolution of time $t$, the state becomes
\begin{equation}
    \ket{\Phi(t)} = e^{-iHt}\ket{\Phi(0)} =
    \frac{1}{\sqrt{2}}(e^{-iE_0t}\ket{0} + e^{-iE_1t}\ket{1}) \propto \cos(\Delta E t/2)\ket{\Phi_{-}} + i\sin(\Delta E t/2)\ket{\Phi_{+}}.
\end{equation}
And the probabilities of finding the particle
in the right and left wells are give by
\begin{equation}
    P_{\pm}(t) = |\ip{\Phi_{\pm}}{\Phi(t)}|^2 = \frac{1\mp \cos(\Delta E t)}{2}.
\end{equation}
The energy gap $\Delta E$ is also called the Rabi oscillation frequency, suggesting that the particle oscillates between the two wells periodically. Our aim is to pass through the barrier and find other local minima (for the case here, is to find the other local minimum). Since for small $h$, the local
state $\ket{\Phi_{+}}$ distributes in a very convex region near
the right local minimum, it suffice to solve our problem by
measuring the position of state $\ket{\Phi(t)}$ when $P_{+}(t)$ is large.
However, we may not be able to know $\Delta E$ precisely in advance.
So, we will apply the method of quantum walks: evolving
system for time $t$ which is chosen randomly from $[0,\tau]$, and then measuring the position \cite{CCD+03}.
The resulted distribution is
\begin{align}
    \rho_{\rm QTW}(\tau,x) \approx |\ip{x}{\Phi_{-}}|^2\int_{0}^{\tau}P_{-}(t)\frac{\d t}{\tau} + |\ip{x}{\Phi_{+}}|^2\int_{0}^{\tau}P_{+}(t)\frac{\d t}{\tau},
\end{align}
where QTW denotes quantum tunneling walk. Define $p_{-\to \pm}(\tau) = \int_{0}^{\tau}P_{\pm}(t)\frac{\d t}{\tau}$, which is the probabilities of finding the right and left local ground states, respectively.
Since $|\ip{x}{\Phi_{-}}|^2$ is small for $x$ near $+a$, the probability of finding the particle near $+a$ is determined by
\begin{align}
    \int_{\rm right~well} \rho_{\rm QTW}(\tau,x) \d x \approx  \int_{\rm right~ well}\d x |\ip{x}{\Phi_{+}}|^2p_{-\to +}
    \approx p_{-\to +}.
\end{align}
Therefore, it suffice to study properties of $p_{-\to \pm}$.
In the present case,
\begin{align}
    p_{-\to \pm} = \frac{1}{2}\left(1 \mp \frac{\sin(\Delta E \tau)}{\Delta E \tau} \right),
\end{align}
which will converge when $\tau \to \infty$. This fact ensures that
we can find the right local minimum $+a$ with a probability larger than some constant after evolving the system for enough long time.

Starting from $\ket{\Phi_{-}}$, the hitting time for the right well is
\begin{align}
    T_{\rm hit}(\Phi_{+}|\Phi_{-}) = \inf_{\tau > 0} \frac{\tau}{p_{-\to +}(\tau)}.
\end{align}
Since the probability for successful tunneling in one trial is $p_{-\to +}(\tau)$, we can repeat trials for $1/p_{-\to +}(\tau)$ times to secure one success and the total evolution time is $\tau/p_{-\to +}(\tau)$.
For sufficiently small $\epsilon \ll 1$, if $\frac{1}{\Delta E \tau} \leq \epsilon$, we can get $p_{-\to +}(\tau) \geq \frac{1}{2}(1-\epsilon)$. Therefore, the hitting time can be bounded by $O(\frac{1}{\Delta E\epsilon})$.

As is going to be shown later, if we want to find \emph{all} local minima, the mixing time would be a better quantifier. The limiting distribution is
\begin{align}
    \mu_{\rm QTW}(x) := \lim_{\tau \to \infty}\rho_{\rm QTW}(\tau,x) \approx |\ip{x}{\Phi_{-}}|^2 p_{-\to -} + |\ip{x}{\Phi_{+}}|^2p_{-\to +}.
\end{align}
The mixing time measures how fast the distribution $\rho_{\rm QTW}(\tau,x)$ converges to $\mu_{\rm QTW}(x)$.
We define $T_{\rm mix}$ as the $\epsilon$-close mixing time which satisfies
\begin{align}
   T_{\rm mix} =\inf_{\|\rho_{\rm QTW}(\tau,\cdot) - \mu_{\rm QTW}(\cdot)\|_1 \leq \epsilon} \tau.
\end{align}
Because $\|\rho_{\rm QTW}(\tau,\cdot) - \mu_{\rm QTW}(\cdot)\|_1 \leq O(\frac{1}{\Delta E \tau})$, we have $\|\rho_{\rm QTW}(\tau,\cdot) - \mu_{\rm QTW}(\cdot)\|_1 \leq \epsilon$ if $\tau = \Omega(\frac{1}{\Delta E \epsilon})$.
Therefore, $T_{\rm mix}$ could be bounded: $T_{\rm mix} = O(\frac{1}{\Delta E \epsilon})$.

For simulating a time-independent Hamiltonian, the number of queries needed are roughly proportional to the total evolution time (as demonstrated in \sec{quantumpre} or see details in \append{quantumsimulation}). The major task left is to calculate the energy gap $\Delta E$, get different evolution times and compare them with classical results.

As is specified in \append{interactionmatrix}, $0$ is the boundary of the two wells and the tunneling amplitude $\nu$ can be given by
\begin{align}
   \nu = h^2 (\Phi_{-}(0)\Phi'_{+}(0) - \Phi'_{-}(0)\Phi_{+}(0)).
   \label{eq:eq15}
\end{align}
To obtain an explicit result, we need to use the WKB approximation of the local ground states (\append{WKB}):
\begin{align}
   \Phi_{-}(x) \approx \frac{1}{ h^{1/4}} a_0(x)e^{-\frac{1}{h}\int_{-a}^{x}\sqrt{f(\xi)}\d \xi},\quad \Phi_{+}(x) = \Phi_{-}(-x),
   \label{eq:eq16}
\end{align}
where $a_0(x)$ is given by
\begin{align}
   a_0(x) = \left( \frac{\omega}{\sqrt{2} \pi}\right)^{1/4}e^{-\frac{1}{2}\int_{-a}^{x}(\frac{f'(\xi)}{2f(\xi)} - \frac{\omega}{\sqrt{2f}})\d \xi },
   \label{eq:eq17}
\end{align}
which is determined by the transport equation (Eq.~\eq{transport}) in \append{WKB}) and the normalization condition.
Substituting \eq{eq16} and \eq{eq17} to \eq{eq15}, we get
\begin{align}
   \nu = 2\sqrt{\frac{h\omega f(0)}{\sqrt{2}\pi}}e^C[1+O(h)]e^{-\frac{S_0}{h}},
\end{align}
where the constants $C$ and $S_0$ are given by
\begin{align}
   C = \int_0^a \Big(\frac{\omega}{\sqrt{2f(\xi)}} - \frac{1}{a-\xi}\Big)\d \xi , \quad S_0 = \int_{-a}^a \sqrt{f(\xi)} \d\xi.
\end{align}
Note that if $f(x)$ is quadratic, the factor $C$ will be zero, indicating that $C$ measures the deviation of the landscape from being quadratic.
The quantity $S_0$ is called the Agmon distance between two the local minima (see \append{agmon}).
Since we assumed by \eq{eq01} that $f(x)$ is almost quadratic, we have $f(0)\approx \frac{1}{2}\omega^2a^2$ and
\begin{align}
   \nu \approx h\omega\sqrt{\frac{\sqrt{2}\omega a^2}{h\pi}} e^{-\frac{S_0}{h}}.
\end{align}
As discussed above, the mixing time and hitting time could be bounded by the following characteristic time
\begin{align}
   T_c = \frac{1}{\Delta E} = \frac{1}{2\nu } \approx \frac{\sqrt{\pi}}{2a\omega \sqrt{\sqrt{2}\omega h}}e^{\frac{1}{h}\int_{-a}^a \sqrt{2f(\xi)}\d \xi}.
\end{align}

Next, we need to find how long it takes for SGD to escape from the left local minimum. Discrete-time SGD with a small learning rate $s$ can be approximated by a learning-rate dependent stochastic differential equation (lr-dependent SDE) \cite{SSJ20}:
\begin{align}
   \d X = - \nabla f(X) \d t + \sqrt{s}\d W,
\end{align}
where $W$ is a standard Brownian motion.
Before hitting $0$, the SDE is almost an Ornstein–Uhlenbeck process:
\begin{align}
   \d X = -\omega^2 (X+a) \d t + \sqrt{s}\d W.
\end{align}
The expected time for the Ornstein–Uhlenbeck process to first hit $0$ is
\begin{align}
   \mathbb{E}T_0 \approx \frac{\sqrt{\pi s}}{a \omega^3}e^{\frac{2H_f}{s}},
\end{align}
where $H_f$ is called the Morse saddle barrier and equals to $f(0) - f(-a) \approx \frac{1}{2}\omega^2 a^2$ in our case.

Although $\mathbb{E}T_0$ is not a precise classical counterpart of either quantum mixing or hitting time, it is heuristic to compare $\mathbb{E}T_0$ with $T_c$ which can both reflect the time to escape from the left well.
The forms of $\mathbb{E}T_0$ and $T_c$ are very similar.
Two major differences can be observed: 1. the exponential term
in $\mathbb{E}T_0$ is determined by the height of the barrier,
while that in $T_c$ is related to an integral of $\sqrt{f}$;
2. $T_c \propto 1/\omega^{3/2}$ but $\mathbb{E}T_0 \propto 1/\omega^{3}$, indicating that the flatness of the wells affects differently on quantum and classical methods. We will show that for landscapes with multiple wells, the distribution of wells is also an important factor. In general, QTW could be faster than SGD if the barriers between local minima are high but thin, each well is close to many other wells, and wells are flat.

The above comparison is intuitive but not rigorous. Two important technical details for comparison are needed for quantitative discussions. It is shown in \sec{quantumpre} that a super-polynomial separation between
evolution time of QTW and SGD gives rise to a super-polynomial separation
between quantum and classical queries for QTW and SGD, respectively.
Therefore, it suffice to compare the evolution time, especially the exponential term $e^{S_0/h}$ and $e^{2H_f/s}$.
The second problem is that $h$ and $s$ are not two constants but variables. The evolution times cannot be quantitatively compared if $h$ and $s$ are independent.
In \sec{standard}, we develop two natural standards to make fair comparisons between
QTW and SGD, which specifies $h$ and $s$.

\subsection{Definition of quantum tunneling walks}\label{sec:QTW}
We now formally describe the model of a quantum tunneling walk (QTW) on
a general objective function $f(x)$ satisfying assumptions in \sec{quantumpre}.
The wells are denoted by $U_j = \{x_j\}~(j=1,\ldots,N)$.
Let $\ket{j}$ be the corresponding orthonormalized local ground state of $U_j$.
It is ensured that $\{\ket{j}:j=1,\ldots,N\}$ spans a low energy subspace, $\mathcal{F}$, of the Hamiltonian $H = -h^2\Delta + f(x)$.

If one has information about one well $U_j$ and its neighborhood, the construction of the local ground state should be easy which can be close to $\ket{j}$ or at least be almost in the subspace $\mathcal{F}$. We assume the initial state $\Phi(0)$ to be in $\mathcal{F}$. The evolution is determined by the Schr\"odinger equation,
\begin{align}
   i \frac{\d}{\d t} \ip{x}{\Phi(t)} = \bra{x} H \ket{\Phi(t)},
\end{align}
where $|\ip{x}{\Phi(t)}|^2$ is the probability distribution of finding the walker at $x$. The Schr\"odinger equation indicates the phenomenon of quantum tunneling since it can be rewritten as
\begin{align}
   i \frac{\d}{\d t} \ip{j}{\Phi(t)} = \sum_{j'} \bra{j} H_{|\mathcal{F}} \ket{j'} \ip{j'}{\Phi(t)},~\mathrm{for~any}~j=1,\ldots,N,
   \label{eq:qtwsch}
\end{align}
given that $\ket{\Phi(0)}$ is in the subspace $\mathcal{F}$.
Here, $(\z j|H_{|\mathcal{F}}|j^{\prime} \y)$ is called as the \emph{interaction matrix} and is calculated by \prop{spePF} and \ref{prop:speW}.
Once we get $\ip{j}{\Phi(t)}$ and $\ip{x}{j}$ for all $j$, we can obtain the probability distribution $|\ip{x}{\Phi(t)}|^2$.
The overlap $\ip{x}{j}$ is invariant with respect to $t$. So, we may focus on \eq{qtwsch} to investigate the time evolution.

As is shown by \eq{qtwsch}, restricted in the low energy subspace $\mathcal{F}$,
the quantum evolution is similar to that of a quantum walk on a graph.
The wells correspond to vertices of the graph, and the quantum tunneling effects between wells
determine the graph connectivity.
QTW walks among different wells by quantum tunneling, helping to find all other local minima.

Finally, according to \lem{quansimu} in \append{quantumsimulation}, the quantum query complexity of simulating the Schr\"odinger equation is directly linked to the evolution time $t$ and is bounded by
\begin{align}
   O\left(\|f\|_{L^{\infty}(\Omega)} t \frac{\log (\|f\|_{L^{\infty}(\Omega)} t/\epsilon)}{\log \log (\|f\|_{L^{\infty}(\Omega)} t/\epsilon)}\right),
\end{align}
where $\Omega$ is a large region containing all minima of interest and
$\epsilon$ is the precision quantified by the $L^2$ norm between
the target and the obtained wave functions.
Loosely speaking, we need $\tilde{O}(t)$ quantum queries if the evolution time is $t$.
For SGD, the number of queries needed is at least $\Omega(t/s)$ for time $t$.
Thus, as long as there is a super-polynomial separation between QTW and SGD evolution time, there is a super-polynomial separation between
quantum queries and classical queries for QTW and SGD, respectively.
Conclusions on speedups are essentially based on comparisons of query complexity. However, based on this relationship between evolution time and query complexity, we can focus on comparisons of time.

To sum up, QTW is quantum simulation with the system Hamiltonian being $H = -h^2\Delta +f(x)$ and the initial state being in a low energy subspace of $H$, where $f(x)$ is the potential function of a type of benign landscapes (\assum{quantum1}, \ref{assum:quantum2}, and \ref{assum:quantum3}). QTW can be efficiently implemented on quantum computers.

\subsection{Mixing time}\label{sec:Qmixing}
For a given landscape $f(x)$, the complexity of Hamiltonian simulation mainly depends on the evolution time (see~\append{quantumsimulation}). In this section, we focus on the evolution time needed for fulfilling the tasks of finding all minima.
Since quantum evolutions are unitary, different from SGD, QTW never converges.
In this case, after running QTW for some time $t$, we measure the position of the walker.
Similar to the quantum walks in \cite{CCD+03}, the evolution time $t$ can be chosen uniformly in $[0,\tau]$. Later, we will prove that under sufficiently large $\tau$, QTW can find other wells with probability larger than some constant.
Note that there can be better strategies to determine the time for measurement $t$ than uniformly sampling from an interval $[0,\tau]$ \cite{AC21}.
For simplicity, we only analyze the original strategy of \cite{CCD+03} in the present paper.

As is mentioned earlier, we initialize at a state $\ket{\Phi(0)}\in \mathcal{F}$. In later sections, we may specify $\ket{\Phi(0)}$ to be one of the local ground states.
Let the spectral decomposition of $H_{|\F}$ to be $H_{|\F} = \sum_{k=1}^N E_k \ket{E_k}\bra{E_k}$.
Simulating the system for a time $t$ chosen uniformly in $[0,\tau]$,
one can obtain the probability density of finding the walker at $x$\footnote{The distribution $\rho_{\rm QTW}(\tau,x)$ depends on the initial state $\ket{\Phi(0)}$. A more rigorous notation is $\rho_{\rm QTW}(\tau,x|\Phi(0))$. When there is no confusion, we omit the initial state for simplicity.}
\begin{align}
    \rho_{\rm QTW}(\tau,x)  :=& \frac{1}{\tau}\int_0^{\tau} \d t |\z x|e^{-i H_{|\F} t}\ket{\Phi(0)}|^2 \nonumber \\
   =& \sum_{E_k=E_{k^{\prime}}} \z x|E_k\y \z E_k |\Phi(0)\y \z \Phi(0) |E_{k^{\prime}}\y \z E_{k^{\prime}}| x\y
   \nonumber\\
   &~+\sum_{E_k \neq E_{k^{\prime}}} \frac{1-e^{-i(E_k - E_{k^{\prime}})\tau}}{i(E_k - E_{k^{\prime}})\tau}
   \z x|E_k\y \z E_k |\Phi(0)\y \z \Phi(0) |E_{k^{\prime}}\y \z E_{k^{\prime}}| x\y.
\end{align}
The time-averaged probability density leads to a limiting distribution when $\tau \to \infty$:
\begin{equation}
    \mu_{\rm QTW} :=
    \sum_{E_k=E_{k^{\prime}}} \z x|E_k\y \z E_k |\Phi(0)\y \z \Phi(0) |E_{k^{\prime}}\y \z E_{k^{\prime}}| x\y.
\end{equation}
With the strategy of measuring at $t$ randomly chosen from $[0,\tau]$,
QTW can output a distribution $\rho_{\rm QTW}(\tau,x)$ with limit. Such a process is regarded as mixing.
Quantum mixing time evaluates how fast $\rho_{\rm QTW}(\tau,x)$ converges to $\mu_{\rm QTW}(x)$, and is rigorously defined as:
\begin{definition}[Mixing time of QTW]\label{defn:mixingqtw}
$T_{\rm mix}$ is called the $\epsilon$-close mixing time, iff
for any $\tau\geq T_{\rm mix}$, we have
\begin{align}
    \|\rho_{\rm QTW}(\tau,\cdot) - \mu_{\rm QTW}(\cdot)\|_1 \leq \epsilon.
    \label{eq:mixingqtw}
\end{align}
\end{definition}
The following lemma provides a general bound for the QTW mixing time whose proof is postponed to \append{prooflem2}.
\begin{lemma}[Upper bound for QTW mixing time]\label{lem:qtwmixingtime}
The condition \eq{mixingqtw} can be satisfied if
\begin{align}
   \tau \geq \frac{2}{\epsilon} \sum_{E_k\neq E_{k^{\prime}}}\frac{|\ip{E_k}{\Phi(0)} \ip{\Phi(0)}{E_{k'}}|}{|E_k - E_{k'}|}[1+(N-1)|O(h^{\infty})|],
\end{align}
and this implies
\begin{align}
 \hspace{-2mm}  T_{\rm mix} &= O \bigg(\frac{1}{\epsilon} \sum_{E_k\neq E_{k^{\prime}}}\frac{|\ip{E_k}{\Phi(0)} \ip{\Phi(0)}{E_{k'}}|}{|E_k - E_{k'}|}[1+(N-1)|O(h^{\infty})|] \bigg) \nonumber\\
   &\leq O \bigg(\frac{N}{\epsilon \Delta E} [1+(N-1)|O(h^{\infty})|] \bigg),
   \label{eq:qtwmixing}
\end{align}
where $\Delta E := \min_{E_k\neq E_{k^{\prime}}}|E_k - E_{k'}|$ is referred to as the minimal gap of $H_{|\F}$.
\end{lemma}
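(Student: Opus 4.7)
The plan is to work directly from the spectral expansion of $\rho_{\rm QTW}(\tau,x)-\mu_{\rm QTW}(x)$ displayed immediately before the lemma, which already isolates the contribution of off-diagonal pairs $(E_k,E_{k'})$ with $E_k\neq E_{k'}$. I will then take the $L^1$ norm in $x$ and estimate each piece separately.

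First, the time-averaging coefficient satisfies the elementary bound
\begin{align}
\Big|\frac{1-e^{-i(E_k-E_{k'})\tau}}{i(E_k-E_{k'})\tau}\Big| \leq \frac{2}{|E_k-E_{k'}|\tau},
\end{align}
which follows from $|1-e^{-i\theta}|\leq 2$. Passing absolute values through the integral and applying the triangle inequality gives
\begin{align}
\|\rho_{\rm QTW}(\tau,\cdot)-\mu_{\rm QTW}\|_1 \leq \frac{2}{\tau}\sum_{E_k\neq E_{k'}}\frac{|\ip{E_k}{\Phi(0)}\ip{\Phi(0)}{E_{k'}}|}{|E_k-E_{k'}|}\int |\ip{x}{E_k}\ip{E_{k'}}{x}|\,\d x.
\end{align}
Setting the right-hand side equal to $\epsilon$ will produce the first claim once the $x$-integral is controlled.

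Second, I estimate $\int|\ip{x}{E_k}\ip{E_{k'}}{x}|\,\d x$. Since each eigenstate of $H_{|\F}$ lies in the low-energy subspace $\F$, I expand $\ket{E_k}=\sum_{j}c_{kj}\ket{e_j}$ in the orthonormalized local ground state basis from \defn{ortholocal}. Each $\ket{e_j}$ is localized near the well $U_j$ with Agmon-type exponential decay away from $U_j$ (\append{agmon}), so the cross-well density products $|\langle e_j|x\rangle\langle x|e_{j'}\rangle|$ integrate to $O(h^{\infty})$ when $j\neq j'$, whereas same-well ones integrate to essentially $1$. Combining with $\sum_j|c_{kj}|^2=1$ and Cauchy--Schwarz yields
\begin{align}
\int|\ip{x}{E_k}\ip{E_{k'}}{x}|\,\d x \leq 1+(N-1)|O(h^{\infty})|,
\end{align}
which plugged into the previous display gives the first inequality of the lemma.

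Third, to derive the cleaner bound $T_{\rm mix}=O(N/(\epsilon\Delta E))$, I replace every denominator by $\Delta E$ and apply Cauchy--Schwarz in the sum over $k$:
\begin{align}
\sum_{k\neq k'}|\ip{E_k}{\Phi(0)}\ip{\Phi(0)}{E_{k'}}| \leq \Big(\sum_{k=1}^{N}|\ip{E_k}{\Phi(0)}|\Big)^{2} \leq N\sum_{k=1}^{N}|\ip{E_k}{\Phi(0)}|^{2} = N,
\end{align}
using the normalization $\|\Phi(0)\|=1$ and $\dim\F=N$. This produces the second displayed bound.

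The main obstacle is the estimate on $\int|\ip{x}{E_k}\ip{E_{k'}}{x}|\,\d x$. A blunt Cauchy--Schwarz in $x$ gives $\leq 1$ immediately, but to honestly account for the fact that we work with local ground states that are only approximately orthonormal at the level of absolute-value overlaps, I need the Agmon localization results from \append{agmon} to quantify the cross-well contribution as $O(h^{\infty})$. After that the rest is linear algebra and elementary inequalities.
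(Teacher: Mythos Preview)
Your proposal is correct and follows essentially the same route as the paper: bound the time-averaging factor by $2/|E_k-E_{k'}|\tau$, expand the eigenstates in the orthonormalized local ground state basis, use Agmon decay to control the cross-well overlaps $\int|\langle x|e_j\rangle\langle e_{j'}|x\rangle|\,\d x=O(h^{\infty})$ for $j\neq j'$, and finish with Cauchy--Schwarz. The only cosmetic difference is that the paper expands in $\ket{j}$ first and then bounds the double sum over $j,j'$, while you package the same computation as a single estimate on $\int|\ip{x}{E_k}\ip{E_{k'}}{x}|\,\d x$; your observation that a direct Cauchy--Schwarz in $x$ already gives $\leq 1$ (so the $(N-1)|O(h^{\infty})|$ term is only there to match the lemma's stated form) is a nice shortcut the paper does not spell out.
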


The term $O(h^{\infty})$ in \eq{qtwmixing}
originates from integrals $\int |\ip{x}{j} \ip{j'}{x}| \d x,~j\neq j'$.
Intuitively, states $\ket{j}$ and $\ket{j'}$ localize in different wells, such that $\ip{x}{j} \ip{j'}{x}$ is exponentially small with respect to $h$
for any $x$.
\lem{qtwmixingtime} highlights the dependence of the mixing time on the initial state $\ket{\Phi(0)}$ and the eigenvalue gaps of $H_{|\F}$.
Concrete examples will be given in \sec{comparison},
where we further illustrate \eq{qtwmixing} and compare QTW with SGD.

As is mentioned, \eq{qtwsch} indicates a quantum walk: a well $U_j$ can be seen as a vertex of a graph and $H_{|\F}$ implies graph connectivity (interaction between wells) similar to the graph Laplacian.
The connection between QTW and quantum walks is helpful to simplify the physical picture of QTW.
However, we also address the difference between quantum walks and QTW.
For quantum walks, we only consider the probabilities of finding the walker at
vertices, that is,\footnote{Similar to $\rho(\tau,x)$, the probability $p(\tau,j)$ depends on the initial state $\ket{\Phi(0)}$ and a more rigorous notation is $p(\tau,j|\Phi(0))$. When there is no confusion, we omit the initial state for simplicity.}
\begin{align}
   p(\tau,j) &:= \frac{1}{\tau}\int_0^{\tau} \d t |\z j|e^{-i H_{|\F} t}| \Phi(0) \y|^2 \nonumber \\
   & = \sum_{E_k=E_{k^{\prime}}} \z j|E_k\y \z E_k |\Phi(0)\y \z \Phi(0) |E_{k^{\prime}}\y \z E_{k^{\prime}}| j\y
   \nonumber\\
   &~~+\sum_{E_k \neq E_{k^{\prime}}} \frac{1-e^{-i(E_k - E_{k^{\prime}})\tau}}{i(E_k - E_{k^{\prime}})\tau}
   \z j|E_k\y \z E_k |\Phi(0)\y \z \Phi(0) |E_{k^{\prime}}\y \z E_{k^{\prime}}| j\y.
\end{align}
When $\tau \to \infty$, $p(\tau,j)$ also converges to a limit
\begin{align}
   p(\infty,j) :=
    \sum_{E_k=E_{k^{\prime}}} \z j|E_k\y \z E_k |\Phi(0)\y \z \Phi(0) |E_{k^{\prime}}\y \z E_{k^{\prime}}| j\y.
\end{align}
Following results, \lem{limitdis} and \lem{mixingqw}, show the connection and difference between QTW and quantum walks in a more quantitative way (detailed proofs can be found in \append{prooflem3} and \append{prooflem4}).
\begin{lemma}[Limit distributions]\label{lem:limitdis}
Limit distributions of the QTW and the quantum walk satisfy the following:
\begin{align}
     \mu_{\rm QTW}(x)
     = \sum_{j}
    p(\infty, j) |\z x|j\y|^2 + O(h^{\infty}).
\end{align}
\end{lemma}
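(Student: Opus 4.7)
The strategy is to expand each eigenstate $\ket{E_k}$ of $H_{|\F}$ in the orthonormal basis of local ground states $\{\ket{j}\}_{j=1}^N$ of $\F$, substitute this expansion into the defining formula for $\mu_{\rm QTW}(x)$, and separate diagonal from off-diagonal contributions. Concretely, since $\ket{E_k}\in\F$ we have the completeness identity $\ket{E_k}=\sum_j \ip{j}{E_k}\ket{j}$ within $\F$, which when inserted twice into the definition of $\mu_{\rm QTW}(x)$ yields
$$\mu_{\rm QTW}(x) \;=\; \sum_{j,j'=1}^N \ip{x}{j}\ip{j'}{x}\, A_{jj'}, \qquad A_{jj'} := \sum_{E_k=E_{k'}} \ip{j}{E_k}\ip{E_k}{\Phi(0)}\ip{\Phi(0)}{E_{k'}}\ip{E_{k'}}{j'}.$$
Comparing with the formula for $p(\infty,j)$ stated just before the lemma, the diagonal coefficients are exactly $A_{jj}=p(\infty,j)$, which reproduces the announced main term $\sum_j p(\infty,j)|\ip{x}{j}|^2$.

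To control the off-diagonal pieces, I would invoke the Agmon decay estimates for local ground states from \append{agmon}: up to sub-exponential prefactors, $|\ip{x}{j}|$ decays like $\exp(-d(x,U_j)/h)$. By the triangle inequality for the Agmon pseudo-distance (immediate from its infimum-of-paths definition in \defn{Agmon-distance}) together with $d(U_j,U_{j'})\ge S_0$, we obtain
$$|\ip{x}{j}\ip{j'}{x}| \;\le\; C h^{-d/2}\,\exp\!\left(-\frac{d(x,U_j)+d(x,U_{j'})}{h}\right) \;\le\; C h^{-d/2}\,\exp(-S_0/h),$$
uniformly in $x$ and in $j\neq j'$. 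Since each overlap appearing in $A_{jj'}$ has modulus at most $1$ and there are at most $N^2$ summands, the coefficients $|A_{jj'}|$ are bounded uniformly in $h$. Summing the $O(N^2)$ off-diagonal contributions, the total correction to the diagonal part is exponentially small in $1/h$, hence $O(h^\infty)$ in the sense of Eq.~\eqn{infty-power}, which is the claim.

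The main obstacle is not really in the computation itself---a change of basis in $\F$ followed by the triangle inequality for the Agmon distance---but in confirming that the Agmon estimates invoked from \append{agmon} are quantitative enough and uniform in $x$, so that the polynomial-in-$1/h$ prefactors are absorbed by the exponential factor $\exp(-S_0/h)$. Once those estimates are in hand, the lemma follows in a few lines, and the same expansion trick will be useful again in the subsequent analysis of $\mu_{\rm QTW}$ and the quantum-walk mixing time.
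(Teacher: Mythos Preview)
Your proposal is correct and follows essentially the same approach as the paper's proof: expand the eigenstates $\ket{E_k}$ in the orthonormal basis $\{\ket{j}\}$, separate the resulting double sum over $j,j'$ into diagonal and off-diagonal parts, identify the diagonal contribution with $\sum_j p(\infty,j)|\ip{x}{j}|^2$, and bound the off-diagonal cross terms via the Agmon decay estimate $\ip{x}{j}\ip{j'}{x}=O(h^{-N'}e^{-(d(x,U_j)+d(x,U_{j'}))/h})\le O(h^{-N'}e^{-S_0/h})=O(h^{\infty})$. The paper invokes exactly this inequality (its Eq.~\eq{normoverlap}, derived from \assum{TunEnCon}), so your concern about the polynomial prefactor being absorbed is precisely what the paper addresses in one line.
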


\begin{definition}[Mixing time of quantum walks \cite{CLR20}]\label{defn:mixingqw}
$t_{\rm mix}$ is called the $\epsilon$-close mixing time of the quantum walk, iff
for any $\tau \geq t_{\rm mix}$,
\begin{align}
    \sum_{j=1}^N |p(\tau,j) - p(\infty, j)| \leq \epsilon.
    \label{eq:disofp}
\end{align}
\end{definition}
\begin{lemma}[Upper bound for the mixing time of quantum walks]\label{lem:mixingqw}
The condition \eq{disofp} is satisfied if
\begin{align}
   \tau \geq \frac{2}{\epsilon} \sum_{E_k\neq E_{k^{\prime}}}\frac{|\ip{E_k}{\Phi(0)} \ip{\Phi(0)}{E_{k'}}|}{|E_k - E_{k'}|},
\end{align}
and we have
\begin{align}
   t_{\rm mix} = O \bigg(\frac{1}{\epsilon} \sum_{E_k\neq E_{k^{\prime}}}\frac{|\ip{E_k}{\Phi(0)} \ip{\Phi(0)}{E_{k'}}|}{|E_k - E_{k'}|} \bigg)
   \leq O \bigg(\frac{N}{\epsilon \Delta E} \bigg).
   \label{eq:qwmixing}
\end{align}
\end{lemma}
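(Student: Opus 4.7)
The plan is to directly estimate $p(\tau,j) - p(\infty,j)$ term by term using the explicit expressions already written in the excerpt. Subtracting the limit expression from the finite-$\tau$ expression, the diagonal contributions with $E_k = E_{k'}$ cancel, leaving
\begin{align}
p(\tau,j) - p(\infty,j) = \sum_{E_k \neq E_{k'}} \frac{1 - e^{-i(E_k - E_{k'})\tau}}{i(E_k - E_{k'})\tau} \langle j|E_k\rangle \langle E_k|\Phi(0)\rangle \langle \Phi(0)|E_{k'}\rangle \langle E_{k'}|j\rangle.
\end{align}
I would then take the modulus and use the elementary bound $|1 - e^{-ix}| \leq 2$ valid for all real $x$, yielding
\begin{align}
|p(\tau,j) - p(\infty,j)| \leq \frac{2}{\tau} \sum_{E_k \neq E_{k'}} \frac{|\langle j|E_k\rangle|\,|\langle E_{k'}|j\rangle|\,|\langle E_k|\Phi(0)\rangle \langle \Phi(0)|E_{k'}\rangle|}{|E_k - E_{k'}|}.
\end{align}

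Next I would sum over $j$ and push the $j$-sum inside. For each fixed pair $(k,k')$, Cauchy--Schwarz combined with the normalization of the eigenstates in the position/vertex basis gives
\begin{align}
\sum_{j=1}^{N} |\langle j|E_k\rangle|\,|\langle E_{k'}|j\rangle| \leq \Bigl(\sum_{j} |\langle j|E_k\rangle|^2\Bigr)^{1/2}\Bigl(\sum_{j} |\langle j|E_{k'}\rangle|^2\Bigr)^{1/2} = 1.
\end{align}
Plugging this in produces
\begin{align}
\sum_{j=1}^{N} |p(\tau,j) - p(\infty,j)| \leq \frac{2}{\tau} \sum_{E_k \neq E_{k'}} \frac{|\langle E_k|\Phi(0)\rangle \langle \Phi(0)|E_{k'}\rangle|}{|E_k - E_{k'}|},
\end{align}
and requiring the right-hand side to be at most $\epsilon$ yields precisely the claimed sufficient condition on $\tau$, hence the first bound on $t_{\rm mix}$.

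For the looser second bound in terms of $N$ and $\Delta E$, I would bound $|E_k - E_{k'}|^{-1} \leq 1/\Delta E$ inside the sum and apply Cauchy--Schwarz once more to the coefficients of $\ket{\Phi(0)}$ in the eigenbasis, namely
\begin{align}
\sum_{k,k'} |\langle E_k|\Phi(0)\rangle|\,|\langle \Phi(0)|E_{k'}\rangle| = \Bigl(\sum_k |\langle E_k|\Phi(0)\rangle|\Bigr)^{2} \leq N \sum_k |\langle E_k|\Phi(0)\rangle|^2 = N,
\end{align}
giving the $O(N/(\epsilon \Delta E))$ estimate. No step here looks genuinely hard; the only subtlety is being careful that the $j$-sum Cauchy--Schwarz decouples the pair $(k,k')$ so that the spectral weights factor cleanly from the position overlaps. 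This is also exactly the place where the QTW analysis in \lem{qtwmixingtime} had to pay the extra $[1 + (N-1)|O(h^\infty)|]$ price from non-orthogonality of $|\langle x|j\rangle|^2$ profiles, whereas here the orthonormality of $\{\ket{j}\}$ makes the argument clean.
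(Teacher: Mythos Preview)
Your proof is correct and follows essentially the same approach as the paper's: bound $|1-e^{-ix}|\le 2$, then use Cauchy--Schwarz to collapse the $j$-sum, and finally use Cauchy--Schwarz on the $\ket{\Phi(0)}$ coefficients together with $|E_k-E_{k'}|\ge\Delta E$ for the $N/(\epsilon\Delta E)$ bound. The only cosmetic difference is that the paper applies Cauchy--Schwarz in the AM--GM form $|ab|\le\tfrac12(a^2+b^2)$ at both stages, whereas you use the standard $\sum|a_jb_j|\le(\sum a_j^2)^{1/2}(\sum b_j^2)^{1/2}$ and $(\sum|c_k|)^2\le N\sum|c_k|^2$; these yield identical conclusions.
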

By \lem{limitdis} and the comparison of \lem{qtwmixingtime} and \lem{mixingqw},
we know that for sufficiently small $h$, which indicates that local ground states localize sufficiently near their respective wells, QTW can be well characterized by a quantum walk.
On a higher level of speaking, QTW generalizes quantum walks from walking on discrete graphs to propagating on continuous functions. And QTW may enable new phenomenons not shown in quantum walks when states $\ket{j}$ are poorly localized near $U_j$.
On the other hand, QTW under proper conditions can be used to implement quantum walks.

\subsection{Hitting time}\label{sec:Qhitting}
If we aim at finding one particular
well (the one with global minimum or the one with the best generalization properties), hitting time instead of mixing time should be of interest.
Classically, the hitting time is
the expected time required to find some target region or point.
For quantum algorithms, we cannot output the position of the walker at all times
and the system state would be destroyed by measuring its position.
Thus, the definition of the hitting time for quantum algorithms is slightly different.
We first see how previous literature defines the quantum walk hitting time:

\begin{definition}[Hitting time for quantum walks \cite{AC21}] Consider a quantum walk governed by \eq{qtwsch}. Let the state $\ket{j}$ be the one of interest.
Then, starting from the initial state $\ket{\Phi(0)}$, the hitting time of the quantum walk is defined as follows:
\begin{align}
    t_{\rm hit}(j) := \inf_{\tau>0} \frac{\tau}{p(\tau,j)}.
\end{align}
\end{definition}
To understand this definition, we first refer to the process, evolving the system for time $t$ uniformly chosen from $[0,\tau]$, as one trial.
Using one trial, the probability of getting $\ket{j}$ is $p(\tau,j)$.
So, repeating the trials for $1/p(\tau,j)$ times guarantees to hit $\ket{j}$ with high probability.
In this case, the total evolution time needed is bounded by $\tau/p(\tau,j)$.
In the same spirit, we can define and bound the QTW hitting time as follows:
\begin{definition}[Hitting time of QTW] For QTW governed by \eq{qtwsch}, let the open and $C^2$-bounded region $\Omega$ be the region of interest.
Then, starting from the initial state $\ket{\Phi(0)}$, the $\Omega$-hitting time of QTW is defined as follows:
\begin{align}
    T_{\rm hit}(\Omega) := \inf_{\tau>0} \frac{\tau}{\int_{\Omega}\rho_{\rm QTW}(\tau,x)\d x}.
\end{align}
\end{definition}

Basic results about the hitting time (\lem{qwhitting} and \lem{qtwhitting}) are present as follows. The proof of \lem{qtwhitting} is in \append{prooflem5} and that of \lem{qtwhitting} is in \append{prooflem6}.
\begin{lemma}[Upper bound of the quantum walk hitting time]\label{lem:qwhitting}
The probability of finding $\ket{j}$ can be bounded as follows
\begin{align}
    p(\tau,j) \geq p(\infty,j) - \sum_{E_k \neq E_{k^{\prime}}} \frac{2}{|E_k - E_{k^{\prime}}|\tau}
   |\z j|E_k\y \z E_k |\Phi(0)\y \z \Phi(0) |E_{k^{\prime}}\y \z E_{k^{\prime}}| j\y|\nonumber\\
    \geq p(\infty,j) - \frac{2}{\Delta E \tau}.
\end{align}
As a result, for any $\epsilon< p(\infty,j)$, setting $\tau_{\epsilon} = 2/\Delta E\epsilon$, we have
\begin{align}
    t_{\rm hit}(j) \leq \frac{\tau_{\epsilon}}{p(\tau_{\epsilon},j)} \Rightarrow
    t_{\rm hit}(j) = O\bigg(\frac{1/\Delta E\epsilon}{p(\infty,j) - \epsilon}\bigg).
    \label{eq:qwhitting}
\end{align}
\end{lemma}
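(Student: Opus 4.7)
The plan is to work directly from the spectral decomposition of $p(\tau,j)$ stated just before the lemma. That identity splits $p(\tau,j)$ into a diagonal sum equal to $p(\infty,j)$ and an off-diagonal remainder carrying the time-dependent factor $(1-e^{-i(E_k-E_{k'})\tau})/(i(E_k-E_{k'})\tau)$. My strategy is threefold: (i) convert the identity into a lower bound on $p(\tau,j)$ by passing to absolute values; (ii) collapse the resulting double sum using Cauchy--Schwarz; and (iii) tune $\tau$ to deduce the hitting-time bound.

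For (i) I would apply the triangle inequality to the off-diagonal piece and invoke the elementary estimate $|1-e^{-i\theta}|\leq 2$ valid for every real $\theta$, which yields
\begin{equation*}
\bigl|p(\tau,j)-p(\infty,j)\bigr| \;\leq\; \sum_{E_k\neq E_{k'}} \frac{2}{|E_k-E_{k'}|\,\tau}\,\bigl|\z j|E_k\y\, \z E_k|\Phi(0)\y\, \z\Phi(0)|E_{k'}\y\, \z E_{k'}|j\y\bigr|.
\end{equation*}
Since both $p(\tau,j)$ and $p(\infty,j)$ are real-valued probabilities, rearranging this gives the first stated inequality.

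For (ii) I would use $|E_k-E_{k'}|\geq \Delta E$ to pull the factor $2/(\Delta E\,\tau)$ out of the sum, and then apply Cauchy--Schwarz. Writing $a_k:=\z j|E_k\y$ and $b_k:=\z E_k|\Phi(0)\y$, the residual sum is bounded by the full unrestricted double sum,
\begin{equation*}
\sum_{E_k\neq E_{k'}} \bigl|a_k b_k a_{k'}^{*} b_{k'}^{*}\bigr| \;\leq\; \Bigl(\sum_k |a_k b_k|\Bigr)^{\!2} \;\leq\; \Bigl(\sum_k|a_k|^2\Bigr)\Bigl(\sum_k|b_k|^2\Bigr)\;=\;1,
\end{equation*}
where the last equality uses the orthonormality of the spectral basis $\{\ket{E_k}\}$ of $H_{|\F}$ together with $\|\ket{j}\|=\|\ket{\Phi(0)}\|=1$. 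This delivers the second inequality $p(\tau,j)\geq p(\infty,j)-2/(\Delta E\,\tau)$.

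For (iii), substituting $\tau_\epsilon=2/(\Delta E\,\epsilon)$ into the bound from (ii) gives $p(\tau_\epsilon,j)\geq p(\infty,j)-\epsilon$, which is strictly positive thanks to the hypothesis $\epsilon<p(\infty,j)$. Since $t_{\rm hit}(j)=\inf_{\tau>0}\tau/p(\tau,j)\leq \tau_\epsilon/p(\tau_\epsilon,j)$, we obtain $t_{\rm hit}(j)\leq (2/(\Delta E\,\epsilon))/(p(\infty,j)-\epsilon)$, which is the claimed $O$-bound. The main obstacle is essentially bookkeeping; the only genuinely substantive step is the Cauchy--Schwarz collapse, where one must confirm that extending the $(k,k')$ range from $\{E_k\neq E_{k'}\}$ to all pairs only enlarges a non-negative sum, and that any spectral degeneracies of $H_{|\F}$ are harmless since we may pick an arbitrary orthonormal basis within each eigenspace without affecting either $p(\infty,j)$ or the displayed bound.
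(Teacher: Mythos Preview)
Your proposal is correct and follows essentially the same route as the paper's proof: both start from the spectral expansion of $p(\tau,j)$, bound the oscillatory factor by $2/(|E_k-E_{k'}|\tau)$, extract $\Delta E$, collapse the double sum to $1$, and then specialize $\tau=\tau_\epsilon$. The only cosmetic difference is in the collapse step---you factor the double sum as $\bigl(\sum_k|a_kb_k|\bigr)^2$ and apply Cauchy--Schwarz to the single sum, whereas the paper applies the AM--GM form $|XY|\le\tfrac12(|X|^2+|Y|^2)$ with $X=\langle j|E_k\rangle\langle E_{k'}|j\rangle$ and $Y=\langle E_k|\Phi(0)\rangle\langle\Phi(0)|E_{k'}\rangle$; both yield the same bound of $1$.
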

If $\epsilon$ in \eq{qwhitting} is small enough, $\tau_{\epsilon} = 2/\Delta E\epsilon$ permits a good mixing and
we may write
\begin{align}
    t_{\rm hit}(j) = O\bigg(\frac{1}{p(\infty,j)\Delta E\epsilon} \bigg),
\end{align}
which suggests we are using the mixing time to bound the hitting time.

\begin{lemma}[Upper bound of the QTW hitting time]\label{lem:qtwhitting}
Consider an bounded open set $\Omega_{j}$ containing only one well $U_j$,\footnote{To be rigorous, satisfy \eq{georelation-elliptic} in \append{interactionmatrix}.} we have
\begin{align}
   \hspace{-1mm}  \int_{\Omega_j}\rho_{\rm QTW}(\tau,x)\d x
     &\geq \int_{\Omega_j}\mu_{\rm QTW}(x)\d x - \frac{2}{\Delta E \tau}(1+|O(h^{\infty})|)\nonumber\\
     &= p(\infty, j) + O(h^{\infty}) - \frac{2}{\Delta E \tau}(1+|O(h^{\infty})|).
     \label{eq:qtwhitting}
\end{align}
For any $\epsilon< \int_{\Omega_j}\mu_{\rm QTW}(x)\d x$, let $\tau_{\epsilon} = 2 (1+|O(h^{\infty})|)/\Delta E\epsilon$, we have
\begin{align}
    T_{\rm hit}(\Omega_j) \leq \frac{\tau_{\epsilon}}{\int_{\Omega_j}\rho_{\rm QTW}(\tau_{\epsilon},x)\d x} \Rightarrow
    T_{\rm hit}(\Omega_j) = O\bigg(\frac{1}{\Delta E\epsilon} \frac{1+|O(h^{\infty})|}{\int_{\Omega_j}\mu_{\rm QTW}(x)\d x - \epsilon}\bigg).
\end{align}
\end{lemma}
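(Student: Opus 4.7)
\textbf{Proof proposal for \lem{qtwhitting}.} The plan is to mimic the strategy used in \lem{qtwmixingtime} and \lem{qwhitting}, but with the test functional being integration over the spatial region $\Omega_j$ rather than projection onto a single vertex $\ket{j}$. The role of the ``vertex localization'' in the quantum walk is replaced by spatial localization of the local ground states, which supplies the $O(h^\infty)$ correction in \eq{qtwhitting}.

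First I would write the spectral decomposition of $\rho_{\rm QTW}(\tau,x)-\mu_{\rm QTW}(x)$ in the eigenbasis $\{\ket{E_k}\}$ of $H_{|\F}$, keeping only the off-diagonal $E_k\neq E_{k'}$ pairs:
\begin{align}
\rho_{\rm QTW}(\tau,x)-\mu_{\rm QTW}(x)=\sum_{E_k\neq E_{k'}}\frac{1-e^{-i(E_k-E_{k'})\tau}}{i(E_k-E_{k'})\tau}\ip{x}{E_k}\ip{E_k}{\Phi(0)}\ip{\Phi(0)}{E_{k'}}\ip{E_{k'}}{x}.
\end{align}
Integrating over $\Omega_j$ and applying $|(1-e^{-i\theta})/(i\theta)|\leq 2/|\theta|$ gives
\begin{align}
\Bigl|\int_{\Omega_j}(\rho_{\rm QTW}-\mu_{\rm QTW})\,\d x\Bigr|\leq\sum_{E_k\neq E_{k'}}\frac{2|\ip{E_k}{\Phi(0)}\ip{\Phi(0)}{E_{k'}}|}{|E_k-E_{k'}|\tau}\,\Bigl|\int_{\Omega_j}\ip{E_{k'}}{x}\ip{x}{E_k}\,\d x\Bigr|.
\end{align}

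Next, since $\ket{E_k},\ket{E_{k'}}\in\F$ I would expand them in the local ground state basis $\{\ket{j'}:j'=1,\ldots,N\}$. By the Agmon-type exponential decay of $\ket{j'}$ away from $U_{j'}$ (Appendix B), together with the hypothesis that $\Omega_j$ contains only the well $U_j$, every spatial overlap $\int_{\Omega_j}\ip{x}{j'}\ip{j''}{x}\,\d x$ with $(j',j'')\neq (j,j)$ is $O(h^\infty)$, while $\int_{\Omega_j}|\ip{x}{j}|^2\,\d x=1+O(h^\infty)$. Collapsing the double sum over $(j',j'')$ and using that $N$ is finite therefore gives $|\int_{\Omega_j}\ip{E_{k'}}{x}\ip{x}{E_k}\,\d x|\leq 1+|O(h^\infty)|$ uniformly in $(k,k')$. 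Bounding the rest of the sum exactly as in \lem{qwhitting} (using $\Delta E$ as the minimum gap and $\sum_k|\ip{E_k}{\Phi(0)}|^2\leq 1$) yields the first line of \eq{qtwhitting}.

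For the second equality I would invoke \lem{limitdis}, which, applied to $\Omega_j$, converts $\int_{\Omega_j}\mu_{\rm QTW}(x)\,\d x$ into $p(\infty,j)+O(h^\infty)$ by the same localization argument. Finally, to obtain the hitting-time bound I would solve $\int_{\Omega_j}\rho_{\rm QTW}(\tau,x)\,\d x\geq \int_{\Omega_j}\mu_{\rm QTW}(x)\,\d x-\epsilon$ for $\tau$, which from the first line forces $\tau_\epsilon=2(1+|O(h^\infty)|)/(\Delta E\,\epsilon)$, and then divide $\tau_\epsilon$ by the resulting lower bound on the success probability. The main technical obstacle, as in \lem{qtwmixingtime}, is tracking the $O(h^\infty)$ terms through the double expansion in $\{\ket{j'}\}$: because the $O(h^\infty)$ estimates on the overlaps are not uniform in $N$, the collapse of the double sum must use that $N$ is a fixed finite quantity depending only on $f$. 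This is the same accounting issue, with the same resolution, as in the mixing-time proof.
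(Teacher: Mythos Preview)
Your approach is the same as the paper's, and almost every step matches. There is one slip, however, that as written would cost you a factor of $N$ and therefore not prove the lemma as stated.

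After expanding $\int_{\Omega_j}\ip{E_{k'}}{x}\ip{x}{E_k}\,\d x$ in the basis $\{\ket{j'}\}$ and observing that only the $(j',j'')=(j,j)$ term survives up to $O(h^\infty)$, you summarize this as the uniform bound $|\int_{\Omega_j}\ip{E_{k'}}{x}\ip{x}{E_k}\,\d x|\leq 1+|O(h^\infty)|$. If you use only that, the remaining sum is
\[
\sum_{E_k\neq E_{k'}}\frac{2|\ip{E_k}{\Phi(0)}\ip{\Phi(0)}{E_{k'}}|}{|E_k-E_{k'}|\tau}
\leq \frac{2}{\Delta E\,\tau}\Bigl(\sum_k|\ip{E_k}{\Phi(0)}|\Bigr)^2
\leq \frac{2N}{\Delta E\,\tau},
\]
not $\frac{2}{\Delta E\,\tau}$; the normalization $\sum_k|\ip{E_k}{\Phi(0)}|^2\leq 1$ alone only controls the $\ell^2$ sum, not the $\ell^1$ sum. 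The Cauchy--Schwarz argument in \lem{qwhitting} gets the bound $1$ precisely because it pairs the $|\ip{E_k}{\Phi(0)}|$ factors with the $|\ip{j}{E_k}|$ factors, and you have discarded the latter.

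The fix is immediate and is exactly what the paper does: rather than passing to the uniform bound, record the stronger conclusion of your own localization step,
\[
\int_{\Omega_j}\ip{E_{k'}}{x}\ip{x}{E_k}\,\d x=\ip{E_{k'}}{j}\ip{j}{E_k}+O(h^\infty),
\]
and substitute this back. The resulting sum is then literally the one appearing in \lem{qwhitting}, and its Cauchy--Schwarz bound by $1$ applies verbatim, yielding $\frac{2}{\Delta E\,\tau}(1+|O(h^\infty)|)$ as required. Everything else in your proposal (the use of \lem{limitdis} for the second equality, the choice of $\tau_\epsilon$, and the handling of the $O(h^\infty)$ bookkeeping with fixed finite $N$) matches the paper.
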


The upper bounds we have obtained on mixing and hitting time are
still not explicit, as $H_{|\F}$ is not given.
Next, we establish
relationships between an objective landscape, the corresponding interaction matrix $H_{|\F}$, and the time cost of the QTW algorithm. This is a main task of the present paper.
In later sections, we figure out major geometric properties that affect $H_{|\F}$ on specific landscapes.

\subsection{Application: Tensor decomposition}\label{sec:tendecom}
After giving the definition of QTW and studying its mixing and hitting time, now we use QTW to solve a practical problem, orthogonal tensor decomposition, which is a central problem in learning many latent variable models \cite{AGH+14}.
Specifically, we consider a fourth-order tensor $T\in \mathbb{R}^{d^4}$ that has orthogonal decomposition:
\begin{equation}
    T = \sum_{i=1}^d a_j^{\otimes 4},
    \label{eq:4-tensor}
\end{equation}
where the components $\{a_j \}$ form an orthonormal basis of a $d$-dimensional space ($a_j\trans a_j = \delta_{ij}$). The goal of orthogonal tensor decomposition
is to find all components $\{a_j \}$.

Following previous popular methods \cite{CLX+09,Hyv99},
we try to find all components by a single optimization problem.
Concretely, we consider the following landscape \cite{FJK96}:\footnote{The original objective function used in previous papers including \cite{FJK96} is $T(u,u,u,u)$. The function in \eq{TenDecObj} is designed such that $\min f = 0$.}
\begin{equation}
    f(u) = 1-T(u,u,u,u) = 1- \sum_{i=1}^d (u\trans a_j)^4,~\|u\|_2^2 = 1.
    \label{eq:TenDecObj}
\end{equation}
Without loss of generality, we work in the coordinate system specified by $\{a_j\}_{j=1}^{d}$. In particular, let $u = \sum_j^d x_j a_j$ and $x= (x_1,\ldots,x_d)$, we obtain $f(x) = 1-\sum_{i=1}^d x_i^4$. Later, we also use $a_j$ to denote the vector $(0,\ldots,1,\ldots,0)$ where the only nonzero coordinate with value 1 appears in the $j$th entry.
$f(x)$ has $2d$ local minima $\pm a_1,\ldots\pm a_d$ uniformly distributed on the $d$-dimensional sphere $\mathbb{S}^{d-1}$.\footnote{Here, we need to consider quantum simulation on the manifold $\mathbb{S}^{d-1}$ which should cost the same
quantum queries as quantum simulation on $\mathbb{R}^{d}$ under the same
evolution time
(see discussions in \append{quantumsimulation}).}
Therefore, finding all the minima solves the orthogonal tensor decomposition problem.

\begin{figure}
  \centerline{
  \includegraphics[width=0.4\textwidth]{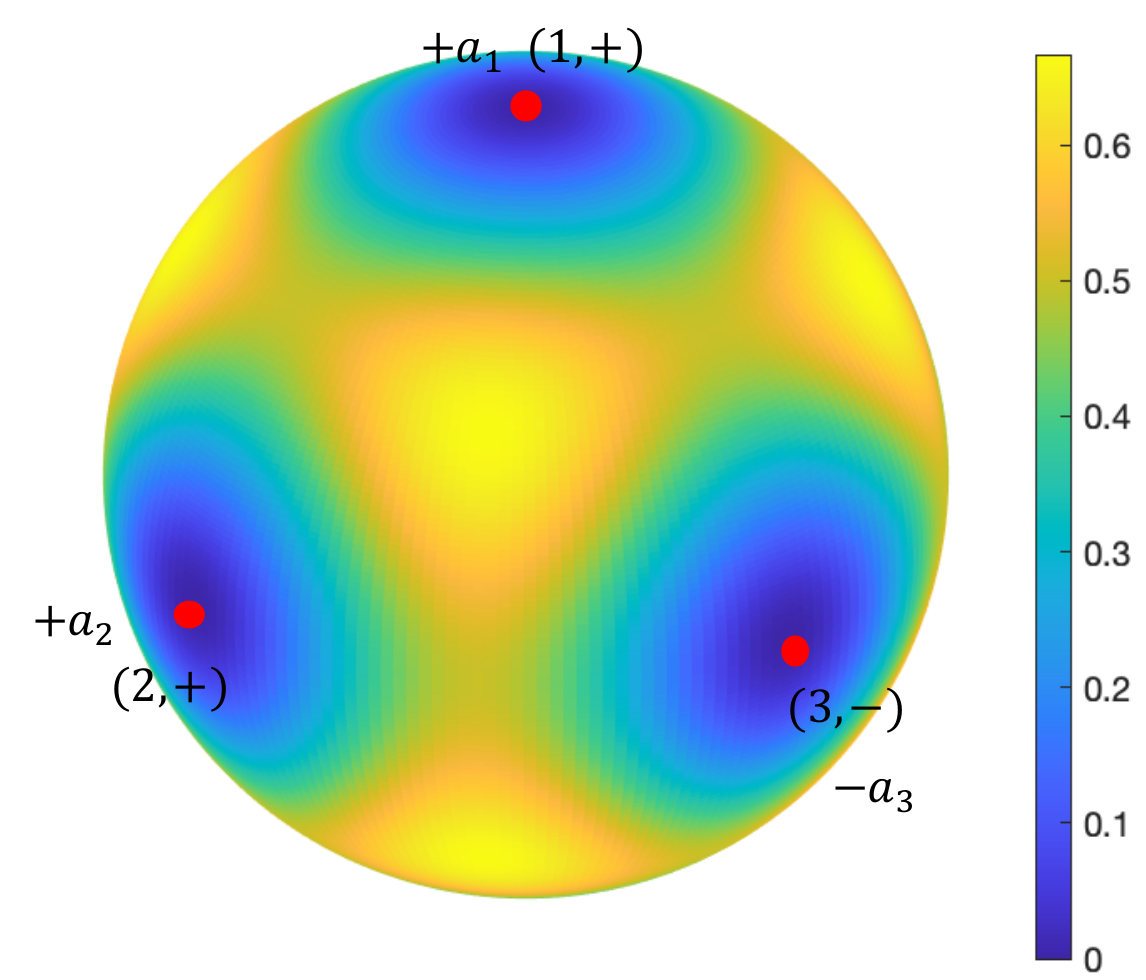}}
    \caption{The landscape given by \eq{TenDecObj} for dimension $d=3$:
     local minima $a_1$, $a_2$, and $-a_3$ are highlighted by red points ({\color{red}$\bullet$}) and corresponding labels $\alpha = (j(\alpha), \pi_{\alpha})$ are shown.
    }
\label{fig:TenDeclabel}
\end{figure}
The problem of tensor decomposition has notable symmetries. As a result, the objective function $f$ \eq{TenDecObj} is nonconvex, and we can apply QTW to such a landscape.
We can use the pair $\alpha = (j,\pi_{\alpha})$ to denote the local minima and corresponding wells, where $j = j(\alpha)$ (i.e. $j(\cdot)$ is a function) refers to $a_j$ and $\pi_{\alpha} \in \{\pm 1\}$ specifies whether it
is $+ a_j$ or $- a_j$.
\fig{TenDeclabel} shows the landscape $f$ for $d =3$, where some minima are labeled.
The local ground state of the well $(j,\pi_{\alpha})$ is denoted by $|j,\pi_{\alpha}\y$. In the basis $\{|1,+\y,\ldots,|d,+\y,|1,-\y,\ldots,|d,-\y\}$ which spans a low-energy subspace $\F$, the interaction matrix modulus an exponential error has the form
\begin{equation}
    H_{|\F} = \left(
    \begin{array}{cccc|cccc}
        \mu & w  & \cdots & w & 0 & w  & \cdots & w \\
         w & \mu & \ddots & \vdots & w & 0 & \ddots & \vdots \\
         \vdots  &\ddots  & \ddots& w  & \vdots & \ddots &\ddots &w \\
         w  & \cdots& w & \mu  & w  & \cdots& w & 0 \\
         \hline
         0 & w  & \cdots & w & \mu & w  & \cdots & w\\
         w & 0 &\ddots & \vdots & w & \mu & \ddots & \vdots\\
         \vdots & \ddots&\ddots & w & \vdots & \ddots & \ddots& w\\
          w  & \cdots& w & 0  & w  & \cdots& w & \mu \\
    \end{array}
    \right).
    \label{eq:TenDec}
\end{equation}
Here the quantity $\mu$ stands for the energy of local ground states, and $w$ is the tunneling amplitude quantifying the interactions between wells.
To understand \eq{TenDec}, for any $j$, imagine a sphere where $(j,+)$ is the north pole and $(j,-)$ the south pole, then for $j'\neq j$, $(j',\pm)$ are evenly distributed on the equator.
The energy of all local ground states are the same because of the symmetry.
So, diagonal elements of $H_{|\F}$ are all $\mu$.
The interactions between $(j,+)$ and $(j',\pm)$ for all $j'\neq j$ should be the same as well. However, the interaction between $(j,+)$ and $(j,-)$ is exponentially weaker due to the longer distance between $(j,+)$ and $(j,-)$.
As a result, we write $\bra{j,+}H_{|\F}\ket{j,-} = 0$ modulus an exponential error
and all other off-diagonal elements as $w$.

As is demonstrated in previous sections, the time cost of QTW highly depends on spectral gaps of $H_{|\F}$. The following lemma studies eigenstates and eigenvalues of $H_{|\F}$ (see proof in \append{prooflem7}).
\begin{lemma}\label{lem:TenDecEig}
The eigenstates and corresponding eigenvalues of $H_{|\F}$ in \eq{TenDec} are given by
\begin{align}
    |E_k\y &= \frac{1}{\sqrt{2d}}\sum_j e^{i\frac{2\pi}{d}kj}(|j,+\y+ |j,-\y),~k=1,\ldots,d \\
    |E_k\y &= \frac{1}{\sqrt{2}}(|k-d,+\y - |k-d,-\y),~k=d+1,\ldots,2d
\end{align}
where
\begin{align}
    E_k &= \mu - 2w,~k=1,\ldots,d-1 \\
    E_d &= \mu + 2w(d-1), \\
    E_k &= \mu,~k=d+1,\ldots,2d.
\end{align}
\end{lemma}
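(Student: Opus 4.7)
The plan is to exploit the block structure of the $2d\times 2d$ matrix $H_{|\F}$ in Eq.~\eq{TenDec}, which has the form $\begin{pmatrix} A & B \\ B & A \end{pmatrix}$ with $A=(\mu-w)I+wJ$ (diagonal $\mu$, off-diagonal $w$) and $B=w(J-I)$ (diagonal $0$, off-diagonal $w$), where $J$ is the $d\times d$ all-ones matrix. The standard trick for such block-symmetric matrices is a similarity transform by $\frac{1}{\sqrt{2}}\begin{pmatrix} I & I \\ I & -I\end{pmatrix}$, which reduces $H_{|\F}$ to the block-diagonal form $\mathrm{diag}(A+B,\,A-B)$. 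Eigenvectors of $A+B$ lift to the symmetric combinations $\tfrac{1}{\sqrt{2}}(\ket{j,+}+\ket{j,-})$, while eigenvectors of $A-B$ lift to the antisymmetric combinations $\tfrac{1}{\sqrt{2}}(\ket{j,+}-\ket{j,-})$.

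The key algebraic observation is that $A-B=\mu I$, so every antisymmetric combination $\frac{1}{\sqrt{2}}(\ket{j,+}-\ket{j,-})$ is automatically an eigenvector with eigenvalue $\mu$; indexing these by $k=d+1,\dots,2d$ with $j=k-d$ produces exactly the second family of states claimed in the lemma. For the symmetric block, I compute $A+B=(\mu-2w)I+2wJ$, which is circulant (in fact, a linear combination of $I$ and $J$) and hence diagonalized by the discrete Fourier basis $\ket{\phi_k}=\frac{1}{\sqrt{d}}\sum_j e^{i\frac{2\pi}{d}kj}\ket{j}$ for $k=1,\dots,d$.

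Because $J=d\,\ket{\phi_d}\bra{\phi_d}$ has eigenvalue $d$ on the uniform vector $\ket{\phi_d}$ and eigenvalue $0$ on the remaining Fourier modes $\ket{\phi_k}$ with $k=1,\dots,d-1$, the eigenvalues of $A+B$ are $\mu-2w$ (multiplicity $d-1$) and $\mu+2w(d-1)$ (multiplicity $1$). Lifting back through the $\frac{1}{\sqrt{2}}$ factor gives the first family of states $\ket{E_k}=\frac{1}{\sqrt{2d}}\sum_j e^{i\frac{2\pi}{d}kj}(\ket{j,+}+\ket{j,-})$ with the stated eigenvalues, completing the identification. As a consistency check I would confirm that the $2d$ constructed states are mutually orthonormal, which follows from orthonormality of the Fourier basis on the index $j$ combined with orthogonality of $\ket{j,+}\pm\ket{j,-}$ across the $\pm$ sector.

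There is no serious obstacle here: the proof is essentially a two-step diagonalization (symmetric/antisymmetric reduction, then discrete Fourier transform) of a highly structured matrix. The only place to be mildly careful is bookkeeping the indexing $k\in\{1,\dots,2d\}$ so that the enumeration of eigenvalues matches the statement of the lemma, and noting that the equalities hold only modulo the $O(h^\infty)$ corrections already absorbed into the off-diagonal structure of \eq{TenDec} (as commented in the discussion preceding the lemma).
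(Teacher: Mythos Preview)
Your proposal is correct and essentially identical to the paper's proof: the paper also passes to the symmetric/antisymmetric basis $\ket{j,S}=\tfrac{1}{\sqrt{2}}(\ket{j,+}+\ket{j,-})$, $\ket{j,A}=\tfrac{1}{\sqrt{2}}(\ket{j,+}-\ket{j,-})$ (your similarity by $\tfrac{1}{\sqrt{2}}\begin{psmallmatrix} I & I \\ I & -I\end{psmallmatrix}$) to obtain the block-diagonal form $\mathrm{diag}(A+B,\,\mu I)$, and then diagonalizes the symmetric block via the discrete Fourier basis exactly as you do. The only cosmetic difference is that the paper isolates the diagonalization of $(\mu-2w)I+2wJ$ as a separate auxiliary lemma rather than invoking the rank-one structure of $J$ directly.
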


Evolving the system for at least the mixing time, the measured results would be subject to the limit distribution $\mu_{\rm QTW}$.
Since $\mu_{\rm QTW}$ would concentrate near all minima, we are able to find all components. Combined with results of \sec{Qmixing}, we can get the mixing time of the QTW as follows (the proof is postponed to \append{prooflem8}):
\begin{lemma}\label{lem:TenDecmix}
For the landscape \eq{TenDecObj}, starting from a local ground state $|\alpha\y$, the distribution $\rho_{\rm QTW}(\tau,x)$ converges to the limiting distribution $\mu_{\rm QTW}$ obeying the following relation:
\begin{align}
    \|\rho_{\rm QTW} - \mu_{\rm QTW}\|_{L^1(\mathbb{S}^{d-1})}
    \leq \frac{1}{|w| \tau}(\Theta(1/2)+ O(h^{\infty})).
    \label{eq:TenDecmix}
\end{align}
The $\epsilon$-close mixing time is subsequently bounded as
\begin{align}
    T_{\rm mix} = O\bigg(\frac{1}{|w|\epsilon}(1+O(h^{\infty}))\bigg).
\end{align}
\end{lemma}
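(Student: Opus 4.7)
The plan is to combine \lem{limitdis} with an explicit closed-form calculation of the quantum-walk probabilities $p(\tau,(j,\pi))$ on the $2d$ wells. By \lem{limitdis}, both $\rho_{\rm QTW}(\tau,x)$ and $\mu_{\rm QTW}(x)$ differ from $\sum_{\alpha}|\z x|\alpha\y|^2 p(\tau,\alpha)$ and $\sum_{\alpha}|\z x|\alpha\y|^2 p(\infty,\alpha)$ respectively only by $O(h^{\infty})$. Since the local ground states are almost disjoint in configuration space (so that $\int|\z x|\alpha\y|^2\d x=1+O(h^{\infty})$), this reduces the $L^1$ error to an $\ell^1$ error on the discrete set of wells:
\begin{align}
\|\rho_{\rm QTW}-\mu_{\rm QTW}\|_{L^1(\mathbb{S}^{d-1})} \le \sum_{\alpha}|p(\tau,\alpha)-p(\infty,\alpha)|+O(h^{\infty}).
\end{align}

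Without loss of generality take $|\Phi(0)\y=|j_0,+\y$. Using the eigenbasis of \lem{TenDecEig}, the overlaps are $\z E_k|j_0,+\y=\tfrac{1}{\sqrt{2d}}e^{-i(2\pi/d)kj_0}$ for $k=1,\dots,d$ and $\z E_{d+j_0}|j_0,+\y=\tfrac{1}{\sqrt{2}}$, all others vanishing. Writing $|\Phi(t)\y=\sum_k e^{-iE_k t}\z E_k|\Phi(0)\y |E_k\y$ and using the discrete Fourier identity $\sum_{k=0}^{d-1} e^{i(2\pi/d)k(j-j_0)}=d\,\delta_{j,j_0}$, one computes $\z j,\pi|\Phi(t)\y$ in closed form. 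Writing $\varphi_1=2wt$, $\varphi_2=2w(d-1)t$, $\varphi_3=2wdt$, I expect the squared amplitudes to collapse into combinations of $\cos\varphi_1$, $\cos\varphi_2$, $\cos\varphi_3$ with three distinct structural cases: $\alpha=(j_0,+)$, $\alpha=(j_0,-)$, and $\alpha=(j,\pm)$ with $j\neq j_0$ (the latter two signs giving the same result by symmetry).

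Time-averaging via $\tfrac{1}{\tau}\int_0^{\tau}\cos(A t)\,\d t=\sin(A\tau)/(A\tau)$ produces $p(\tau,\alpha)$, whose limits $p(\infty,\alpha)$ read off by dropping the oscillatory terms. Each deviation $p(\tau,\alpha)-p(\infty,\alpha)$ becomes a linear combination of $\sin(A_i\tau)/(A_i\tau)$ bounded by $1/(|A_i|\tau)$, with $|A_i|\in\{2|w|,2|w|(d-1),2|w|d\}$. Summing absolute values over all $2d$ labels, the three ``diagonal'' frequency contributions aggregate as $[d(d-1)/(2|w|\tau)+d/(2|w|(d-1)\tau)+2(d-1)/(2|w|d\tau)]/d^2$, which simplifies to $\tfrac{1}{2|w|\tau}[\tfrac{d-1}{d}+\tfrac{1}{d(d-1)}+\tfrac{2(d-1)}{d^3}]=\tfrac{1}{2|w|\tau}(1+o_d(1))$, yielding the claimed bound $\tfrac{1}{|w|\tau}(\Theta(1/2)+O(h^{\infty}))$. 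Setting this right-hand side $\le\epsilon$ gives the mixing-time bound via \defn{mixingqtw}.

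The main obstacle is not any single step but the bookkeeping: the naive Cauchy--Schwarz route through \lem{qtwmixingtime} would bound $\sum_{\alpha}|\z\alpha|E_k\y\z E_{k'}|\alpha\y|\le 1$ and produce a loose $\Theta(\sqrt{d})$ prefactor coming from cross-terms between the group of $d-1$ degenerate eigenstates at $\mu-2w$ and the single state at $\mu+2w(d-1)$. To reach the sharp constant $1/2$, one must retain the explicit phases from \lem{TenDecEig} so that the Fourier cancellations among the $d$ eigenstates of the first type are visible; this is exactly what the direct computation of $\z j,\pi|\Phi(t)\y$ accomplishes.
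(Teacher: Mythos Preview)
Your proposal is correct and follows essentially the same route as the paper: both exploit the explicit eigenbasis of \lem{TenDecEig} so that the Fourier cancellations among the degenerate block are visible, reduce the oscillatory part to the three frequencies $2|w|$, $2|w|(d-1)$, $2|w|d$, and arrive at exactly the same constant $\frac{d-1}{2d}+\frac{1}{2d(d-1)}+\frac{d-1}{d^3}$ in front of $\frac{1}{|w|\tau}$. The only organizational difference is that you first reduce to the discrete $\ell^1$ error $\sum_\alpha|p(\tau,\alpha)-p(\infty,\alpha)|$ and then compute, whereas the paper carries the spatial density $\rho_{\rm QTW}(\tau,x)-\mu_{\rm QTW}(x)$ throughout and extracts the factors $|\z x|\beta\y|^2$ inline; one minor caveat is that \lem{limitdis} as stated covers only $\mu_{\rm QTW}$, so you should note that the analogous decomposition for $\rho_{\rm QTW}(\tau,\cdot)$ holds by the same cross-term estimate (using $|(1-e^{-i\theta})/(i\theta)|\le 1$).
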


To determine the total evolution time for finding all components $\{a_j\}$ (i.e., all global minima), we need to calculate $\int_{\Omega_{\beta}}\mu_{\rm QTW} \d x$, where $\Omega_{\beta}$ is an open set containing the minimum $\beta$.
According to \lem{qtwhitting}, $\int_{\Omega_{\beta}}\mu_{\rm QTW} \d x$ is the probability of finding the particle in a neighborhood of $\pi_{\beta} a_{j(\beta)}$ and can be captured by
the probability of finding the system at the state $\ket{\beta}$.
Starting from a local state $|\alpha\y$, the probability of hitting $\ket{\beta}$ is given by the following lemma and the proof can be found in \append{prooflem9}.
\begin{lemma}\label{lem:TenDecprobability}
Initiating at a local state $|\alpha\y$ where $\alpha = (j(\alpha), \pi_{\alpha})$, after simulating for a time $t$ which is chosen uniformly from $[0,\tau],~\tau \to \infty$, the limiting distribution represented by the probability of tunneling to a local state $|\beta\y$ is given by
\begin{equation}
    p(\infty,\beta | \alpha )
    =\left\{\begin{array}{ll}
         \frac{1}{2d^2},~j(\alpha) \neq j(\beta),\\[3pt]
         \frac{1}{2}-\frac{(d-1)}{2d^2},~j(\alpha) = j(\beta).
    \end{array} \right.
\end{equation}
\end{lemma}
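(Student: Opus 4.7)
The plan is to apply the formula for the limiting distribution of the time-averaged position measurement, which, specialized to the discrete amplitudes at local wells, reads
\[
p(\infty,\beta\mid\alpha) \;=\; \sum_{E_k = E_{k'}} \langle \beta | E_k\rangle \langle E_k | \alpha\rangle \langle \alpha | E_{k'}\rangle \langle E_{k'} | \beta\rangle.
\]
By \lem{TenDecEig}, the spectrum of $H_{|\F}$ consists of exactly three distinct eigenvalues: $\mu-2w$ with multiplicity $d-1$ (the $k=1,\ldots,d-1$ block), $\mu+2w(d-1)$ with multiplicity $1$ (the $k=d$ eigenvector), and $\mu$ with multiplicity $d$ (the antisymmetric block $k=d+1,\ldots,2d$). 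For each eigenvalue $\lambda$, introduce
\[
S_\lambda \;:=\; \sum_{k:\,E_k=\lambda} \langle \beta|E_k\rangle\langle E_k|\alpha\rangle,
\]
so that $p(\infty,\beta\mid\alpha) = |S_{\mu-2w}|^2 + |S_{\mu+2w(d-1)}|^2 + |S_\mu|^2$. The task then reduces to computing three partial sums.

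For the symmetric block ($k\le d$) the overlap evaluates to $\langle\beta|E_k\rangle\langle E_k|\alpha\rangle = \tfrac{1}{2d}e^{i\frac{2\pi}{d}k(j(\beta)-j(\alpha))}$, independent of the signs $\pi_\alpha,\pi_\beta$. Summing from $k=1$ to $k=d-1$ I will use the standard geometric-series identity
\[
\sum_{k=0}^{d-1} e^{i\tfrac{2\pi}{d}km} = d\cdot \mathbf{1}\{m\equiv 0 \pmod d\},
\]
which yields $S_{\mu-2w} = \tfrac{d-1}{2d}$ if $j(\alpha)=j(\beta)$ and $-\tfrac{1}{2d}$ otherwise. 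The singleton $k=d$ contributes $S_{\mu+2w(d-1)} = \tfrac{1}{2d}$ in every case. For the antisymmetric block ($k>d$), the relevant overlap is $\tfrac{\pi_\alpha\pi_\beta}{2}\delta_{k-d,j(\alpha)}\delta_{k-d,j(\beta)}$, which vanishes unless $j(\alpha)=j(\beta)$; since by assumption $\alpha=\beta$ in that case (both the index and sign are fixed once we restrict to the block of a given $j$ and note $\pi_\alpha\pi_\beta=1$), this gives $S_\mu = \tfrac{1}{2}$ when $j(\alpha)=j(\beta)$ and $0$ otherwise.

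Adding the three squared contributions: if $j(\alpha)\neq j(\beta)$, $p = \tfrac{1}{4d^2} + \tfrac{1}{4d^2} = \tfrac{1}{2d^2}$; if $j(\alpha)=j(\beta)$ (which, since $\alpha$ is also a local state, forces $\beta=\alpha$), $p = \tfrac{(d-1)^2}{4d^2} + \tfrac{1}{4d^2} + \tfrac{1}{4} = \tfrac{1}{2} - \tfrac{d-1}{2d^2}$ after elementary simplification. The main bookkeeping obstacle will be to keep track of which sign conventions pair with which eigenvector (especially ensuring that cross terms between the symmetric and antisymmetric eigenvalue blocks do not appear, because those correspond to distinct eigenvalues and are killed by the time average), but since the three eigenvalues $\mu-2w$, $\mu$, $\mu+2w(d-1)$ are pairwise distinct for $w\neq 0$, the block decomposition is clean and the argument is essentially the Fourier/geometric-sum calculation above.
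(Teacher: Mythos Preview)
Your approach is correct and essentially identical to the paper's: both decompose $p(\infty,\beta\mid\alpha)$ into the three eigenvalue blocks given by \lem{TenDecEig} and evaluate each piece via the geometric sum. One minor slip: the condition $j(\alpha)=j(\beta)$ does \emph{not} force $\alpha=\beta$ (you may have $\pi_\alpha\neq\pi_\beta$, e.g.\ $\alpha=(j,+)$ and $\beta=(j,-)$), so your parenthetical justification is wrong; however, the conclusion is unaffected because $|S_\mu|^2=\bigl|\tfrac{\pi_\alpha\pi_\beta}{2}\bigr|^2=\tfrac14$ regardless of the sign, and the lemma's answer $\tfrac12-\tfrac{d-1}{2d^2}$ is indeed the same for both subcases of $j(\alpha)=j(\beta)$.
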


Note that the two minima $\pm a_{j}$ are equivalent, representing one component.
Thus, starting from $\ket{\alpha}$, we are able to find a component different from $\pm a_{j(\alpha)}$ if the measured result is in a well $\beta$ where $j(\beta) \neq j(\alpha)$.
We can define the probability for a successful trial as $p_{\rm suc} = \sum_{j(\beta)\neq j(\alpha)} p(\infty,\beta | \alpha ) = \frac{d-1}{d^2}$. That is, evolving for time $T_{\rm mix}$ as described by \lem{TenDecmix}, we are able to approximately sample from the limiting distribution $\mu_{\rm QTW}$ and then get to another component with probability near $p_{\rm suc}$.
The number of trials needed for finding another component is approximately $1/p_{\rm suc}$. And the time needed for finding another component from a known component is approximately $T_{\rm mix}/p_{\rm suc}$. Repeating the procedure of looking for one component that is different from a known one, we can obtain all orthogonal components with total time\footnote{The term $O(d\log d)$ appears because our procedure is equivalent to the Coupon Collector's Problem.}
\begin{align}
    T_{\rm tot} = O(d\log d) T_{\rm mix}/p_{\rm suc} = \tilde{O}(d^2)\frac{1}{\epsilon|w|}.
    \label{eq:Ttot1}
\end{align}
To determine the time specifically, it remains to determine $1/|w|$ which
depends exponentially on $d$ and $h$. We can obtain:
\begin{lemma}\label{lem:TenDecw}
For sufficiently small $h$, the tunneling amplitude $w$ in the interaction matrix \eq{TenDec} satisfies
\begin{align}
    w = - \sqrt{h}(C_1 C_2^{d-1} + O(h))e^{-\frac{\sqrt{2}}{2h}},
\end{align}
where $C_1$ and $C_2$ are constants depending only on the landscape and are independent of the dimension $d$.
\end{lemma}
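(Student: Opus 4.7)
The plan is to apply the general tunneling-amplitude formula established in the appendix on the interaction matrix (Propositions \emph{spePF} and \emph{speW}) to the specific landscape \eq{TenDecObj} on $\mathbb{S}^{d-1}$, and then evaluate the resulting surface integral by WKB and Laplace's method. Throughout, I will exploit the $\mathbb{Z}_2 \times S_d$ symmetry of the landscape, which guarantees that the tunneling amplitude between any two neighboring wells $(j,\pi_\alpha)$ and $(j',\pi_\beta)$ with $j\neq j'$ has the same value $w$.

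First I would compute the minimal Agmon distance $S_0$. By symmetry it suffices to consider the two wells $a_1$ and $a_2$, joined by the great-circle arc $x(\theta)=\cos\theta\, a_1+\sin\theta\, a_2$ for $\theta\in[0,\pi/2]$. On this geodesic $f(x(\theta)) = 1-\cos^4\theta - \sin^4\theta = \tfrac12\sin^2(2\theta)$, so
\begin{align}
S_0 = \int_0^{\pi/2}\sqrt{f(x(\theta))}\,\d\theta = \frac{1}{\sqrt{2}}\int_0^{\pi/2}\sin(2\theta)\,\d\theta = \frac{\sqrt{2}}{2}.
\end{align}
A short convexity argument along the lines of \append{agmon} shows this arc is the unique (up to the discrete symmetries) minimizing Agmon geodesic, so the exponential factor is exactly $e^{-\sqrt{2}/(2h)}$, as required.

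Next I would construct the local ground states via the WKB Ansatz of \append{WKB}, adapted to the sphere. Near $a_j$, using normal coordinates $y\in\R^{d-1}$ on $\mathbb{S}^{d-1}$, the Hessian of $f$ is $4 I_{d-1}$, so the leading local ground state is the $(d-1)$-dimensional harmonic oscillator state, which fixes the initial amplitude and brings in a normalization $h^{-(d-1)/4}$ together with a geometric constant for each transverse mode. The global WKB form is then $\Phi_\alpha(x)\approx h^{-(d-1)/4} a_\alpha(x)\exp(-d(x,x_\alpha)/h)$, where $a_\alpha$ is determined along the geodesic by the transport equation of \append{WKB}. By the $O(d-1)$-symmetry that fixes the geodesic, the transverse transport decouples into $d-1$ identical one-dimensional problems, each contributing an equal factor to the amplitude on the separating equator.

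Then I would plug these WKB forms into the surface-integral representation of the tunneling amplitude from \prop{speW}, taking the separating hypersurface $\Gamma$ to be the equator $\{x\in\mathbb{S}^{d-1}: x\cdot a_1 = x\cdot a_2\}$. On $\Gamma$ the integrand is proportional to $\Phi_+\,\partial_n\Phi_- - \Phi_-\,\partial_n\Phi_+$, producing an overall $h \cdot h^{-(d-1)/2} e^{-(d(x,x_+)+d(x,x_-))/h}$, times a smooth prefactor. Since $d(x,x_+)+d(x,x_-)$ has a nondegenerate minimum at the midpoint of the geodesic with value $S_0$, I would evaluate the integral over the $(d-2)$-dimensional $\Gamma$ by Laplace's method, contributing $h^{(d-2)/2}$ from the Gaussian determinant of the transverse Hessian. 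Collecting powers of $h$ yields $h \cdot h^{-(d-1)/2} \cdot h^{(d-2)/2} = \sqrt{h}$, and the leading constant factorizes as $C_1\,C_2^{d-1}$, with $C_1$ absorbing the one-dimensional (longitudinal plus overall normalization) constants and $C_2^{d-1}$ collecting the $d-1$ identical transverse contributions produced by the WKB amplitude. A sign check (the off-diagonal Helffer--Sj\"ostrand formula yields a negative tunneling matrix element when both local ground states are positive and the Agmon exponential decays across $\Gamma$) gives the overall minus sign.

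The main obstacle is bookkeeping: one must verify carefully that the transverse WKB amplitudes and the Hessian of $d(\cdot,x_+)+d(\cdot,x_-)$ on $\Gamma$ indeed factorize isotropically across the $d-1$ directions orthogonal to the geodesic (so the constant appearing in each transverse direction is the same $C_2$ independent of $d$), and that using $\mathbb{S}^{d-1}$ instead of $\R^d$ only changes the prefactor by a $d$-independent curvature correction absorbed in $C_1$. This factorization is precisely what the $O(d-1)$-symmetry about the $a_1$--$a_2$ geodesic provides, which is why the answer takes the clean form $\sqrt{h}(C_1 C_2^{d-1} + O(h))e^{-\sqrt{2}/(2h)}$.
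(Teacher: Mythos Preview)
Your approach is exactly the paper's: compute $S_0=\sqrt{2}/2$ along the great-circle arc, then plug the WKB local ground states into the Helffer--Sj\"ostrand surface integral (\prop{speW}) and evaluate by Laplace's method, using symmetry to get the $C_1C_2^{d-1}$ factorization. Two points of bookkeeping need fixing, though neither is fatal.

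First, there is no $O(d-1)$ symmetry fixing the $a_1$--$a_2$ geodesic: the potential $f=1-\sum_i x_i^4$ is only invariant under the discrete hyperoctahedral group (permutations and sign flips of the coordinates $a_3,\dots,a_d$), not under continuous rotations among them. This discrete symmetry is still enough to force the transverse second derivatives of the Agmon distance, and the transverse pieces of the transport equation, to be equal across those coordinates, which is all you need for the factorization; but you should not invoke a rotational symmetry that does not exist. Second, the tangent space of $\mathbb{S}^{d-1}$ is $(d-1)$-dimensional, so along the geodesic there are $d-2$ transverse directions, not $d-1$. Concretely: the WKB normalization contributes a $(d-1)$-fold product (one factor per oscillator mode at $a_j$), the Laplace integral over $\Gamma$ is $(d-2)$-dimensional, and $\det'\nabla^2 d_{12}$ has $d-2$ equal nonzero eigenvalues. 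The paper tracks this by writing $\Delta d(x,a_1)=f_1(x)+(d-2)f_2(x)$ along the geodesic, combining with $E_1\propto(d-1)$ in the transport equation to get $\ln a_0^{(j)}(x^\circ)=A+(d-1)B$ with $A,B$ independent of $d$; after multiplying by $(2\pi)^{(d-2)/2}/\sqrt{\det'}$ the whole prefactor rearranges into $C_1C_2^{d-1}$. Your power-of-$h$ count is correct once you use $d-2$ for the dimension of $\Gamma$.
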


The proof of \lem{TenDecw} is in \append{prooflem10}.
It is intuitive to see from \lem{TenDecw} that the smaller $h$ is, the longer time it takes to find all components.
However, small $h$ permits more accurate measurement results.
A successful tunneling means we can find a point near a new component, but this point may not be the actual minimum. We add a constraint that the expected risk is $\delta$ (i.e., $\mathbb{E}_{x\sim \mu_{\rm QTW}}f(x) -\min f = \delta$). Subsequently, $h$ can be bounded using $\delta$ and we can have
the following proposition:
\begin{proposition}\label{prop:TenDecTtot}
For sufficiently small $\epsilon$ (such that the measured positions nearly obey $\mu_{\rm QTW}$) and sufficiently small expected risk $\delta$ (such that $h$ can be estimated by $\delta$ and \lem{TenDecw} is valid),
we have $h = \sqrt{2}\delta/(d-1 + o_{\delta}(1))$ and
the total time for finding all orthogonal components of $T$ in \eq{4-tensor} by QTW satisfies
\begin{align}
    T_{\rm tot} = O(\mathrm{poly}(1/\delta, e^d, 1/\epsilon)) e^{\frac{(d-1) +o_{\delta}(1)}{2\delta}}.
    \label{eq:TenDecTtot}
\end{align}
\end{proposition}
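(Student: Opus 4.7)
The plan is to start from the running time bound $T_{\rm tot} = \tilde O(d^2)/(\epsilon|w|)$ in \eq{Ttot1} and the tunneling amplitude formula in \lem{TenDecw}, and then use the accuracy constraint (expected risk $=\delta$) to express the quantum learning rate $h$ in terms of $\delta$ and $d$. Substituting back yields the stated complexity.

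The main step is to relate $h$ to $\delta$. Near any local minimum $\pm a_j$ on $\mathbb{S}^{d-1}$, I would introduce local coordinates perpendicular to $a_j$: writing $u = \sqrt{1-\|v\|^2}\,a_j + \sum_{i\neq j} v_i a_i$, a Taylor expansion of $f(u) = 1 - \sum_i (u\trans a_i)^4$ gives $f \approx 2\|v\|^2$ plus quartic corrections, so the landscape is locally a $(d-1)$-dimensional harmonic well of frequency $\omega = 2$. Since $\mu_{\rm QTW}$ concentrates on the local ground states $\ket{\alpha}$ (by \lem{TenDecprobability} and \lem{limitdis}) and each $\ket{\alpha}$ is well approximated by the quantum harmonic oscillator ground state (cf.\ the WKB discussion of \sec{onedimexp} and \append{WKB}), the expectation of $f$ in $\ket{\alpha}$ equals the expected potential energy of a $(d-1)$-dim.\ oscillator, which is half the ground-state energy $(d-1)h\omega/\sqrt{2}$. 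This gives
\begin{equation}
    \delta \;=\; \mathbb{E}_{x\sim \mu_{\rm QTW}} f(x) \;=\; \frac{(d-1)\,h\,\omega}{2\sqrt{2}} + o_\delta(1)\cdot h \;=\; \frac{(d-1) + o_\delta(1)}{\sqrt{2}}\,h,
\end{equation}
where the $o_\delta(1)$ absorbs the anharmonic quartic correction and the curvature of the sphere, both of which vanish as the wave packet shrinks (i.e., as $h\to 0$, equivalently $\delta\to 0$). Inverting yields $h = \sqrt{2}\,\delta/(d-1+o_\delta(1))$.

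Substituting this into \lem{TenDecw} gives
\begin{equation}
   \frac{1}{|w|} \;=\; \frac{1}{\sqrt{h}\,|C_1|\,|C_2|^{d-1}(1+O(h))}\, e^{\frac{\sqrt{2}}{2h}}
   \;=\; \mathrm{poly}(1/\delta,\, e^d)\cdot e^{\frac{(d-1)+o_\delta(1)}{2\delta}},
\end{equation}
since $\sqrt{2}/(2h) = (d-1+o_\delta(1))/(2\delta)$ and $|C_2|^{-(d-1)}\le e^{O(d)}$. Combining with \eq{Ttot1} absorbs the remaining factor $\tilde O(d^2)/\epsilon$ into $\mathrm{poly}(1/\delta,e^d,1/\epsilon)$ and produces \eq{TenDecTtot}.

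The main obstacle is justifying that the local ground states are sufficiently close to true $(d-1)$-dimensional harmonic oscillator states for the energy/risk computation to hold up to an $o_\delta(1)$ relative error. One has to control (i) the anharmonicity of $f$ beyond the quadratic term (the $v_i^4$ and $\|v\|^4$ terms), (ii) the effect of working on $\mathbb{S}^{d-1}$ rather than a flat $(d-1)$-dimensional space, and (iii) the exponentially small corrections from the overlap of local ground states in different wells. All three corrections shrink as $h\to 0$, so they can be absorbed into $o_\delta(1)$ for sufficiently small $\delta$; formalising this uses the WKB/Agmon decay estimates already established in \append{WKB} and \append{agmon}.
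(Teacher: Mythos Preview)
Your proposal is correct and follows essentially the same route as the paper's proof: localize $\mu_{\rm QTW}$ near the minima, approximate each well by a $(d-1)$-dimensional harmonic oscillator with $\omega=2$, compute the expected risk to obtain $h=\sqrt{2}\delta/(d-1+o_\delta(1))$, and substitute into \lem{TenDecw} and \eq{Ttot1}. The only cosmetic difference is that the paper quotes the formula $h=\delta/\big(\tfrac{\sqrt{2}}{4}\tr\sqrt{\nabla^2 f(a_1)}+o_\delta(1)\big)$ from \lem{expectedrisk} and then evaluates $\tr\sqrt{\nabla^2 f(a_1)}=2(d-1)$, whereas you arrive at the same number via the virial-theorem reading of the oscillator ground state; the underlying Gaussian computation is identical.
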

\begin{remark}
The strategy we adopt here, which is equivalent to repeating sampling from $\mu_{\rm QTW}$, is straightforward but may not be the optimal one under the framework of QTW. In other words, \prop{TenDecTtot} provides a general upper bound on the total evolution time needed. However, the term $e^{\frac{\sqrt{2}}{2h}}$ which gives the term $e^{\frac{d-1}{2\delta} + o_{\delta}(1)}$ in \eq{TenDecTtot} describes essential difficulty for tunneling through a barrier and would not disappear as long as we use quantum tunneling.
\end{remark}

To sum up, we provide a scenario that QTW can be used to solve orthogonal tensor decomposition problems.
For a practical landscape, the spectrum of the interaction matrix and the mixing time of QTW is explicitly calculated.
Running QTW for some time (bounded by the mixing time) repeatedly,
we can sample points from a distribution near the limiting distribution and find all tensor components, and an upper bound on the total running time for QTW is derived.


\section{Comparison Between Quantum Tunneling Walks and Classical Algorithms}\label{sec:comparison}
In this section, we use comparisons between QTW and SGD to explain the
advantages of quantum tunneling, resulting in our \slog.
Because of distinctions between quantum and classical algorithms, preparations (i.e., standards for comparisons) in \sec{standard} are needed before specific comparisons in \sec{illustration}.
Having such general understanding of QTW, in \sec{separation},
we further make use of the fact that quantum evolution is essentially global but classical algorithms rely on local queries, so that a hitting problem cannot be solved efficiently by classical algorithms can be tackled by QTW within polynomial queries when given reasonable initial states.

\subsection{Criteria of fair comparison}\label{sec:standard}
Through out \sec{comparison}, we adopt assumptions in both \sec{classicalpre} and \sec{quantumpre}
for the objective landscape $f(x)$ of interest.
We still use $U_j = \{x_j \}~(j=1,\ldots,N)$ to denote the wells and
$\ket{j}$ the corresponding orthonormalized local ground states.
The interaction matrix is $H_{|\F}$, where $H=-h^2\Delta + f(x)$ is the Hamiltonian and $\F$ the low energy subspace spanned by
$\{ \ket{j}: j=1,\ldots,N\}$.

As shown in \sec{onedimexp},
the hitting time of SGD is determined by the
landscape and an adjustable learning rate $s$.
Similarly, we can also adjust $h$ in Hamiltonian simulation.
Therefore, we need to determine the relationship between $h$ and $s$ for the comparison between the time cost of QTW and SGD.

Note that both QTW and SGD have limit distributions, namely, $\mu_{\rm QTW}$ and $\mu_{\rm SGD}$,
respectively (see \sec{classicalpre} and \lem{limitdis} for details).
If $h$ (or $s$) becomes smaller, $\mu_{\rm QTW}$ ($\mu_{\rm SGD}$) will concentrate more closely to global minima, giving more accurate outputs, whereas it would take more time for the QTW (SGD) to converge.
Comparing the running time without specifying accuracy is not fair.

In order to establish an relationship between $h$ and $s$, as well as to compare QTW and SGD fairly, we specify some kind of accuracy of the limit distributions. The two variables, $h$ and $s$, will be solved from the demand of accuracy. Hence, the time cost of different algorithms
are only related to the accuracy, the dimension, and some geometric properties of the landscapes.

There are different measures of accuracy we can choose depending on
the tasks faced.
Here, we introduce two kinds of measures along with the corresponding standards of comparison.

\begin{standard}[Risk accuracy]\label{stand:risk}
 Let $\mu_{\rm QTW}$ be the limit distribution of QTW, and $\mu_{\rm SGD}$ the invariant Gibbs distribution of SGD. Two distributions are demanded to be $\delta$-risk-accurate:
 \begin{align}
   \mathbb{E}_{x\sim \mu_{\rm QTW}}f(x) - \min f = \mathbb{E}_{x\sim \mu_{\rm SGD}}f(x) - \min f = \delta.
   \label{eq:stand1}
\end{align}
\end{standard}
\stand{risk} ensures that two limit distributions yield the same expected
risk. Then, it is natural to compare how fast QTW and SGD would converge.
The algorithm spending less time is more efficient on finding any one global minimum.
Sometimes, the task is to find some target minima or one special
minimum.
In this case, using risk accuracy cannot emphasize the particularity
of the minima of interest and we may need the following standard:
\begin{standard}[Distance accuracy]\label{stand:distance}
Let $\mu_{\rm QTW}$ be the limit distribution of QTW, and $\mu_{\rm SGD}$ be the invariant Gibbs distribution of SGD. The minima of interest are $x_{j_k},~k = 1,\ldots,m,~j_k \in \{1,\ldots,N\}$. Let $D(\cdot, \cdot)$ be any
distance function.
Two distributions are demanded to be $\delta$-distance-accurate with respect to $x_{j_k}$ and $D(\cdot, \cdot)$:
 \begin{align}
   \mathbb{E}_{x\sim \mu_{\rm QTW}}\sum_k D(x,x_{j_k}) = \mathbb{E}_{x\sim \mu_{\rm SGD}}\sum_k D(x,x_{j_k}) = \delta.
   \label{eq:stand2}
\end{align}
\end{standard}

Conditions \eq{stand1} and \eq{stand2} can specify $h$ and $s$.
To see this, we first study the expected risk for quadratic functions:
\begin{lemma}\label{lem:stand1-quad}
Assume the objective function $f\colon \mathbb{R}^d \to \mathbb{R}$ is quadratic and
\begin{equation}
    f(0) = 0,~\nabla f(0) = 0,~\nabla^2 f(0) > 0,
\end{equation}
where the last inequality means the Hessian $\nabla^2 f(0)$ is positive definite. Then, we have
\begin{equation}
    \mathbb{E}_{x\sim \mu_{\rm QTW}}f(x) = \frac{\sqrt{2}h}{4}\tr \sqrt{\nabla^2 f(0)},
\qquad \mathbb{E}_{x\sim \mu_{\rm SGD}}f(x) = \frac{sd}{4}.
\end{equation}
\end{lemma}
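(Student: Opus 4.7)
The plan is to exploit the fact that since $f$ is quadratic with $f(0)=0$ and $\nabla f(0)=0$, one can write $f(x) = \tfrac{1}{2} x^{\top} A x$ where $A := \nabla^2 f(0) \succ 0$, and then diagonalize $A = O \operatorname{diag}(\lambda_1,\ldots,\lambda_d) O^{\top}$. After the orthogonal change of variables $y = O^{\top} x$, which leaves both Lebesgue measure and the Laplacian invariant, the problem decouples into $d$ independent one-dimensional modes with frequencies $\omega_j := \sqrt{\lambda_j}$, each contributing a one-dimensional harmonic oscillator potential $\tfrac{1}{2} \omega_j^2 y_j^2$.

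For the SGD piece, I would observe that the Gibbs measure $\mu_{\rm SGD}(x) \propto e^{-2 f(x)/s}$ factors in the $y$-coordinates into a product of one-dimensional Gaussians with variances $\sigma_j^2 = s/(2 \lambda_j)$. A direct computation then gives
$$\mathbb{E}_{\mu_{\rm SGD}}[f] = \sum_{j=1}^{d} \tfrac{1}{2} \lambda_j \, \sigma_j^2 = \frac{sd}{4}.$$

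For the QTW piece, I would first argue that since $f$ has a single critical point, the low-energy subspace $\F$ is one-dimensional and is spanned by the ground state $\ket{0}$ of $H = -h^2 \Delta + f$; hence the general definition of the limit distribution in \sec{Qmixing} collapses to $\mu_{\rm QTW}(x) = |\ip{x}{0}|^2$ when $\ket{\Phi(0)} = \ket{0}$. The ground state factors as $\psi_0(y) = \prod_j \psi_0^{(j)}(y_j)$, where each factor solves the one-dimensional oscillator equation $(-h^2 \partial_{y_j}^2 + \tfrac{1}{2} \lambda_j y_j^2)\psi = E \psi$. Plugging in a Gaussian ansatz $\psi_0^{(j)} \propto \exp(-\alpha_j y_j^2)$, I would solve for $\alpha_j = \omega_j/(2\sqrt{2}\, h)$, yielding a Gaussian density with variance $h/(\sqrt{2}\,\omega_j)$ and expected potential energy $\sqrt{2}\, h \omega_j/4$ in each mode. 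Summing over $j$ gives
$$\mathbb{E}_{\mu_{\rm QTW}}[f] = \sum_{j=1}^{d} \frac{\sqrt{2}\, h \omega_j}{4} = \frac{\sqrt{2}\, h}{4} \sum_{j=1}^{d} \sqrt{\lambda_j} = \frac{\sqrt{2}\, h}{4} \tr \sqrt{A}.$$

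The only conceptually nontrivial step will be identifying $\mu_{\rm QTW}$ with $|\psi_0|^2$ in this single-well setting, because the tunneling picture in the paper is phrased for multi-well landscapes; once one specializes to $N=1$, the sum over pairs $(E_k, E_{k'})$ in the definition of $\mu_{\rm QTW}$ collapses to a single term and the identification is immediate. Everything else is elementary, and indeed the QTW value $\sqrt{2} h \omega_j/4$ per mode can alternatively be read off from the virial theorem: in the harmonic-oscillator ground state the potential energy equals half the total energy $h\omega_j/\sqrt{2}$.
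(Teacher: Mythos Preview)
Your proposal is correct and follows essentially the same approach as the paper: diagonalize the Hessian, identify $\mu_{\rm QTW}$ with the ground-state density $|\psi_0|^2$ of the resulting product of one-dimensional harmonic oscillators, and compute both expectations as sums of one-dimensional Gaussian integrals. The paper writes down the explicit multivariate Gaussian ground-state density at the outset (drawing on its WKB appendix, which is exact in the quadratic case) rather than deriving it via your Gaussian ansatz, and it does not invoke the virial theorem; but these are cosmetic differences, and the computations are identical.
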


\lem{stand1-quad} calculates the expected risks for a landscape with only one minimum whose proof is in \append{prooflem11}.
For landscapes with multiple minima, the limit distributions concentrate near the global minima and the objective function in a small neighborhood of any minimum can be approximated by a quadratic
function based on the assumptions.
Hence, we can obtain the following general estimations (the proof is postponed to \append{prooflem12}).
\begin{lemma}\label{lem:expectedrisk}
If $\delta$ is sufficiently small and the objective function $f\colon  \mathbb{R}^d \to \mathbb{R}$
satisfies assumptions in \sec{classicalpre} and \sec{quantumpre}, then \stand{risk} gives
\begin{equation}
    h = \frac{\delta}{\frac{\sqrt{2}}{4}\sum_{j=1}^N p(\infty,j)\tr\big(\sqrt{\nabla^2f(x_j)}\big)+ o_{\delta}(1)},
    \label{eq:hstand1}
\end{equation}
\begin{equation}
    s = \frac{\delta}{\frac{d}{4}(1+ o_{\delta}(1))}.
\end{equation}
\end{lemma}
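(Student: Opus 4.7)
The strategy is to evaluate the two expected risks separately using Laplace-type concentration arguments, apply \lem{stand1-quad} locally at each global minimum, and then invert the relation to express $h$ and $s$ in terms of $\delta$. In both cases the key observation is that as the relevant parameter ($h$ or $s$) shrinks, the limit distribution concentrates in small neighborhoods of the global minima $x_1,\dots,x_N$ where $f$ is well approximated by its quadratic Taylor polynomial.

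\emph{Classical side.} The Gibbs density $\mu_{\rm SGD}(x)\propto e^{-2f(x)/s}$ concentrates exponentially as $s\to 0$. Fix radii $r_j\to 0$ slowly enough (for instance $r_j = s^{1/3}$) and let $B_j$ be the ball of radius $r_j$ around $x_j$. By \assum{confining} and \assum{Villani} together with $f(x_j)=0<\inf_{\|x\|>R} f$ for large $R$, the mass $\mu_{\rm SGD}(\mathbb{R}^d\setminus\bigcup_j B_j)$ is $O(e^{-c/s})$ for some $c>0$, and so is its contribution to $\mathbb{E}_{\mu_{\rm SGD}} f$. On each $B_j$ I Taylor-expand $f(x)=\tfrac12(x-x_j)^{\top}\nabla^2 f(x_j)(x-x_j)+O(\|x-x_j\|^3)$, so the restriction of $\mu_{\rm SGD}$ to $B_j$ is a Gaussian with covariance $\tfrac{s}{2}(\nabla^2 f(x_j))^{-1}$ up to multiplicative $1+o_s(1)$. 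Applying \lem{stand1-quad} to each well gives contribution $(sd/4)(1+o_s(1))$ weighted by the mass $\mu_{\rm SGD}(B_j)$. Since the masses sum to $1+o_s(1)$ we obtain $\mathbb{E}_{\mu_{\rm SGD}}f(x)=\tfrac{sd}{4}(1+o_\delta(1))$, and setting this equal to $\delta$ yields the stated formula for $s$.

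\emph{Quantum side.} By \lem{limitdis} we have
\begin{equation}
  \mu_{\rm QTW}(x)=\sum_{j=1}^{N}p(\infty,j)\,|\ip{x}{j}|^{2}+O(h^{\infty}),
\end{equation}
so $\mathbb{E}_{\mu_{\rm QTW}} f=\sum_j p(\infty,j)\int f(x)\,|\ip{x}{j}|^2\,\d x+O(h^\infty)\|f\|_{\text{loc}}$. The local ground state $\ket{j}$ is, by the WKB/Agmon estimates recalled in \append{agmon} and \append{WKB}, concentrated in a $\sqrt{h}$-neighborhood of $x_j$ and decays exponentially in the Agmon distance outside it, so the integral may be restricted to a small ball $B_j$ around $x_j$ with an error of size $O(h^\infty)$. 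Inside $B_j$, $\ket{j}$ is $L^2$-close to the ground state of the localized harmonic oscillator $-h^{2}\Delta+\tfrac12(x-x_j)^{\top}\nabla^{2}f(x_j)(x-x_j)$, and the cubic remainder in the Taylor expansion of $f$ contributes only a multiplicative factor $1+o_h(1)$. Applying \lem{stand1-quad} at each $x_j$ gives
\begin{equation}
  \int f(x)\,|\ip{x}{j}|^2\,\d x=\frac{\sqrt{2}\,h}{4}\,\tr\sqrt{\nabla^{2}f(x_j)}\,(1+o_h(1)).
\end{equation}
Summing over $j$ with weights $p(\infty,j)$ produces the expected risk as the linear-in-$h$ quantity displayed in \eq{hstand1}, and setting it equal to $\delta$ and solving for $h$ gives the claimed formula.

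\emph{Main technical point.} The nontrivial step is controlling the error in the quantum local-harmonic approximation: one needs that the $L^2$-distance between $\ket{j}$ and the rescaled harmonic oscillator ground state is $o_h(1)$ uniformly on $B_j$, and that the cubic and higher-order terms of $f$ integrated against $|\ip{x}{j}|^{2}$ contribute only $o(h)$. Both facts follow from the Agmon-decay bounds and the WKB expansion in \append{theory-Schrodinger} (the ground state localizes in a $\sqrt{h}$-ball, over which $\|x-x_j\|^{3}=O(h^{3/2})$, one order smaller than the quadratic contribution $O(h)$ needed). Once these approximations are established, the rest of the argument is Laplace's method in the classical case and a straightforward plug-in of \lem{stand1-quad} in the quantum case.
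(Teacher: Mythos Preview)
Your proposal is correct and matches the paper's approach: localize to shrinking neighborhoods of the minima, Taylor-expand $f$ to second order there, and apply \lem{stand1-quad} well by well (using \lem{limitdis} plus Agmon/WKB decay on the quantum side and standard Laplace asymptotics on the classical side). One minor slip: your example radius $r_j=s^{1/3}$ is just barely too large, since the cubic correction to the exponent then satisfies $r_j^{3}/s=O(1)$ rather than $o(1)$; the paper instead takes neighborhoods of scale $\Theta(s^{5/12})$ (and $\Theta(h^{5/12})$ on the quantum side), which sits strictly between $s^{1/2}$ and $s^{1/3}$ and makes all the error terms go through.
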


That is, we establish a relationship between $h$ and $s$ by \stand{risk}.
Similarly, for \stand{distance}, we can have the following result:
\begin{lemma}\label{lem:expecteddis}
Assume the objective function $f\colon \mathbb{R}^d \to \mathbb{R}$ is quadratic and
\begin{equation}
    f(0) = 0,~\nabla f(0) = 0,~\nabla^2 f(0) > 0,
\end{equation}
where the last inequality means the Hessian $\nabla^2 f(0)$ is positive definite. We choose the distance function $D(x,y):=\|x-y\|^2_2,~\forall x,y\in \mathbb{R}^d$. Then, we have
\begin{align}
    \mathbb{E}_{x\sim \mu_{\rm QTW}}D(x,0) &= \frac{\sqrt{2}h}{2}\tr (\nabla^2 f(0))^{-1/2}; \\
\mathbb{E}_{x\sim \mu_{\rm SGD}}D(x,0) &=\frac{s}{2}\tr(\nabla^2 f(0))^{-1}.
\end{align}
\end{lemma}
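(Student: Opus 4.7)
The plan is to compute each expectation separately, since the two distributions are explicit Gaussians once one diagonalizes the quadratic form. Set $A := \nabla^2 f(0)$, which is symmetric positive definite, so there is an orthogonal $U$ with $U^{\top} A U = \Lambda = \mathrm{diag}(\lambda_1,\ldots,\lambda_d)$ and $\lambda_i > 0$. Under the change of variables $y = U^{\top} x$, the function decouples as $f(x) = \tfrac12 \sum_i \lambda_i y_i^2$, the Laplacian $\Delta$ is preserved (since $U$ is orthogonal), and $D(x,0) = \|x\|_2^2 = \|y\|_2^2$. Thus it suffices to handle each one-dimensional mode $\tfrac12 \lambda_i y_i^2$ separately and sum.

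For the SGD side, the invariant distribution is $\mu_{\rm SGD}(x) \propto e^{-2f(x)/s}$ by Eq.~\eq{Gibbs}. In the decoupled coordinates $\mu_{\rm SGD}$ is a product of 1D Gaussians with density $\propto \exp(-\lambda_i y_i^2/s)$, i.e.\ variance $\sigma_i^2 = s/(2\lambda_i)$. Hence
\begin{align}
\mathbb{E}_{x\sim \mu_{\rm SGD}} \|x\|_2^2 = \sum_{i=1}^d \sigma_i^2 = \sum_{i=1}^d \frac{s}{2\lambda_i} = \frac{s}{2}\tr(A^{-1}),
\end{align}
which gives the SGD formula after noting $\tr(A^{-1})$ is basis-independent.

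For the QTW side, $\mu_{\rm QTW}$ in this single-well, quadratic setting is $|\Phi_0(x)|^2$ where $\Phi_0$ is the ground state of $H = -h^2 \Delta + f(x)$. In the decoupled coordinates the Hamiltonian is a sum of independent one-dimensional harmonic oscillators $H_i = -h^2 \partial_{y_i}^2 + \tfrac12 \lambda_i y_i^2$, so $\Phi_0$ factorizes across modes. Each 1D ground state is the standard harmonic-oscillator Gaussian: writing $\lambda_i = \omega_i^2$ and matching to the normalization used in \sec{onedimexp} (recall that there $\varepsilon_0 = h\omega/\sqrt{2}$), one obtains $|\Phi_{0,i}(y_i)|^2 \propto \exp(-\omega_i y_i^2 / (\sqrt{2}\, h))$, a Gaussian with variance $\sqrt{2}\,h/(2\omega_i)$. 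Summing,
\begin{align}
\mathbb{E}_{x \sim \mu_{\rm QTW}} \|x\|_2^2 = \sum_{i=1}^d \frac{\sqrt{2}\,h}{2\omega_i} = \frac{\sqrt{2}\,h}{2}\sum_{i=1}^d \lambda_i^{-1/2} = \frac{\sqrt{2}\,h}{2}\tr\bigl((\nabla^2 f(0))^{-1/2}\bigr),
\end{align}
which is the asserted QTW formula.

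No step is genuinely difficult; the only place to be careful is the normalization convention for the harmonic-oscillator ground state, which must match the conventions of \sec{onedimexp} (in particular the factor $\sqrt{2}$ arising from $H = -h^2 \Delta + f$ rather than the textbook $H = -\tfrac{\hbar^2}{2m}\Delta + f$). Once that constant is fixed consistently with \lem{stand1-quad}, both integrals reduce to routine Gaussian second-moment computations. The decoupling trick plus the fact that $\mu_{\rm QTW}$ is literally $|\Phi_0|^2$ in the single-well case are the only structural inputs needed.
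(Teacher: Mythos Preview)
Your proof is correct and follows essentially the same approach as the paper: diagonalize the Hessian, recognize both $\mu_{\rm QTW}$ and $\mu_{\rm SGD}$ as products of one-dimensional Gaussians in the eigen-coordinates, and sum the second moments. The only cosmetic difference is that the paper writes out the Gaussian integrals $\int e^{-ax^2}\,dx$ and $\int x^2 e^{-ax^2}\,dx$ explicitly, whereas you read off the variances directly.
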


The proof of \lem{expecteddis} is shown in \append{prooflem13}.
Similar to the process from \lem{stand1-quad} to \lem{expectedrisk},
\lem{expecteddis} may be generalized to general landscapes.
However, the generalization of \lem{expecteddis} is quite
complicated as the distance function and the wells of interest
are arbitrary.
So, we stop at \lem{expecteddis}.
Regardless of different standards,
\lem{expectedrisk} and \lem{expecteddis} present some similar intuition:
the dependence of $h$ on the flatness of wells are different from that of $s$, which is going to be shown in the following section as a source of quantum speedups.\footnote{Here, we use the Hessian matrix of $f$ at minima to quantify the concept ``flatness".}

\subsection{Illustrating advantages of quantum tunneling}\label{sec:illustration}

In this subsection, we compare QTW with SGD for several special
landscapes. The goal is to explore geometric properties of the landscapes that affect relative efficiencies of QTW and SGD.
Heuristically, the comparison reveals when quantum tunneling can be faster than thermal climbing (climbing over barriers between minima by stochastic motions), which are the two mechanisms behind QTW and many classical algorithms.

For simplicity, we focus on the following kind of landscapes:
\begin{definition}[One-dimensional partially periodic functions]\label{defn:funcforillus}
A function $f\colon\mathbb{R}\to \mathbb{R}$ is partially periodic if it satisfies the assumptions in \sec{classicalpre} and \sec{quantumpre}, and all minima $\{x_j:j=1,\ldots,N\}$ are in a bounded interval which is a period of $f$.
\end{definition}

\begin{figure}
  \centerline{
  \includegraphics[width=0.8\textwidth]{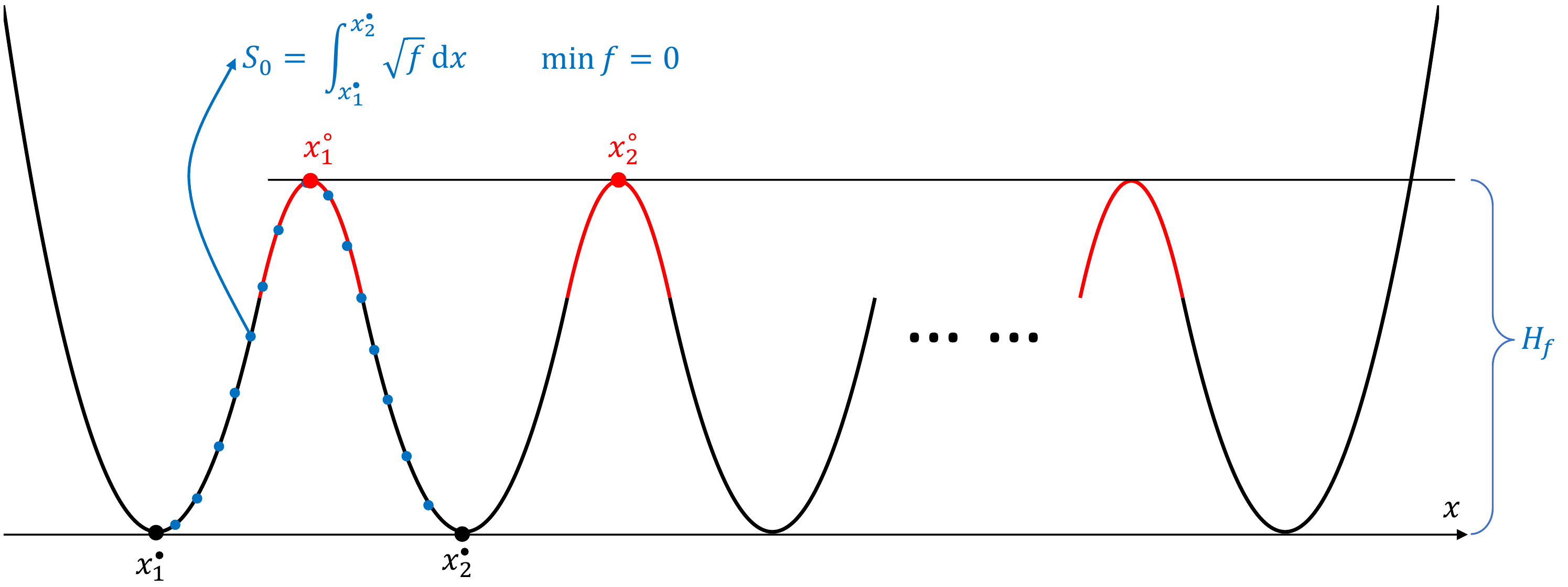}}
    \caption{A one-dimensional partially periodic function.
    }
\label{fig:oneperiodic}
\end{figure}

A sketch of functions in \defn{funcforillus} is shown in \fig{oneperiodic}. Neglect an exponentially small error and note the symmetry of the one-dimensional partially periodic function $f$, the interaction matrix under $\{ \ket{j}:j=1,\ldots,N\}$
should be given by
\begin{equation}
    H_{|\F} = \left(
    \begin{array}{cccccc}
        \mu & w &  &  &  & \\
         w & \mu & w & &  & \\
           & w & \mu & w &  & \\
           &   &  \ddots& \ddots  & \ddots &  \\
           &   &  & w & \mu &  w\\
           &   &  &  & w &  \mu\\
    \end{array}
    \right),
    \label{eq:Hforline}
\end{equation}
where $\mu$ is the energy of one local ground state and $w$ quantifies the tunneling effect between two adjacent wells.
Eigenstates and eigenvalues of $H_{|\F}$ can be given by the following lemma.
\begin{lemma}\label{lem:IlluEig}
The eigenstates and corresponding eigenvalues of the Hamiltonian \eq{Hforline} are given by
\begin{align}
    |E_k\y &= \sqrt{\frac{2}{N+1}}\sum_{j=1}^{N}\sin\big(\frac{jk\pi}{N+1}\big)\ket{j},~k=1,2,\ldots,N; \\
    E_k &= \mu + 2w \cos \frac{k\pi}{N+1},~k=1,2,\ldots,N.
\end{align}
\end{lemma}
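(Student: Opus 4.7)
The plan is to reduce the eigenproblem to the well-known spectrum of the discrete Laplacian on a path graph. Write $H_{|\F} = \mu I + w T$, where $T$ is the $N \times N$ symmetric tridiagonal matrix whose nonzero entries are $T_{j,j+1} = T_{j+1,j} = 1$. Since $\mu I$ is a scalar shift, it suffices to diagonalize $T$ and then each eigenvalue $\lambda$ of $T$ gives the eigenvalue $\mu + w\lambda$ of $H_{|\F}$ on the same eigenvector.

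First I would write out the eigenvalue equation $T v = \lambda v$ componentwise, which reads $v_{j-1} + v_{j+1} = \lambda v_j$ for $j = 1, \dots, N$, subject to the Dirichlet-type boundary conventions $v_0 = v_{N+1} = 0$ (these encode the absence of entries beyond the first and last rows of $T$). This is a constant-coefficient linear recurrence, and I would try the standard sinusoidal ansatz $v_j = \sin(j\theta)$, which automatically satisfies $v_0 = 0$. Substituting into the recurrence and using $\sin((j-1)\theta) + \sin((j+1)\theta) = 2\cos\theta \sin(j\theta)$ yields $\lambda = 2\cos\theta$.

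Next I would enforce the right boundary condition $v_{N+1} = \sin((N+1)\theta) = 0$, which forces $\theta = k\pi/(N+1)$ for some integer $k$. Restricting to $k = 1, \dots, N$ gives $N$ linearly independent eigenvectors with distinct eigenvalues $\lambda_k = 2\cos(k\pi/(N+1))$, exhausting the spectrum of the $N\times N$ matrix $T$. Shifting by $\mu$ and scaling by $w$ gives the stated formula $E_k = \mu + 2w\cos(k\pi/(N+1))$.

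Finally, I would fix the normalization of $\ket{E_k} = \sum_{j=1}^N c_k \sin(jk\pi/(N+1)) \ket{j}$ by computing $\sum_{j=1}^N \sin^2(jk\pi/(N+1)) = (N+1)/2$, which follows from the identity $\sin^2 x = (1-\cos 2x)/2$ together with the fact that $\sum_{j=1}^{N} \cos(2jk\pi/(N+1))$ is a geometric sum whose value is $-1$ for $k \in \{1, \dots, N\}$. This yields $c_k = \sqrt{2/(N+1)}$, matching the claimed eigenvectors. There is no real obstacle here — the calculation is standard — the only step requiring any care is the normalization sum, but it is routine.
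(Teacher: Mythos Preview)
Your proof is correct and complete; the paper itself states this lemma without proof, treating the spectrum of a tridiagonal Toeplitz matrix as a standard fact. Your recurrence-plus-sinusoidal-ansatz argument with Dirichlet boundary data is exactly the canonical derivation, and your normalization computation is right (the key identity $\sum_{j=1}^{N}\cos\bigl(2jk\pi/(N+1)\bigr)=-1$ follows from summing all $(N{+}1)$th roots of unity).
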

To describe $w$ in detail, as shown in \fig{oneperiodic}, we introduce new notations $\{x^{\bullet}_j: j=1,\ldots,N\}$ and $\{x^{\circ}_j: j=1,\ldots,N-1\}$ to denote minima and saddle points, respectively.
A more general labeling of local minima and saddle points can be found in \append{W-L}.
The Morse saddle barrier reflecting height of the barrier in the present case can be given by $H_f = f(x^{\circ}_1) - f(x^{\bullet}_1)$.

Using results in \append{interactionmatrix}, we have:
\begin{lemma}[Tunneling amplitude]
The tunneling amplitude for the one-dimensional partially periodic function $f$ is given by
\begin{align}
   w = -2\sqrt{\frac{h f''(x^{\bullet}_1) H_f}{\sqrt{2}\pi}}
    e^{\int_{x_1^{\bullet}}^{x_1^{\circ}} (\sqrt{\frac{ f''(x^{\bullet}_1)}{2f(\xi)}} - \frac{1}{\xi- x_1^{\bullet}})\d \xi } e^{-\frac{S_0}{h}},~\mathrm{where}~S_0 = \int_{x_1^{\bullet}}^{x_2^{\bullet}} \sqrt{f(\xi)} \d\xi.
\end{align}
\end{lemma}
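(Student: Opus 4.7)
The plan is to reduce the lemma to the two-well calculation carried out in the one-dimensional example of \sec{onedimexp} and then plug in the WKB machinery from \append{WKB} together with the transport equation to pin down the prefactor explicitly. Because tunneling amplitudes decay exponentially with the Agmon distance between wells, the interaction matrix \eq{Hforline} is tridiagonal up to an exponentially small error, so it suffices to compute $w = \bra{j}H_{|\F}\ket{j+1}$ for a single adjacent pair, say the wells $x_1^{\bullet}$ and $x_2^{\bullet}$ separated by the saddle $x_1^{\circ}$. By the periodicity inherent in \defn{funcforillus}, this one amplitude determines $w$ everywhere along the off-diagonal.

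First I would invoke the general boundary-term formula for tunneling amplitudes established in \append{interactionmatrix}. In one dimension it reduces, as in \eq{eq15}, to a Wronskian evaluated at the saddle:
\begin{equation*}
   w = - h^2\bigl[\Phi_1(x_1^{\circ})\Phi_2'(x_1^{\circ}) - \Phi_1'(x_1^{\circ})\Phi_2(x_1^{\circ})\bigr],
\end{equation*}
where $\Phi_j$ is the local ground state near $x_j^{\bullet}$ and the overall sign accounts for the convention in \eq{Hforline} that the off-diagonal is $+w$ rather than $-w$.

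Next I would insert the WKB expansions $\Phi_j(x) = h^{-1/4} a_0^{(j)}(x) e^{-\phi_j(x)/h}(1+O(h))$ with $\phi_j$ solving the eikonal equation $(\phi_j')^2 = f$, so $\phi_j(x) = \bigl|\int_{x_j^{\bullet}}^{x}\sqrt{f(\xi)}\,\d\xi\bigr|$. To leading order in $h$ the derivatives hit only the exponential: $\Phi_j'(x)/\Phi_j(x) \approx -\phi_j'(x)/h$, and the two Wronskian terms \emph{add} because $\phi_1$ is increasing at $x_1^{\circ}$ while $\phi_2$ is decreasing there, producing the factor $2\sqrt{f(x_1^{\circ})}/h = 2\sqrt{H_f}/h$. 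The exponential prefactors combine through $\phi_1(x_1^{\circ})+\phi_2(x_1^{\circ}) = \int_{x_1^{\bullet}}^{x_1^{\circ}}\sqrt{f}\,\d\xi + \int_{x_1^{\circ}}^{x_2^{\bullet}}\sqrt{f}\,\d\xi = S_0$, which delivers $e^{-S_0/h}$. Multiplying by the $h^2\cdot h^{-1/2}$ absorbed from $h^{-1/4}$ on each WKB factor leaves an overall $\sqrt{h}\cdot 2\sqrt{H_f}$.

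Finally I would compute $a_0^{(j)}(x_1^{\circ})$ from the transport equation in \append{WKB}, fixing the overall constant by matching to the harmonic-oscillator ground state near the well, which contributes the factor $(\omega/(\sqrt{2}\pi))^{1/4}$ with $\omega = \sqrt{f''(x_1^{\bullet})}$. By the reflection symmetry of a partially periodic $f$ about $x_1^{\circ}$, the local amplitudes from the two wells coincide at the saddle, so their product equals the square of a single-well amplitude, reproducing the exponential in the lemma. The principal obstacle is the singular behaviour of the transport integrand at the well, where $f$ vanishes: both $\frac{f'(\xi)}{2f(\xi)}$ and $\omega/\sqrt{2f(\xi)}$ diverge like $1/(\xi - x_1^{\bullet})$, and one must regularize by subtracting the pole. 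That subtraction is exactly what produces the $-1/(\xi - x_1^{\bullet})$ term in the exponent of the claimed formula, and cross-checking signs and prefactors with the WKB normalization—together with verifying that subleading corrections are honestly $O(h)$ multiplicatively via the transport hierarchy and the Agmon estimates of \append{agmon}—is the bookkeeping step most prone to slips.
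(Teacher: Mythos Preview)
Your proposal is correct and follows essentially the same approach as the paper: the lemma is simply the Section~\ref{sec:onedimexp} two-well computation (the Wronskian formula \eq{eq15}, the WKB ansatz \eq{eq16}, and the transport-equation amplitude \eq{eq17}) restated for a general partially periodic $f$, exactly as the paper indicates by citing the results of \append{interactionmatrix}. Your identification of the $2\sqrt{H_f}/h$ factor from the eikonal derivatives, the $e^{-S_0/h}$ from $\phi_1(x_1^{\circ})+\phi_2(x_1^{\circ})$, the harmonic-oscillator normalization $(\omega/\sqrt{2}\pi)^{1/4}$, and the pole subtraction $-1/(\xi-x_1^{\bullet})$ needed to regularize the transport integral all match the paper's derivation.
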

Now, we can obtain the spectrum of $H_{|\F}$ explicitly, and proceed by using \lem{qtwmixingtime} to get the quantum mixing time.
\begin{lemma}[Quantum mixing time]\label{lem:illuqtwmixing}
Staring from one local ground state of one minimum, the $\epsilon$-close mixing time of QTW is given by
\begin{align}
   T_{\rm mix}^{\rm QTW} = O \bigg(\frac{N^3}{\epsilon |w|} [1+(N-1)|O(h^{\infty})|] \bigg) = O(\mathrm{poly}(N,1/h,1/\epsilon))e^{\frac{S_0}{h}}.
\end{align}
\end{lemma}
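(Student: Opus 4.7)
The plan is to invoke the general mixing-time bound of \lem{qtwmixingtime} and reduce the whole claim to estimating the minimal spectral gap $\Delta E$ of the interaction matrix $H_{|\F}$ given in \eq{Hforline}. Since by \lem{qtwmixingtime} one has
\begin{equation}
T_{\rm mix}^{\rm QTW} \le O\!\left(\frac{N}{\epsilon\,\Delta E}\bigl[1+(N-1)|O(h^{\infty})|\bigr]\right),
\end{equation}
the polynomial factor $N^{3}$ in the target bound must come from $1/\Delta E$ carrying an extra $N^{2}$ against the naive $N/\Delta E$.

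First I would use \lem{IlluEig} to write $E_{k}=\mu+2w\cos\!\bigl(k\pi/(N+1)\bigr)$ for $k=1,\dots,N$, so that
\begin{equation}
|E_{k}-E_{k'}| = 2|w|\bigl|\cos\tfrac{k\pi}{N+1}-\cos\tfrac{k'\pi}{N+1}\bigr|
= 4|w|\bigl|\sin\tfrac{(k+k')\pi}{2(N+1)}\bigr|\bigl|\sin\tfrac{(k-k')\pi}{2(N+1)}\bigr|.
\end{equation}
The minimum over $k\neq k'$ is achieved by adjacent indices at the edge of the spectrum ($k'=k+1$ with $k=0$ or $k=N-1$), where one of the two sines becomes $\sin\!\bigl(\pi/(2(N+1))\bigr)=\Theta(1/N)$ while the other is $\sin\!\bigl((2k+1)\pi/(2(N+1))\bigr)=\Theta(1/N)$ as well. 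Hence
\begin{equation}
\Delta E \;=\; \Theta\!\left(\tfrac{|w|}{N^{2}}\right),
\end{equation}
which, inserted into the bound above, yields exactly $T_{\rm mix}^{\rm QTW}=O\bigl(N^{3}/(\epsilon|w|)\bigr)\bigl[1+(N-1)|O(h^{\infty})|\bigr]$.

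To obtain the second equality in the statement, I would substitute the explicit tunneling amplitude from the preceding lemma,
\begin{equation}
|w| \;=\; 2\sqrt{\tfrac{h f''(x_{1}^{\bullet})H_{f}}{\sqrt{2}\pi}}\;
\exp\!\Bigl(\textstyle\int_{x_{1}^{\bullet}}^{x_{1}^{\circ}}\bigl(\sqrt{f''(x_{1}^{\bullet})/(2f(\xi))}-\tfrac{1}{\xi-x_{1}^{\bullet}}\bigr)\d\xi\Bigr)\;
e^{-S_{0}/h},
\end{equation}
so that $1/|w|=\mathrm{poly}(1/h)\,e^{S_{0}/h}$ up to a landscape-dependent constant, and the additive factor $1+(N-1)|O(h^{\infty})|$ is absorbed into the polynomial prefactor. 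This gives $T_{\rm mix}^{\rm QTW}=O(\mathrm{poly}(N,1/h,1/\epsilon))\,e^{S_{0}/h}$ as claimed.

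I expect the main obstacle to be the sharp estimate $\Delta E=\Theta(|w|/N^{2})$: the trigonometric manipulation itself is routine, but one must verify that the minimum gap is indeed attained at the spectral edge and not between, say, a pair of interior eigenvalues where both $\sin$-factors are $\Theta(1)$ (such gaps are in fact $\Theta(|w|/N)$, larger by a factor of $N$). Once this is done, the rest of the argument is merely a matter of plugging into \lem{qtwmixingtime} and invoking the tunneling-amplitude formula.
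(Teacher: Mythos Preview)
Your proposal is correct and follows precisely the route the paper intends: the lemma is stated without a standalone proof, being an immediate corollary of \lem{qtwmixingtime} together with \lem{IlluEig} and the tunneling-amplitude lemma, and your derivation of $\Delta E=\Theta(|w|/N^{2})$ from the explicit cosine spectrum is exactly the computation the paper has in mind (indeed, the analogous ring calculation in \append{welldistr} writes $\Delta E=2|w|\bigl(\cos\tfrac{\pi}{N+1}-\cos\tfrac{2\pi}{N+1}\bigr)=\Omega(|w|/N^{2})$). One cosmetic slip: your edge indices should be $k=1$ or $k=N-1$ rather than $k=0$, since $k$ ranges over $1,\dots,N$; this does not affect the argument.
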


Regarding SGD, we use the results introduced in \sec{classicalpre} to estimate the classical mixing time. First,
\begin{lemma}[Exponential decay constant]
In \prop{mic19}, let $\lambda = \delta_{s,1}/2s$ we have
\begin{align}
   \lambda = \bigg(\frac{\sqrt{f''(x^{\circ})f''(x^{\bullet})}}{2\pi} + o(s)\bigg) e^{-\frac{2 H_f}{s}}.
\end{align}
\end{lemma}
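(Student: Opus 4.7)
The plan is to identify the prefactor $\gamma_1$ in \prop{mic19} explicitly for a one-dimensional partially periodic landscape and then divide by $2s$ to read off $\lambda$. The content of the lemma is only the identification
\begin{align*}
   \gamma_1 = \frac{\sqrt{f''(x^{\circ})f''(x^{\bullet})}}{\pi}
\end{align*}
(with $f''(x^{\circ})$ understood in absolute value, since $x^{\circ}$ is a one-dimensional local maximum). I would invoke the semiclassical Eyring--Kramers--Helffer--Klein--Nier asymptotics for the Witten Laplacian $\Delta_f^s$ collected in \append{W-L}, which for a Morse landscape with equally deep wells separated by equally high saddles expresses each nearest-neighbour Arrhenius rate as
\begin{align*}
   \kappa = \frac{\sqrt{f''(x^{\bullet})|f''(x^{\circ})|}}{2\pi}\,e^{-\frac{2H_f}{s}},
\end{align*}
the standard product of the harmonic frequency at the well, the imaginary frequency at the saddle divided by $2\pi$, and the Boltzmann weight (the factor $2$ in the exponent reflecting that the invariant measure of \eq{lrsde} is $\propto e^{-2f/s}$).

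Next, I would use partial periodicity as in \defn{funcforillus}: all $N$ minima sit at depth $0$ with the same curvature $f''(x^{\bullet})$ and all $N-1$ interior saddles have the same height $H_f$ and curvature $f''(x^{\circ})$. Consequently every nearest-neighbour transition rate is this common $\kappa$, and the restriction of the Fokker--Planck generator to the slow manifold of quasi-stationary modes is, up to exponentially small corrections, a symmetric tridiagonal matrix of the same shape as \eq{Hforline}, with smallest nonzero eigenvalue $2\kappa(1-\cos(\pi/N))$ times an $N$-dependent $O(1)$ factor. The symmetric double-well subcase already gives $\delta_{s,1}/(2s)=2\kappa$ at leading order; in general, the Helffer--Klein--Nier formula in \append{W-L} identifies $\gamma_1$ with $\sqrt{f''(x^{\circ})f''(x^{\bullet})}/\pi$ directly, the $N$-dependent combinatorial factors being absorbed into the $o(s)$ remainder of \prop{mic19}.

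Dividing through by $2s$ yields
\begin{align*}
   \lambda = \frac{\delta_{s,1}}{2s} = \Big(\frac{\sqrt{f''(x^{\circ})f''(x^{\bullet})}}{2\pi}+o(s)\Big)e^{-\frac{2H_f}{s}},
\end{align*}
as claimed.

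The main technical obstacle is bookkeeping the prefactor carefully: the single-transition Kramers rate $\kappa$, the double-well spectral gap $2\kappa$, and the full $N$-well tridiagonal spectral gap all differ by multiplicative $O(1)$ constants, and one must commit to the precise convention used in \prop{mic19} (i.e.\ $\gamma_1$ is normalized so that the two-well case yields exactly $\gamma_1=\sqrt{f''(x^{\circ})f''(x^{\bullet})}/\pi$) to arrive at the stated form of $\lambda$. Once this convention is fixed via the appendix, only substitution of the one-dimensional curvatures $f''(x^{\bullet})$ and $|f''(x^{\circ})|$ into the general semiclassical formula remains, which is routine.
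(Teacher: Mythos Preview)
Your core route—specialize the Eyring--Kramers/Helffer--Klein--Nier prefactor \eq{classicalpolycoe} in \append{W-L} to one dimension and then divide by $2s$—is exactly what the paper does (the lemma is not given a separate proof; it is an immediate one-line substitution). In $d=1$ one has $|\eta(x^{\circ})|=|f''(x^{\circ})|$, $\det\nabla^2 f(x^{\bullet})=f''(x^{\bullet})$, $-\det\nabla^2 f(x^{\circ})=|f''(x^{\circ})|$, hence $\gamma_1=\sqrt{|f''(x^{\circ})|\,f''(x^{\bullet})}/\pi$ and $\lambda=\delta_{s,1}/(2s)$ follows.

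Your tridiagonal detour, however, contains a genuine slip. An $N$-dependent multiplicative constant such as $2(1-\cos(\pi/N))$ is $\Theta(1)$ as $s\to 0$ and cannot be ``absorbed into the $o(s)$ remainder'' of \prop{mic19}: $N$ is fixed and independent of $s$. The paper never performs the tridiagonal reduction; it simply quotes \eq{classicalpolycoe}, which under \assum{uniqueness} attaches $\gamma_1$ to a single saddle--minimum pair and carries no combinatorial $N$-factor. In the fully degenerate partially-periodic case (all wells at exactly the same depth) \assum{uniqueness} is actually violated, so your tridiagonal heuristic and the quoted formula are in tension; the paper does not address this, since for the comparison with QTW only the exponent $2H_f/s$ matters and the $O(1)$ prefactor is secondary. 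If you drop the tridiagonal paragraph and go straight from \eq{classicalpolycoe} to the one-dimensional substitution, the argument is clean and matches the paper.
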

Then, by \cor{sgdmixing}, the following lemma holds.
\begin{lemma}[Classical mixing time]\label{lem:illusgdmixing}
Let $T_{\rm mix}^{\rm SGD}$ be the SGD $\epsilon$-close mixing time which is the minimum time enabling $\|\rho_{\rm SGD} (t,\cdot) - \mu_{\rm SGD}\|_{\mu_{\rm SGD}^{-1}}<\epsilon$, we have
\begin{align}
   T_{\rm mix}^{\rm SGD} = O \bigg(\frac{1}{\lambda} \ln\frac{\|\rho(0,\cdot) - \mu_{\rm SGD}(\cdot) \|_{\mu^{-1}_{\rm SGD}}}{\epsilon} \bigg) =  O(\mathrm{poly}(1/s,\ln(1/\epsilon)))e^{\frac{2H_f}{s}}.
\end{align}
\end{lemma}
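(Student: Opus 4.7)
The plan is a straightforward chaining of the two immediately preceding results. First I would invoke \cor{sgdmixing} verbatim: it states that for $t > (2s/\delta_{s,1})\ln(\|\rho_{\rm SGD}(0,\cdot) - \mu_{\rm SGD}\|_{\mu_{\rm SGD}^{-1}}/\epsilon)$ we are guaranteed $\|\rho_{\rm SGD}(t,\cdot) - \mu_{\rm SGD}\|_{\mu_{\rm SGD}^{-1}} < \epsilon$. Recalling the normalization $\lambda = \delta_{s,1}/(2s)$ introduced in the exponential decay constant lemma, this sufficient time is exactly $\lambda^{-1}\ln(\|\rho_{\rm SGD}(0,\cdot) - \mu_{\rm SGD}\|_{\mu_{\rm SGD}^{-1}}/\epsilon)$, which yields the first equality of the lemma by definition of $T_{\rm mix}^{\rm SGD}$ as the infimum of such admissible $t$.

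Second, I would substitute the explicit asymptotic form of $\lambda$ provided by the exponential decay constant lemma, namely $\lambda = \big(\sqrt{f''(x^{\circ})f''(x^{\bullet})}/(2\pi) + o(s)\big) e^{-2H_f/s}$. Inverting, $1/\lambda = \big(2\pi/\sqrt{f''(x^{\circ})f''(x^{\bullet})} + o(s)\big)\, e^{2H_f/s}$, so the prefactor of $e^{2H_f/s}$ in $1/\lambda$ is $O(1)$ (hence trivially $O(\mathrm{poly}(1/s))$) as $s \to 0$. Multiplying by the logarithmic factor and absorbing the initial-condition norm $\|\rho_{\rm SGD}(0,\cdot) - \mu_{\rm SGD}\|_{\mu_{\rm SGD}^{-1}}$ into a constant (finite under the standing assumption $\rho_{\rm SGD}(0,\cdot)\in L^2(\mu_{\rm SGD}^{-1})$) yields $T_{\rm mix}^{\rm SGD} = O(\mathrm{poly}(1/s,\ln(1/\epsilon)))\, e^{2H_f/s}$, establishing the second equality.

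There is no genuine obstacle: all of the analytical heavy lifting is done upstream, in \prop{mic19} (Witten-Laplacian spectral gap) and in the preceding Morse-theoretic evaluation of $\lambda$. The only bookkeeping to be careful about is that the explicit prefactor $\sqrt{f''(x^\circ) f''(x^\bullet)}/(2\pi)$ is bounded above and below by positive constants in the small-$s$ regime (guaranteed by \assum{morse}, which forces $f''(x^\circ), f''(x^\bullet) \neq 0$), so that inversion preserves the $O(\mathrm{poly}(1/s))$ bound and the polynomial bookkeeping closes without subtlety.
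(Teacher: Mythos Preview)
Your proposal is correct and matches the paper's approach exactly: the paper simply states that the lemma follows from \cor{sgdmixing} (together with the preceding exponential decay constant lemma), and your write-up is precisely the natural two-step unpacking of that citation. There is nothing to add.
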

Later, we do not focus on the dependence of the mixing time on $\epsilon$, as the norms ($L^1$ norm for QTW and $L^2(\mu^{-1}_{\rm SGD})$ for SGD) used to capture convergence are different.\footnote{
In terms of $\epsilon$, the same argument in \cite{AC21} but with evolution time $t$ of QTW chosen as a sum of some random variables instead of chosen uniformly in an interval, $T_{\rm mix}^{\rm QTW}$ can also achieve $\ln(1/\epsilon)$ dependence  instead of $1/\epsilon$.}
The dominant terms affecting running time of QTW and SGD are $e^{\frac{S_0}{h}}$ and
$e^{\frac{2H_f}{s}}$.

\begin{lemma}[Comparison on one-dimensional periodic landscapes]
Under \stand{risk}, let QTW and SGD be both $\delta$-accurate. For sufficiently small $\delta$, the QTW mixing time and SGD mixing time are dominated by
\begin{align}
   LT_{\rm mix}^{\rm QTW}:=e^{\frac{\sqrt{2}S_0f''(x^{\bullet})}{4\delta}}~\mathrm{and}~LT_{\rm mix}^{\rm SGD}:=e^{\frac{H_f}{2\delta}}, \textrm{ respectively.}
\end{align}
\end{lemma}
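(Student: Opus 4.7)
The plan is to simply combine the two preceding mixing-time lemmas with the accuracy-to-learning-rate dictionary from \stand{risk}, specialized to $d=1$, and then extract the leading exponential.

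First I would invoke \lem{expectedrisk} with $d=1$. Because $f$ is a one-dimensional partially periodic function, the symmetry (periodicity) forces $f''(x_j^{\bullet}) = f''(x^{\bullet})$ at every local minimum, so the weighted sum $\sum_j p(\infty,j)\tr\bigl(\sqrt{\nabla^2 f(x_j)}\bigr)$ collapses to $\sqrt{f''(x^{\bullet})}$ independently of which well the walk is initialized in. Substituting into \eq{hstand1} and into the formula for $s$ yields
\begin{align}
h = \frac{\delta}{\tfrac{\sqrt{2}}{4}\sqrt{f''(x^{\bullet})} + o_\delta(1)}, \qquad s = \frac{4\delta}{1+o_\delta(1)}.
\end{align}
Thus $1/h$ and $1/s$ are both $\Theta(1/\delta)$ as $\delta\to 0$.

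Next I would plug these into the mixing-time bounds. From \lem{illuqtwmixing}, $T_{\rm mix}^{\rm QTW} = O(\mathrm{poly}(N,1/h,1/\epsilon))\,e^{S_0/h}$, and substituting the expression for $h$ gives the exponent
\begin{align}
\frac{S_0}{h} \;=\; \frac{\sqrt{2}\,S_0\,\sqrt{f''(x^{\bullet})}}{4\delta}\,\bigl(1+o_\delta(1)\bigr).
\end{align}
From \lem{illusgdmixing}, $T_{\rm mix}^{\rm SGD} = O(\mathrm{poly}(1/s,\ln(1/\epsilon)))\,e^{2H_f/s}$, and substituting $s$ gives $2H_f/s = H_f(1+o_\delta(1))/(2\delta)$. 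This produces the two leading exponentials claimed in the lemma.

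The only real thing to check is that the polynomial prefactors are genuinely subdominant. Since $1/h$ and $1/s$ are polynomial in $1/\delta$, the prefactors are $\mathrm{poly}(1/\delta,N,1/\epsilon,\ln(1/\epsilon))$, which is overwhelmed by the exponentials $e^{\Theta(1/\delta)}$ as $\delta\to 0$, so the dominant behavior is captured by $LT_{\rm mix}^{\rm QTW}$ and $LT_{\rm mix}^{\rm SGD}$. The main (minor) obstacle is bookkeeping the $o_\delta(1)$ correction inside the exponent: one must verify that $o_\delta(1)/\delta$ does not spoil the leading exponential, which holds because $o_\delta(1)\cdot S_0/\delta$ can be absorbed into a sub-exponential factor once $\delta$ is small enough relative to the geometric constants $S_0, f''(x^{\bullet}), H_f$. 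No further machinery is needed beyond the lemmas already stated.
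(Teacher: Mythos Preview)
Your approach is exactly the intended one: the paper states this lemma without a separate proof, as an immediate consequence of combining \lem{illuqtwmixing} and \lem{illusgdmixing} with the $d=1$ specialization of \lem{expectedrisk}, precisely as you do.

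One substantive observation: your computation correctly produces $\sqrt{f''(x^{\bullet})}$ in the QTW exponent, not $f''(x^{\bullet})$ as printed in the lemma. This appears to be a typo in the stated lemma. Indeed, your version is the one consistent with the paper's own subsequent ``Critical case'' lemma: with $f''(x^{\bullet})=k$ and $S_0\approx 2\int_0^a\sqrt{k/2}\,x\,\d x=\sqrt{k/2}\,a^2$ when $b\ll a$, one obtains
\[
\frac{\sqrt{2}\,\sqrt{k}\,S_0}{4\delta}\;=\;\frac{k a^2}{4\delta}\;=\;\frac{H_f}{2\delta},
\]
matching the claimed critical-case equality, whereas the printed $f''(x^{\bullet})=k$ would give $k^{3/2}a^2/(4\delta)$, which does not. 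Your handling of the $o_\delta(1)/\delta$ correction in the exponent is at the same level of informality as the paper's; the word ``dominated'' in the statement is doing real work here, and neither you nor the paper tracks the precise rate of that error term.
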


\begin{figure}
  \centerline{
  \includegraphics[width=0.5\textwidth]{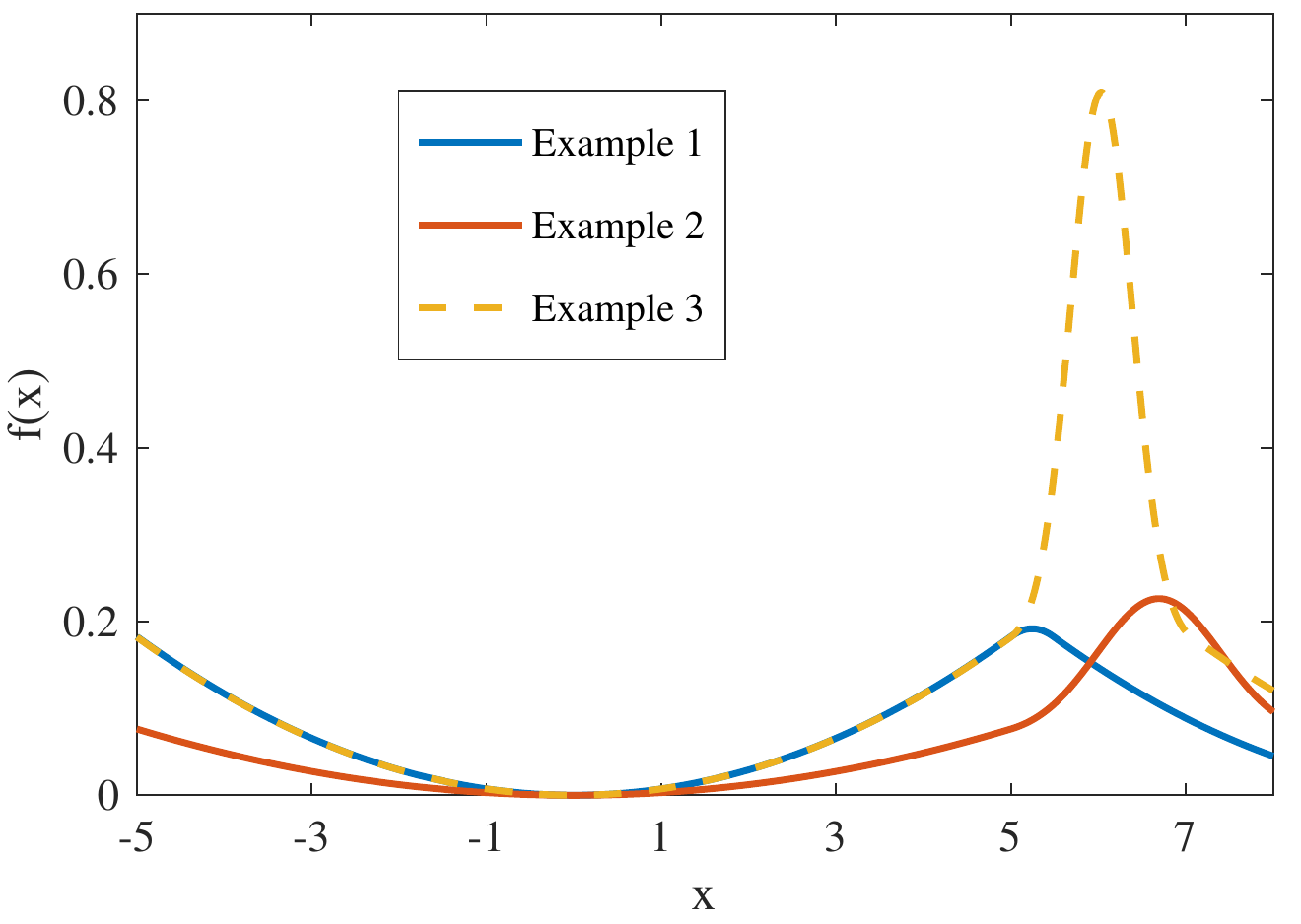}}
    \caption{The landscapes in \examp{critical}, \examp{flatness}, and \examp{sharp} (corresponding to Example 1, 2, and 3 in the figure, respectively) for illustrating the comparison between QTW and SGD.
    }
\label{fig:illuexamples}
\end{figure}

As concrete examples, we present several specific functions to illustrate the
advantages of quantum tunneling.
Since the function in the region of our interest is periodic,
we only need to specify the function value within one period to
construct a concrete example.
Without loss of generality, we set the interval $[-a,a+2b]$ to be
one period, where $[-a,a]$ is called the well region and $[a,a+2b]$ the barrier region.
The constructed landscape in $[-a,a+2b]$ is given by
\begin{align}
    f(x) = \left\{
    \begin{array}{ll}
        \frac{1}{2}k x^2\quad x\in [-a,a],\\[3pt]
         \frac{1}{2\pi \sigma^2}\exp\big(-\frac{(x-a-b)^2}{\sigma^2}\big) + \frac{1}{2}k a^2 - \varepsilon\quad x \in (a,a+2b].
    \end{array}
    \right.
    \label{eq:illuVx}
\end{align}
Here, to reduce free parameters, we make $f(x)$ differentiable at $a$, the boundary of the well, and the barrier, such that
\begin{align}
    b = \sigma \sqrt{\ln \frac{1}{2\pi \sigma^2 \varepsilon}}, \quad
    k = \frac{2\varepsilon}{a\sigma}\sqrt{\ln \frac{1}{2\pi \sigma^2 \varepsilon}}.
    \label{eq:illucondition}
\end{align}
\begin{remark}
Note that the function in \eq{illuVx} is not smooth.
We need to use the mollifier function $m_{r}$ (see detials in \append{mollified}) to smooth it such that
assumptions in \sec{classicalpre} and \sec{quantumpre} are
satisfied. Note that if $r\to 0$, the smoothed function will tend to be
$f$,
following results can be seen as arbitrarily accurate for a
smooth function arbitrarily close to \eq{illuVx}.
\end{remark}
By giving specific $a$, $\sigma$, and $\varepsilon$ in \eq{illuVx},
we can design landscapes with different properties.
Detailed variables, discussions and comparisons are given below.

\begin{example}[Critical case]\label{examp:critical}
For \eq{illuVx}, we set $a = 5.0$, $\sigma = 1.0$ and $\varepsilon=0.15$, and obtain $b\approx 0.243$ and $k\approx 0.0146$ by \eq{illucondition}.
\end{example}

\begin{example}[Flatness of minima]\label{examp:flatness}
For \eq{illuVx}, we set $a = 5.0$, $\sigma = 1.0$ and $\varepsilon=0.009$, and obtain $b\approx 1.69$ and $k\approx 0.00610$ by \eq{illucondition}.
\end{example}

\begin{example}[Sharpness of barriers]\label{examp:sharp}
For \eq{illuVx}, we set $a = 5.0$, $\sigma = 0.5$ and $\varepsilon=0.0088$, and obtain $b\approx 1.03$ and $k\approx 0.0146$ by \eq{illucondition}.
\end{example}

\fig{illuexamples} explicitly shows the shapes of above examples.
The barrier region in \examp{critical} is small and most of
the function in one period is quadratic, which is
similar to the case introduced in \sec{onedimexp}.
The Morse saddle barrier $H_f$ of \examp{flatness}
is approximately equal to that of \examp{critical}, whereas, in \examp{flatness},
the well is more flat and the barrier is thicker.
\examp{sharp} has almost the same well as \examp{critical} but is equipped with a much higher barrier.

We call \examp{critical} as the critical case because
QTW and SGD perform nearly the same on it in terms of
the leading terms $LT_{\rm mix}^{\rm QTW}$ and $LT_{\rm mix}^{\rm SGD}$:
\begin{lemma}
\examp{critical} satisfies $b\ll a$ and $\frac{1}{2}ka^2 \approx H_f$. For such a landscape, we have
\begin{align}
  \ln LT_{\rm mix}^{\rm QTW} &= \frac{H_f}{2\delta}\left[1+ 2b/a+ o(b/a) \right] + o(\delta),\\
  \ln LT_{\rm mix}^{\rm SGD} &= \frac{H_f}{2\delta}(1+o(\delta)).
\end{align}
\end{lemma}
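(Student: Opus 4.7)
The plan is to turn both leading-exponent formulas in the lemma into explicit expressions in $(a,b,k,\delta,H_f)$ on the concrete landscape \eq{illuVx}, then use the two structural hypotheses $b\ll a$ and $\tfrac12 ka^2\approx H_f$ to collapse those expressions into the desired form. The SGD direction is almost immediate: by \stand{risk} together with \lem{expectedrisk} in dimension $d=1$, we have $s=4\delta(1+o_\delta(1))$, so $\ln LT_{\rm mix}^{\rm SGD}=2H_f/s=\frac{H_f}{2\delta}(1+o_\delta(1))$, which matches the second displayed identity once the remainder is absorbed into the stated $o(\delta)$ tail.

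For the QTW direction, the key step is to evaluate the Agmon distance $S_0$ between two adjacent minima $x^{\bullet}_1=0$ and $x^{\bullet}_2=2a+2b$. By the reflection symmetry of the quadratic well and by \eq{illuVx},
\begin{align*}
S_0=\int_0^{2a+2b}\sqrt{f(\xi)}\,d\xi = 2\int_0^a\sqrt{\tfrac12 k\xi^2}\,d\xi + \int_a^{a+2b}\sqrt{f(\xi)}\,d\xi.
\end{align*}
The quadratic contribution evaluates exactly to $a^2\sqrt{k/2}=a\sqrt{\tfrac12 ka^2}$, which under $\tfrac12 ka^2\approx H_f$ equals $a\sqrt{H_f}(1+o(1))$. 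On the barrier $[a,a+2b]$, $f$ is sandwiched between $\tfrac12 ka^2$ and $H_f$, both $\approx H_f$, so $\int_a^{a+2b}\sqrt{f(\xi)}\,d\xi=2b\sqrt{H_f}(1+o(1))$ by a trivial squeeze. Summing, $S_0=(a+2b)\sqrt{H_f}(1+o(1))$.

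The last step inserts the $\delta$-dependence of $h$. Applying \stand{risk} and \lem{expectedrisk} in one dimension yields $h=2\sqrt{2}\delta/\sqrt{k}\,(1+o_\delta(1))$, whence
\begin{align*}
\ln LT_{\rm mix}^{\rm QTW}=\frac{S_0}{h}=\frac{(a+2b)\sqrt{kH_f}}{2\sqrt{2}\delta}\bigl(1+o(1)\bigr).
\end{align*}
Using $\sqrt{kH_f}\approx ka/\sqrt{2}$ (again via $\tfrac12 ka^2\approx H_f$), the right-hand side becomes $\frac{(a+2b)ka}{4\delta}(1+o(1))=\frac{H_f}{2\delta}(1+2b/a)(1+o(1))$, and redistributing the multiplicative error between an $o(b/a)$ inside the bracket and an additive $o(\delta)$ outside recovers the first displayed identity.

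The main obstacle is bookkeeping for the two independent small parameters $b/a$ and $\delta$. One must verify that the approximation $\tfrac12 ka^2\approx H_f$ (which in the critical regime reads $H_f-\tfrac12 ka^2=\tfrac{1}{2\pi\sigma^2}-\varepsilon=o(H_f)$), together with the sandwich bound on the Gaussian-bump barrier, each produces only geometric errors of order $o(b/a)$ or better, so that after multiplication with the $o_\delta(1)$ factor coming from \lem{expectedrisk} the remainder decomposes cleanly into the two separate error terms displayed in the statement.
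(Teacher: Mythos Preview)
Your proposal is correct and follows exactly the route the paper sets up: it uses the definitions $\ln LT_{\rm mix}^{\rm QTW}=S_0/h$ and $\ln LT_{\rm mix}^{\rm SGD}=2H_f/s$ together with the explicit $h,s$ from \lem{expectedrisk} under \stand{risk}, then evaluates $S_0=\int_0^{2a+2b}\sqrt{f}$ by splitting into the quadratic and barrier pieces and invoking $\tfrac12 ka^2\approx H_f$. The paper states this lemma without a written proof, and your computation is precisely the intended one; the only residual bookkeeping you flag (that the landscape-side error from $\tfrac12 ka^2\approx H_f$ and the barrier sandwich can be absorbed into the $o(b/a)$ bracket) is indeed just notational tidying, since both hypotheses are taken as independent smallness assumptions in the lemma's premise.
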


QTW mixes faster on both \examp{flatness} and \examp{sharp} for sufficiently small $\delta$. Specifically, we have
\begin{lemma}
For \examp{flatness} and \examp{sharp}, the following holds
\begin{align}
  \ln LT_{\rm mix}^{\rm QTW} &< \frac{k}{4\delta}a^2+ \frac{k}{2\delta}ab + \frac{\sqrt{2k}}{4\delta} + o(\delta),\\
  \ln LT_{\rm mix}^{\rm SGD} &= \frac{1}{4\pi \sigma^2\delta}+\frac{k}{4\delta}a^2-\frac{\epsilon}{2\delta} + o(\delta).
\end{align}
Substituting the parameters, it is true for both \examp{flatness} and \examp{sharp} that
\begin{align}
   \frac{k}{4}a^2+ \frac{k}{2}ab + \frac{\sqrt{2k}}{4}
   < \frac{1}{4\pi\sigma^2}+\frac{k}{4}a^2-\frac{\epsilon}{2}.
\end{align}
\end{lemma}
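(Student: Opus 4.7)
The plan is to assemble the two exponents from quantities already computed and then check the resulting algebraic inequality directly on the two sets of parameters. For the quantum side, I would substitute $f''(x^{\bullet})=k$ into the expression from the preceding lemma so that the leading exponent becomes $\sqrt{2k}\,S_0/(4\delta)$, and then focus on upper bounding the Agmon distance $S_0 = \int_{0}^{2a+2b}\sqrt{f(\xi)}\,\d\xi$ between the two adjacent minima at $0$ and $2a+2b$.

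I would split this integral into the two half-well pieces on $[0,a]\cup[a+2b,2a+2b]$ and the barrier piece on $[a,a+2b]$. On the half-wells the integrand is exactly $\sqrt{k/2}\,|\xi|$ (resp.\ $\sqrt{k/2}\,|\xi-2a-2b|$), contributing $\sqrt{k/2}\,a^2$ altogether. On the barrier I would apply the pointwise estimate
\[
\sqrt{f(\xi)} \;\leq\; \sqrt{\tfrac{1}{2\pi\sigma^2}}\exp\!\Bigl(-\tfrac{(\xi-a-b)^2}{2\sigma^2}\Bigr) \;+\; \sqrt{k/2}\,a,
\]
obtained by dropping the $-\varepsilon$ offset and applying $\sqrt{A+B}\leq\sqrt{A}+\sqrt{B}$. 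The Gaussian piece integrates over $[a,a+2b]$ to at most $\sqrt{1/(2\pi\sigma^2)}\cdot\sqrt{2\pi}\,\sigma=1$, independently of $\sigma$, while the constant piece contributes $2ab\sqrt{k/2}$. Assembling, $S_0\leq \sqrt{k/2}\,(a^2+2ab)+1$; multiplying by $\sqrt{2k}/(4\delta)$ produces exactly the claimed upper bound on $\ln LT_{\rm mix}^{\rm QTW}$.

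For the classical side, I would substitute the explicit Morse saddle barrier $H_f = f(a+b)-f(0) = \tfrac{1}{2\pi\sigma^2}+\tfrac{1}{2}ka^2-\varepsilon$ directly into $\ln LT_{\rm mix}^{\rm SGD}=H_f/(2\delta)$, which gives the stated equality. After canceling the common $ka^2/(4\delta)$ term on both sides, the displayed inequality reduces to $\tfrac{k}{2}ab+\tfrac{\sqrt{2k}}{4} < \tfrac{1}{4\pi\sigma^2}-\tfrac{\varepsilon}{2}$, which I would verify by direct substitution of the numerical parameters listed in \examp{flatness} and \examp{sharp}.

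I expect the main obstacle to be choosing a sufficiently tight barrier bound on $\sqrt{f}$. A crude bound such as $\sqrt{f}\leq\sqrt{H_f}$ would produce a $\sqrt{H_f}\cdot 2b$ contribution that scales like $1/\sigma$ and is too loose to remain below the classical $1/(4\pi\sigma^2)$ term; the subadditive square-root trick above succeeds precisely because the Gaussian's peak height $1/(2\pi\sigma^2)$ and its effective width $\sqrt{2\pi}\,\sigma$ combine to give a contribution that is independent of $\sigma$, leaving $\sigma$-free room for the classical $1/(4\pi\sigma^2)$ term to dominate as $\sigma$ shrinks (sharpness) or as $\varepsilon$ shrinks at fixed $\sigma$ (flatness).
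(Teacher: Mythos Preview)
Your proof is correct. The paper states this lemma without proof, so there is no paper argument to compare against; your approach is the natural one and matches what the authors presumably had in mind.

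A few confirmations of the key steps: the previous lemma's exponent should indeed be read as $\sqrt{2k}\,S_0/(4\delta)$ (the $f''(x^{\bullet})$ appearing there is under a square root once you trace back through the expected-risk formula $h=4\delta/\sqrt{2k}+o_\delta(1)$), so your substitution is right. Your split of $S_0=\int_0^{2a+2b}\sqrt{f}$ into two half-wells plus the barrier is exactly the structure of the piecewise definition, and the subadditivity bound $\sqrt{A+B}\le\sqrt{A}+\sqrt{B}$ on the barrier piece, together with the full-line Gaussian bound $\int_{\mathbb R}(2\pi\sigma^2)^{-1/2}e^{-u^2/(2\sigma^2)}\,\d u=1$, gives $S_0<\sqrt{k/2}\,(a^2+2ab)+1$ with strict inequality (since both the $-\varepsilon$ drop and the Gaussian-tail extension are strict). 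Multiplying by $\sqrt{2k}/(4\delta)$ reproduces the displayed upper bound exactly. The classical exponent follows immediately from $H_f=f(a+b)=\tfrac{1}{2\pi\sigma^2}+\tfrac{1}{2}ka^2-\varepsilon$. The numerical check of $\tfrac{k}{2}ab+\tfrac{\sqrt{2k}}{4}<\tfrac{1}{4\pi\sigma^2}-\tfrac{\varepsilon}{2}$ goes through for both parameter sets ($\approx 0.053<0.075$ for \examp{flatness}, $\approx 0.080<0.314$ for \examp{sharp}).

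Your closing remark about why the subadditive bound succeeds where $\sqrt{f}\le\sqrt{H_f}$ would fail is also correct and is precisely the mechanism the paper is exploiting informally in its discussion of \examp{sharp}.
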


Comparing to the critical case \examp{critical},
\examp{flatness} has a thicker barrier, which increases $S_0$ and causes difficulty for QTW.
However, QTW can perform better in \examp{flatness}. This is mainly
due to the more flat well of \examp{flatness}.
Recall that by \stand{risk}, to ensure $\delta$-risk-accuracy, $h$ and $s$ should be
\begin{align}
    h= \frac{\delta}{\frac{\sqrt{2k}}{4} + o_{\delta}(1)}
    \quad\mathrm{and}\quad
    s= \frac{\delta}{\frac{1}{4}(1+o_{\delta}(1))},
\end{align}
respectively.
That is, under the same risk accuracy, $h$ can be much larger than $s$
if the well is flat ($k$ is small), making tunneling easier.
Note that there is a trade-off between accuracy and time cost:
smaller $h$ (or $s$) ensures high accuracy but make tunneling effects (or thermal diffusion) weaker; conversely, larger $h$ (or $s$) permits faster tunneling (or diffusion) but yields inaccurate results.
Discussions on quantum tunneling effects usually focus on
properties of the barrier.
In the present study, since we aim to find global minima,
the precision of results obtained is one important concern.
Therefore, the flatness of wells, which affects differently on the accuracy of
QTW and SGD, is a crucial property determining the runtime of QTW and SGD.
Loosely speaking, QTW is faster than SGD on landscapes with flat wells.

\examp{sharp} adheres to the intuition that quantum tunneling is efficient on functions with tall and
thin barriers. The wells of \examp{sharp} are almost the same as those of the critical case \examp{critical}.
QTW can be faster in \examp{sharp} because we add a sharp barrier between wells. By \lem{illusgdmixing}, a high barrier (i.e., large $H_f$) would significantly hinder thermal climbing.
However, the tall barrier is sufficiently thin, such that $S_0 = 2\int_{0}^{a+b}\sqrt{f(x)}\d x$ can still be small and by \lem{illuqtwmixing}, the tunneling effect would be strong.

Moreover, in high dimensions, the distribution of wells can be very different from being on a line. As shown in \append{welldistr},
distribution of wells can largely affect the dependence of time on $N$.
However, such relation between the distribution of wells and running time
is not explicitly shown for SGD. Therefore, the distribution of wells can also be a factor of quantum speedups.

In summary, we can conclude our \slog.

\subsection{Efficient quantum tunneling for solving a classically difficult hitting problem}\label{sec:separation}

The above examples compare QTW driven by quantum tunneling
with SGD. In this section, an exponential separation in terms of query complexity between QTW given initial states and classical algorithms knowing one well will be shown for a specific hitting problem on a constructed landscape.

The landscape $f(\mathbf{x})$ we construct lives in $\mathbb{R}^d$. We use $\|\cdot\|$ to denote the $\ell_2$ norm of vectors, namely, $\|\mathbf{x}\| = \sqrt{\mathbf{x}\cdot \mathbf{x}}$.
Let $\mathbb{B}(\mathbf{x},r)$ denote a $d$-dimensional ball centered at $\mathbf{x}$ with radius $r$. A special direction $\mathbf{v}$ is randomly chosen from the $d$-dimensional unit sphere. We define two regions $W_- = \mathbb{B}(\mathbf{0},a)$ and $W_+ = \mathbb{B}(2b\mathbf{v},a)$ with $b\geq a$.
Let $R$ be sufficiently large s.t. $W_-, W_+ \subset \mathbb{B}(\mathbf{0},R)$.
We denote the region $\{\mathbf{x}\mid\mathbf{x} \in \mathbb{B}(\mathbf{0},R), ~|\mathbf{x}\cdot \mathbf{v}| \leq w\}$ by $S_{\mathbf{v}}$, where $w$ will be chosen from $[\sqrt{3}a/2,0)$. We denote
\begin{align}
B_{\mathbf{v}}:=\{\mathbf{x}\mid w< \mathbf{x}\cdot \mathbf{v} < 2b-w,~\sqrt{\|\mathbf{x}\|^2 - (\mathbf{x}\cdot \mathbf{v})^2} < \sqrt{a^2-w^2}, \bf{x}\notin W_-\cup W_+\}.
\end{align}
\fig{provable-acceleration} illustrates positions of the newly defined regions. The constructed function $f$ is given by
\begin{align}
    f(\mathbf{x}) = \left\{
    \begin{array}{ll}
     \frac{1}{2}\omega^2 \|\mathbf{x}\|^2,~\mathbf{x} \in W_-, \\[3pt]
     \frac{1}{2}\omega^2 \|\mathbf{x}-2b\mathbf{v}\|^2,~\mathbf{x} \in W_+, \\[3pt]
     H_1,~\mathbf{x} \in B_{\mathbf{v}},\\[3pt]
     H_2,~\mathrm{otherwise}.
    \end{array}
    \right.
    \label{eq:hardinstance}
\end{align}
Here, we define $H_0 = \frac{1}{2}\omega^2a^2$ and demand that $0<H_0 \sim H_1 \ll H_2$.
\begin{remark}
The landscape $f$ in \eq{hardinstance} is not smooth and should be smoothed to be $F_r$ with the help of a mollifier function $m_r$ (see details in \append{mollified}) such that assumptions in \sec{quantumpre} can be satisfied.
Because when $r\to 0$, $F_r\to f_r$, we can always find sufficiently small $r$ to make the following conclusions based on $f$ valid for $F_r$.
\end{remark}

There are two global minima, $\mathbf{0}$ and $2b\mathbf{v}$, of the function $f$. Given that we know $\mathbf{0}$ is a minimum, our goal is to find the other one. To avoid
complicated justifications, we deal with a simpler problem:
\begin{problem}\label{prb:provable}
    For the $f$ in \eq{hardinstance}, given that we only know $\mathbf{0}$ is a global minimum, find any point in  $W_+$.
\end{problem}

\subsubsection{Classical lower bound}\label{sec:clb}
Due to the concentration of measure, for any point $\mathbf{x} \in \mathbb{B}(\mathbf{0},R)$, the probability of $\mathbf{x}\in S_{\mathbf{v}}$ is given by
\begin{align}
    P(\mathbf{x} \in S_{\mathbf{v}}) \geq 1 - O \big(e^{-\frac{dw^2}{2R^2}}\big).
    \label{eq:PxinSv}
\end{align}
Intuitively, restricted in $\mathbb{B}(\mathbf{0},R)$, any classical algorithm cannot escape from $S_{\mathbf{v}}$
efficiently. In $\mathbb{R}^d$, queries out of $\mathbb{B}(\mathbf{0},R)$ provide no information about the landscape inside $\mathbb{B}(\mathbf{0},R)$ and are unable to help to escape from $S_{\mathbf{v}}$. Therefore, classical algorithms cannot solve \prb{provable} efficiently with or without being constrained in $\mathbb{B}(\mathbf{0},R)$. To rigorously prove above intuitions, we first introduce a mathematical result indicating \eq{PxinSv}:
\begin{lemma}[Measure concentration for the sphere]\label{lem:mconcentration}
Let $\mathbb{S}^{d-1} = \{\mathbf{x}:\|\mathbf{x}\| =1 \}$ be the unit sphere in $\mathbb{R}^d$. Let $\mathrm{Cap}(\epsilon)$ denote the spherical cap of height
$\epsilon$ above the origin (see the left part of \fig{capandcone}). We have
\begin{align}
    \frac{\mathrm{Area}(\mathrm{Cap}(\epsilon))}{\mathrm{Area}(\mathbb{S}^{d-1})} \leq e^{-d\epsilon^2/2}.
\end{align}
\end{lemma}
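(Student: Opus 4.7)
The plan is to prove this concentration bound via a Chernoff-type moment generating function argument. By the rotational symmetry of the uniform measure on $\mathbb{S}^{d-1}$, I orient the cap along a single coordinate axis: if $X$ is distributed uniformly on $\mathbb{S}^{d-1}$, then the normalized area of $\mathrm{Cap}(\epsilon)$ equals $P(X_d \geq \epsilon)$.

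The core step is to establish the sub-Gaussian MGF bound $\mathbb{E}[e^{\lambda X_d}] \leq e^{\lambda^2/(2d)}$ for every $\lambda \in \mathbb{R}$. I would compute the even moments directly using the fact that $X_d^2$ has a $\mathrm{Beta}(1/2,(d-1)/2)$ distribution (equivalently, $X = G/\|G\|$ for a standard Gaussian $G \in \mathbb{R}^d$), which gives
\begin{align}
\mathbb{E}[X_d^{2k}] = \frac{(2k-1)!!}{d(d+2)\cdots(d+2k-2)} \leq \frac{(2k-1)!!}{d^k}.
\end{align}
Odd moments vanish by symmetry. Substituting into the Taylor expansion of the exponential and applying the identity $(2k-1)!!/(2k)! = 1/(2^k k!)$ gives
\begin{align}
\mathbb{E}[e^{\lambda X_d}] = \sum_{k=0}^{\infty} \frac{\lambda^{2k}}{(2k)!}\mathbb{E}[X_d^{2k}] \leq \sum_{k=0}^{\infty} \frac{1}{k!}\left(\frac{\lambda^2}{2d}\right)^{k} = e^{\lambda^2/(2d)}.
\end{align}

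Markov's inequality applied to $e^{\lambda X_d}$ then yields $P(X_d \geq \epsilon) \leq e^{-\lambda\epsilon}\mathbb{E}[e^{\lambda X_d}] \leq \exp(-\lambda\epsilon + \lambda^2/(2d))$ for every $\lambda > 0$; optimizing with the choice $\lambda = d\epsilon$ produces the claimed bound $e^{-d\epsilon^2/2}$.

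I expect the main technical point to be pinning down the sharp exponent $d\epsilon^2/2$. A naive alternative that reduces the ratio directly to the 1D slice integral $\int_\epsilon^1(1-t^2)^{(d-3)/2}\,dt \,/\, \int_{-1}^1(1-t^2)^{(d-3)/2}\,dt$ and applies $1-t^2 \leq e^{-t^2}$ loses constants in $d$ and only yields an exponent of order $(d-3)\epsilon^2/2$. The MGF route avoids this loss because the denominators $d(d+2)\cdots(d+2k-2)\geq d^k$ in the even-moment formula are exactly what is needed for the series to sum cleanly to $e^{\lambda^2/(2d)}$, giving the correct leading constant $d/2$ after optimization.
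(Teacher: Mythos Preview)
Your proof is correct, but it takes a genuinely different route from the paper. The paper's argument is a two-line geometric trick: it observes that the area ratio $\mathrm{Area}(\mathrm{Cap}(\epsilon))/\mathrm{Area}(\mathbb{S}^{d-1})$ equals the volume ratio $\mathrm{Volume}(\mathrm{Cone}(\epsilon))/\mathrm{Volume}(\mathbb{B}^d)$, where $\mathrm{Cone}(\epsilon)$ is the solid cone from the origin subtended by the cap, and then notes that this cone is contained in a ball of radius $\sqrt{1-\epsilon^2}$, giving the bound $(1-\epsilon^2)^{d/2}\le e^{-d\epsilon^2/2}$ immediately. Your Chernoff/MGF argument is also clean and yields the same sharp exponent, but it requires computing Beta moments and summing a series, whereas the paper's approach needs only volume scaling. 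Interestingly, the loss you worried about in the ``naive slice integral'' route---getting $(d-3)\epsilon^2/2$ instead of $d\epsilon^2/2$---is exactly what the cone-in-ball trick sidesteps by working with $d$-dimensional volumes rather than the $(d-1)$-dimensional marginal density. What your approach buys is generality: the sub-Gaussian MGF bound $\mathbb{E}[e^{\lambda X_d}]\le e^{\lambda^2/(2d)}$ is reusable for two-sided bounds, sums of coordinates, and other concentration statements, while the paper's geometric containment is tailored to this single inequality.
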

\begin{figure}
  \centerline{
  \includegraphics[width=0.5\textwidth]{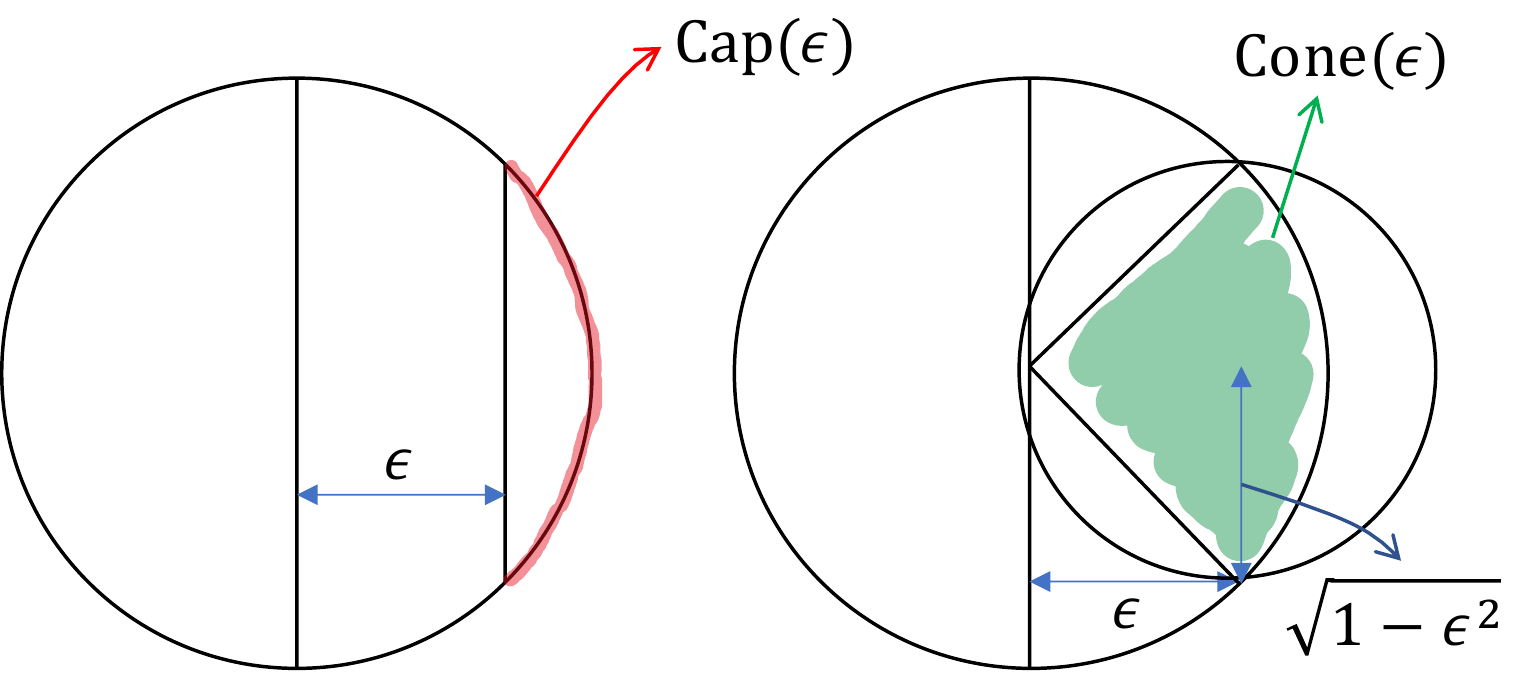}}
    \caption{Estimating the area of a spherical cap.
    }
\label{fig:capandcone}
\end{figure}

The estimation details are presented in \append{prooflem22}.
Subsequently, it is readily to have (see details in \append{prooflem23}):
\begin{lemma}\label{lem:provable-onep}
For any randomly chosen point $\mathbf{x}\in \mathbb{B}(\mathbf{0},R)$,
the probability of $\mathbf{x}\notin S_{\mathbf{v}}$ is $P(\mathbf{x}\notin S_{\mathbf{v}})\leq 2 e^{-\frac{dw^2}{2R^2}}$.
\end{lemma}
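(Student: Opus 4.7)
The plan is to reduce the problem to a direct application of \lem{mconcentration} via polar coordinates on $\mathbb{B}(\mathbf{0},R)$. First I would unwrap the definition: $\mathbf{x}\notin S_{\mathbf{v}}$ is exactly $|\mathbf{x}\cdot \mathbf{v}| > w$. Assuming the randomly chosen point is uniform on $\mathbb{B}(\mathbf{0},R)$, I would write $\mathbf{x}=r\hat{\mathbf{x}}$ with $r\in[0,R]$ and $\hat{\mathbf{x}}\in \mathbb{S}^{d-1}$, so that the uniform distribution on the ball factorizes as a radial density $dr^{d-1}/R^d\,\mathrm{d}r$ and the normalized surface measure on $\mathbb{S}^{d-1}$, and then condition on $r$.

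For a fixed radius $r\in(0,R]$, the event $|\mathbf{x}\cdot \mathbf{v}|>w$ becomes $|\hat{\mathbf{x}}\cdot \mathbf{v}|>w/r$, which is a union of two antipodal spherical caps whose heights are at least $w/r\geq w/R$. Applying \lem{mconcentration} with $\epsilon=w/R$ and a union bound over the two caps, the conditional probability is at most $2e^{-d(w/r)^2/2}\leq 2e^{-dw^2/(2R^2)}$, uniformly in $r$. Averaging this uniform bound over the radial density (and observing that the case $r=0$ contributes measure zero) then yields $P(\mathbf{x}\notin S_{\mathbf{v}})\leq 2e^{-dw^2/(2R^2)}$.

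Essentially no difficulty is expected; the only bookkeeping care is to justify the two-cap union bound (the two sides $\hat{\mathbf{x}}\cdot \mathbf{v}>w/r$ and $\hat{\mathbf{x}}\cdot \mathbf{v}<-w/r$ give equal mass by the $\hat{\mathbf{x}}\mapsto -\hat{\mathbf{x}}$ symmetry of the uniform measure on the sphere) and to note that \lem{mconcentration}, stated for caps of height exactly $\epsilon$, monotonically dominates any cap of height $\geq \epsilon$, which is what lets the bound pass uniformly through the integration over $r\in[0,R]$.
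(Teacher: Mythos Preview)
Your proposal is correct and follows essentially the same approach as the paper: both reduce from the ball to the sphere via the polar decomposition and then invoke \lem{mconcentration} with $\epsilon=w/R$ together with the two-cap union bound. The paper phrases the reduction as a rescaling $\mathbf{y}=\mathbf{x}/R$ followed by the domination $P(|\mathbf{y}\cdot\mathbf{v}|>w/R)\le P(|\mathbf{z}\cdot\mathbf{v}|>w/R)$ for $\mathbf{z}\in\mathbb{S}^{d-1}$, which is exactly your conditioning-on-$r$ argument in compressed form.
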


Recall that $w\in [a/2,a)$
and $R$ are independent of $d$, the measure of the region in $\mathbb{B}(\mathbf{0},R)$ and outside
$S_{\mathbf{v}}$ is exponentially small with respect to the dimension $d$. By \defn{localquery}, classical algorithms depend on
an adaptive sequence of points.
we now need to demonstrate that it is difficult for the points to hit regions beyond $S_{\mathbf{v}}$.

\begin{lemma}\label{lem:provable-manyp}
For any classical algorithm (see \defn{localquery}), after running $T$ times, we get a sequence of points and corresponding queries $\{\mathbf{x}_i, q(\mathbf{x}_i)\}_{i=1}^T$. Restricted in $\mathbb{B}(\mathbf{0},R)$, as long as any $q(\mathbf{x}_i)~(\mathbf{x}\in S_{\mathbf{v}})$ is independent of $\mathbf{v}$, the probability $P(\exists t\leq T:\mathbf{x}_t\notin S_{\mathbf{v}})\leq 2T e^{-\frac{dw^2}{2R^2}}$.
\end{lemma}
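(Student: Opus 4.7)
The plan is to reduce the lemma to a union bound against \lem{provable-onep} via a coupling argument. First, I would introduce a ``virtual'' query sequence $\mathbf{x}_1^*,\mathbf{x}_2^*,\ldots,\mathbf{x}_T^*$, defined as the trajectory the algorithm would produce if the oracle, at every step, returned the canonical reply $q_0$ that it gives at points inside $S_{\mathbf{v}}$. Since by hypothesis $q(\mathbf{x}_i)$ is independent of $\mathbf{v}$ whenever $\mathbf{x}_i\in S_{\mathbf{v}}$, this virtual sequence can be defined entirely without reference to $\mathbf{v}$.

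Next I would show by induction on $t$ that the actual sequence agrees with the virtual one as long as it stays in $S_{\mathbf{v}}$. The base case $\mathbf{x}_1=\mathbf{x}_1^*$ is immediate because $\mathbf{x}_1$ depends only on the algorithm's initial state. For the inductive step, if $\mathbf{x}_1,\ldots,\mathbf{x}_{t-1}\in S_{\mathbf{v}}$, then each response $q(\mathbf{x}_i)$ equals $q_0$, so by \defn{localquery} the transition rule that determines $\mathbf{x}_t$ from $\{\mathbf{x}_i,q(\mathbf{x}_i)\}_{i=1}^{t-1}$ produces the same output as it would from $\{\mathbf{x}_i^*,q_0\}_{i=1}^{t-1}$, namely $\mathbf{x}_t^*$. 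Consequently, $\{\exists t\leq T:\mathbf{x}_t\notin S_{\mathbf{v}}\}\subseteq\{\exists t\leq T:\mathbf{x}_t^*\notin S_{\mathbf{v}}\}$.

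Having decoupled the query points from $\mathbf{v}$, I would close the argument with a union bound. For each $t$, the point $\mathbf{x}_t^*\in\mathbb{B}(\mathbf{0},R)$ is fixed independently of the uniformly random $\mathbf{v}$, so \lem{provable-onep} yields $P(\mathbf{x}_t^*\notin S_{\mathbf{v}})\leq 2e^{-dw^2/(2R^2)}$. Summing over $t=1,\ldots,T$ gives the claimed bound $2Te^{-dw^2/(2R^2)}$.

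The only subtle step is handling an algorithm that uses internal randomness: the transition rule is then a randomized map, and the virtual sequence becomes a random sequence. The clean way to handle this is to condition on the realization of the internal coins, which are by construction independent of $\mathbf{v}$. Under this conditioning the inductive argument above applies unchanged and \lem{provable-onep} controls each $P(\mathbf{x}_t^*\notin S_{\mathbf{v}})$ pointwise; averaging back over the coins preserves the union bound. I do not anticipate any genuine obstacle beyond writing this bookkeeping carefully.
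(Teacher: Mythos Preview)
Your argument is correct and reaches the bound, but the route differs from the paper's. The paper works directly in the probability space of $\mathbf{v}$: it sets $C_i=\{\mathbf{v}\in\mathbb{S}^{d-1}:|\mathbf{x}_i\cdot\mathbf{v}|>w\}$, observes that conditioned on $\{\forall\tau<t:\mathbf{x}_\tau\in S_{\mathbf{v}}\}$ the posterior of $\mathbf{v}$ is uniform on $\mathbb{S}^{d-1}\setminus\bigcup_{i<t}C_i$, and then telescopes the conditional probabilities to get the exact identity $P(\forall t\le T:\mathbf{x}_t\in S_{\mathbf{v}})=\mathrm{Area}(\mathbb{S}^{d-1}\setminus\bigcup_{i\le T}C_i)/\mathrm{Area}(\mathbb{S}^{d-1})$, after which the bound follows from a union of cap areas. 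Your coupling to a $\mathbf{v}$-independent virtual transcript is more modular: it sidesteps the posterior-uniformity computation, reduces cleanly to \lem{provable-onep} plus a union bound, and handles internal randomness explicitly. The paper's approach gives the exact probability before bounding it; yours gives only the bound but with less bookkeeping. One wording fix: the oracle reply at $\mathbf{x}\in S_{\mathbf{v}}$ is not a single constant $q_0$ but a $\mathbf{v}$-independent function of $\mathbf{x}$ (namely $\tfrac12\omega^2\|\mathbf{x}\|^2$ for $\|\mathbf{x}\|\le a$ and $H_2$ otherwise); define the virtual sequence using that function rather than a fixed value, and the induction goes through unchanged.
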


We prove \lem{provable-manyp} in \append{prooflem24}.
Now, we can prove that if the number of points and queries is small,
with high probability, any classical algorithm cannot escape from $S_{\mathbf{v}}$.
Rigorously, we have
\begin{proposition}[Classical lower bound]\label{prop:provable-exp}
Any classical algorithm (\defn{localquery}) will fail, with high probability, to solve \prb{provable} given only $o(e^{\frac{dw^2}{4R^2}})$ local queries with or without being restricted in $\mathbb{B}(\mathbf{0},R)$.
\end{proposition}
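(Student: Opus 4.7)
The plan is to combine the adaptive-sequence concentration bound \lem{provable-manyp} with the observation that $W_+$ lies entirely outside the slab $S_\mathbf{v}$, so that failing to escape $S_\mathbf{v}$ implies failing to hit $W_+$. Concretely, any $\mathbf{x}\in W_+$ satisfies $\mathbf{x}\cdot\mathbf{v} \geq 2b - a \geq a > w$, so $\mathbf{x}\notin S_\mathbf{v}$; hence any algorithm whose queries remain in $S_\mathbf{v}$ cannot produce an output in $W_+$, and I only need to argue that with high probability every query stays in $S_\mathbf{v}$.

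First I would verify that the hypothesis of \lem{provable-manyp} is met: local queries made inside $S_\mathbf{v}$ must be independent of the random direction $\mathbf{v}$. Because $B_\mathbf{v}$ requires $\mathbf{x}\cdot\mathbf{v}>w$, it is disjoint from $S_\mathbf{v}$, so inside $S_\mathbf{v}$ the landscape reduces to the two $\mathbf{v}$-free branches: $f(\mathbf{x}) = \tfrac{1}{2}\omega^2\|\mathbf{x}\|^2$ on $W_-=\mathbb{B}(\mathbf{0},a)\subset S_\mathbf{v}$, and $f(\mathbf{x}) = H_2$ on $S_\mathbf{v}\setminus W_-$. All derivatives of these two expressions are likewise independent of $\mathbf{v}$. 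A small mollification $F_r$ in a neighborhood of radius $r$ around $\partial W_-$ only affects a thin shell which, for $r$ smaller than the clearance between the relevant regions, stays inside $W_-$ (or its $r$-neighborhood), and remains $\mathbf{v}$-independent. Hence \lem{provable-manyp} applies: restricted to $\mathbb{B}(\mathbf{0},R)$, after $T$ queries the probability that any query exits $S_\mathbf{v}$ is at most $2T e^{-dw^2/(2R^2)}$.

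Next, for the unrestricted case, I would argue that queries landing outside $\mathbb{B}(\mathbf{0},R)$ are informationless: by construction $f\equiv H_2$ on the complement of $\mathbb{B}(\mathbf{0},R)\cup W_+$ (and $W_+\subset\mathbb{B}(\mathbf{0},R)$), so all derivatives vanish there and no such query reveals $\mathbf{v}$. Thus an adversary can answer every out-of-ball query without knowing $\mathbf{v}$, and the algorithm's behavior depends on $\mathbf{v}$ only through its in-ball queries. Coupling with the algorithm's in-ball subsequence and applying \lem{provable-manyp} to that subsequence yields the same bound $2T e^{-dw^2/(2R^2)}$.

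Finally, setting $T = o(e^{dw^2/(4R^2)})$ gives $2T e^{-dw^2/(2R^2)} = o(e^{-dw^2/(4R^2)}) = o(1)$, so with probability $1-o(1)$ no query lies outside $S_\mathbf{v}$, and in particular no query is in $W_+$; since the algorithm's output is a (deterministic or randomized) function of its queries, it also fails with high probability, proving \prop{provable-exp}. The main obstacle I anticipate is the bookkeeping around the mollification, namely ensuring that the smoothed landscape $F_r$ still has the non-informativeness property on $S_\mathbf{v}$ and outside $\mathbb{B}(\mathbf{0},R)$; this should reduce to choosing $r$ smaller than the margins $a-w$, $b-w-a$, and the distance from $W_+$ to $\partial\mathbb{B}(\mathbf{0},R)$, all of which are independent of $d$.
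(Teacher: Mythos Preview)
Your proposal is correct and matches the paper's approach: verify that queries inside $S_{\mathbf{v}}$ (and outside $\mathbb{B}(\mathbf{0},R)$) reveal nothing about $\mathbf{v}$, invoke \lem{provable-manyp}, and use $W_+\cap S_{\mathbf{v}}=\varnothing$. The only cosmetic difference is that the paper formalizes the unrestricted case by an explicit induction on $T$ (tracking the in-ball subsequence step by step) rather than your coupling description, but the content is the same.
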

The proof sketch of \prop{provable-exp} goes as follows (see proof details in \append{proofprop4}).
By \lem{provable-manyp}, it suffices to demonstrate that restricted in the ball $\mathbb{B}$, classical algorithms
cannot escape from $W_-$ and hit $W_+$ efficiently.
The left thing is to show that queries outside $\mathbb{B}(\mathbf{0},R)$ provide no information about $\mathbb{B}(\mathbf{0},R)$.
And thus, without being restricted in $\mathbb{B}(\mathbf{0},R)$, classical algorithms still cannot hit $W_+$ by subexponential queries with high probability.

\subsubsection{Quantum upper bound}\label{sec:qub}
We now focus on the time needed for quantum tunneling
to solve \prb{provable}. The landscape \eq{hardinstance} satisfies \assum{3}
\begin{align}
    0 = \min f < \lim_{\|\mathbf{x}\|\to \infty} f = H_2,\quad  f^{-1}(0) = \{ \mathbf{0}\} \cup \{2b\mathbf{v}\},
\end{align}
where $U_-:=\{ \mathbf{0}\}$ and $U_+ := \{2b\mathbf{v}\}$ are called as wells by definition.
The neighborhoods of the two wells are quadratic, enabling the wells and corresponding local ground states to satisfy \eq{sufficient-hypo1} and \eq{sufficient-hypo2}.
Moreover, due to the symmetry of the function \eq{hardinstance}, the local
ground states are also symmetric. Therefore, \assum{TunEnCon} can be satisfied.
To use \assum{3}, we only need to verify the conditions in \assum{4}, leading to the following lemma.

\begin{lemma}\label{lem:provableS0}
There exists a unique Agmon geodesic, denoted $\boldsymbol{\gamma}_{-+}: \mathbb{R} \to \mathbb{R}^d$, which links $U_-$ and $U_+$:
\begin{equation}
    \boldsymbol{\gamma}_{-+}(s) = s \mathbf{v},\quad s\in [0,2b].
    \label{eq:provable-geodesic}
\end{equation}
And the Agmon distance $S_0 := d(U_-, U_+)$ is
\begin{equation}
    S_0 = \int_{-1/2}^{1/2} \sqrt{f(\boldsymbol{\gamma}_{-+}(s))} \d s =
    \frac{1}{\sqrt{2}}\omega a^2 + 2(b-a)\sqrt{H_1}.
    \label{eq:provable-S0}
\end{equation}
\end{lemma}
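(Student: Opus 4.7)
The plan is to first verify the claimed value by a direct line-integral computation along $\boldsymbol{\gamma}_{-+}(s)=s\mathbf{v}$, and then show that no other path from $U_-$ to $U_+$ can do better, from which uniqueness will follow by tracking equality cases. For the upper bound I would partition $[0,2b]$ according to which region the point $s\mathbf{v}$ lies in. On $[0,a]$ the axis sits in $W_-$ so $\sqrt{f(s\mathbf{v})}=\omega s/\sqrt{2}$; on $[a,2b-a]$ one checks $s\mathbf{v}\in B_{\mathbf{v}}$ because its transverse component vanishes and $s\in(w,2b-w)$, hence $\sqrt{f}=\sqrt{H_1}$; and on $[2b-a,2b]$ the point is in $W_+$ with $\sqrt{f}=\omega(2b-s)/\sqrt{2}$. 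The three integrals evaluate to $\omega a^{2}/(2\sqrt{2})$, $2(b-a)\sqrt{H_1}$, and $\omega a^{2}/(2\sqrt{2})$, summing to the claimed $S_0$.

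For the matching lower bound I would take an arbitrary piecewise-$C^{1}$ path $\gamma\colon[0,1]\to\mathbb{R}^{d}$ with $\gamma(0)=\mathbf{0}$, $\gamma(1)=2b\mathbf{v}$ and define the crossing times $s_{2}=\inf\{t:\gamma(t)\in\overline{W_+}\}$ and $s_{1}=\sup\{t\le s_{2}:\gamma(t)\in\overline{W_-}\}$. Since $\overline{W_-}\cap\overline{W_+}=\emptyset$ for $b>a$ (the case $b=a$ is a trivial limit), continuity gives $0\le s_{1}<s_{2}\le 1$, $\gamma(s_{1})\in\partial W_-$, $\gamma(s_{2})\in\partial W_+$, and $\gamma((s_{1},s_{2}))\subset\mathbb{R}^{d}\setminus(\overline{W_-}\cup\overline{W_+})$. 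Splitting $L(\gamma)=L_{1}+L_{2}+L_{3}$ over the three intervals, I would bound $L_{1}$ and $L_{3}$ using the Lipschitz auxiliary functions
\begin{equation*}
\tilde{\psi}_{\pm}(\mathbf{x}):=\min\!\Bigl(\tfrac{\omega}{2\sqrt{2}}\|\mathbf{x}-\mathbf{c}_{\pm}\|^{2},\ \tfrac{\omega a^{2}}{2\sqrt{2}}\Bigr),\qquad \mathbf{c}_-=\mathbf{0},\ \mathbf{c}_+=2b\mathbf{v},
\end{equation*}
which satisfy $\|\nabla\tilde{\psi}_\pm\|\le\sqrt{f}$ a.e.\ (equality inside $W_\pm$, vanishing outside). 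The integrated Cauchy--Schwarz inequality then yields $L_{1}\ge\tilde{\psi}_-(\gamma(s_{1}))-\tilde{\psi}_-(\mathbf{0})=\omega a^{2}/(2\sqrt{2})$ and similarly $L_{3}\ge\omega a^{2}/(2\sqrt{2})$, while on $(s_{1},s_{2})$ we have $f\ge H_{1}$ so $L_{2}\ge\sqrt{H_{1}}\,\|\gamma(s_{2})-\gamma(s_{1})\|$. The Euclidean triangle inequality applied to the four points $\mathbf{0},\gamma(s_{1}),\gamma(s_{2}),2b\mathbf{v}$ gives $\|\gamma(s_{2})-\gamma(s_{1})\|\ge 2(b-a)$, and adding matches the upper bound.

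For uniqueness I would trace through each equality condition. Equality in the triangle inequality for $L_{2}$ forces the four points to be collinear, pinning $\gamma(s_{1})=a\mathbf{v}$ and $\gamma(s_{2})=(2b-a)\mathbf{v}$ and compelling the middle piece to be the straight segment between them (which automatically lies in $B_{\mathbf{v}}$). Equality in the $\tilde{\psi}_-$ bound requires $\sqrt{f(\gamma)}\|\dot\gamma\|=\|\nabla\tilde{\psi}_-(\gamma)\|\|\dot\gamma\|$ wherever $\dot\gamma\ne 0$, forcing the path to stay inside $\overline{W_-}$, and also requires $\dot\gamma$ parallel to $\nabla\tilde{\psi}_-\propto\mathbf{x}$ with $\tilde{\psi}_-\circ\gamma$ monotone, i.e.\ $\gamma|_{[0,s_{1}]}$ is the radial segment from $\mathbf{0}$ to $a\mathbf{v}$; the analogous statement holds for $L_{3}$. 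Concatenation then yields exactly $\boldsymbol{\gamma}_{-+}$.

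The main obstacle I expect is the non-smoothness of $f$ across $\partial W_\pm$ and $\partial B_{\mathbf{v}}$: the classical Hamilton--Jacobi / Eikonal route to Agmon distances presumes a smooth potential, whereas the landscape in \eq{hardinstance} jumps at these interfaces. Using the Lipschitz auxiliary potentials $\tilde{\psi}_\pm$ sidesteps this (they are automatically $1$-Lipschitz in the Agmon metric and constant outside $W_\pm$, so multiple excursions in and out of $\overline{W_-}$ contribute no net change), and I would close the argument by invoking the mollification remark after \eq{hardinstance} to transfer the conclusion to the smoothed $F_{r}$ used in the semiclassical theory, noting that $F_{r}\to f$ uniformly on the relevant compact region keeps both the explicit $\boldsymbol{\gamma}_{-+}$ integral and the auxiliary-function lower bound stable in the limit $r\to 0$.
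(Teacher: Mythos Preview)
Your proof is correct and reaches the same conclusion by a genuinely different route from the paper. The paper's lower bound parametrizes an arbitrary path by its $\mathbf{v}$-coordinate $x^{1}$ and uses the pointwise inequality $f(\mathbf{x})\ge f\bigl((\mathbf{x}\cdot\mathbf{v})\,\mathbf{v}\bigr)$ together with $\sum_{\mu}(\partial x^{\mu}/\partial x^{1})^{2}\ge 1$ to obtain $\int\sqrt{f}\,|d\mathbf{x}|\ge\int_{0}^{2b}\sqrt{f(x^{1}\mathbf{v})}\,dx^{1}$ in one stroke, then handles non-monotone $x^{1}\!=\!l^{1}(\tau)$ by restricting to a sub-domain on which $l^{1}$ is bijective. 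Uniqueness is read off from strictness of the transverse inequality. Your argument instead splits at the last exit from $\overline{W_-}$ and first entry into $\overline{W_+}$, bounds the two well segments via explicit Lipschitz sub-solutions $\tilde{\psi}_{\pm}$ of the eikonal inequality $\|\nabla\tilde{\psi}_{\pm}\|\le\sqrt{f}$, and controls the bridge with the crude bound $f\ge H_{1}$ plus the Euclidean triangle inequality on the four anchor points. The paper's approach is shorter and leans on the special feature that $f$ is minimized on the axis at every transverse slice; yours is closer to standard Agmon--Lithner technique and is more robust in that it only requires $f\ge H_{1}$ outside the wells rather than slice-wise monotonicity. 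You also explicitly address the non-smoothness of the piecewise landscape via the mollification remark, which the paper's proof leaves implicit.
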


The calculation details of \lem{provableS0} are presented in \append{prooflem25}.
We are now ready to calculate the interaction matrix explicitly (see details in \append{prooflem26}):
\begin{lemma}\label{lem:provablenu}
Under the two orthonormalized local ground states,
The interaction matrix is of the form
\begin{equation}
    P = \left(
    \begin{array}{cc}
        \mu & \nu\\
         \nu & \mu
    \end{array}
    \label{eq:provable-matrix}
    \right),
\end{equation}
and the next-to-leading order formula of $w$ is given by
\begin{align}
    \nu = -\sqrt{\frac{2h}{\pi}}
    \sqrt{\frac{H_1(\sqrt{2}\omega)^d}{(4\sqrt{H_1}/b)^{d-1}}}\exp\left(-\frac{S_0}{h} + \frac{\omega d(b-a)}{\sqrt{2H_1}} - 2d \ln \frac{b}{a}\right).
    \label{eq:provable-w}
\end{align}
\end{lemma}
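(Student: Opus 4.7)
The matrix form \eqref{eq:provable-matrix} follows from the reflection symmetry of $f$: the affine map $R:\mathbf{x}\mapsto 2b\mathbf{v}-\mathbf{x}$ is an isometry that preserves $f$ and swaps $U_-$ with $U_+$, so it carries the orthonormalized local ground state $\ket{e_-}$ to (a phase times) $\ket{e_+}$. Fixing both states to be real-positive at their respective minima makes $R$ act by exchange, so $\bra{e_-}H_{|\F}\ket{e_-}=\bra{e_+}H_{|\F}\ket{e_+}=:\mu$ and the off-diagonals coincide as a single real number $\nu$.

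For the explicit value of $\nu$, I would invoke the interaction-matrix formula of Proposition \ref{prop:speW} in Appendix \ref{append:interactionmatrix}, which, modulo $O(h^\infty)e^{-S_0/h}$ errors, represents each off-diagonal entry as a current-type surface integral
\[
\nu \;\approx\; h^2\int_{\Sigma}\bigl(\Phi_-\,\partial_n\Phi_+ - \Phi_+\,\partial_n\Phi_-\bigr)\,dS
\]
over any smooth separating hypersurface $\Sigma$. I would choose $\Sigma=\{\mathbf{x}:\mathbf{x}\cdot\mathbf{v}=b\}$, the perpendicular bisector of the Agmon geodesic $\boldsymbol{\gamma}_{-+}$ identified in Lemma \ref{lem:provableS0}. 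Into this I substitute the WKB ansatz of Appendix \ref{append:WKB}, $\Phi_{\pm}(\mathbf{x})\sim h^{-d/4}\,a_{0,\pm}(\mathbf{x})\,e^{-d(\mathbf{x},U_\pm)/h}$, where $a_{0,\pm}$ satisfies the first-order transport equation along the Agmon gradient flow.

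The transport equation for $a_{0,-}$ along $\boldsymbol{\gamma}_{-+}(s)=s\mathbf{v}$ must be solved in two regimes. For $s\in[0,a]$ the potential is $\tfrac12\omega^2s^2$, the solution matches the $d$-dimensional harmonic-oscillator ground-state amplitude, and contributes the normalization factor $(\sqrt{2}\omega/\pi)^{d/4}$ at $s=a$. For $s\in[a,b]$ the potential is constant $H_1$, the Agmon distance is radial, the transport equation becomes an elementary ODE governed by the divergence of the radial flow, and matching at $s=a$---where $\nabla d(\cdot,U_-)$ jumps from $\omega\mathbf{v} s$ to $\sqrt{H_1}\mathbf{v}$---produces the corrections $\exp\bigl(\omega d(b-a)/(2\sqrt{2H_1})-d\ln(b/a)\bigr)$ (the logarithmic piece from volume expansion between spheres of radii $a$ and $b$, the linear-in-$d$ piece from the Hessian jump). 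The same story holds for $\Phi_+$ on $[b,2b]$ by reflection symmetry, doubling each correction to give the exponent in \eqref{eq:provable-w}.

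Finally, Laplace's method is applied to the surface integral. The phase $\bigl(d(\mathbf{x},U_-)+d(\mathbf{x},U_+)\bigr)/h$ attains its minimum value $S_0/h$ precisely at $b\mathbf{v}$ with a nondegenerate transverse Hessian whose $d-1$ equal eigenvalues are $2\sqrt{H_1}/b$ (the sum of two sphere curvatures $\sqrt{H_1}/b$ from the two Agmon distance functions), producing the Gaussian factor $(\pi h b/(2\sqrt{H_1}))^{(d-1)/2}$. The normal derivatives hit the exponential, each giving $\sqrt{H_1}/h$, so the $h^2$ in front collapses to a single power of $h$; combined with the two $h^{-d/4}$ amplitude prefactors and the transverse $h^{(d-1)/2}$, the overall $h$-power is $\sqrt{h}$. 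Assembling these pieces recovers \eqref{eq:provable-w}; the minus sign is the standard semiclassical sign that forces the symmetric combination to be the ground state. The main obstacle will be the bookkeeping at the well boundaries $s=a$ and $s=2b-a$, where $f$ is only piecewise smooth: one must first mollify as in Appendix \ref{append:mollified} to apply the smooth WKB machinery, then show the correction factors in the limit $r\to 0$ are precisely the matching jumps written above. All other ingredients---the symmetry argument, the interaction-matrix representation, and the Gaussian Laplace asymptotics---are routine applications of the tools developed in the appendices.
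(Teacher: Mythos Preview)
Your approach matches the paper's: invoke the reflection symmetry for the matrix shape, apply Proposition~\ref{prop:speW} (the paper cites it directly; you re-derive its Laplace step from the surface-integral formula of Proposition~\ref{prop:spePF}), compute the WKB amplitude $a_0^{(-)}(b\mathbf{v})$ by solving the transport equation in the quadratic and constant-potential regimes along the geodesic, and assemble. Your transport-equation exponent $\exp\bigl(\tfrac{\omega d(b-a)}{2\sqrt{2H_1}}-d\ln\tfrac{b}{a}\bigr)$ agrees with the paper's.

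The one point that does not line up is the transverse Hessian of $d_{-+}$ at $b\mathbf{v}$: you give its $d-1$ equal eigenvalues as $2\sqrt{H_1}/b$ (``sum of two sphere curvatures $\sqrt{H_1}/b$''), whereas the paper obtains each eigenvalue of $\nabla^2 d(\cdot,U_-)$ as $2\sqrt{H_1}/b$ and hence $\det'\nabla^2 d_{-+}(b\mathbf{v})=(4\sqrt{H_1}/b)^{d-1}$; it is this value that produces the denominator in~\eqref{eq:provable-w} as stated. With your value the final expression would differ by a factor of $2^{(d-1)/2}$, so that curvature computation needs to be reconciled against the target formula before the assembly step goes through.
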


Using the explicit tunneling amplitude, we can estimate the time needed for quantum tunneling.
\begin{proposition}[Quantum upper bound]\label{prop:provable-poly}
For any dimension $d$,
we can always choose appropriate $h$, $\omega$, $a$, $b$, $H_1$, $H_2$, and $w$ satisfying previous restrictions, such that, given the local ground state associated to $W_-$ under the choosing $h$ as initial state, QTW can solve \prb{provable} with high probability $1-(1-C)^n$ using only $nO(\mathrm{poly}(d))$ queries, where $0<C<1$ is a constant independent of $d$.
\end{proposition}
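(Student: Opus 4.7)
The plan is to instantiate all free parameters in the landscape so that the two wells $U_\pm$ interact at a rate $|\nu|=1/\mathrm{poly}(d)$, then apply the QTW hitting-time bound \lem{qtwhitting} and the Hamiltonian simulation cost \lem{quansimu} to bound the query complexity of a single trial by $\mathrm{poly}(d)$; boosting via $n$ independent repetitions then produces the claimed $1-(1-C)^n$ success probability.

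First I would observe that by \lem{provablenu} the $2\times 2$ interaction matrix in \eq{provable-matrix} has spectrum $\mu\pm\nu$, so the spectral gap is $\Delta E=2|\nu|$. By the reflection symmetry $\mathbf{x}\mapsto 2b\mathbf{v}-\mathbf{x}$ of $f$, the local ground state associated to $W_-$ decomposes as an equal superposition of the two eigenstates of $H_{|\F}$, so $p(\infty,+)=\frac{1}{2}+O(h^{\infty})$ and hence $\int_{W_+}\mu_{\rm QTW}(\mathbf{x})\,\d\mathbf{x}\geq\frac{1}{2}-o(1)$ by Agmon concentration of the local ground states inside their wells. Fixing $\epsilon$ to any constant below $1/2$ in \lem{qtwhitting} then yields $T_{\rm hit}(W_+)=O(1/|\nu|)$ together with a per-trial measurement-in-$W_+$ probability bounded below by some constant $C>0$.

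The core of the argument is to make $|\nu|=1/\mathrm{poly}(d)$. Fix $\omega$, $a$, and $H_2$ as constants independent of $d$, with $H_2$ large enough that $H_1\ll H_2$ holds for the $H_1$ chosen below. Set $b=a(1+\beta/d)$ for a fixed small $\beta>0$, so that $b\geq a$, $b-a=a\beta/d$, and $-2d\ln(b/a)=-2\beta+O(1/d)=O(1)$. Choose $H_1=\omega^2 b^2/8$; this makes $\sqrt{2}\omega b/(4\sqrt{H_1})=1$, which forces the otherwise dimension-exponential prefactor $\sqrt{H_1(\sqrt{2}\omega)^d/(4\sqrt{H_1}/b)^{d-1}}$ of \eq{provable-w} to collapse to the constant $\sqrt{H_1\sqrt{2}\omega}\,(1+o(1))$. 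The remaining $d$-dependent exponent term $\omega d(b-a)/\sqrt{2H_1}=\omega a\beta/\sqrt{2H_1}=O(1)$ is likewise bounded, and the Agmon distance simplifies to $S_0=\omega a^2/\sqrt{2}+2(b-a)\sqrt{H_1}=\Theta(1)$. Finally take $h=c/\ln d$ with a small constant $c>0$, so that $S_0/h=\Theta(\log d)$ and the semi-classical regime $h\to 0$ is respected. Substituting into \lem{provablenu} gives $|\nu|=\Theta(\sqrt{h})\cdot d^{-\Theta(1)}=1/\mathrm{poly}(d)$, as required.

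Combining, a single trial takes evolution time $\tau_\epsilon=O(1/|\nu|)=\mathrm{poly}(d)$, so by \lem{quansimu} the number of queries to $U_f$ needed to simulate one trial is $\tilde{O}(\|f\|_{L^\infty}\,\tau_\epsilon)=O(\mathrm{poly}(d))$, since $\|f\|_{L^\infty}=H_2$ is constant. Repeating $n$ independent trials boosts the success probability to $1-(1-C)^n$. The hard part is the parameter balancing in the third paragraph: the exponent in \eq{provable-w} contains three competing $d$-dependent pieces ($-S_0/h$, $\omega d(b-a)/\sqrt{2H_1}$, and $-2d\ln(b/a)$), while the prefactor is itself exponential in $d$ unless the parameters are tuned. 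The specific choices $b=a(1+\beta/d)$ and $H_1=\omega^2 b^2/8$ are what simultaneously neutralize all four of these exponential-in-$d$ factors while keeping $b\geq a$, $H_1\sim H_0$, and $h\to 0$, so that both the geometric construction and the WKB estimates underlying \lem{provablenu} remain valid.
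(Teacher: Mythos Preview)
Your parameter choice has a fundamental flaw: with $h=c/\ln d$ and $\omega,a$ fixed, the local ground state no longer fits inside the well. The ground-state energy in the $d$-dimensional isotropic harmonic well is $hd\omega/\sqrt{2}=\Theta(d/\ln d)\to\infty$, which for large $d$ exceeds the constant barrier $H_2$. Equivalently, the Gaussian ground state has typical radius $\sqrt{dh/\omega}=\Theta(\sqrt{d/\ln d})\gg a$, so it is not confined to $W_-$ at all. Once the energy exceeds $H_2$ the entire semiclassical tunneling picture collapses: there is no low-energy subspace $\F$ spanned by two localized states, the WKB estimate behind \lem{provablenu} is vacuous, and your appeal to Agmon decay to get $\int_{W_+}\mu_{\rm QTW}\geq\tfrac12-o(1)$ is unjustified (those $O(h^\infty)$ bounds are asymptotic in $h$ for a \emph{fixed} landscape, not uniform as $d\to\infty$).

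The paper confronts exactly this issue and is forced to take $h=\Theta(1/d)$, which keeps the ground-state energy bounded and the wave function localized. But then $S_0/h=\Theta(d)$ whenever $S_0=\Theta(1)$, so one cannot simply fix $\omega,a,b,H_1$ as constants and hope for $e^{-S_0/h}=1/\mathrm{poly}(d)$. Instead the paper lets the dimensionless ratios $(H_0/\delta,\,b/a,\,H_0/H_1)$ vary with $d$ so that the full exponent in \eq{provable-w} becomes $\Gamma\cdot d$ with $\Gamma=-\tfrac{1}{d}\ln\mathrm{poly}(d)\to 0$; this tuning simultaneously controls the gap to the excited-state subspace (the $|\nu'|$ issue) and is then combined with a separate argument, not relying on naive $O(h^\infty)$ bounds, that the limiting measure puts constant mass on $W_+$. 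Your neutralization of the prefactor via $H_1=\omega^2b^2/8$ and $b=a(1+\beta/d)$ is a nice algebraic observation, but it does not survive the necessary rescaling $h=\Theta(1/d)$.
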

\begin{remark}
In \prop{provable-poly}, the constant $C$ can be understood as the probability of successful hitting in one trial and $n$ the number of trails. To reach a high probability of success, say $99$\%, the number of trials needed, $M$, enabling $1-(1-C)^M\geq 99$\%, is a constant independent of $d$. Since one trial needs only one initial state, only a constant number of copies (e.g., $M$ copies) of the local ground state are needed.
\end{remark}

The proof of \prop{provable-poly} is postponed to \append{proofprop5} which is explained briefly as follows.
We take all the adjustable parameters as functions of $d$ and
discuss the evolution time as a function of $d$.
First, we have $h= \Theta(1/d)$ for sufficiently large $d$,
which can eliminate the negative effects of measure concentration brought by increasing dimension, and on the other hand we prove that our theory on quantum tunneling walks is still valid.
Thus, the quantum wave distributes near $W_-$ or $W_+$, and the limit distribution $\mu_{\rm QTW}$ permits a probability of finding the particle in $W_+$ larger than some constant independent of $d$.
Then, based on the results of semi-classical analysis, we can tune the function values in $W_-$, $W_+$, and $B_{\mathbf{v}}$ such that the time needed for tunneling is a polynomial of $d$.
As a result, the last three conditions at the end of \sec{intro} can be satisfied, and the first and third conditions suggest that with high probability, QTW can hit $W_+$ with queries polynomial in $d$.
Finally, given the fact that we can use $\tilde{O}(t)$ quantum queries to evolve QTW for time $t$,
with high probability QTW can hit $W_+$ with queries polynomial in $d$.

Combining the results of \prop{provable-poly} and \prop{provable-exp},
we can obtain \thm{provableinformal}, which is restated in a more rigorous way as follows.
\begin{theorem}\label{thm:thm2re}
For any dimension $d$, there exists a landscape with the form \eq{hardinstance} such that with a high probability $1-(1-A)^n$, QTW can solve \prb{provable} with $nO(\mathrm{poly}(d))$ queries given the local ground state associated to $U_-$, but with a high probability $1-e^{-dB}$,
no classical algorithm (\defn{localquery}) can solve the same problem
for the same landscape $f$ with $o(e^{dB})$ queries,
where $0<A<1$ and $B>0$ are two constants independent of $d$.
\end{theorem}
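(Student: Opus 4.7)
The statement is a direct restatement combining the classical lower bound in \prop{provable-exp} and the quantum upper bound in \prop{provable-poly}, so my plan is to verify that the two propositions indeed yield the promised constants $A$ and $B$, and to lay out the parameter tuning that makes the quantum side go through.

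For the classical side, \prop{provable-exp} already gives that any local-query algorithm fails with high probability when given $o(e^{dw^2/(4R^2)})$ queries, so I would set $B = w^2/(4R^2)$ (with $w$ and $R$ the dimension-independent geometric parameters of the construction). The probability of failure bound $P(\exists t\le T:\mathbf{x}_t\notin S_{\mathbf{v}})\le 2T e^{-dw^2/(2R^2)}$ from \lem{provable-manyp} translates directly to a success probability of $1-e^{-dB}$ after absorbing the polynomial prefactor into a slight reduction of $B$. The only thing to check is that the argument of \prop{provable-exp} really works both for algorithms restricted to $\mathbb{B}(\mathbf{0},R)$ and for unrestricted ones, which the proof handles by noting that queries outside $\mathbb{B}(\mathbf{0},R)$ return the constant value $H_2$ and thus reveal no information about $\mathbf{v}$.

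For the quantum side, the plan is to carefully choose the free parameters $h,\omega,a,b,H_1,H_2,w$ as functions of $d$ so that the hypotheses of the semi-classical theory hold and the tunneling amplitude from \lem{provablenu} is at least $1/\mathrm{poly}(d)$. Concretely, I would take $h = \Theta(1/d)$, which is small enough for the WKB expansion underlying \lem{provablenu} to remain valid (this validity is what \prop{provable-poly} asserts and I would invoke rather than reprove), yet not so small that the exponential factor $\exp(-S_0/h)$ kills the tunneling amplitude. Reading \eq{provable-w}, the exponent is $-S_0/h + \omega d(b-a)/\sqrt{2H_1} - 2d\ln(b/a)$; choosing $H_1,\omega,a,b$ so that $S_0/h$ grows only linearly in $d$ and is approximately cancelled by the additive $\Theta(d)$ corrections, one obtains $|\nu| = 1/\mathrm{poly}(d)$. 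Then \lem{qtwhitting} with $\Delta E = 2|\nu|$ yields a hitting time for $W_+$ that is $\mathrm{poly}(d)$, and since Hamiltonian simulation uses $\tilde O(t)$ queries to $U_f$ (see \append{quantumsimulation}), one trial uses $\mathrm{poly}(d)$ quantum queries and succeeds with some constant probability $A\in(0,1)$ independent of $d$ (the constant coming from $\int_{W_+}\mu_{\rm QTW}$ being bounded below). Repeating $n$ trials boosts success to $1-(1-A)^n$, as in the statement.

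The main obstacle is the coupled parameter tuning on the quantum side: one must simultaneously make $h$ small enough that the wave function localizes in $W_-\cup W_+$ (so the interaction matrix is two-dimensional and \assum{quantum2}, \assum{quantum3} hold), verify that the hypotheses guaranteeing \lem{provablenu}'s WKB formula remain valid under $h\sim 1/d$ (a nontrivial uniformity-in-$d$ statement for the semi-classical expansion), and choose $H_1$ large enough that the plateau $B_{\mathbf{v}}$ does not trap the wave function yet small enough that $S_0$ grows only linearly in $d$. The $H_2 \gg H_1$ condition is needed to ensure the particle cannot tunnel out of $\mathbb{B}(\mathbf{0},R)$ through the high-plateau region. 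Once this tuning is carried out and the WKB validity is justified, combining the two propositions yields \thm{thm2re} immediately with $A$ equal to a lower bound on $\int_{W_+}\mu_{\rm QTW}\,\d x$ in one trial and $B = w^2/(4R^2)$.
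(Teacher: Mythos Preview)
Your proposal is correct and matches the paper's approach exactly: the theorem is obtained by directly combining \prop{provable-exp} (giving the classical constant $B=w^2/(4R^2)$) with \prop{provable-poly} (giving the quantum constant $A$ as a lower bound on the per-trial hitting probability in $W_+$). The parameter-tuning details you outline for the quantum side—taking $h=\Theta(1/d)$ and balancing the exponent in \eq{provable-w} to make $|\nu|=1/\mathrm{poly}(d)$ while keeping the semi-classical analysis valid—are precisely what the paper does inside the proof of \prop{provable-poly}, so invoking that proposition suffices.
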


\subsubsection{The significance of proper initial states}\label{sec:no-exp-sep-init}

The hardness of \prb{provable} can be abstracted as that of finding an exponentially small cone on a landscape which is isotropic outside the special cone.\footnote{Specifically, this cone is a region associated to the special direction $\bf{v}$, $\{\bf{x}:\bf{x}\cdot \bf{v}/\|\bf{x}\|<C \}$ for some constant $C$, which contains the parts $B_{\bf{v}}$ and $W_+$.} There can be exponentially many such cones disjoint with each other.
Therefore, it can be proved that by solving \prb{provable} in $\mathbb{R}^d$, we can solve an unstructured search problem with a size exponential in $d$.

To show this, we first introduce unstructured search. Say, we are given $N$ data points, only one of which is assigned the value $1$ and all other points are assigned a value $0$. The goal is to find the point assigned $1$ with an oracle outputting the assigned value of the input point. Intuitively, each data point can be mapped to a unique cone in $\mathbb{R}^d$ and the point assigned $1$ should correspond to the cone containing $B_{\bf{v}}$ and $W_+$.
In this case, solving \prb{provable} can lead to the data point we want to find. Precisely speaking, if there is a quantum algorithm that can solve \prb{provable} with queries polynomial in $d$, it can solve an unstructured search whose size $N$ is exponential in $d$ within queries polynomial in $d$. That is, we can solve an $N$-size unstructured search within $O(\mathrm{poly}(\log N))$ queries with the help of the efficient algorithm for \prb{provable}.

However, it is well known that quantum algorithms have a query complexity lower bound $\Omega(\sqrt{N})$ in solving unstructured search with $N$ data points \cite{BBBV97}. Therefore, we can conclude
\begin{proposition}\label{prop:nospeedup}
No quantum algorithm can solve \prb{provable} within queries polynomial in $d$.
\end{proposition}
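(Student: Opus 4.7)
The plan is to reduce quantum unstructured search of exponential size $N = 2^{\Omega(d)}$ to \prb{provable}, thereby inheriting the BBBV lower bound $\Omega(\sqrt{N})$. First I construct the packing. Choose the parameters of the construction~\eqref{eq:hardinstance} so that the ratio $a/b$ is a fixed constant strictly less than $1$. Then the $\mathbf{v}$-dependent region $W_+(\mathbf{v})\cup B_{\mathbf{v}}$ sits inside a cone of half-angle $\arcsin(a/b)$ about the ray $\mathbb{R}_{\geq 0}\mathbf{v}$. Using the measure concentration on $\mathbb{S}^{d-1}$ (cf.\ \lem{mconcentration}: any spherical cap of fixed angular radius less than $\pi/2$ occupies only an $e^{-\Theta(d)}$ fraction of the sphere), I can pack $N = 2^{\Omega(d)}$ unit directions $\mathbf{v}_1,\ldots,\mathbf{v}_N$ with pairwise angular separation exceeding $2\arcsin(a/b)$. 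The regions $\{W_+(\mathbf{v}_i)\cup B_{\mathbf{v}_i}\}_{i=1}^N$ are then pairwise disjoint outside $W_-$.

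Next I set up the reduction. Given an unstructured search oracle $O_{\mathrm{search}}$ on $\{1,\ldots,N\}$ with hidden marked index $i^\star$, I claim that a single call to the evaluation oracle $U_{f_{\mathbf{v}_{i^\star}}}$ can be simulated using $O(1)$ calls to $O_{\mathrm{search}}$. Indeed, for a query point $\mathbf{x}$, the disjointness from the packing step implies that there is at most one candidate index $i_0(\mathbf{x})$, computable from $\mathbf{x}$ alone, such that $\mathbf{x}\in W_+(\mathbf{v}_{i_0})\cup B_{\mathbf{v}_{i_0}}$; elsewhere $f_{\mathbf{v}_{i^\star}}(\mathbf{x})$ is independent of $i^\star$ (it equals $\tfrac12\omega^2\|\mathbf{x}\|^2$ on $W_-$ and $H_2$ outside every tube). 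A single query of $O_{\mathrm{search}}$ at $i_0(\mathbf{x})$ reveals whether $i_0(\mathbf{x})=i^\star$, and the correct value of $f_{\mathbf{v}_{i^\star}}(\mathbf{x})$ follows. Standard compute/uncompute on an ancilla lifts this branching to a reversible implementation of $U_{f_{\mathbf{v}_{i^\star}}}$ compatible with quantum superposition queries.

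Finally, any solution to \prb{provable} on $f_{\mathbf{v}_{i^\star}}$ is a point $\mathbf{x}^\star\in W_+(\mathbf{v}_{i^\star})$, which uniquely identifies $i^\star$ as the single $i$ with $\mathbf{x}^\star\in W_+(\mathbf{v}_i)$. Hence a putative quantum algorithm that solves \prb{provable} in $q(d)$ queries to $U_f$ yields a quantum algorithm for $N$-element unstructured search using $O(q(d))$ queries to $O_{\mathrm{search}}$. If $q(d)=\mathrm{poly}(d)=\mathrm{poly}(\log N)$, this contradicts the BBBV lower bound $\Omega(\sqrt{N})=\Omega(2^{\Omega(d)/2})$ \cite{BBBV97}, establishing the proposition.

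The main obstacle is the packing step: one must choose the landscape parameters so that, simultaneously, (i) exponentially many tubes associated with distinct directions are disjoint within a ball of bounded radius, and (ii) each individual landscape $f_{\mathbf{v}_i}$ continues to satisfy the hypotheses used earlier (in particular $b\geq a$, $0<H_0\sim H_1\ll H_2$, and the regime of $h,\omega,w$ feeding into \prop{provable-poly}). Since these constraints leave substantial design freedom, this compatibility is achievable; however, verifying that the $\mathrm{poly}(d)$-query regime of QTW survives the same reparametrization is the one technical point requiring explicit check.
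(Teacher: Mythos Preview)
Your proposal is correct and follows essentially the same approach as the paper: pack exponentially many disjoint cones containing the $\mathbf{v}$-dependent region, reduce $N$-element unstructured search to \prb{provable} by simulating $U_f$ with $O(1)$ search-oracle calls per evaluation, and invoke the BBBV bound. One small point: your final worry about whether ``the $\mathrm{poly}(d)$-query regime of QTW survives the same reparametrization'' is not needed for this proposition, which is purely a lower bound on \emph{any} quantum algorithm and does not require QTW to work on the packed instances; the only constraint is that the instances remain of the form~\eqref{eq:hardinstance}, which your packing already respects.
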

We prove \prop{nospeedup} and related claims rigorously in \append{nospeedup}.
It seems that \prop{nospeedup} contradicts with \prop{provable-poly}.
But there is actually no paradox as in \prop{provable-poly} QTW does not solve \prb{provable} faithfully.
The local ground state $\ket{\Phi_-}$ associated to $U_-$ under proper quantum learning rate $h$ is given to QTW as prior knowledge.
To establish polynomial decay tunneling effect, the state $\ket{\Phi_-}$ has non-vanishing probability (maybe an inverse polynomial of $d$) in $W_+$.
The state $\ket{\Phi_-}$ indicates a lot about the special direction $\bf{v}$ for QTW, such that what QTW does cannot be equivalent to unstructured search.
Indeed, by the same spirit of \prop{nospeedup}, the state $\ket{\Phi_-}$ cannot be prepared within polynomial queries, or we can reach $W_+$ efficiently by measuring $\ket{\Phi_-}$ repeatedly.

We admit that \thm{thm2re} requires the initial quantum state.
Note that QTW only uses $M$ copies of the local ground state $\ket{\Phi_-}$ to hit $W_+$ with high probability in polynomial time, where $M$ is a number independent of $d$. If the possibility of learning about $\bf{v}$ from sampling tends to $0$ when $d\to \infty$, which is likely to be true, the expected queries needed by classical algorithms to hit $W_+$ cannot be subexponential in $d$.
In this case, we have an exponential quantum-classical separation in evaluation queries even classical algorithms are given a  constant number of samples from the initial distribution $|\ip{\bf{x}}{\Phi_-}|^2$. Essentially, this is because no classical evolution can make good use of the samples of the initial state.

\section{Numerical Experiments}\label{sec:num}
We conduct numerical experiments to examine our theory.
All results and plots are obtained by simulations on a
classical computer (Dual-Core Intel Core i5 Processor, 16GB memory) via MATLAB 2020b.
Details of all numerical settings can be found in \append{fullnum}.
And code is avaliable at \href{https://github.com/liuyz0/Quantum-tunneling}{https://github.com/liuyz0/Quantum-tunneling}.

QTW is simulated by solving the Schr\"odinger equation by numerical methods,
and SGD is performed with first-order queries and the noise of each step follows the standard Gaussian distribution.

\paragraph{Quantum-classical comparisons.}
To corroborate our \slog,
we numerically study the performance of QTW and SGD on concrete examples (see details in \append{numcomparison}).
The quantum learning rate $h$ and the classical learning rate $s$ are determined under \stand{risk} which equalizes expected risks yielded by QTW and SGD.

The task is to hit a target neighborhood of one minimum beginning at another
designated minimum.
In \fig{histogramsExp123}, results on classical and quantum hitting time are shown.
We examine QTW and SGD on three landscapes, Example 1, 2, and 3 in \fig{histogramsExp123} which correspond to concrete functions given by
\examp{critical}, \examp{flatness}, and \examp{sharp} in \sec{illustration}, respectively.
We conduct 1000 experiments for QTW and SGD on each example.
For QTW, we use an \emph{experiment} to denote a process repeating trials until successfully hitting, where each \emph{trial} initiates QTW once and
measures the position at $t$ randomly chosen from $[0,\tau]$. 
For SGD, an experiment begins at a designated minimum and stops until SGD hits the target region.
The evolution time of an experiment is the sum of evolution time of the trials the experiment contains.

We use $T^{\rm QTW}_{\rm hit}$ and $T^{\rm SGD}_{\rm hit}$
to denote the evolution time of one experiment for QTW and SGD, respectively.
In \fig{histogramsExp123}, the histograms compare $T^{\rm QTW}_{\rm hit}$ with $T^{\rm SGD}_{\rm hit}/10$, and
all presented examples demonstrate that QTW is faster.
The number of quantum queries is approximately
$\tilde{O}(\|f\|_{\infty} T^{\rm QTW}_{\rm hit})$ and the number of classical queries is $\Omega(T^{\rm SGD}_{\rm hit}/s)$.
In addition, in the three examples $\|f\|_{\infty} \leq 0.85$ and $s<0.2$, and
quantum advantage exists in terms of query complexity.

This result matches our theory at large.
For Example 1, we make direct comparison between the exponential terms $e^{S_0/h}$ and $e^{2H_f/s}$, and to remove the coefficients in front of them,
we divide $T^{\rm SGD}_{\rm hit}$ by $10$ such that $T^{\rm SGD}_{\rm hit}/10$ has similar distribution t $T^{\rm QTW}_{\rm hit}$ for Example 1.
In this way, we observe that whether $T^{\rm SGD}_{\rm hit}/10$ is relatively larger than $T^{\rm QTW}_{\rm hit}$ is determined only by $e^{S_0/h}$ and $e^{2H_f/s}$.

For Example 2, $T^{\rm QTW}_{\rm hit}$ is not much smaller than $T^{\rm SGD}_{\rm hit}/10$, which is not completely coherent with our theory.
This result can be explained as that for Example 2, the quantum learning rate $h$ is not small enough such that the initial state prepared does not well stay near a low energy subspace.
Experiments on Example 2 suggest that higher energy may not be able to help quantum tunneling to run faster.

For Example 3, the $h$ chosen is small enough (see details in \append{numcomparison})
such that a significant quantum speedup is achieved as expected. In \fig{histogramsExp123}, $T^{\rm SGD}_{\rm hit}/10$ is even several orders of magnitude larger than $T^{\rm QTW}_{\rm hit}$.

\begin{figure}
\centering
\includegraphics[width=\linewidth]{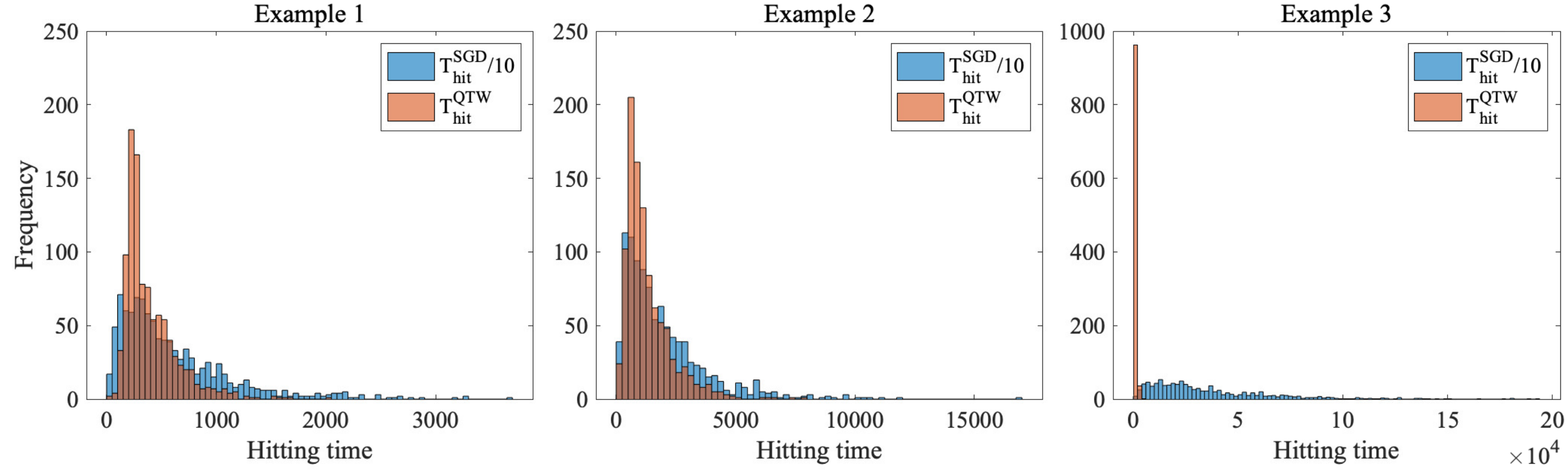}
\caption{Quantum-classical comparison between SGD and QTW on three landscapes. Example 1 is the critical case where the exponential terms in QTW and SGD evolution time are equal for sufficiently small accuracy $\delta$.
Example 2 has flatter minima but similar barriers compared to Example 1, enabling QTW to be faster.
Example 3 possesses the same flatness of minima as Example 1 but is equipped with sharp but thin barriers, enabling larger quantum speedups. We take $\tau=288, 800, 600$ in the three examples, respectively.}
\label{fig:histogramsExp123}
\end{figure}

\paragraph{Dimension dependence.}
Due to the limitations of solving the Schr\"odinger equation on classical
computers, QTW is simulated only in low dimensions (i.e., $d=1$ and $d =2$).
Here we examine \thm{provableinformal} by testing SGD and its the classical lower bound.

The classical lower bound in \thm{provableinformal} ensures that
for any $s$, SGD cannot cannot escape from $S_{\mathbf{v}}$ with subexponential queries with high probability.
Based on the constructed landscape with parameters specified in \append{numdimdep}, we test SGD with different learning rates ($s\in [0.1,1]$)
in various dimensions ($d \in [15,95]$).
For each dimension and each $s$, 1000 experiments are conducted.
The number of steps spent to escaping from $S_{\mathbf{v}}$ in one experiment is denoted as $Q_{\rm esc}$.
Here, we present the relationship between average $Q_{\rm esc}$ and the dimension $d$ in \fig{dimdepenmean} (more details are deferred to \append{numdimdep}).

For each fixed learning rate $s$, we observe that with the increase of $d$, the average $Q_{\rm esc}$ remains constant initially and then increase exponentially with respect to $d$.
Increasing $s$ yields a smaller initial constant but larger exponential rate.
Nevertheless, for all $s$, $Q_{\rm esc}$ eventually increases
exponentially with respect to $d$ (the triangle in \fig{dimdepenmean} shows the slope $1/256$ corresponding to the exponential function $e^{d/256}$ which is a lower bound of the average $Q_{\rm esc}$), supporting our prediction.

\begin{figure}
\centering
\begin{minipage}[t]{0.46\linewidth}
\centering
\includegraphics[width = \linewidth]{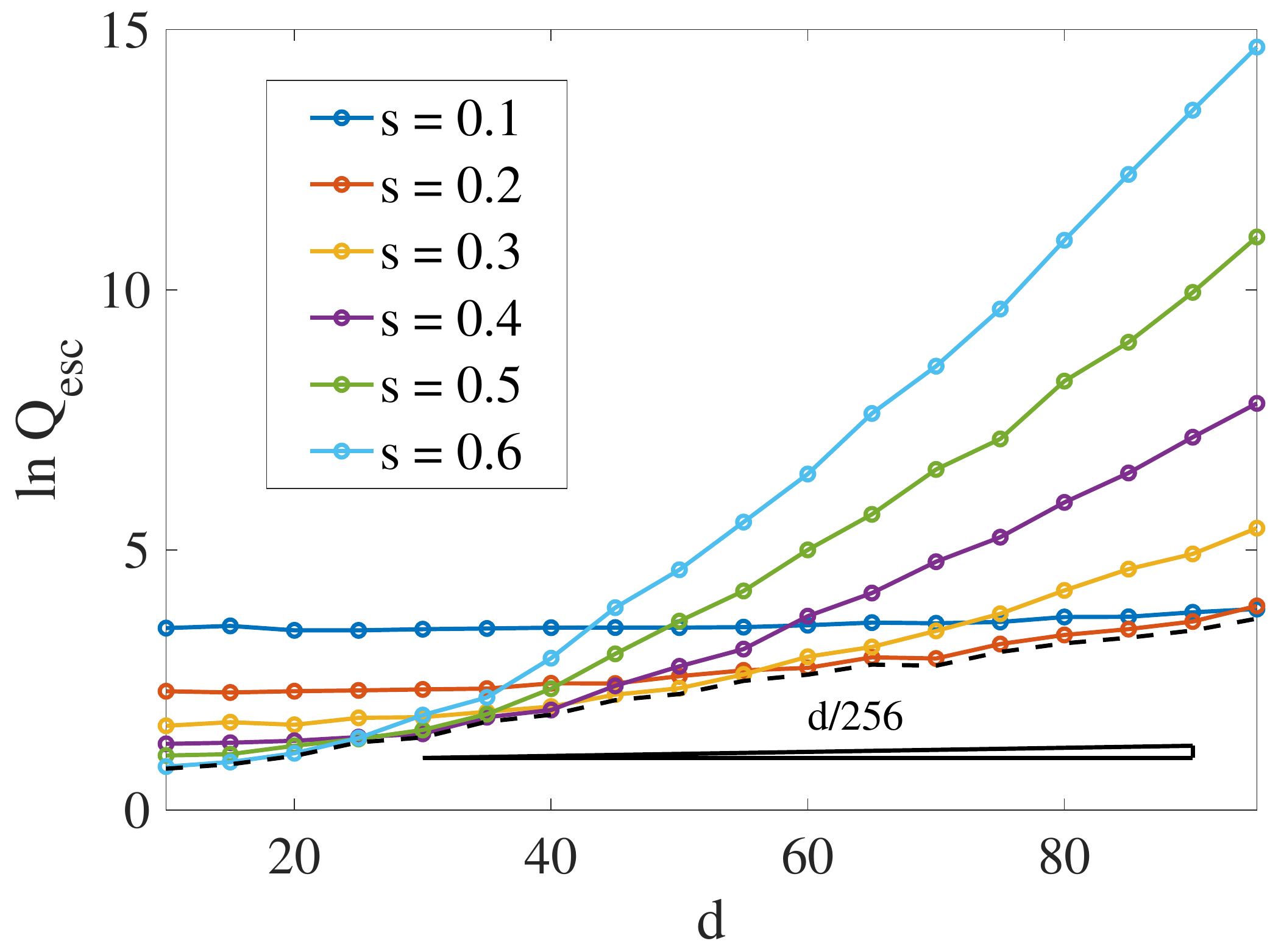}
\caption{Relationship between the average $Q_{\rm esc}$ and the dimension $d$ under different $s$. The dashed line captures the lower bound of the average $Q_{\rm esc}$.}
\label{fig:dimdepenmean}
\end{minipage}
\hspace{6mm}
\begin{minipage}[t]{0.46\linewidth}
\centering
\includegraphics[width =\linewidth]{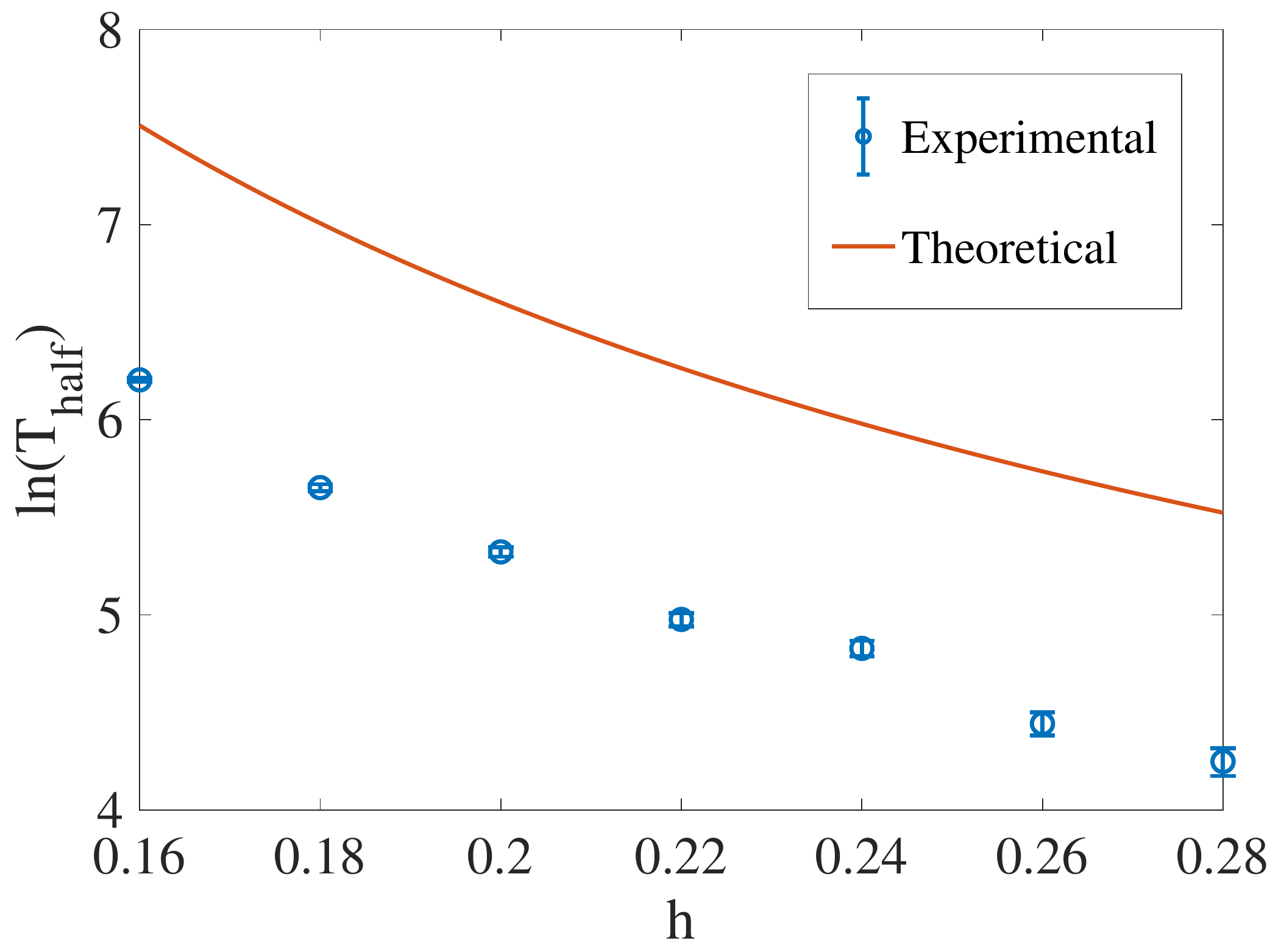}
\caption{$T_{\rm half}$ with respect to $h$ on a fixed landscape: theoretical prediction (red solid line) and time in experiments (blue circles).}
\label{fig:qlrdepend}
\end{minipage}
\end{figure}

\paragraph{Dependence on the quantum learning rate.}
In QTW, the quantum learning rate $h$ is one of the most important variables.
\thm{informalQTW} gives a general relationship between $h$ and the evolution time of QTW.
We further test the relationship on the landscape constructed in \thm{provableinformal} (dimension $d=2$) with specified parameters given in \append{numqlrdep}.
Since the landscape has two symmetric wells, the time for tunneling from one well to the other, $T_{\rm half}$, is explicitly linked to $\Delta E$ (i.e., $T_{\rm half} = \pi/\Delta E$).
On this concrete landscape, $\Delta E$ can be predicted, giving that
\begin{align}
    \ln T_{\rm half} = \frac{S_0}{h} - \frac{1}{2}\ln h + C_f,
\end{align}
where $C_f$ is a constant depending on $f$ and can be explicitly calculated.
Starting from one well, we stop when the probability of finding
the other well exceeds 90\% and record the evolution time as $T_{\rm half}$.
Experiments on $T_{\rm half}$ is shown in \fig{qlrdepend}. The results match our theory
except a constant difference between the predicted and experimental $\ln T_{\rm half}$, indicating the correctness of $\frac{S_0}{h} - \frac{1}{2}\ln h$.
The constant difference emerges because we stop evolution when the probability of tunneling exceeds 90\%, while theoretical $T_{\rm half}$ takes the time when the probability is nearly 100\%.

To conclude, several aspects of the present theory are well supported by numerical experiments.

\section{Discussion}\label{sec:discussion}
In this paper, we explore quantum speedups for nonconvex optimization by quantum tunneling. In particular, we introduce the quantum tunneling walk (QTW) and apply it to nonconvex problems where local minima are approximately global minima. We show that QTW achieves quantum speedup over classical stochastic gradient descents (SGD) when the barriers between different local minima are high but thin and the minima are flat. Moreover, we construct a specific nonconvex landscape where QTW given proper initial states is exponentially faster than classical algorithms taking local queries for hitting the neighborhood of a target global minimum. Finally, we conduct numerical experiments to corroborate our theoretical results.

We expect our results to have further impacts both in physics and optimization. In optimization theory, previous work has studied several physics-motivated optimization algorithms, including Nesterov's momentum method~\cite{su2016differential,wibisono2016variational,shi2021understanding}, stochastic gradient descents~\cite{SSJ20}, symplectic optimization~\cite{betancourt2018symplectic,jordan2018dynamical}, etc. We believe that our work can further inspire the design of optimization algorithms with physics intuitions. From theory to practice, in this work we analyzed the performance of QTW on tensor decomposition, and we expect QTW to also have decent performance on other practical problems with benign landscapes.

In quantum computing, on the one hand, previous work on continuous optimization only studies convex optimization~\cite{vanApeldoorn2020optimization,chakrabarti2020optimization} or local properties such as escaping from saddle points~\cite{zhang2021quantum}, and our work significantly extends the range of problems which quantum computers can efficiently solve to global problems in nonconvex optimization. On the other hand, we point out that QTW has the potential to be implemented on near-term quantum computers. In fact, current quantum computers have implemented both quantum simulation~\cite{arute2020hartree,ebadi2021quantum} and quantum walks~\cite{tang2018experimental,gong2021quantum} to decent scales. We deem QTW as a potential proposal for demonstrating quantum advantages in near term.

Our paper also leaves several technical questions for future investigation:
\begin{itemize}

\item What is the performance of QTW on more general landscapes? For instance, a wide range of deep neural networks~\cite{KHK19} has some (but probably not all) local minima which are approximately global. Future work on weakening the assumptions on landscapes for QTW is preferred.

\item Are there more examples with exponential quantum-classical separation? Our construction leverages a special kind of locally non-informative landscapes, and exponential quantum-classical separation can potentially be observed on other landscapes, such as nonsmooth landscapes and landscapes with negative curvature.

\item QTW simulates the Schr\"odinger equation whose potential is set to be the optimization function, and this QTW can be efficiently simulated on quantum computers. In general, are there better PDEs which are more efficient for optimization and can still be efficiently simulated on quantum computers?

\end{itemize}


\section*{Acknowledgements}
YL thanks Frédéric Hérau and Michael Hitrik for helping with understanding spectral theory, Rong Ge and Chenyi Zhang for inspiring discussions on tensor decomposition, and Jiaqi Leng for the guidance on numerical methods for quantum simulation. TL thanks Eric R. Anschuetz for general suggestions on a preliminary version of this paper, and Andrew M. Childs for helpful discussions that inspire \sec{no-exp-sep-init}.

YL was funded by SURF from Tsien Excellence in Engineering Program, Tsinghua University. 
WJS was funded by an Alfred Sloan
Research Fellowship and the Wharton Dean’s Research Fund.
TL was funded by a startup fund from Peking University, and the Advanced Institute of Information Technology, Peking University.





\newpage
\appendix

\section{Auxiliary Mathematical Tools on Spectrum Properties of Operators}\label{append:auxiliary-classical}
In this section, we introduce necessary mathematical tools on spectrum properties of operators. In \append{W-L}, we introduce the eigenvalues of the Witten-Laplacian operator. In \append{theory-Schrodinger}, we cover spectrum properties of the  Schr{\"o}dinger operator, including the WKB approximation, the Agmon distance and decay of eigenfunctions, and the interaction matrix of quantum tunneling.

\subsection{Eigenvalues of the Witten-Laplacian}\label{append:W-L}
\sec{classicalpre} shows that the discrete algorithm SGD
can be analyzed by a lr-dependent SDE \eq{lrsde}.
And the rate of convergence is bounded by the smallest non-zero eigenvalue of a Witten-Laplacian (see \prop{mic19}).
We present, in the following, a geometric
connection between the function $f$ and the spectrum of the Witten-Laplacian $\Delta^s_f$ defined by \eq{defnWL}.

The objective function is assumed to satisfy assumptions of \sec{classicalpre}.
First, we need to find a special type of saddle points at the top of barriers separating wells. To this end, we define:
\begin{definition}[Index-1 saddle points]
A critical point $x$ (i.e., $\nabla f(x) = 0$) is an index-1 saddle point, if the
Hessian at $x$, $\nabla^2 f(x)$, has exactly one negative eigenvalue.
\end{definition}
The index-1 saddle points being the bottlenecks of paths connecting two local minima are what we want.
Intuitively, for going from one local minimum to another, function values of such saddle points characterizes the smallest height needing climbed over.
Let $\mathcal{K}_{v} := \{x\in\mathbb{R}^d : f(x)<v \}$ be the sublevel set at level $v$.
For any index-1 saddle point $x$, if $r$ is sufficiently small, the set $\mathcal{K}_{f(x)} \cap \{y:\|y-x\| <x \}$ can be partitioned into two connected components, say $C_1(x,r)$ and $C_2(x,r)$.
With the help of the notations, we can formally define:
\begin{definition}[Index-1 separating saddle points or SSPs]\label{defn:SSP}
An index-1 saddle point $x$ is said to be an index-1 separating saddle point (SSP),
if for sufficiently small $r$, $C_1(x,r)$ and $C_2(x,r)$ are contained in different connected components of the
sublevel set $\mathcal{K}_{f(x)}$.
\end{definition}

Next, we aim to relate function values of SSPs and local minima to the eigenvalues of $\Delta^s_f$.
Let $\mathcal{X}^{\circ}$ and $\mathcal{X}^{\bullet}$ be the sets of SSPs and local minima, respectively. For distinguishing, we also use superscripts $\circ$ and $\bullet$ to denote SSPs and local minima, respectively, that is, $x^{\circ} \in \mathcal{X}^{\circ}$ and $x^{\bullet} \in \mathcal{X}^{\bullet}$.
The number of critical points are finite for a Morse function satisfying both the confining and the
Villani conditions.
Therefore, let $n^{\circ}$ and $n^{\bullet}$ denote numbers of SSPs and local minima, respectively.
A standard labeling of SSPs and local minima would establish a correspondence
between the two kinds of critical points and then help to rigorously define the
heights of barriers.
Denote the set of function values of SSPs by $f(\mathcal{X}^{\circ})$,
say the cardinality of this set is $I$, we can write $f(\mathcal{X}^{\circ}) = \{ v_1,v_2,\ldots,v_I \}$ and, without loss of generality, $+\infty = v_0 > v_1 > v_2 >\cdots>v_I$.
Use the values $\{v_j:j=1,\ldots,I\}$, we can define certain connected components of sublevel sets.
\begin{definition}[Critical
component]
A connected component $E$ of the sublevel set $\mathcal{K}_{v}$ for some $v\in f(\mathcal{X}^{\circ})$ is called a critical
component if either $\partial E \cap \mathcal{X}^{\circ} \neq \varnothing$ or $E = \mathbb{R}^d$.
\end{definition}
Obviously, $v_0 = \infty$ corresponds to the critical
component $E = \mathbb{R}^d$.
The labeling of SSPs and local minima goes as follows \cite{HHS11,SSJ20}:
\begin{itemize}
\item{Step 1}. Let $E^0_1 := \mathbb{R}^d$. Choose the (one)\footnote{If there are multiple global minima, choose one of them. Choosing different global minimum will result in different labeling.} global minimum in $E^0_1$ to be denoted by $x_0^{\bullet}$.
Define the set $\mathcal{X}^{\bullet}_0 :=\{x_0^{\bullet}\}$.
\item{Step 2}. Let $E^1_j$ for $j = 1,2,\ldots,m_1$ be the critical components of the sublevel set $\mathcal{K}_{v_1}$. The labeling should obey that $x_0^{\bullet}\in E^1_{m_1}$. We select $x_{1,j}^{\bullet}$ ($j=1,2,\ldots,m_1-1$) to be the (one) global minimum of $f$ restricted in $E^1_j$. Then, define $\mathcal{X}^{\bullet}_1 :=\{x_{1,1}^{\bullet}, \ldots, x_{1,m_1 -1}^{\bullet}\}$.
\item{Step 3}. For $l = 2,\ldots,I$, let $E_j^l$ where $j=1,\ldots,m_l$ be the critical components of the sublevel set $\mathcal{K}_{v_l}$.
We can always order the critical components such that there exists a an integer $k_l \leq m_l$ and
\begin{align}
    \left(\bigcup_{j=1}^{k_l} E^l_j \right) \bigcap \left(\bigcup_{\ell = 0}^{l-1} \mathcal{X}^{\bullet}_{\ell}\right) = \varnothing,
\end{align}
\begin{align}
    E^l_j \bigcap \left(\bigcup_{\ell = 0}^{l-1} \mathcal{X}^{\bullet}_{\ell}\right) \neq \varnothing,~\forall j = k_l+1,\ldots,m_l.
\end{align}
Label the (one) global minimum of $f$ restricted in $E^l_j$ by $x^{\bullet}_{l,j}$ and then define $\mathcal{X}^{\bullet}_l :=\{x_{l,1}^{\bullet}, \ldots, x_{l,k_l}^{\bullet}\}$.
It can be shown that all local minima of $f$ in $\mathbb{R}^d$ are labeled by this procedure.
\item{Step 4}. Choose the (one) point in $E^1_1 \cap \mathcal{X}^{\circ}$ whose function value is the largest among $E^1_1 \cap \mathcal{X}^{\circ}$
to be $x^{\circ}_{1,1}$. Subsequently, the point $x^{\circ}_{1,j}$ ($j = 2,\ldots,k_1$) is chosen from points with the largest function value in
\begin{align}
    E^1_j \bigcap \Big(\mathcal{X}^{\circ}\big\backslash \bigcup_{\nu< j} \{x^{\circ}_{1,\nu}\}\Big).
\end{align}
For $l=2,\ldots,I$,
the point $x^{\circ}_{l,j}$ ($j = 1,\ldots,k_l$) is chosen from points with the largest function value in
\begin{align}
    E^l_j \bigcap \Big(\mathcal{X}^{\circ}\big\backslash \bigcup_{\mu\leq l,\nu< j} \{x^{\circ}_{\mu,\nu}\}\Big).
\end{align}
It can be proved that $\mathcal{X}^{\circ} = \bigcup_{\mu,\nu} \{x^{\circ}_{\mu,\nu}\}$.
\end{itemize}

As shown in the labeling process, we can relate any local minimum $x^{\bullet}_{l,j}$ to a SSP $x^{\circ}_{l,j}$.
Note that there is no SSP corresponds to the point $x^{\bullet}_{0}$,
the number of local minima is always larger than that of SPPs, namely,
$n^{\circ} = n^{\bullet} -1$.

According to the above correspondence relation, we can relabel the SSPs and local minima by renaming the pair $(x^{\circ}_{l,j},x^{\bullet}_{l,j})$ as $x^{\circ}_{\ell},x^{\bullet}_{\ell}$ for some $\ell = 1,2,\ldots,n^{\circ}$,
such that
\begin{align}
    f(x^{\circ}_{1}) - f(x^{\bullet}_{1}) \geq V(x^{\circ}_{2}) - f(x^{\bullet}_{2}) \geq \cdots \geq f(x^{\circ}_{n^{\circ}}) - f(x^{\bullet}_{n^{\circ}}).
\end{align}
Define $x^{\circ}_{0}=\infty$ corresponding to the (a) global minimum $x^{\bullet}_{0}$, we have $f(x^{\circ}_{0}) - f(x^{\bullet}_{0}) = + \infty$.

Note that the labels of SSPs and local minima may not be unique following the above labeling procedure. Hence, the correspondence between SSPs and local minima is not unique.
However, the uniqueness of function values can be maintained. More precisely,
regardless of different labeling results on critical points,
we can always obtain unique pairs $(f(x^{\circ}_{\ell}),x^{\bullet}_{\ell})$ obeying
\begin{align}
    +\infty =f(x^{\circ}_{0}) - f(x^{\bullet}_{0})> f(x^{\circ}_{1}) - f(x^{\bullet}_{1}) \geq f(x^{\circ}_{2}) - f(x^{\bullet}_{2}) \geq \cdots \geq f(x^{\circ}_{n^{\circ}}) - f(x^{\bullet}_{n^{\circ}}).
\end{align}

The values $f(x^{\circ}_{\ell}) - f(x^{\bullet}_{\ell})$ are heights of barriers. We can now readily introduce the following fundamental result:
\begin{proposition}[Theorem 2.8 in \cite{Mic19}]\label{prop:Mic19Thm}
Under assumptions of \sec{classicalpre}, there exists $s>0$ such that for any $s\in (0,s_0]$,
the first $n^{\circ}$ smallest positive eigenvalues of the Witten-Laplacian
$\Delta_f^s$ satisfy
\begin{align}
    \delta_{s,\ell} = s (\gamma_{\ell} + o(s)) e^{-\frac{2H_{f,\ell}}{s}},
\end{align}
where $f(x^{\circ}_{\ell})-f(x^{\bullet}_{\ell})\leq H_{f,\ell} \leq f(x^{\circ}_1) - f(x^{\bullet}_{0})$ for $\ell = 1,2,\ldots,n^{\circ}$. The constants $H_{f,\ell}$ and $\gamma_{\ell}$ depend only on the function $f$.
\end{proposition}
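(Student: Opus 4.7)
The plan is to follow the by-now standard semi-classical / potential-theoretic framework developed by Helffer--Sj\"ostrand and Bovier--Gayrard--Klein for the low-lying spectrum of the Witten-Laplacian $\Delta_f^s$. The key observation is that $\Delta_f^s$ has a hierarchical block structure in the limit $s\to 0$: the normalized function $e^{-f/s}/\|e^{-f/s}\|$ spans the kernel, and beyond it there are exactly $n^{\circ}=n^{\bullet}-1$ additional exponentially small eigenvalues, separated by an $\Omega(s)$ gap from the rest of the spectrum. The labeling procedure for SSPs and local minima reviewed above is designed so that the $\ell$-th small positive eigenvalue corresponds precisely to the tunneling process that escapes from the well around $x^{\bullet}_{\ell}$ across the saddle $x^{\circ}_{\ell}$.

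First I would construct quasi-modes. For each local minimum $x^{\bullet}_{\ell}$, define an approximate eigenfunction $\psi_\ell$ localized in the critical component that $x^{\bullet}_{\ell}$ represents, roughly of the shape $\chi_\ell(x)\,e^{-(f(x)-f(x^{\bullet}_{\ell}))/s}$, with $\chi_\ell$ a smooth cutoff adapted to the sublevel-set decomposition used in the labeling. Under \assum{morse} these quasi-modes satisfy $\Delta_f^s\psi_\ell = O(e^{-\alpha_\ell/s})$ in $L^2$ for some $\alpha_\ell>0$, because $\chi_\ell$ has gradient supported near the corresponding saddle. Then I would compute the Gram matrix $\langle \psi_j,\psi_k\rangle$ and the matrix $A_{jk}=\langle \psi_j, \Delta_f^s \psi_k\rangle$. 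Both are explicit via Laplace's method: the Gram matrix is asymptotically diagonal with entries governed by Gaussians around $x^{\bullet}_{\ell}$, while the off-diagonal entries of $A$ pick up Gaussian contributions localized at the connecting saddles $x^{\circ}_{\ell}$.

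Next I would apply min-max to identify the $n^{\bullet}$ smallest eigenvalues of $\Delta_f^s$ with those of the finite-dimensional matrix obtained by restriction to $\mathrm{span}\{\psi_\ell\}$. The tunneling entries acquire an exponential suppression $e^{-(f(x^{\circ}_{\ell})-f(x^{\bullet}_{\ell}))/s}$ relative to the diagonal, and the fact that $\Delta_f^s$ is built as (conjugate of) $s^2 d^*d$ produces a squaring, giving the announced exponent $e^{-2H_{f,\ell}/s}$. The prefactor $\gamma_\ell$ comes out in Eyring--Kramers form, essentially
\begin{equation*}
\gamma_\ell \;\sim\; \frac{|\lambda_-(x^{\circ}_{\ell})|}{2\pi}\sqrt{\frac{|\det \nabla^2 f(x^{\bullet}_{\ell})|}{|\det \nabla^2 f(x^{\circ}_{\ell})|}},
\end{equation*}
with $\lambda_-$ the unique negative Hessian eigenvalue at the index-$1$ saddle. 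The two-sided bound $f(x^{\circ}_{\ell})-f(x^{\bullet}_{\ell})\leq H_{f,\ell}\leq f(x^{\circ}_{1})-f(x^{\bullet}_{0})$ appears because a quasi-mode associated to a deeply nested critical component may leak through several saddles rather than one, but never through barriers higher than the global one.

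The main obstacle is showing that the quasi-mode subspace genuinely captures the entire low-lying spectrum, i.e., proving a semi-classical gap between the $n^{\bullet}$ exponentially small eigenvalues and an $\Omega(s)$ bulk spectrum. This requires an IMS/Agmon-type localization estimate: any eigenfunction with eigenvalue $o(s)$ must concentrate exponentially near $\mathcal{X}^{\bullet}$, which in turn uses \assum{confining} and \assum{Villani} to rule out mass escaping to infinity and to give a Poincar\'e inequality on the complement of small neighborhoods of the wells. Once this separation is established, the finite-dimensional diagonalization from the previous step yields the claimed asymptotics, and the uniqueness of the pairs $(f(x^{\circ}_{\ell}),x^{\bullet}_{\ell})$ from the labeling guarantees that the $H_{f,\ell}$ are well-defined independently of label ambiguities.
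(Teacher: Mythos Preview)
The paper does not prove this proposition: it is quoted verbatim as Theorem~2.8 of \cite{Mic19} and used as a black box, with no accompanying proof environment. So there is nothing in the paper to compare your proposal against.

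That said, your outline is a faithful sketch of the approach in the cited literature (Helffer--Klein--Nier, H\'erau--Hitrik--Sj\"ostrand, Michel): build quasi-modes $\chi_\ell e^{-(f-f(x^{\bullet}_\ell))/s}$ indexed by local minima, show via Agmon/IMS localization that the bottom $n^{\bullet}$ eigenvalues are separated from the rest by a gap of order $s$, and then read off the small eigenvalues from the finite interaction matrix evaluated by Laplace's method at the relevant saddles. One minor inaccuracy: the factor $2$ in the exponent $e^{-2H_{f,\ell}/s}$ does not come from a ``squaring'' via $d^*d$; it comes directly from the fact that the Rayleigh quotient involves $|\psi_\ell|^2 \sim e^{-2(f-f(x^{\bullet}_\ell))/s}$, so Laplace at the saddle $x^{\circ}_\ell$ produces $e^{-2(f(x^{\circ}_\ell)-f(x^{\bullet}_\ell))/s}$. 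Your explanation of the two-sided bound on $H_{f,\ell}$ is also heuristic---in the general (non-generic) case treated by \cite{Mic19}, the exponent can be a genuine mixture when several saddles or minima share the same height, which is exactly why the statement only gives an interval rather than an equality. The paper itself sidesteps this subtlety by immediately adding \assum{uniqueness} and invoking the sharper result of \cite{HHS11}, where $H_{f,\ell}=f(x^{\circ}_\ell)-f(x^{\bullet}_\ell)$ exactly.
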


To gain further insights, especially intuitions on $\gamma_{\ell}$, we may add a stronger assumption that is beneficial to explicit analysis:
\begin{assumption}[Hypothesis 5.1 in \cite{HHS11}]\label{assum:uniqueness}
For every critical component $E^l_j$ introduced in the labeling
process, we we assume that
\begin{itemize}
    \item There is only one global minimum of $f(x)$ restricted in $E^l_j$.
    \item If $E^l_j\cap \mathcal{X}^{\circ} \neq \varnothing$, there is a unique SSP $x^{\circ}_{l,j}$ such that $f(x^{\circ}_{l,j}) = \max_{x\in E^l_j \cap \mathcal{X}^{\circ}} f(x)$. In
    particular, $E^l_j\cap \mathcal{K}_{f(x^{\circ}_{l,j})}$ is the union of two distinct critical components.
\end{itemize}
\end{assumption}
Then, \prop{Mic19Thm} can be specified as
\begin{proposition}[Theorem 1.2 in \cite{HHS11}]
Assume the assumptions of \sec{classicalpre} and \assum{uniqueness}
are satisfied, there exists $s_0>0$ such that for any $s\in (0,s_0]$, the smallest $n^{\circ}$ non-zero eigenvalues of the Witten-Laplacian $\Delta_V^s$ associated with $f$ satisfy
\begin{align}
    \delta_{s,\ell} = s (\gamma_{\ell} + o(s)) e^{-\frac{2(f(x^{\circ}_{\ell}) - f(^{\bullet}_{\ell}))}{s}},
\end{align}
for $\ell = 1,2,\ldots,n^{\circ}$, where
\begin{align}
    \gamma_{\ell} = \frac{|\eta(x_{\ell}^{\circ})|}{\pi}\left( \frac{\mathrm{det}(\nabla^2 f(x_{\ell}^{\bullet}) )}{ -\mathrm{det}(\nabla^2 f(x_{\ell}^{\circ}) ) }\right)^{\frac{1}{2}},
    \label{eq:classicalpolycoe}
\end{align}
and $\eta(x_{\ell}^{\circ})$ is the unique negative eigenvalue of $\nabla^2 f(x_{\ell}^{\circ})$.
\end{proposition}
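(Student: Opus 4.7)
The plan is to reduce the computation of the $n^{\circ}$ small non-zero eigenvalues of $\Delta_f^s$ to the analysis of a finite-dimensional \emph{interaction matrix} encoding tunneling between wells, and then to extract its asymptotics via WKB and the Laplace method. The cornerstone is the Witten supersymmetric factorization $\Delta_f^{s,(0)} = d_f^* d_f$ on $0$-forms, with the twisted differential $d_f := s\,e^{-f/s}\, d\, e^{f/s}$ mapping $0$-forms to $1$-forms. The global ground state $e^{-f/s}$ lies in $\ker d_f$, and every remaining small eigenvalue $\delta_{s,\ell}$ of $\Delta_f^{s,(0)}$ coincides with a small eigenvalue of $\Delta_f^{s,(1)} = d_f d_f^*$ on $1$-forms. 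Consequently the small spectrum is encoded by the action of $d_f$ between the two low-lying spectral subspaces.

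Next I would construct two families of quasimodes. To each local minimum $x_{\ell}^{\bullet}$ I would attach a $0$-form quasimode $\psi_{\ell}^{(0)} \propto \chi_{\ell}\, e^{-f/s}$, where $\chi_{\ell}$ is a cutoff adapted to the hierarchical labeling of \append{W-L}: it equals $1$ on the critical sublevel component containing $x_{\ell}^{\bullet}$ up to a level just below $f(x_{\ell}^{\circ})$, and drops to $0$ across the associated SSP $x_{\ell}^{\circ}$. Laplace's method at $x_{\ell}^{\bullet}$ then gives $\|\psi_{\ell}^{(0)}\|^{-2} \sim (\pi s)^{-d/2}\sqrt{\det \nabla^2 f(x_{\ell}^{\bullet})}\, e^{2 f(x_{\ell}^{\bullet})/s}$. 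To each SSP $x_{\ell}^{\circ}$ I would attach a $1$-form quasimode of WKB type $a(x)\, e^{-\phi/s}\, \d u$ supported near $x_{\ell}^{\circ}$, where $\phi$ solves the eikonal equation $|\nabla \phi|^2 = |\nabla f|^2$ near the saddle with $\phi(x_{\ell}^{\circ}) = 0$, $u$ parametrizes the unstable direction, and the amplitude $a$ is fixed by the transport equation at the saddle. Agmon-type exponential decay estimates, as in \cite{HS84,Hel88}, ensure that both families approximate the true low-energy subspaces up to errors of order $e^{-(S_0+\eta)/s}$ for some $\eta > 0$.

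The interaction matrix is $M_{ij} = \langle d_f \psi_i^{(0)}, \tilde\psi_j^{(1)}\rangle$, where $\tilde\psi_j^{(1)}$ are the $1$-form quasimodes after a Gram--Schmidt step. Because $d_f \psi_i^{(0)} = s\, e^{-f/s}\, \d\chi_i$ is localized on the thin layer where $\chi_i$ transitions, and $\tilde\psi_j^{(1)}$ concentrates near $x_j^{\circ}$, the entry $M_{ij}$ is exponentially negligible unless $x_i^{\bullet}$ is adjacent to $x_j^{\circ}$ across the boundary of its basin. Evaluating the dominant integral by the Laplace method along the $(d-1)$-dimensional stable manifold of the saddle produces the exponential $e^{-(f(x_j^{\circ})-f(x_i^{\bullet}))/s}$, together with a prefactor governed by $|\eta(x_j^{\circ})|$ (arising from the unstable direction through the transport equation) and the inverse square root of $|\det \nabla^2 f(x_j^{\circ})|$ restricted to the stable subspace. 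The $n^{\circ}$ small non-zero eigenvalues $\delta_{s,\ell}$ are then, to leading order, the squared singular values of $M$.

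Finally I would diagonalize $M^*M$. The hierarchical labeling of \append{W-L}, combined with \assum{uniqueness}, orders the pairs $(x_{\ell}^{\bullet}, x_{\ell}^{\circ})$ so that $M$ is essentially block-triangular in the adapted basis: at each level exactly one pair is responsible for one small singular value, whose square, after incorporating the $0$-form and $1$-form normalization factors, equals
\[
\delta_{s,\ell} = s\,(\gamma_{\ell} + o(s))\, e^{-2(f(x_{\ell}^{\circ})-f(x_{\ell}^{\bullet}))/s},
\]
with $\gamma_{\ell}$ as claimed. The factor $|\eta(x_{\ell}^{\circ})|/\pi$ emerges from the residual $1$-dimensional Gaussian integral along the unstable direction at the saddle, while the ratio of determinants combines the Laplace prefactors at $x_{\ell}^{\bullet}$ and at $x_{\ell}^{\circ}$ (the determinant at the saddle is restricted to the stable directions, hence the sign and omission of the negative eigenvalue). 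The main obstacle, and the step that genuinely uses \assum{uniqueness}, is this decoupling: one must show that off-diagonal blocks of $M^*M$ are exponentially smaller than the diagonal ones, so that the labeled pair $(x_{\ell}^{\bullet}, x_{\ell}^{\circ})$ really does determine the $\ell$-th small eigenvalue, and that errors accumulated across levels remain $o(s)$ relative to the leading term. Carrying this decoupling out rigorously, via a careful analysis of the geometry of critical components and the associated quasimode supports, is precisely the content of \cite{HHS11}.
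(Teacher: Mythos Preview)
The paper does not supply its own proof of this proposition; it is stated verbatim as Theorem~1.2 of \cite{HHS11} and used as a black box. Your sketch correctly outlines the strategy of that reference (and of the earlier Helffer--Klein--Nier and Bovier--Gayrard--Klein works): the supersymmetric factorization $\Delta_f^{s,(0)}=d_f^*d_f$, WKB quasimodes at minima and index-$1$ saddles, the interaction matrix $M$ built from $\langle d_f\psi_i^{(0)},\tilde\psi_j^{(1)}\rangle$, Laplace-method evaluation of its entries, and the decoupling that \assum{uniqueness} affords. So there is nothing to compare against in the present paper, and your outline is faithful to the cited source.
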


Under \assum{uniqueness}, we see that $H_{f,\ell}$ in \prop{Mic19Thm} can be specified as $f(x_{\ell}^{\circ}) - f(x_{\ell}^{\bullet})$.
The constants $H_{f,\ell}$ can be intuitively regarded as the heights of barriers, among which, in particular, $H_{f,1}$ is the
largest one.
Thus, $H_{f,1}$ should take the longest time for SGD to climbe over and can be used to bound the mixing time.
We formally define the Morse saddle barrier as
\begin{definition}[Morse saddle barrier]\label{defn:morse-saddle}
Under assumptions of \sec{classicalpre}, we call $H_f:=H_{f,1}$ (see \prop{Mic19Thm}) as the Morse saddle barrier, which can be specified as $f(x_{1}^{\circ}) - f(x_{1}^{\bullet})$ if \assum{uniqueness}
is also satisfied.
\end{definition}
Obviously, the smallest positive eigenvalue of the Witten-Laplacian $\Delta_f^s$
satisfy $\delta_{s,1} = s(\gamma_1+o(s))e^{-\frac{2H_f}{s}}$,
leading to \prop{mic19}, the proposition bounding the convergence rate of SGD.

\subsection{Tunneling Effect of the Schr\"odinger Operator}\label{append:theory-Schrodinger}
In this section, we introduce necessary mathematical tools from Refs.~\cite{Hel88,DS99} to study the Schr\"odinger operator of the form
\begin{equation}\label{eq:Schrodinger-appendix}
    P = -h^2\Delta + f(x).
\end{equation}
We restrict our discussions on the $d$-dimensional $C^{\infty}$ Riemannian complete manifold $M$ ($M = \mathbb{R}^d$ or $M$ is compact), and all results are obtained in the semi-classical limit referring to small $h$.

\subsubsection{WKB approximation}\label{append:WKB}
WKB approximation is a method to approximately solve differential equations.
We mainly use the WKB method to estimate local eigenfunctions.
For the Schr\"odinger equation \eq{Schrodinger-appendix}, it is accurate if the potential $f$ is quadratic and the equation constitutes a harmonic oscillator, i.e.,
\begin{align}\label{eq:P-h}
    P_h = -h^2 \Delta + \frac{1}{2}\sum_{j=1}^d \omega_j^2 x_j^2.
\end{align}
For the operator $-\frac{\d^2}{\d x^2} + x^2$, the first eigenfunction is
\begin{align}
    u_0 = \frac{1}{\pi^{1/4}}e^{-\frac{x^2}{2}},
\end{align}
whose eigenvalue is $1$. For each $j\in\N$, the $(j+1)$th eigenfunction is given by
\begin{align}
    u_j = C_{j}\left(-\frac{\d}{\d x} + x\right)^j e^{-\frac{x^2}{2}} \equiv
    p_j(x)e^{-\frac{x^2}{2}}
\end{align}
with eigenvalue $(2j+1)$. Here, $C_j$ is a constant chosen to normalize $u_j$
and $p_j(x)$ is a polynomial with degree $j$.
Define $\alpha = (\alpha_1,\ldots, \alpha_d) \in \mathbb{N}^d$, and $\|\alpha\|_1 = \sum_j^d \alpha_j$.
For the operator $P_1$ (i.e., $h=1$), we have eigenfunctions
\begin{align}
    u_{\alpha}(x) =
    \frac{\prod_j^d\omega_j^{1/4} p_{\alpha_j}(\omega_j^{1/2}x_j/2^{1/4})}{2^{d/8}}e^{-\frac{\sum_j^d \omega_j x_j^2}{2\sqrt{2}}},
\end{align}
whose corresponding eigenvalues are $\sum_j \omega_j(2\alpha_j+1)/\sqrt{2}$.
Note that
\begin{align}
    P_h =-h^2 \sum_{i=1}^d \frac{\partial^2}{\partial x_i^2} + \frac{1}{2}\sum_{j=1}^d \omega_j^2 x_j^2 = h\left(-h \sum_{i=1}^d \frac{\partial^2}{\partial x_i^2} + \frac{1}{2}\sum_{j=1}^d \omega_j^2 \frac{x_j^2}{h}\right) = h\left(- \sum_{i=1}^d \frac{\partial^2}{\partial y_i^2} + \frac{1}{2}\sum_{j=1}^d \omega_j^2y_j^2\right)
\end{align}
where $y_j = x_j/\sqrt{h}$,
the eigenfunctions and corresponding eigenvalues of $P_h$ should be given by
\begin{align}
    u_{\alpha}(x,h) =
    \frac{\prod_j\omega_j^{1/4} p_{\alpha_j}(\omega_j^{1/2}x_j/h^{1/2}2^{1/4})}{2^{d/8}h^{d/4}}e^{-\frac{\sum_{j=1}^d \omega_j x_j^2}{2\sqrt{2}h}}
    =\frac{\prod_j\omega_j^{1/4} p_{\alpha_j}(\omega_j^{1/2}x_j/h^{1/2})}{h^{d/4}}e^{-\frac{\sum_{j=1}^d \omega_j x_j^2}{2h}},
    \label{eq:wkb-quadratic}
\end{align}
and
\begin{align}
    \lambda_{\alpha}(h) = h\sum_{j=1}^d \omega_j(2\alpha_j+1)/\sqrt{2}=h\sum_{j=1}^d \omega_j(\alpha_j+1/2).
\end{align}
The eigenfunction $u_{\alpha}(x,h)$ can be written in the form $u_{\alpha}(x,h) = h^{-d/4}a_{\alpha}(x,h)e^{-\varphi(x)/h}$,
where we have $\varphi=\sum_{j=1}^d \omega_j x_j^2/2\sqrt{2}$. To understand $\varphi(x)$, first recall that any wave function should be proportional to a term $e^{i\hat{\varphi}(x)/h}$. For plane waves, $\hat{\varphi}(x) = p \cdot x$ with $p$ being the momentum of the wave. It is generally true that $\nabla \hat{\varphi}(x)$ can be seen as momentum. The function $\hat{\varphi}(x)$ is then determined by the basic law ``kinetic energy $=$ total energy $-$ potential energy", where $|\hat{\varphi}(x)|^2$ represents the kinetic energy.
For classical forbidden regions (total energy $<$ potential energy), the function $\hat{\varphi}(x)$ becomes imaginary. That is, in this case, $e^{i\hat{\varphi}(x)/h}$ no longer describes the change of wave phase, but the exponential decay of the wave amplitude.
For $u_{\alpha}(x,h)$ in \eq{wkb-quadratic}, we have $\hat{\varphi}(x) = i\varphi(x)$, total energy $= 0$, and potential energy $=f(x)$, leading to
\begin{align}
    \|\nabla \varphi (x)\|^2 = f(x),
    \label{eq:wkb-nablaphi}
\end{align}
which is true for $\varphi=\sum_{j=1}^d \omega_j x_j^2/2\sqrt{2}$. Eq.~\eq{wkb-nablaphi} is also known as the eikonal equation in optics.
For the local ground state, $\alpha = 0$, and the function $a_0(x,h)$ is a constant: $a_0(x,h) = \prod_j\omega_j^{1/4}/(\sqrt{2}\pi)^{d/4}$.

From now on, we study the local ground state near a local minimum of a general
$f(x)$. We assume that the original point $0$ is a local minimum, and the landscape $f(x)$ has the following property:
\begin{assumption}\label{assum:WKB}
\begin{align}
    f(0) = 0,\quad \nabla f(0)=0,\quad\nabla^2 f(0) >0, \quad \varliminf_{|x|\to \infty} f(x)>0
\end{align}
where $\nabla^2 f(0)$ is the Hessian of $f$ at $0$.
\end{assumption}
We further specify $\nabla^2 f(0)$ as
\begin{equation}
     \nabla^2 f(0) = \left(
    \begin{array}{cccc}
        \omega_1^2 &  &  &  \\
          & \omega_2^2 &  &  \\
           &  & \ddots &  \\
           &   &  & \omega_d^2 \\
    \end{array}
    \right),\quad \omega_j>0.
\end{equation}
The WKB construction aims to find the eigenfunctions with the form
$h^{-d/4}a(x,h)e^{-\varphi/h}$. That is, we approximate the local ground state starting from a harmonic oscillator. For general cases, $a(x,h)$ is not a constant.
The main result we need is the following lemma:
\begin{lemma}[Theorem 2.3.1 of \cite{Hel88}]\label{lem:wkb}
Under \assum{WKB}, we can find a function $\varphi(x)$, a formal series
\begin{align}
    E(h) = \sum_{j=1}^{\infty} E_jh^j,
\end{align}
and a formal symbol defined in a neighborhood of $0$
\begin{align}
    a(x,h) = \sum_{j=0}^{\infty} a_j(x)h^j,
\end{align}
s.t. $E_1 = \sum_{j=1}^d \omega_j/\sqrt{2}$ and
\begin{align}
    (P_h - E(h))(a(x,h)e^{-\varphi(x)/h}) = O(h^{\infty})e^{-\varphi(x)/h},
    \label{eq:eq39}
\end{align}
where $P_h = -h^2\Delta + f(x)$.
\end{lemma}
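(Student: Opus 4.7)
The plan is to perform the standard WKB ansatz substitution and match powers of $h$. First I would compute
\begin{align*}
(P_h - E(h))\bigl(a(x,h)e^{-\varphi(x)/h}\bigr) = \bigl[(f - \|\nabla\varphi\|^2)a + h\bigl(2\nabla\varphi\cdot\nabla a + a\Delta\varphi - E_1 a\bigr) - h^2\Delta a - \sum_{j\ge 2}h^j E_j a\bigr] e^{-\varphi/h},
\end{align*}
then substitute the formal series $a = \sum_{j\ge 0} a_j h^j$, $E = \sum_{j\ge 1} E_j h^j$ and demand that the bracket vanish to all orders in $h$. The coefficient of $h^0$ gives the eikonal equation $\|\nabla\varphi\|^2 = f$; the coefficient of $h^{k+1}$ gives a transport equation of the form
\begin{align*}
\bigl(2\nabla\varphi\cdot\nabla + \Delta\varphi - E_1\bigr) a_k = \Delta a_{k-1} + \sum_{j=2}^{k+1} E_j\, a_{k+1-j},
\end{align*}
which is a first-order linear PDE for $a_k$ with an inhomogeneity determined by the previously constructed $a_0,\ldots,a_{k-1}$ and $E_1,\ldots,E_k$.

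Next I would solve the eikonal equation near $0$. Under \assum{WKB} this is a Hamilton--Jacobi equation for the Hamiltonian $H(x,\xi)=\|\xi\|^2-f(x)$, and the linearisation at the critical point $(0,0)$ has the symplectic matrix whose spectrum splits as $\{\pm\omega_j/\sqrt 2\}$. By the stable manifold theorem for real Hamiltonians, the outgoing Lagrangian manifold $\{\xi=\nabla\varphi(x)\}$ with $\varphi(0)=0$, $\varphi\ge 0$ exists and is unique in a neighbourhood of $0$, and moreover the Hessian of $\varphi$ at $0$ is the unique positive solution of $(\nabla^2\varphi(0))^2 = \tfrac12\nabla^2 f(0)$, namely $\nabla^2\varphi(0)=\mathrm{diag}(\omega_j/\sqrt2)$. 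In particular $\Delta\varphi(0)=\sum_j\omega_j/\sqrt 2$.

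Then I would iteratively solve the transport equations. Each such equation reads $(2\nabla\varphi\cdot\nabla + \Delta\varphi - E_1)a_k = R_k$, and near $0$ its characteristics are the integral curves of the vector field $2\nabla\varphi$, which has $0$ as a hyperbolic attractor for the reverse flow by the Hessian computation above. Evaluating at $x=0$, where $\nabla\varphi(0)=0$, the leading-order equation forces $E_1 a_0(0) = \Delta\varphi(0) a_0(0) = (\sum_j\omega_j/\sqrt 2)a_0(0)$; setting $a_0(0)\ne 0$ fixes $E_1=\sum_j \omega_j/\sqrt 2$ as required, and the Fredholm alternative along characteristics determines $E_{k+1}$ at each subsequent step so that $R_k$ lies in the range of the transport operator, allowing $a_k$ to be solved for smoothly in a neighbourhood of $0$.

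The main obstacle will be confirming solvability of the eikonal equation at the critical point, since $\nabla f(0)=0$ makes the Hamilton--Jacobi problem degenerate; the fix is the stable-manifold/Lagrangian construction sketched above, which is classical but requires care because $\varphi$ is not determined by ordinary characteristics from $0$. Once $\varphi$ and all $a_j$, $E_j$ are in hand as smooth objects on a fixed neighbourhood $U$ of $0$, I would conclude by Borel summation: choose a genuine smooth symbol $a(x,h)\sim \sum_j a_j(x) h^j$ and a formal constant $E(h)\sim \sum_j E_j h^j$ so that for every $N$ the partial sums approximate to $O(h^N)$ uniformly on $U$, giving $(P_h - E(h))(a e^{-\varphi/h}) = O(h^\infty)e^{-\varphi/h}$ as claimed.
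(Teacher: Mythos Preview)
Your proposal is correct and follows essentially the same construction the paper sketches after the lemma statement: derive the eikonal equation $\|\nabla\varphi\|^2=f$ at order $h^0$, then the hierarchy of transport equations $(2\nabla\varphi\cdot\nabla+\Delta\varphi-E_1)a_k=\cdots$ at each subsequent order, with the $E_j$ fixed by evaluation at the origin. The paper merely cites this as Theorem~2.3.1 of \cite{Hel88} and writes out the first two transport equations without discussing solvability; your additional remarks on the stable-manifold construction for $\varphi$ at the degenerate critical point and the concluding Borel summation are exactly what a full proof requires.
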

To be more specific, the function $\varphi$ is given by the equation
$\|\nabla \varphi(x)\|^2 = f(x)$ following the same intuition of \eq{wkb-nablaphi}.
Let $\varphi_0 = \sum_{j=1}^d \omega_j x_j^2/2\sqrt{2}$, clearly we have $\varphi(x)- \varphi_0(x) = O(\|x\|^3)$ near $0$.
To make the coefficient of $h$ in both sides of \eq{eq39} equal,
we have the first transport equation
\begin{align}
    2 \nabla\varphi\cdot \nabla a_0 + (\Delta\varphi - E_1)a_0 = 0.
    \label{eq:transport}
\end{align}
To be consistent with the results of quadratic potentials, we choose the initial condition
\begin{align}
    a_0(0) = \frac{\prod_j \omega_j^{1/4}}{(\sqrt{2}\pi)^{d/4}}.
    \label{eq:178}
\end{align}
If we look at the equality of the coefficients of $h^2$ in \eq{eq39},
\begin{align}
    2 \nabla\varphi\cdot \nabla a_1 + (\Delta\varphi - E_1)a_1 = -\Delta a_0 + E_2 a_0.
\end{align}
The initial condition is $a_1(0) = 0$. We can determine $E_2$ by
\begin{align}
    E_2 = \frac{\Delta a_0(0)}{a_0(0)}.
\end{align}
Following this procedure,
all other coefficients can be obtained recursively.

\subsubsection{The Agmon distance and decay of eigenfunctions}\label{append:agmon}
The Agmon distance generalizes the function $\varphi(x)$ in \lem{wkb} which characterizes the exponential decay of eigenfunctions. For a given value $E$, we are interested in the eigenvalues of the Schr\"odinger operator $P$ in a neighborhood of $E$.
\begin{assumption}
The potential should satisfy
$\min_{x\in M} f(x) < \lim_{|x|\to \infty} f(x)$ in the case $M=\mathbb{R}^d$,
or $\min_{x\in M} f(x)$ exists in the case $M$ is compact.
\end{assumption}
In the semi-classical limit, $E$ is very small and close to $\min f$.
Formally, we demand $\min f \leq E$ for compact $M$ and $\min f \leq E<\lim_{|x|\to \infty} f$ for $M=\mathbb{R}^d$.
\begin{definition}
The Agmon metric is defined as
\begin{equation}
    \d s^2 := (V-f)^+ \d x^2,
\end{equation}
where $\d x^2$ is the Riemannian metric in the manifold $M$, and $a^+ := \max\{0,a\}$ for any $a$.
\end{definition}
A natural distance $d(x,y)$ associated to the Agmon metric can be defined for $x,y \in M$ as
\begin{equation}
    d(x,y) = \inf_{\gamma} \int_{\gamma} \sqrt{(V-f)^+} \d x.
\end{equation}
Here $\gamma$ denotes piecewise $C^1$ paths connecting $x$ and $y$.
For a set $U$, we can also define $d(x,U) = \inf_{y\in U} d(x,y)$.
The following properties can be verified:
\begin{align}
    |d(x,y) - d(z,y)| \leq d(x,z),\quad x,y,z\in M,
\end{align}
\begin{align}
    |\nabla_x d(x,y)|^2 \leq (f(x)-E)^+.
    \label{eq:eq51}
\end{align}

As is mentioned, we have
\begin{lemma}
Under \assum{WKB} and $E=0$,
\begin{align}
    d(0,x) = \varphi(x),
\end{align}
in a neighbor of $0$. The function $\varphi(x)$ is defined in \lem{wkb}.
\end{lemma}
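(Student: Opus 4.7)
The plan is to prove the two inequalities $d(0,x)\leq \varphi(x)$ and $d(0,x)\geq \varphi(x)$ separately in a neighborhood of $0$, using the eikonal equation $\|\nabla\varphi\|^2 = f$ from \lem{wkb}.

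For the lower bound, I would invoke the eikonal equation plus a Cauchy--Schwarz argument. Any piecewise $C^1$ path $\gamma\colon[0,L]\to M$ joining $0$ to $x$, parametrized by arc length, satisfies
\begin{equation*}
\int_\gamma \sqrt{f}\,\d s \;=\; \int_0^L \|\nabla\varphi(\gamma(s))\|\,\d s \;\geq\; \int_0^L \bigl|\nabla\varphi(\gamma(s))\cdot\dot\gamma(s)\bigr|\,\d s \;\geq\; \biggl|\int_0^L \frac{\d}{\d s}\varphi(\gamma(s))\,\d s\biggr| \;=\; \varphi(x),
\end{equation*}
since $\varphi(0)=0$ and $\varphi\geq 0$ near $0$ (because $0$ is a nondegenerate minimum of $\varphi_0 = \sum_j \omega_j x_j^2/(2\sqrt{2})$ and $\varphi-\varphi_0 = O(\|x\|^3)$). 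Taking the infimum over $\gamma$ gives $d(0,x)\geq \varphi(x)$.

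For the upper bound, I would exhibit an explicit extremal path, namely the backward gradient flow of $\varphi$. Consider the ODE $\dot\gamma(t) = -\nabla\varphi(\gamma(t))$ with $\gamma(0)=x$. Linearising near $0$, the Jacobian of $-\nabla\varphi$ equals $-\nabla^2\varphi(0)$, whose eigenvalues are $-\omega_j/\sqrt{2}<0$, so $0$ is a hyperbolic sink. Hence there is a neighborhood $U$ of $0$ such that for every $x\in U$, $\gamma(t)$ is defined for all $t\geq 0$ and converges to $0$ exponentially. Along this trajectory,
\begin{equation*}
\frac{\d}{\d t}\varphi(\gamma(t)) \;=\; \nabla\varphi\cdot\dot\gamma \;=\; -\|\nabla\varphi\|^2 \;=\; -f(\gamma(t)),
\end{equation*}
so its Agmon length is
\begin{equation*}
\int_0^\infty \sqrt{f(\gamma(t))}\,\|\dot\gamma(t)\|\,\d t \;=\; \int_0^\infty \sqrt{f}\,\|\nabla\varphi\|\,\d t \;=\; \int_0^\infty f(\gamma(t))\,\d t \;=\; -\int_0^\infty \frac{\d}{\d t}\varphi(\gamma(t))\,\d t \;=\; \varphi(x).
\end{equation*}
Reversing time parametrization yields a piecewise $C^1$ path from $0$ to $x$ of Agmon length $\varphi(x)$, so $d(0,x)\leq \varphi(x)$.

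The main technical point to justify carefully is that the backward flow remains in the domain where $\varphi$ is smooth and the eikonal equation holds. \lem{wkb} provides $\varphi\in C^\infty$ only in some neighborhood $V$ of $0$, and the eikonal equation is the equation that determines $\varphi$ there. The hyperbolic-sink structure guarantees an invariant neighborhood $U\subset V$ from which the backward flow stays inside $V$ and converges to $0$; on $U$, both inequalities apply, giving the equality $d(0,x)=\varphi(x)$. The rest (convergence of the integrals, positivity of $\varphi$, smoothness of the flow) is routine once the neighborhood $U$ has been fixed via the linearisation of $-\nabla\varphi$ at $0$.
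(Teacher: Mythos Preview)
The paper does not actually prove this lemma; it is stated without proof as a known fact from the semi-classical analysis literature (the cited references by Helffer and Dimassi--Sj\"ostrand). Your argument is the standard one and is essentially correct, but the lower bound step has a small gap worth noting: your Cauchy--Schwarz estimate $\int_\gamma\sqrt{f}\,\d s\geq\varphi(x)$ uses the eikonal identity $\|\nabla\varphi\|^2=f$ along all of $\gamma$, yet $\varphi$ is only defined on a neighborhood $V$ of $0$, while the infimum defining $d(0,x)$ ranges over \emph{all} paths, including those that leave $V$. The fix is routine: pick a ball $B\subset V$ about $0$, set $m=\min_{\partial B}\varphi>0$, and restrict attention to $x$ in the sublevel set $\{\varphi<m\}\cap B$. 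A path that stays in $B$ is handled by your inequality; a path that leaves $B$ first hits $\partial B$ at some $y$, and the same inequality applied to the initial segment already gives Agmon length $\geq\varphi(y)\geq m>\varphi(x)$. Either way $d(0,x)\geq\varphi(x)$ on this smaller neighborhood, which is all the lemma asserts. Your upper bound via the backward gradient flow of $\varphi$ is correct as written, and you have already identified the relevant technical point (the hyperbolic-sink structure guaranteeing an invariant neighborhood).
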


Before computing the interaction between wells, we should first understand the behavior of local eigenfunctions near one well.
We assume for the remainder of \append{agmon} that
\begin{align}
    U = \{x\mid x\in M,f(x)\leq E\}
\end{align}
is compact, whose diameter is 0 under the Agmon distance.
The set $U$ is defined as a \emph{well}. Note that the definitions of the Agmon distance and wells all depend on the choice of $E$.

Local eigenstates can be rigorously specified as the eigenstates of the Dirichlet realization of $P$ near one well. Let $\Omega$ be some bounded
open set containing $U$.
\begin{definition}\label{defn:Dirichlet}
The Dirichlet realization of the Schr\"odinger operator, $P_{\Omega}$ is defined as the restriction of $P$
on the domain $H^1_0(\Omega)\cap H^2(\Omega)$. Here, $H^k(\Omega)$ denotes the classical Sobolev space of order $k$ and $H^1_0$ is the closure of
$C_0^{\infty}(\mathring{\Omega})$ in $H^1$.
\end{definition}
let $u_h$ be an eigenfunction of $P$ with eigenvalue $E+\lambda(h)$ ($\lambda(h)^+\to 0,~h \to 0$). We assume $h$ is in $J$, a subset of $(0,h_0]$, (for some $h_0>0$) and $0$ belongs to the closure of $J$.
Then, the function $e^{d(x,U)/h}u_h$ and its gradient can be well controlled:

\begin{lemma}[Proposition 3.3.1 in \cite{Hel88}]\label{lem:exdecay}
For every $\epsilon>0$, there exists small enough $h$ depending on $\epsilon$ s.t.
\begin{align}
    \|\nabla(e^{d(x,U)/h}u_h)\|_{L^2(\Omega)}
    + \|e^{d(x,U)/h}u_h\|_{L^2(\Omega)} \leq C_{\epsilon}e^{\epsilon/h}.
    \label{eq:eq54}
\end{align}
\end{lemma}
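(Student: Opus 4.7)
My plan is to prove this Agmon-type exponential decay estimate via the standard Agmon identity. The first ingredient is the algebraic identity: for any real-valued Lipschitz weight $\psi$ and any real $u\in H^1_0(\Omega)\cap H^2(\Omega)$,
\begin{equation}
\langle Pu,\, e^{2\psi/h}u\rangle_{L^2(\Omega)} \;=\; h^2\|\nabla(e^{\psi/h}u)\|_{L^2(\Omega)}^2 + \int_\Omega (f - |\nabla\psi|^2)\,e^{2\psi/h}u^2\,\d x.
\end{equation}
This is verified by integrating by parts in $\langle -h^2\Delta u,\, e^{2\psi/h}u\rangle$ (permitted because $e^{2\psi/h}u\in H^1_0(\Omega)$) and then expanding $|\nabla(e^{\psi/h}u)|^2$; the cross term $2h\int e^{2\psi/h}u\,\nabla u\cdot\nabla\psi$ appears in both sides and cancels. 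Since $P_\Omega$ is self-adjoint, I may take $u_h$ to be real.

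Next, fix $\epsilon'\in(0,1)$ and apply the identity with the weight $\psi_{\epsilon'} := (1-\epsilon')\,d(\cdot,U)$. Because $d(\cdot,U)$ is only Lipschitz, I first mollify it and pass to the limit using the uniform bound $|\nabla d(\cdot,U)|^2 \leq (f-E)^+$ from Eq.~\eq{eq51}. Substituting the eigenvalue equation $Pu_h = (E+\lambda(h))u_h$ and setting $v := e^{\psi_{\epsilon'}/h}u_h$ gives
\begin{equation}
h^2\|\nabla v\|^2 + \int_\Omega \bigl(f - E - |\nabla\psi_{\epsilon'}|^2 - \lambda(h)\bigr)v^2\,\d x = 0.
\end{equation}
Split $\Omega = \Omega_\eta \sqcup (\Omega\setminus\Omega_\eta)$ with $\Omega_\eta := \{f < E+\eta\}$ for a small $\eta>0$. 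On $\Omega\setminus\Omega_\eta$, Eq.~\eq{eq51} yields $f - E - |\nabla\psi_{\epsilon'}|^2 \geq \epsilon'(2-\epsilon')(f-E)\geq \epsilon'\eta$; combined with $\lambda(h)^+\to 0$ this integrand is $\geq \tfrac12\epsilon'\eta$ for $h$ small. On $\Omega_\eta$ the integrand is bounded below by some constant $-M$, while $\psi_{\epsilon'}\leq\delta(\eta)$ with $\delta(\eta)\to 0$ as $\eta\to 0$ (the Agmon distance from $\Omega_\eta$ to $U$ is controlled by $\sqrt{\eta}$ times a Euclidean length, which is itself small). Using $\|u_h\|_{L^2(\Omega)}\leq 1$, moving the $\Omega_\eta$ contribution to the right side produces
\begin{equation}
h^2\|\nabla v\|^2 + \tfrac12\epsilon'\eta\int_{\Omega\setminus\Omega_\eta}v^2\,\d x \leq M\,e^{2\delta(\eta)/h},
\end{equation}
which combined with the trivial bound on $\Omega_\eta$ gives $\|v\|_{L^2}^2 + h^2\|\nabla v\|_{L^2}^2 \leq C(\epsilon',\eta)\,e^{2\delta(\eta)/h}$.

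The final step converts this to the stated weight $d(\cdot,U)$. Writing $e^{d/h}u_h = e^{\epsilon' d/h}v$ and using $d(\cdot,U)\leq D := \sup_{\overline{\Omega}}d(\cdot,U) <\infty$, together with the chain rule $\nabla(e^{d/h}u_h) = \tfrac{\epsilon'}{h}(\nabla d)\,e^{\epsilon' d/h}v + e^{\epsilon' d/h}\nabla v$ and $|\nabla d|\leq \sqrt{\sup_\Omega f}$, I obtain
\begin{equation}
\|e^{d(\cdot,U)/h}u_h\|_{L^2} + \|\nabla(e^{d(\cdot,U)/h}u_h)\|_{L^2} \leq C(\epsilon',\eta)\,h^{-1}\,e^{(\epsilon' D + \delta(\eta))/h}.
\end{equation}
Given $\epsilon>0$, I choose $\epsilon' := \epsilon/(3D)$, then pick $\eta$ so small that $\delta(\eta) < \epsilon/3$, and finally absorb the factor $h^{-1}$ into $e^{\epsilon/(3h)}$ for $h$ sufficiently small, which yields the claimed bound with $C_\epsilon := C(\epsilon/(3D),\eta)$. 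The principal technical obstacle is the regularization step: one must mollify the Lipschitz weight $d(\cdot,U)$ in a way that preserves the almost-eikonal inequality $|\nabla\psi_{\epsilon'}|^2 \leq (1-\epsilon')^2(f-E)^+$ almost everywhere in the limit. This is a standard but delicate argument that relies on uniform Lipschitz control and dominated convergence; once it is in place, the rest of the proof is elementary.
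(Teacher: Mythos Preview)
Your proof is correct and follows essentially the same route as the paper's: both apply the Agmon (Lithner--Agmon) identity with the scaled weight $(1-\epsilon')d(\cdot,U)$, split $\Omega$ into a thin neighborhood $\{f<E+\eta\}$ of the well and its complement, use the eikonal inequality \eq{eq51} to get positivity of the potential term on the complement, and finally convert back to the full weight $d(\cdot,U)$ at the cost of a factor $e^{\epsilon' D/h}$. The only cosmetic differences are that the paper couples the two small parameters (taking $\epsilon'=\eta=\delta$) whereas you keep them independent, and you are more explicit than the paper about absorbing the stray $h^{-1}$ factor and about the mollification needed to make sense of $\nabla d(\cdot,U)$; the paper outsources the latter to the statement of its auxiliary Lemma~\ref{lem:agmonestimate}.
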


The proof of \lem{exdecay} uses the following mathematical result:
\begin{lemma}[Lithner-Agmon estimation, Theorem 1.1 of \cite{HS84} or Proposition 6.1 of \cite{DS99}]\label{lem:agmonestimate}
Let $M$ be a $C^{\infty}$ Riemannian complete manifold and $\Omega \subset M$ be bounded with $C^2$-boundary. Let $f \in C(\overline{\Omega};\mathbb{R})$ and $\phi$ a real valued Lipschitz function on $\overline{\Omega}$. Then, the gradient $\nabla \phi$ is
well defined in $L^{\infty}(\Omega)$ as the almost everywhere limit of
$\nabla (\chi_{\epsilon} * \phi)$, when $\epsilon \to 0$, where $\chi_{\epsilon}(x) = \frac{1}{\epsilon^n}\chi(\frac{x}{\epsilon}) \in C^{\infty}_0(B(0,\epsilon))$ is a standard mollifier (This definition of gradient is generally valid and aims to include the case of $\mathbb{R}^d$).
For every $u \in C^2(\overline{\Omega})$ satisfying $u|_{\partial \Omega} = 0$, we have
\begin{equation}
    h^2 \int_{\Omega} \|\nabla(e^{\phi/h} u(x)) \|^2_x \d x +
    \int_{\Omega} (f(x) - \|\nabla \phi(x)\|^2_x) e^{2\phi/h}|u|^2 \d x
    = \mathrm{Re} \int_{\Omega} e^{2\phi/h} Pu(x) \overline{u(x)} \d x.
\end{equation}
where $\d x$ is the Riemannian volume and $\|\cdot \|_x$ is the norm in $T_x M$.
\end{lemma}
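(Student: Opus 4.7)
The plan is to reduce the identity to a direct computation after the substitution $v := e^{\phi/h} u$, then handle the low regularity of $\phi$ by mollification.

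First, since $\phi$ is Lipschitz on $\overline{\Omega}$ and $\Omega$ is bounded, $e^{\phi/h}$ is bounded above and below, so $v \in C^2$-ish with $v|_{\partial\Omega}=0$ whenever $\phi$ is smooth; I would begin by assuming $\phi \in C^\infty(\overline{\Omega})$. A direct computation of $\Delta(e^{-\phi/h} v)$ via the product rule gives
\begin{equation*}
e^{\phi/h} P(e^{-\phi/h} v) \;=\; -h^2 \Delta v \;+\; 2h\,\nabla\phi \cdot \nabla v \;+\; h\,(\Delta \phi)\, v \;+\; \bigl(f - \|\nabla\phi\|^2\bigr) v,
\end{equation*}
so multiplying by $\bar v$, integrating, and taking the real part, the identity reduces to showing that
$\operatorname{Re}\int_\Omega\bigl[-h^2 \Delta v\cdot \bar v + 2h\,\nabla\phi\cdot\nabla v\cdot \bar v + h\,(\Delta\phi)|v|^2\bigr]\,dx = h^2\int_\Omega \|\nabla v\|^2\,dx.$

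Next I would handle the three terms. The first yields $h^2\int_\Omega \|\nabla v\|^2\,dx$ by integration by parts, using that $v = e^{\phi/h}u$ vanishes on $\partial\Omega$. For the remaining two, observe that $2\operatorname{Re}(\nabla v \cdot \bar v) = \nabla|v|^2$, so
\begin{equation*}
\operatorname{Re}\int_\Omega 2h\,\nabla\phi\cdot\nabla v\cdot \bar v\,dx = \int_\Omega h\,\nabla\phi\cdot\nabla|v|^2\,dx = -\int_\Omega h\,(\Delta\phi)|v|^2\,dx,
\end{equation*}
using integration by parts once more (with $|v|^2 = 0$ on $\partial\Omega$). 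This exactly cancels the $h\,(\Delta\phi)|v|^2$ contribution, completing the identity in the smooth case.

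The main obstacle is that $\phi$ is only Lipschitz, so $\nabla\phi\in L^\infty$ (defined a.e. as the mollifier limit prescribed in the lemma) but $\Delta\phi$ need not exist as a function. I would circumvent this by replacing $\phi$ with $\phi_\epsilon := \chi_\epsilon * \phi$ (extending $\phi$ to a Lipschitz function on a neighbourhood of $\overline{\Omega}$ first), applying the proven smooth identity to $\phi_\epsilon$, and then letting $\epsilon\to 0^+$. Since $\phi_\epsilon \to \phi$ uniformly on $\overline{\Omega}$ and $\nabla\phi_\epsilon \to \nabla\phi$ a.e. with a uniform $L^\infty$ bound, while $u, Pu \in C(\overline{\Omega})$ and $\Omega$ is bounded, all three non-$\Delta\phi$ terms pass to the limit by dominated convergence; the $\Delta\phi_\epsilon$ term never appears in the final identity because it has already cancelled, so no control on $\Delta\phi$ is needed. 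This delivers the Lithner-Agmon identity in full generality.
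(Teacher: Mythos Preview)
Your proposal is correct and follows the same route the paper indicates: the paper simply notes that the proof ``is mainly based on the Green formula,'' recording the identity $\int_\Omega \|\nabla g\|^2 = -\int_\Omega (\Delta g)\bar g$ for $g$ vanishing on $\partial\Omega$, without spelling out the substitution $v=e^{\phi/h}u$, the cancellation of the $\Delta\phi$ terms, or the mollification step. Your write-up fills in exactly these details in the expected way.
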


The proof of \lem{agmonestimate} is mainly based on the Green formula.
Namely, for any $C^2(\overline{\Omega})$-function $f$ satisfying $f|_{\partial \Omega} = 0$,
\begin{equation}
    \int_{\Omega} \|\nabla f\|_x^2 \d x = - \int_{\Omega} \Delta f \overline{f} \d x.
\end{equation}
Now, we turn back to the proof of \lem{exdecay}:

\begin{proof}[Proof of \lem{exdecay}]
Choose $\epsilon>0$ and use the identity in \lem{agmonestimate} with $f$ replaced by $f-(E+\lambda(h))$,
$\phi(x)=(1-\delta)d(x,U)$, $u$ replaced by $u_h$, and $P= -h^2\Delta+f-(E+\lambda(h))$:
\begin{align}
    h^2 \int_{\Omega} |\nabla(e^{\phi/h} u_h(x)) |^2 \d x +
    \int_{\Omega} (f(x) - E -\lambda(h) -|\nabla \phi(x)|^2) e^{2\phi/h}|u_h|^2 \d x
    = 0.
\end{align}
Here, $\delta$ will be chosen to be small depending on $\epsilon$.
Let
\begin{align}
    \Omega^+_{\delta} &=  \{x\mid x\in \Omega, f(x)\geq E+\delta \},\\
    \Omega^-_{\delta} &=  \{x\mid x\in \Omega, f(x)< E+\delta \},
\end{align}
we have
\begin{align}
    h^2 \int_{\Omega} |\nabla(e^{\phi/h} u_h(x)) |^2 \d x +
    \int_{\Omega^+_{\delta}} (f(x) - E -\lambda(h)- |\nabla \phi(x)|^2) e^{2\phi/h}|u_h|^2 \d x \nonumber\\
    \leq \sup_{x\in \Omega^-_{\delta}}|f(x) - E -\lambda(h)- |\nabla \phi(x)|^2| \int_{\Omega^-_{\delta}}  e^{2\phi/h}|u_h|^2 \d x.
\end{align}
We restrict $0\leq \delta \leq 1$ not to be large, and then could find some $C$ independent of $h$ and $\delta$, s.t.
\begin{align}
    h^2 \int_{\Omega} |\nabla(e^{\phi/h} u_h(x)) |^2 \d x +
    \int_{\Omega^+_{\delta}} (f(x) - E -\lambda(h)- |\nabla \phi(x)|^2) e^{2\phi/h}|u_h|^2 \d x
    \leq C \int_{\Omega^-_{\delta}}  e^{2\phi/h}|u_h|^2 \d x.
\end{align}
Use the inequality \eq{eq51}, we have on $\Omega^+_{\delta}$ that
\begin{align}
   f(x) - E -\lambda(h)- |\nabla \phi(x)|^2 \geq (1-(1-\delta)^2)(f-E) - \lambda(h)^+
   \geq \delta^2(2-\delta) - \lambda(h)^+.
\end{align}
We choose $h \in (0,h(\delta)]$ where $h(\delta)$ is determined by
\begin{align}
   \sup_{h\in (0,h(\delta)]}\lambda(h) \leq \delta^2.
\end{align}
Since $\delta$ will be controlled by $\epsilon$, $h$ depends on $\epsilon$.
Then, we obtain
\begin{align}
    h^2 \int_{\Omega} |\nabla(e^{\phi/h} u_h(x)) |^2 \d x +
    \delta^2 \int_{\Omega^+_{\delta}}  e^{2\phi/h}|u_h|^2 \d x
    \leq C \int_{\Omega^-_{\delta}}  e^{2\phi/h}|u_h|^2 \d x
\end{align}
and
\begin{align}
    h^2 \int_{\Omega} |\nabla(e^{\phi/h} u_h(x)) |^2 \d x +
    \delta^2 \int_{\Omega}  e^{2\phi/h}|u_h|^2 \d x
    \leq (C+1) \int_{\Omega^-_{\delta}}  e^{2\phi/h}|u_h|^2 \d x\leq (C+1) e^{2\sup_{x\in\Omega^-_{\delta}}\phi(x)/h}.
\end{align}
Define $k(\delta) = \sup_{x\in\Omega^-_{\delta}}\phi(x)$,
it is clear that $k(\delta) \to 0,~(\delta\to 0)$.
Now, we should replace $e^{\phi/h} u_h(x)$ by $e^{d(x,U)/h} u_h(x)$.
Observe that
\begin{align}
    h^2 \int_{\Omega} |\nabla(e^{d(x,U)/h} u_h(x)) |^2 \d x
    =h^2 \int_{\Omega} |\nabla(e^{\delta d(x,U)/h}e^{\phi/h} u_h) |^2 \d x
\end{align}
Let $K$ denote the maximum of $d(x,U)$ in $\Omega$, we can get
\begin{align}
    h^2 \int_{\Omega} |\nabla(e^{d(x,U)/h} u_h(x)) |^2 \d x
    \leq h^2 e^{2\delta K/h}\int_{\Omega} |\nabla(e^{\phi/h} u_h) |^2 \d x + \delta^2 \sup_{x\in \Omega}|f-E| e^{2\delta K/h}\int_{\Omega}  e^{2\phi/h}|u_h|^2 \d x
\end{align}
which leads to
\begin{align}
    h^2 \int_{\Omega} |\nabla(e^{d(x,U)/h} u_h(x)) |^2 \d x+\delta^2\int_{\Omega}e^{2d(x,U)/h}|u_h|^2 \d x
    \leq (1+C)(1+\sup_{x\in \Omega}|f-E|) e^{2\frac{K\delta + k(\delta)}{h}}.
\end{align}
At last, we choose $\delta$ s.t. $K\delta + k(\delta) \leq \epsilon$ and we can get \eq{eq54}.
\end{proof}
The above lemma leads to an intuitive result: the $L^2$ norm of $u_h$ concentrates in a local well. More specifically, we have
\begin{corollary}\label{cor:L2norm}
For each open neighborhood $\mathcal{V}$ of $U$ in $\Omega$,
\begin{align}
   \|u_h\|_{L^2(\mathcal{V})}=1+O(e^{-\epsilon/h}),\quad \mathrm{for} ~\epsilon>0.
\end{align}
\end{corollary}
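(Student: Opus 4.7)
The plan is to derive the corollary directly from the weighted exponential bound in \lem{exdecay} by controlling $\|u_h\|_{L^2(\Omega \setminus \mathcal{V})}$ and then invoking the normalization $\|u_h\|_{L^2(\Omega)} = 1$.

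The first step is to produce a positive lower bound on the Agmon distance outside $\mathcal{V}$. Since $U$ is compact and $\mathcal{V}$ is an open neighborhood of $U$, the set $\Omega \setminus \mathcal{V}$ is separated from $U$. Because $U = \{x\in\Omega : f(x)\leq E\}$, any path connecting a point $x\notin U$ to $U$ must traverse a region where $f>E$, so $d(x,U)>0$ for $x\notin U$. The Lipschitz continuity of $d(\cdot,U)$ together with the (relative) compactness of $\Omega \setminus \mathcal{V}$ then yields some $\delta>0$ with
\begin{equation}
d(x,U) \geq \delta \quad \text{for all } x \in \Omega \setminus \mathcal{V}.
\end{equation}

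Next, I would fix any $0<\epsilon<\delta$ and apply \lem{exdecay} to bound $\|e^{d(\cdot,U)/h}u_h\|_{L^2(\Omega)} \leq C_\epsilon e^{\epsilon/h}$. Restricting the integral to $\Omega \setminus \mathcal{V}$ and using the distance lower bound,
\begin{equation}
\int_{\Omega \setminus \mathcal{V}} |u_h|^2 \, \d x \;\leq\; e^{-2\delta/h} \int_{\Omega \setminus \mathcal{V}} e^{2d(x,U)/h} |u_h|^2 \, \d x \;\leq\; C_\epsilon^2 \, e^{-2(\delta-\epsilon)/h}.
\end{equation}
Thus $\|u_h\|_{L^2(\Omega\setminus \mathcal{V})}^2 = O(e^{-\epsilon'/h})$ for $\epsilon'=2(\delta-\epsilon)>0$.

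Finally, since $u_h$ is an $L^2$-normalized eigenfunction of the Dirichlet realization $P_\Omega$ (\defn{Dirichlet}), $\|u_h\|_{L^2(\mathcal{V})}^2 = 1 - \|u_h\|_{L^2(\Omega\setminus \mathcal{V})}^2 = 1 - O(e^{-\epsilon'/h})$, and taking the square root and absorbing the sign into the $O$-notation gives the claim after relabeling $\epsilon'$ as $\epsilon$. I do not anticipate a serious obstacle: the only subtlety is verifying that $d(x,U)>0$ off of $U$ and that this distance is bounded below on $\Omega\setminus \mathcal{V}$, which is essentially geometric and uses compactness plus the hypothesis that $U$ has Agmon diameter zero (so the Agmon-zero component coincides with $U$ itself).
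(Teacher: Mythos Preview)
Your proposal is correct and is precisely the natural derivation the paper has in mind: the paper states the corollary immediately after \lem{exdecay} without giving a separate proof, treating it as a direct consequence of the weighted $L^2$ bound, and your argument (lower-bound $d(\cdot,U)$ on $\Omega\setminus\mathcal{V}$ by compactness, insert into the estimate from \lem{exdecay}, then use normalization) is exactly how one fills in that implication.
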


\subsubsection{Interaction matrix}\label{append:interactionmatrix}
The main idea of this subsection is that,
modulo an exponentially small error, the spectrum of the Schr\"odinger operator in some interval $I(h)$ is the same as the spectrum of the direct sum of one-well Dirichlet realizations in $I(h)$.
In other words, low energy \emph{local} eigenfunctions
span a subspace close to that spanned by some low energy \emph{global} eigenfunctions.
And then, $P$ restricted in the low energy subspace
can be expressed as a matrix, which will be called the interaction matrix under the approximate local eigenstates.
The off-diagonal elements of the interaction matrix characterizes tunneling effects between local eigenstates of different wells.
We aim to calculate the interaction matrix explicitly. Technically, four important results will be introduced: \prop{dimEdimF} ensures that we can use local eigenstates to approximately span a low energy subspace; \prop{genPF} gives the most general formula of the interaction matrix; \prop{spePF} improve \prop{genPF} with a stronger assumption; and finally \prop{speW} provides explicitly tunneling effects between local ground states which we can use in the main text.

\begin{assumption}\label{assum:0-baseline}
Without loss of generality, we will take $E=0$ and the potential should satisfy
\begin{equation}
    \min f \leq 0 < \lim_{|x|\to \infty} f,
\end{equation}
where $0 < \lim_{|x|\to \infty} f$ for $M = \mathbb{R}^d.$
The region $V^{-1}((-\infty, 0])$ can be decomposed as
\begin{equation}
    V^{-1}((-\infty, 0]) = U_1 \cup U_2 \ldots \cup U_N,
\end{equation}
where $U_j$ are disjoint and compact.
Each set $U_j$ is called as a well.
\end{assumption}
Define $S_0$ as the minimal distance
between the different wells:
\begin{equation}
    S_0 = \min_{j\neq k} d(U_j,U_k) = \min_{j\neq k} \inf_{\substack{x_j\in U_j,\\x_k\in U_k}}d(x_j,x_k).
\end{equation}

To find the local eigenstates, we need to associate to each well a Dirichlet problem in an open set $M_j$ containing $U_j$.
For a small value $\eta >0$, let $B(U_j, \eta) = \{x \in M| d(x,U_j) \leq \eta\}$.
The open set $M_j$ is then defined as
\begin{equation}
    M_j^{(\eta)} = M\backslash \cup_{k\neq j} B(U_k,\eta),
    \label{eq:smallregion}
\end{equation}
if $M$ is compact.
For $M=\mathbb{R}^d$,
\begin{equation}
    M_j^{(\eta)} = (\mathbb{R}^d\backslash \cup_{k\neq j} B(U_k,\eta))\cap \mathring{B}(U_j,S),
\end{equation}
where $S> 2S_0$ is large, and for a set $B$, $\mathring{B}$ means the interior of $B$.
We simply use $M_j$ to denote $M_j^{(\eta)}$ when there is no ambiguation, but keep in mind that
everything depends on $\eta$.

Let $P_{M_j}$ denote the Dirichlet realization of $P$ in $M_j$ (\defn{Dirichlet}).
Consider a subset $J\subset (0,h_0]$ s.t. $0$ is an accumulation point of $J$, we set the interval $I(h)$ to be
\begin{align}
    I(h) = [\alpha(h),\beta(h)]~(h\in J),
\end{align}
and $I(h) \to \{ 0\}$ when $h\to 0$.
Let $a(h)$ to be a function of $h \in J$, and
\begin{align}
    |\log a(h)| = o(1/h).
\end{align}
\begin{definition}[Local eigenvalues and eigenstates]\label{defn:localeigen}
The eigenvalues of $P_{M_j}$ in $I(h)$ are $\mu_{j,1}, \mu_{j,2},\ldots,\mu_{j,m_j}$ whose corresponding orthonormal eigenstates (local eigenstates) are $\phi_{j,1}, \phi_{j,2},\ldots,\phi_{j,m_j}$.
\end{definition}
It is clear to use $\alpha = (j,k),~1\leq k\leq m_j$ to denote the states in different wells and put $j(\alpha) = j$ to know which well the $\alpha$ state is in.

If $\eta$ is large, the state $\phi_{\alpha}$ may have a large overlap with local eigenstates in other wells. We modify $\phi_{\alpha}$ in the following way.
\begin{definition}[Local eigenstates with cut-offs]\label{defn:localcutoff}
Let $\theta_j$ be a $C^{\infty}$ function, $\mathrm{Supp}\theta_j\subset B(U_j,2\eta)$, and $\theta_j = 1$ in $B(U_j, 3\eta/2)$. Then, we can define $\chi_j = 1-\sum_{k\neq j} \theta_k$ (for compact $M$) or $\chi_j = (1-\sum_{k\neq j} \theta_k)\chi^j_S$ where $\mathrm{Supp}\chi^j_S \subset B(U_j,S)$ and $\chi^j_S=1$ in $B(U_j,S-\eta)$ (for $M = \mathbb{R}^d$). The modified local eigenfunctions are
\begin{align}
    \psi_{\alpha} = \chi_{j(\alpha)}\phi_{\alpha}.
\end{align}
\end{definition}
For the cases where $\eta$ is small compared to $S_0$, $\psi_{\alpha}$ and $\phi_{\alpha}$ are nearly the same thing.

The following result ensures that we can use local eigenstates to approximately span a low energy subspace of $P$.

\begin{proposition}[Theorem 4.2.1 in \cite{Hel88}, Proposition 6.7 in \cite{DS99}]\label{prop:dimEdimF}
Let $\E_j$ be the subspace spanned by $\psi_{j,k}~(k=1,2,\ldots,m_j)$, $\E = \bigoplus_{j}\E_j$, and $\F$ the subspace spanned by eigenstates of $P$ whose eigenvalues are in $I(h),~h\in J$. Suppose that for $h\in J$,
$P$ and $P_{M_j} (\forall j)$ have no eigenvalue in $\big(\alpha(h)-2a(h),\alpha(h)\big) \cup \big(\beta(h), \beta(h)+2a(h)\big)$.
Then, we have $\mathrm{dim}\E = \mathrm{dim}\F$ for sufficiently small $h$. Equivalently, there exists a bijection $b$ from $\sigma(P)\cap I(h)$ onto $(\bigcup_j \sigma(P_{M_j})) \cap I(h)$. And for every $c<S_0-2\eta$,
we have $b(\lambda) - \lambda = O(e^{-c/h}),~\lambda\in \sigma(P)\cap I(h)$.
\end{proposition}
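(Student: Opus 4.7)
The plan is to prove $\dim\E = \dim\F$ by showing that the spectral projector onto $\F$ maps $\E$ nearly isometrically into $\F$, and conversely by using a partition-of-unity decomposition to show every eigenstate in $\F$ is well-approximated by a vector in $\E$. The bijection $b$ and the exponential bound on $b(\lambda)-\lambda$ will then come from a min-max / matrix-perturbation argument applied to $P|_{\F}$ expressed in the basis $\{\psi_\alpha\}$.

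First I would verify that each $\psi_\alpha=\chi_{j(\alpha)}\phi_\alpha$ is a good quasi-mode. Since $P\phi_\alpha=\mu_\alpha\phi_\alpha$ in $M_{j(\alpha)}$, a direct commutator computation gives $(P-\mu_\alpha)\psi_\alpha = -h^2[\Delta,\chi_{j(\alpha)}]\phi_\alpha$, which is supported in the region where $\nabla\chi_{j(\alpha)}\neq 0$, namely in the annular shells $B(U_k,2\eta)\setminus B(U_k,3\eta/2)$ for $k\neq j(\alpha)$ (plus, for $M=\R^d$, a shell near radius $S$). On each such shell $d(x,U_{j(\alpha)})\geq S_0-2\eta$, so the Agmon-type decay estimate (Lemma~\ref{lem:exdecay}) applied to $\phi_\alpha$ yields $\|(P-\mu_\alpha)\psi_\alpha\|=O(e^{-(S_0-2\eta-\epsilon)/h})$ for arbitrary $\epsilon>0$. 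The same estimate gives near-orthonormality $\langle\psi_\alpha,\psi_\beta\rangle = \delta_{\alpha\beta}+O(e^{-c/h})$ because the supports overlap only on exponentially-small tails, and it gives $\|\psi_\alpha\|=1+O(e^{-c/h})$ via Corollary~\ref{cor:L2norm}.

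Second, using the spectral gap hypothesis that $P$ has no eigenvalues in the $a(h)$-neighborhood of $\partial I(h)$, the standard functional-calculus bound on $(1-\Pi_\F)(P-\mu_\alpha)^{-1}$ gives
\begin{equation}
\|(1-\Pi_\F)\psi_\alpha\| \leq \frac{\|(P-\mu_\alpha)\psi_\alpha\|}{a(h)} = O(e^{-c/h}),
\end{equation}
since $|\log a(h)|=o(1/h)$. Hence the Gram matrix of $\{\Pi_\F\psi_\alpha\}$ equals $I+O(e^{-c/h})$, so $\Pi_\F|_\E$ is injective for small $h$ and $\dim\E \le \dim\F$.

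Third, for the reverse inequality I would apply the IMS localization formula with a quadratic partition of unity $\sum_j\chi_j^2 + \chi_\infty^2=1$ adapted to the wells (and, for $\R^d$, including a cut-off near infinity that uses $\liminf_{|x|\to\infty} f>0$). For any eigenstate $u$ of $P$ with eigenvalue $\lambda\in I(h)$, IMS gives $\sum_j\langle\chi_j u,(P-\lambda)\chi_j u\rangle = h^2\sum_j\|\nabla\chi_j u\|^2 + \langle \chi_\infty u,(P-\lambda)\chi_\infty u\rangle$; combined with Agmon decay of $u$ itself (which propagates from the classically forbidden regions where $f>0$), this shows $\chi_ju$ is approximately a Dirichlet eigenfunction of $P_{M_j}$ at energy near $\lambda$. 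Using the spectral-gap hypothesis for each $P_{M_j}$ on $\partial I(h)$, the corresponding Dirichlet spectral projector sends $\chi_j u$ into $\E_j$ modulo an $O(e^{-c/h})$ error, proving $u$ lies within $O(e^{-c/h})$ of $\E$ and hence $\dim\F\le\dim\E$.

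Finally, the bijection $b$ and the bound $b(\lambda)-\lambda=O(e^{-c/h})$ follows by representing $P|_\F$ in the almost-orthonormal basis $\{\Pi_\F\psi_\alpha\}$: the matrix is $\mathrm{diag}(\mu_\alpha) + O(e^{-c/h})$ after Gram-Schmidt, so Weyl/Hoffman-Wielandt gives an exponentially close pairing. The main obstacle I expect is the reverse direction $\dim\F\le\dim\E$: one must control the $\chi_\infty u$ term in the $\R^d$ case (handled by the non-trapping behavior at infinity built into the Agmon metric) and justify that IMS errors $h^2\|\nabla\chi_j\|_\infty^2$ are absorbed by the decay of $u$ in the shells where $\nabla\chi_j$ lives — this is exactly where the freedom to choose $c<S_0-2\eta$ is consumed, and is the reason the proposition only yields $c<S_0-2\eta$ rather than $c=S_0$.
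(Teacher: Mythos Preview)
The paper does not supply its own proof of this proposition; it is stated with attribution to Theorem~4.2.1 of \cite{Hel88} and Proposition~6.7 of \cite{DS99}, and the exposition moves on immediately to computing $P_{|\F}$ in the basis $\{e_\alpha\}$. Your outline is essentially the standard argument in those references: the quasi-mode estimate $(P-\mu_\alpha)\psi_\alpha = -h^2[\Delta,\chi_{j(\alpha)}]\phi_\alpha$ supported where $d(x,U_{j(\alpha)})\geq S_0-2\eta$, the resolvent bound using the $a(h)$-gap to get $\dim\E\leq\dim\F$, a symmetric cut-off argument for the reverse inequality, and a Weyl-type perturbation for the eigenvalue pairing. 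This is correct and matches the cited sources.

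One small comment on your Step~3: the standard presentation in \cite{Hel88} does the reverse inequality by the same mechanism as the forward one, i.e.\ by cutting a global eigenfunction $u$ with the $\chi_j$ and showing each $\chi_j u$ is a quasi-mode for $P_{M_j}$ via the commutator $[P,\chi_j]$ and Agmon decay of $u$ away from $f^{-1}((-\infty,0])$; the IMS formula you invoke is equivalent but slightly heavier machinery than needed. Either way the bookkeeping is the same and your identification of the $c<S_0-2\eta$ bottleneck with the location of $\mathrm{supp}\,\nabla\chi_j$ is exactly right.
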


The problem now is to compute the matrix $P_{|\F}$ in a convenient basis.
This basis is heavily related to the local eigenstates, s.t. the off-diagonal entries can capture tunneling effects.
Let $\pi_{\F}$ be the orthogonal projection onto $\F$, we introduce
\begin{align}
    v_{\alpha} = \pi_{\F} \psi_{\alpha}.
\end{align}
By \prop{dimEdimF}, we can get
\begin{align}
    v_{\alpha} -\psi_{\alpha} = \hat{O}(e^{-S_0/h})~~\mathrm{in}~ L^2(M).
\end{align}
Here the $\hat{O}$ notation is defined as:
\begin{definition}
For any function $f$, $f=\hat{O}(e^{-S_0/h})$ means that
for any $\epsilon>0$, there exists $\eta_0>0$ s.t. for $0<\eta<\eta_0$,
$f=O(e^{-S_0/h+\epsilon/h})$.
\end{definition}
Note that
\begin{align}
    \ip{v_{\alpha}}{v_{\beta}} = \ip{\psi_{\alpha}}{\psi_{\beta}} - \ip{v_{\alpha}-\psi_{\alpha}}{v_{\beta}-\psi_{\beta}},
\end{align}
we have
\begin{align}
    \ip{v_{\alpha}}{v_{\beta}} = \ip{\psi_{\alpha}}{\psi_{\beta}} + \hat{O}(e^{-2S_0/h}).
\end{align}
By the definition of $\psi_{\alpha}$ and the decay of $\phi_{\alpha}$,
it can be estimated that
\begin{align}
    \ip{\psi_{\alpha}}{\psi_{\beta}} =\left\{
    \begin{array}{lll}
     1 + \hat{O}(e^{-2S_0/h}),~\alpha=\beta,\\
    \hat{O}(e^{-2S_0/h}),~\alpha \neq \beta,~j(\alpha)=j(\beta),\\
    \hat{O}(e^{-S_0/h}),~\alpha \neq \beta,~j(\alpha)\neq j(\beta),
    \end{array}
    \right.
\end{align}
We define $r_{\alpha}$ by
\begin{align}
    P\psi_{\alpha} = \mu_{\alpha} \psi_{\alpha} + r_{\alpha},
    \label{eq:Ppsi}
\end{align}
obtaining that $r_{\alpha} = \hat{O}(e^{-S_0/h})$ in $L^2(M)$.
These estimations lead to
\begin{align}
    \bra{v_{\alpha}} P \ket{v_{\beta}} = \bra{\psi_{\alpha}} P \ket{\psi_{\beta}} + \hat{O}(e^{-2S_0/h}).
\end{align}
Using the fact that $P$ is self-adjoint, we have
\begin{align}
    \bra{\psi_{\alpha}} P \ket{\psi_{\beta}} = \frac{\mu_{\alpha} + \mu_{\beta}}{2}\ip{\psi_{\alpha}}{\psi_{\beta}} + \frac{1}{2} (\ip{r_{\alpha}}{\psi_{\beta}}+ \ip{\psi_{\alpha}}{r_{\beta}}).
\end{align}
Equation \eq{Ppsi} gives that $r_{\alpha} = [P, \chi_{j(\alpha)}]\phi_{\alpha}$
and then
\begin{align}
    \ip{\psi_{\alpha}}{r_{\beta}} &= h^2\int \chi_{j(\alpha)}(\phi_{\beta}
    \nabla\phi_{\alpha} - \phi_{\alpha}\nabla\phi_{\beta})\cdot \nabla\chi_{j(\beta)} \d x + h^2\int \nabla\chi_{j(\beta)} \cdot \nabla\chi_{j(\alpha)} \phi_{\beta}\phi_{\alpha} \d x \nonumber \\
    & =h^2\int \chi_{j(\alpha)}(\phi_{\beta}
    \nabla\phi_{\alpha} - \phi_{\alpha}\nabla\phi_{\beta})\cdot \nabla\chi_{j(\beta)} \d x + \hat{O}(e^{-2S_0/h}).
\end{align}
Define the matrix $T$ as
\begin{align}
    T_{\alpha\beta} =\left\{
    \begin{array}{ll}
     0,~\alpha=\beta,\\
    \ip{\psi_{\alpha}}{\psi_{\beta}},~\alpha \neq \beta.
    \end{array}
    \right.
\end{align}
The matrix $(\bra{v_{\alpha}} P \ket{v_{\beta}})$ can be given by
\begin{align}
    (\bra{v_{\alpha}} P \ket{v_{\beta}}) = \mathrm{diag} (\mu_{\alpha}) + \frac{1}{2} T \mathrm{diag}(\mu_{\alpha}) + \frac{1}{2} \mathrm{diag}(\mu_{\alpha}) T + (\hat{W}_{\alpha \beta}) + \hat{O}(e^{-2S_0/h}),
\end{align}
where
\begin{align}
    \hat{W}_{\alpha \beta}=\frac{1}{2} (W_{\alpha \beta} + W_{\beta \alpha}),
    \label{eq:hatW}
\end{align}
and
\begin{align}
    W_{\alpha \beta} = h^2\int \chi_{j(\alpha)}(\phi_{\beta}
    \nabla\phi_{\alpha} - \phi_{\alpha}\nabla\phi_{\beta})\cdot \nabla\chi_{j(\beta)} \d x.
    \label{eq:defnW}
\end{align}
It is more natural to compute the matrix of $P_{|\F}$ in an orthonormal basis.
\begin{definition}[Orthonormalized local eigenstates]\label{defn:ortholocal}
Let $\Xi = (\z v_{\alpha}|v_{\beta} \y)$, the following basis is orthonormal in $\F$,
\begin{equation}
    e_{\alpha} := \sum_{\beta} v_{\beta} (\Xi^{-1/2})_{\beta \alpha}.
\end{equation}
\end{definition}
The states $\psi_{\alpha}$ and $e_{\alpha}$ are defined for rigorous and general discussions. In practice, when we choose sufficiently small $\eta$ and $h$, $\phi_{\alpha}$, $\psi_{\alpha}$, and $e_{\alpha}$ would be the same.

The matrix of $P_{|\F}$ with respect to $\{e_{\alpha}\}$ is
\begin{align}
    \Xi^{-1/2} (\bra{v_{\alpha}} P \ket{v_{\beta}}) \Xi^{-1/2}.
    \label{eq:XimXi}
\end{align}
It is easy to verify that $\Xi^{-1/2} = I - T/2 + \hat{O}(e^{-2S_0/h})$,
giving the matrix \eq{XimXi} as
\begin{align}
    \mathrm{diag}(\mu_{\alpha}) + (\hat{W}_{\alpha \beta}) + \hat{O}(e^{-2S_0/h}).
\end{align}

To summarize, we obtain a very general result:
\begin{proposition}[Theorem 4.3.1 of \cite{Hel88}]\label{prop:genPF}
The matrix of $P_{|\F}$ in the basis $\{e_{\alpha} \}$ is given by
\begin{equation}
    \mathrm{diag}(\mu_{\alpha}) + (\hat{W}_{\alpha \beta}) + \hat{O}(e^{-2S_0/h}),
\end{equation}
where $\hat{W}_{\alpha \beta}$ is specified by \eq{hatW} and \eq{defnW}.
\end{proposition}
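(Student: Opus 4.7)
My plan is to do all the relevant computations at the level of the modified local eigenstates $\psi_\alpha = \chi_{j(\alpha)} \phi_\alpha$ and then transfer the result to the projected vectors $v_\alpha = \pi_\F \psi_\alpha$. The transfer is cheap because \prop{dimEdimF} yields $\psi_\alpha - v_\alpha = \hat{O}(e^{-S_0/h})$ in $L^2(M)$, and any inner product involving such a difference twice contributes only $\hat{O}(e^{-2S_0/h})$, which is absorbed into the stated remainder. After this reduction, there are three things to compute: the Gram matrix $\Xi = (\ip{v_\alpha}{v_\beta})$, the matrix $(\bra{v_\alpha} P \ket{v_\beta})$, and the effect of the orthonormalization $e_\alpha = \sum_\beta v_\beta (\Xi^{-1/2})_{\beta\alpha}$.

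For $\Xi$, I would expand $\ip{v_\alpha}{v_\beta} = \ip{\psi_\alpha}{\psi_\beta} + \hat{O}(e^{-2S_0/h})$ and control $\ip{\psi_\alpha}{\psi_\beta}$ using the pointwise Agmon decay of $\phi_\alpha$ (\lem{exdecay}, refined via \cor{L2norm}): diagonal entries are $1 + \hat{O}(e^{-2S_0/h})$; off-diagonal entries with $j(\alpha) = j(\beta)$ are $\hat{O}(e^{-2S_0/h})$ because $\phi_{\alpha}$ and $\phi_{\beta}$ are orthogonal in $M_{j(\alpha)}$ and the cut-off modification only affects a set at distance $\geq S_0$; and off-diagonal entries with $j(\alpha)\neq j(\beta)$ are $\hat{O}(e^{-S_0/h})$ via the Agmon decay applied at both wells. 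This gives $\Xi = I + T + \hat{O}(e^{-2S_0/h})$ with $T$ as defined in the text.

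For $(\bra{v_\alpha} P \ket{v_\beta})$, I would use $P\psi_\alpha = \mu_\alpha \psi_\alpha + r_\alpha$ with $r_\alpha = [P,\chi_{j(\alpha)}]\phi_\alpha$, and the self-adjointness of $P$, to write
\begin{equation}
\bra{\psi_\alpha} P \ket{\psi_\beta} = \tfrac{\mu_\alpha + \mu_\beta}{2}\ip{\psi_\alpha}{\psi_\beta} + \tfrac{1}{2}\bigl(\ip{r_\alpha}{\psi_\beta} + \ip{\psi_\alpha}{r_\beta}\bigr).
\end{equation}
The commutator $[P,\chi_{j(\beta)}] = -h^2(2\nabla\chi_{j(\beta)}\cdot\nabla + \Delta\chi_{j(\beta)})$ is supported in the annulus $B(U_{j(\beta)},2\eta)\setminus B(U_{j(\beta)},3\eta/2)$. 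Integration by parts turns $\ip{\psi_\alpha}{r_\beta}$ into the expression $W_{\alpha\beta}$ of \eq{defnW}, up to a term containing $\nabla\chi_{j(\alpha)}\cdot\nabla\chi_{j(\beta)}$ that is $\hat{O}(e^{-2S_0/h})$ since its support lies outside both wells. Symmetrizing $W_{\alpha\beta}$ and $W_{\beta\alpha}$ yields $\hat W_{\alpha\beta}$, and replacing $\psi$ by $v$ throughout costs another $\hat{O}(e^{-2S_0/h})$, producing
\begin{equation}
(\bra{v_\alpha} P \ket{v_\beta}) = \mathrm{diag}(\mu_\alpha) + \tfrac{1}{2}T\,\mathrm{diag}(\mu_\alpha) + \tfrac{1}{2}\mathrm{diag}(\mu_\alpha)T + (\hat W_{\alpha\beta}) + \hat{O}(e^{-2S_0/h}).
\end{equation}

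The last step is to compute $\Xi^{-1/2} = I - T/2 + \hat{O}(e^{-2S_0/h})$ via a Neumann series (legitimate because $T = \hat{O}(e^{-S_0/h})$ is small) and to conjugate the previous matrix by $\Xi^{-1/2}$. The cross terms $\tfrac{1}{2}T\,\mathrm{diag}(\mu_\alpha) + \tfrac{1}{2}\mathrm{diag}(\mu_\alpha)T$ are cancelled by the $-T/2$ pieces of $\Xi^{-1/2}$ up to $\hat{O}(e^{-2S_0/h})$, using crucially that $\mu_\alpha \in I(h) \to \{0\}$ (so a $\mu_\alpha$ factor multiplied by $\hat{O}(e^{-S_0/h})$ stays below $\hat{O}(e^{-S_0/h+\epsilon/h})$ and gets absorbed when multiplied by another $\hat{O}(e^{-S_0/h})$). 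The main obstacle will be the careful bookkeeping of these error terms: one has to verify that every cross contribution produced by expanding products of matrices $I$, $T$, $(\hat W_{\alpha\beta})$, and remainders collapses into the single $\hat{O}(e^{-2S_0/h})$ bound, which relies on the smallness of $\mu_\alpha$, on the geometric fact that $d(\mathrm{supp}\,\nabla\chi_{j(\alpha)},U_{j(\alpha)}) \geq 3\eta/2$ combined with $d(U_{j(\alpha)},U_{j(\beta)}) \geq S_0$, and on letting $\eta \to 0$ at the end to recover the $\hat{O}$ notation uniformly.
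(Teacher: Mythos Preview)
Your proposal is correct and follows essentially the same route as the paper: reduce from $v_\alpha$ to $\psi_\alpha$ using $v_\alpha-\psi_\alpha=\hat O(e^{-S_0/h})$ together with the projection identity $\ip{v_\alpha}{v_\beta}=\ip{\psi_\alpha}{\psi_\beta}-\ip{v_\alpha-\psi_\alpha}{v_\beta-\psi_\beta}$, symmetrize $\bra{\psi_\alpha}P\ket{\psi_\beta}$ via $P\psi_\alpha=\mu_\alpha\psi_\alpha+r_\alpha$ with $r_\alpha=[P,\chi_{j(\alpha)}]\phi_\alpha$, integrate by parts to isolate $W_{\alpha\beta}$, and finally conjugate by $\Xi^{-1/2}=I-T/2+\hat O(e^{-2S_0/h})$ so that the $\tfrac12(T\,\mathrm{diag}(\mu_\alpha)+\mathrm{diag}(\mu_\alpha)T)$ terms cancel. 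The only minor remark is that in the final bookkeeping boundedness of $\mu_\alpha$ already suffices to absorb cross terms like $T^2\,\mathrm{diag}(\mu_\alpha)$ into $\hat O(e^{-2S_0/h})$; the stronger fact $\mu_\alpha\to 0$ is not needed there.
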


The left task is to compute $W_{\alpha \beta}$ explicitly. We can easily know from \eq{defnW} that
\begin{align}
    W_{\alpha \beta} = \left\{
    \begin{array}{ll}
        \hat{O}(e^{-2S_0/h}),~j(\alpha) = j(\beta),\\
        \hat{O}(e^{-S_0/h}),~j(\alpha) \neq j(\beta).
    \end{array}
    \right.
\end{align}
Those $W_{\alpha \beta}$ when $j(\alpha) \neq j(\beta)$ characterize quantum tunneling which are our focuses now.
Let us discuss first when tunneling effects can be neglected.
For two local eigenstates $\phi_1$ and $\phi_2$ in different wells whose corresponding eigenvalues are $\mu_1$ and $\mu_2$, respectively, if
\begin{align}
    \exists \epsilon_0~(0<\epsilon_0<S_0)~\mathrm{s.t.}~|\mu_1 - \mu_2| \geq e^{-\epsilon_0/h},
\end{align}
the tunneling effect between the two states would not affect much and we may say the two states are non-resonant (see \cite{HS85} for more rigorous explanations). Intuitively, we may understand the fact through energy conservation. Initiating a system at the state $\phi_1$, the evolution governed by the Schr\"odinger equation keeps the total energy unchanged which is $\mu_1$. If $|\mu_2-\mu_1|$ is large, the system state can not overlap much with $\phi_2$.
The problem of tunneling appears when $|\mu_2-\mu_1|$
is exponentially small for the order $e^{S_0/h}$.
Thus, we may only consider $W_{\alpha\beta}$ for those $\alpha,~\beta$ satisfying the following assumption:
\begin{assumption}\label{assum:TunEnCon}
\begin{equation}
    \mu_{\alpha} - \mu_{\beta} = O(h^{\infty})e^{-\frac{S_0}{h}},
\end{equation}
\begin{equation}
    \phi_{\alpha} = O(h^{-N_0} e^{d(x,U_{j(\alpha)})/h}),
    \label{eq:phiaO}
\end{equation}
where $N_0 \in \mathbb{N}$ is a constant.
\end{assumption}
A sufficient condition to get \eq{phiaO} is that
$U_{j(\alpha)}$ is a point and
\begin{align}
    \min f = f(U_{j(\alpha)}) = 0,~\nabla f(U_{j(\alpha)}) =0,~ \nabla^2f(U_{j(\alpha)}) \geq 0,
    \label{eq:sufficient-hypo1}
\end{align}
\begin{align}
    \mu_{\alpha} \in [0, C_0 h],~\mathrm{for ~some}~C_0>0.
    \label{eq:sufficient-hypo2}
\end{align}
Under \assum{TunEnCon}, first note that
\begin{equation}
    W_{\alpha \beta} = O(h^{\infty})e^{-S_0/h},~\mathrm{if}~j(\alpha) = j(\beta)\vee d(U_{j(\alpha)},U_{j(\beta)}) > S_0,
\end{equation}
which is negligible as the principal terms are of the order $h^{-\nu} e^{-S_0/h}$ for some $\mu>0$.
The remained case to consider is
\begin{equation}
    j(\alpha) \neq j(\beta)\wedge d(U_{j(\alpha)},U_{j(\beta)}) = S_0.
\end{equation}
For simplicity, we introduce $\simeq$ as
\begin{align}
    a \simeq b \iff a - b = O(h^{\infty})e^{S_0/h}.
\end{align}

In order to calculate the integration, we need to introduce some geometry settings.
Let $E^{(a)}$ be an elliptic region, namely,
\begin{equation}
    E^{(a)} = \{x\in \mathbb{R}^d, d(U_{j(\alpha)}) + d(U_{j(\beta)}) \leq S_0 + a\}.
\end{equation}
The integration \eq{defnW} for $x\notin E^{(a)}$ gives a contribution $\simeq 0$. So later we can only consider integration near $E^{(a)}$.
The parameter $a$ is chosen to be sufficiently small so that
$E^{(a)} \subset M_{j(\alpha)} \cup M_{j(\beta)}$ and there is no other wells in $E^{(a)}$.
Then, there will always exist such an open set $\Omega$ with smooth boundary satisfying
\begin{equation}
    U_{j(\alpha)} \subset \Omega,~U_{j(\beta)}\cap \overline{\Omega} = \varnothing,~E^{(a)}\cap \overline{\Omega} \subset M_{j(\alpha)},
    ~E^{(a)}\cap \Omega^{\complement} \subset M_{j(\beta)}.
    \label{eq:georelation-elliptic}
\end{equation}
\begin{figure}
  \centerline{
  \includegraphics[width=0.6\textwidth]{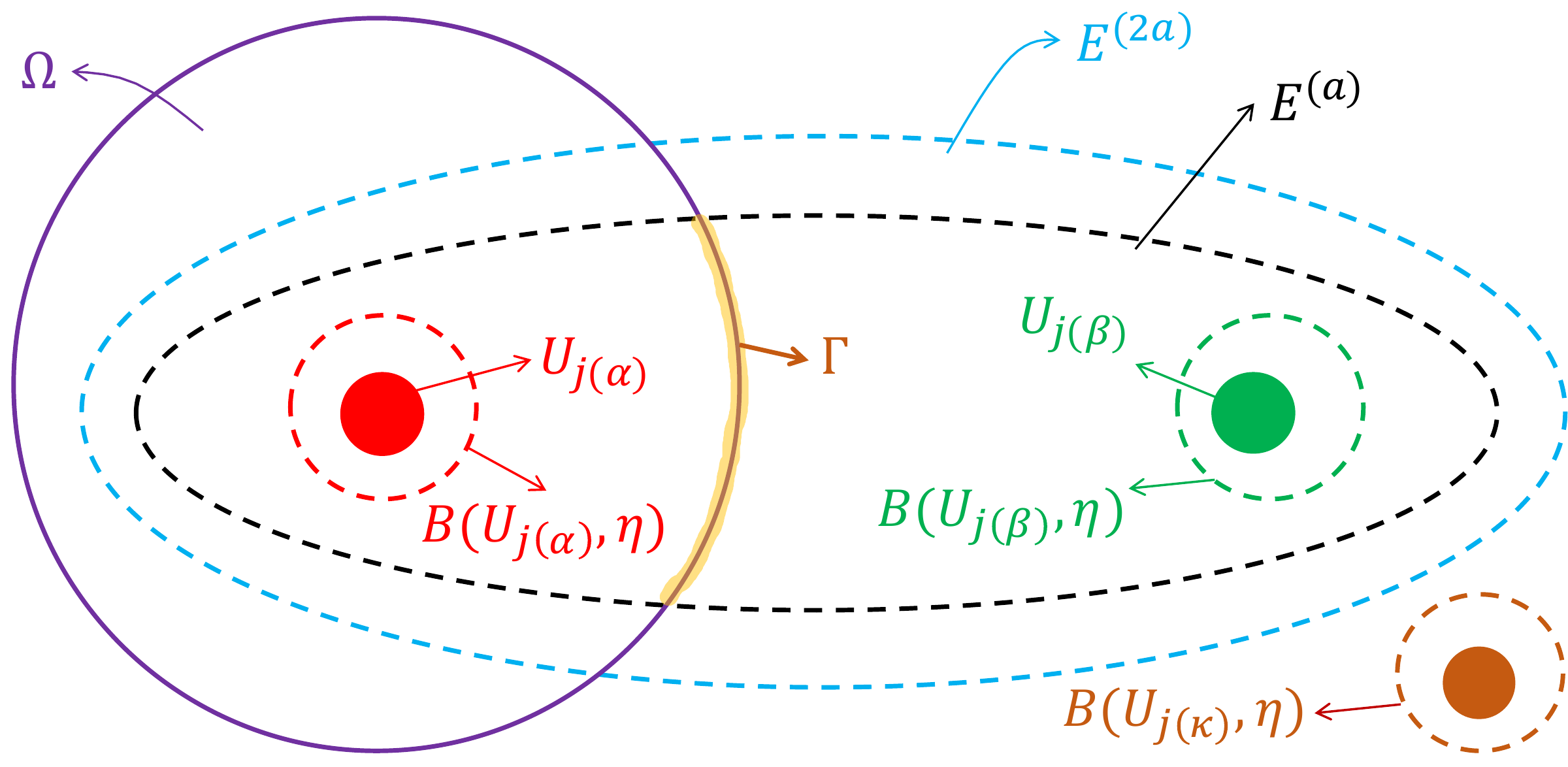}}
    \caption{Illustration of the geometric relations in Eq.~\eq{georelation-elliptic}.
    }
\label{fig:ellipticregion}
\end{figure}
Let $\Gamma = \partial \Omega \cap E^{(a)}$, which is compact in $M_{j(\alpha)}\cap M_{j(\beta)}$.
Then, we make $E^{(2a)}$ satisfy the properties as well by choosing proper $a$.
To simplify the calculation, we introduce $\chi_{E^{(a)}}\in C^{\infty}_0 (M)$ which equal to $1$ in $E^{(a)}$ and has support in
$E^{(2a)}$. The following relations hold
\begin{align}
    \mathrm{Supp} \chi_{E^{(a)}} \cap \overline{\Omega} \subset M_{j(\alpha)},~\mathrm{Supp} \chi_{E^{(a)}} \cap \Omega^{\complement} \subset M_{j(\beta)}
\end{align}
Therefore, we know $\chi_{j(\alpha)} = 1$ on $\mathrm{Supp} \chi_{E^{(a)}} \cap \overline{\Omega}$ and $\chi_{j(\beta)} = 1$ on $\mathrm{Supp} \chi_{E^{(a)}} \cap \Omega^{\complement}$, leading to
the second $\simeq$ in the following equation:
\begin{align}
    W_{\alpha\beta} &\simeq h^2\int \chi_{E^{(a)}} \chi_{j(\alpha)}(\phi_{\beta}
    \nabla\phi_{\alpha} - \phi_{\alpha}\nabla\phi_{\beta})\cdot \nabla\chi_{j(\beta)} \d x\\
    & \simeq h^2\int_{\Omega} \chi_{E^{(a)}} (\phi_{\beta}
    \nabla\phi_{\alpha} - \phi_{\alpha}\nabla\phi_{\beta})\cdot \nabla\chi_{j(\beta)} \d x.
\end{align}
Use the Green formula, we have
\begin{equation}
    W_{\alpha \beta} \simeq
    h^2\int_{\partial \Omega} \chi_{E^{(a)}}\chi_{j(\beta)} (\phi_{\beta}
    \nabla\phi_{\alpha} - \phi_{\alpha}\nabla\phi_{\beta})\cdot (-\boldsymbol{n})
    dS_{\Gamma}
     - h^2\int_{\Omega} \nabla \cdot ( \chi_{E^{(a)}} (\phi_{\beta}
    \nabla\phi_{\alpha} - \phi_{\alpha}\nabla\phi_{\beta}) ) \chi_{j(\beta)} \d x,
    \label{eq:green}
\end{equation}
where $\boldsymbol{n}$ is the unit normal on $\partial \Omega$ riented into $\Omega$ and $dS_{\Gamma}$ is the induced measure on $\partial \Omega$.
With \assum{TunEnCon}, the second term in the right hand side of \eq{green} is of $O(h^{\infty}e^{-S_0/h})$, which means
\begin{equation}
    W_{\alpha \beta} \simeq
    h^2\int_{\Gamma} \left(\phi_{\alpha}\frac{\partial \phi_{\beta}}{\partial n}
    -\phi_{\beta}\frac{\partial \phi_{\alpha}}{\partial n}\right)
    dS_{\Gamma}.
\end{equation}
And, we can have:
\begin{proposition}[Theorem 4.3.4 of \cite{Hel88}]\label{prop:spePF}
By adding \assum{TunEnCon}, \prop{genPF} can be improved to the following:
the matrix $P_{|\F}$ in the basis $\{e_{\alpha} \}$ is given by
\begin{equation}
    \mathrm{diag}(\mu_{\alpha}) + \widetilde{W}_{\alpha \beta} + O(h^{\infty})e^{S_0/h},
\end{equation}
with
\begin{align}
    \widetilde{W}_{\alpha \beta} = \left \{
    \begin{array}{ll}
        h^2\int_{\Gamma} \left(\phi_{\alpha}\frac{\partial \phi_{\beta}}{\partial n}
    -\phi_{\beta}\frac{\partial \phi_{\alpha}}{\partial n}\right)
    dS_{\Gamma},~\mathrm{if}~d(U_{j(\alpha)},U_{j(\beta)}) = S_0,\\
        0,~\mathrm{otherwise}.
    \end{array} \right.
    \label{eq:tildeW}
\end{align}
\end{proposition}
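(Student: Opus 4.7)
My plan is to upgrade Proposition~\ref{prop:genPF} by refining the entries of the interaction matrix $\hat{W}_{\alpha\beta}$ using Assumption~\ref{assum:TunEnCon}. The starting point is the representation
\begin{equation*}
W_{\alpha\beta} = h^2\int \chi_{j(\alpha)}\bigl(\phi_\beta\nabla\phi_\alpha - \phi_\alpha\nabla\phi_\beta\bigr)\cdot\nabla\chi_{j(\beta)}\,\d x
\end{equation*}
from \eq{defnW}, which I would analyze separately according to whether the pair $(\alpha,\beta)$ is resonant. For $j(\alpha)=j(\beta)$ or $d(U_{j(\alpha)},U_{j(\beta)})>S_0$, the integrand is supported either on the cutoff annulus $B(U_k,2\eta)\setminus B(U_k,3\eta/2)$ or along a chain whose Agmon length exceeds $S_0$. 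Plugging the pointwise bound \eq{phiaO} and \lem{exdecay} into the integrand gives a factor of $e^{-(d(x,U_{j(\alpha)})+d(x,U_{j(\beta)}))/h}$, which is at most $e^{-(S_0+a)/h}$ on the support for some fixed $a>0$ after the $\eta$-dependent losses are absorbed. Taking $\eta$ small enough relative to $a$ drives the estimate to $O(h^\infty)e^{-S_0/h}$, yielding the ``otherwise'' branch of \eq{tildeW}.

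The interesting case is $j(\alpha)\neq j(\beta)$ with $d(U_{j(\alpha)},U_{j(\beta)})=S_0$. Here I would insert the elliptic cutoff $\chi_{E^{(a)}}$ already set up in the excerpt: on $M\setminus E^{(2a)}$ the inequality $d(x,U_{j(\alpha)})+d(x,U_{j(\beta)})\geq S_0+2a$ combined with the Agmon decay gives an $O(h^\infty)e^{-S_0/h}$ error, so the integral may be localized into $E^{(2a)}$. Using the geometry \eq{georelation-elliptic}, which arranges $\chi_{j(\alpha)}=1$ on $\mathrm{Supp}\,\chi_{E^{(a)}}\cap\overline{\Omega}$ and $\chi_{j(\beta)}=1$ on $\mathrm{Supp}\,\chi_{E^{(a)}}\cap\Omega^{\complement}$, the domain of integration collapses to $\Omega$, and a divergence-theorem computation (the Green identity applied in the excerpt) converts $W_{\alpha\beta}$ to the boundary integral
\begin{equation*}
h^2\int_{\partial\Omega}\chi_{E^{(a)}}\bigl(\phi_\alpha\,\partial_n\phi_\beta - \phi_\beta\,\partial_n\phi_\alpha\bigr)\,\d S_\Gamma
\;-\;h^2\int_\Omega \nabla\!\cdot\!\bigl(\chi_{E^{(a)}}(\phi_\beta\nabla\phi_\alpha-\phi_\alpha\nabla\phi_\beta)\bigr)\d x .
\end{equation*}
The volume term expands to $\chi_{E^{(a)}}(\phi_\beta\Delta\phi_\alpha-\phi_\alpha\Delta\phi_\beta)$ plus derivatives of $\chi_{E^{(a)}}$ supported in $E^{(2a)}\setminus E^{(a)}$. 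Inside $M_{j(\alpha)}\cap M_{j(\beta)}$ one has $-h^2\Delta\phi_\gamma = (\mu_\gamma - f)\phi_\gamma$, so $\phi_\beta\Delta\phi_\alpha-\phi_\alpha\Delta\phi_\beta = h^{-2}(\mu_\beta-\mu_\alpha)\phi_\alpha\phi_\beta$, and Assumption~\ref{assum:TunEnCon} bounds $\mu_\alpha-\mu_\beta$ by $O(h^\infty)e^{-S_0/h}$ while Agmon's bound keeps $\|\phi_\alpha\phi_\beta\|_{L^1(E^{(2a)})}=O(1)\,e^{-S_0/h+\epsilon/h}$. The cutoff-derivative piece is again handled by Agmon on the annulus $E^{(2a)}\setminus E^{(a)}$. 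After shrinking $\eta$ the entire volume contribution becomes $O(h^\infty)e^{-S_0/h}$, and restricting the boundary integral to $\Gamma=\partial\Omega\cap E^{(a)}$ produces exactly the formula for $\widetilde{W}_{\alpha\beta}$ stated in the proposition.

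To close the argument I would convert $W_{\alpha\beta}$ back to the symmetrized entry $\hat{W}_{\alpha\beta}=\tfrac12(W_{\alpha\beta}+W_{\beta\alpha})$: self-adjointness of $P$ together with the near-degeneracy hypothesis forces $W_{\alpha\beta}=W_{\beta\alpha}+O(h^\infty)e^{-S_0/h}$ (the difference equals $\tfrac12(\mu_\alpha-\mu_\beta)\ip{\psi_\alpha}{\psi_\beta}$ plus controlled $r_\gamma$-terms visible in the derivation preceding \prop{genPF}), so substituting gives $\hat{W}_{\alpha\beta}=\widetilde{W}_{\alpha\beta}+O(h^\infty)e^{-S_0/h}$, which combines with the $\hat{O}(e^{-2S_0/h})$ remainder of \prop{genPF} to yield the claim. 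The main technical obstacle is the volume-term estimate: the naive Agmon bound only delivers $O(e^{-S_0/h})$, and one must use \emph{both} halves of Assumption~\ref{assum:TunEnCon} simultaneously, together with careful uniformity of the implied constants as $\eta\to 0$, to gain the extra $h^\infty$ factor. Keeping the $\hat{O}$-versus-$O$ bookkeeping straight through the cutoff manipulations is what makes the argument delicate.
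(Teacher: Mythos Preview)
Your proposal is correct and follows essentially the same route as the paper: localize $W_{\alpha\beta}$ to the elliptic region via the cutoff $\chi_{E^{(a)}}$, use the geometric relations \eq{georelation-elliptic} to drop $\chi_{j(\alpha)}$ and $\chi_{j(\beta)}$, apply Green's formula on $\Omega$, and kill the volume term using $\mu_\alpha-\mu_\beta=O(h^\infty)e^{-S_0/h}$ together with the Agmon decay on the annulus $E^{(2a)}\setminus E^{(a)}$. Your explicit expansion of $\phi_\beta\Delta\phi_\alpha-\phi_\alpha\Delta\phi_\beta$ via the eigenvalue equations and your symmetrization remark are slightly more detailed than what the paper writes (it simply asserts the volume term is $O(h^\infty)e^{-S_0/h}$ by \assum{TunEnCon}), but the mechanism is identical.
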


Now, we just need the knowledge of $\phi_{\alpha}$ in a neighborhood of $\Gamma$ to get the explicit form of $P_{|\F}$.
WKB method in \append{WKB} can be used to estimate $\phi_{\alpha}$ in the bottom (local ground states),
whose validity is secured by
\begin{assumption}\label{assum:3}
\begin{equation}
    0 = \min f < \lim_{|x|\to \infty} f,
\end{equation}
\begin{equation}
    f^{-1}(0) = U_1 \cup U_2 \ldots \cup U_N,
\end{equation}
\begin{equation}
    U_j = {x_j}~\mathrm{is~a~point,}~\mathrm{and}~\nabla^2V(x_j) > 0~\mathrm{for}~j=1,\ldots,N.
\end{equation}
\end{assumption}
That is, local minima are global minima.
We choose $I(h)$ to be small enough such that
there is a integer $N^+$ and for $j = 1,\ldots,N^+$,
$P_{M_j}$ has exactly one eigenvalue in $I(h)$ but for
$j = N^+ + 1,\ldots,N$, $P_{M_j}$ has no eigenvalue near
$I(h)$.
Resonance happens between wells $\{x_j\}$ for $j =1,\ldots,N^+ $.
We now just have to compute $\widetilde{W}_{jk}$ for $j\neq k$, $d(x_j,x_k) = S_0$ and $j,k\in \{1,2,\ldots,N^+ \}$.

We know from \lem{wkb} that the local eigenstates have the form
\begin{equation}
    \phi_j (x) = h^{-d/4}a^{(j)}(x)e^{-d(x,x_j)/h},
\end{equation}
and satisfy
\begin{equation}
    (P - E^{(j)})\phi_j (x) = O(h^{\infty})e^{-d(x,x_j)/h}.
\end{equation}
In the semi-classical analysis, the function $a^{(j)}(x)$ can be expanded as
\begin{equation}
    a^{(j)}(x) \approx \sum_{k=0}^{\infty} a_k^{(j)}(x) h^k.
\end{equation}
Insert this form to \eq{tildeW}, we have
\begin{equation}
    \widetilde{W}_{jk} = -h^{1-d/2}\int_{\Gamma_{jk}} \left[a^{(j)}a^{(k)}
    \Big(\frac{\partial d(x,x_k)}{\partial n}
    -\frac{\partial d(x,x_j)}{\partial n}\Big) + O(h)\right]e^{-\frac{d_{jk}(x)}{h}}
    dS_{\Gamma_{jk}},
    \label{eq:explicitW}
\end{equation}
where $d_{jk}(x) = d(x,x_j)+d(x,x_k)$.

\begin{assumption}\label{assum:4}
There are a finite number of geodesics $\gamma_{jk}^{(l)}$
of length $S_0$ joining $x_j$ and $x_k$, where $l \in \Lambda_{jk}$ is the index denoting which geodesic it is.
Let $\Gamma_{jk}^{(l)} \subset \Omega_{j}^{\complement} \cap \Omega_{k}$ be a smooth hypersurface intersecting
$\gamma_{jk}^{(l)}$ such that there would be a point $x_{jk}^{(l)} \in \gamma_{jk}^{(l)} \cap \Omega_{j} \cap \Omega_{k}$ being the only point in $\overline{\Gamma}_{jk}^{(l)} \cap \gamma_{jk}^{(l)}$.
\end{assumption}

\begin{proposition}[Theorem 4.4.6 and 4.4.7 in \cite{Hel88}]\label{prop:speW}
For $j,k\in \{1,2,\ldots,N^+ \}$ and $d(x_j,x_k) = S_0$,
there is a constant $C>0$ such that
\begin{equation}
    \frac{1}{C}h^{1/2} \leq -\widetilde{W}_{jk}e^{S_0/h}
    \leq C h^{1-d/2}.
\end{equation}
More specifically, we have
\begin{equation}
    \widetilde{W}_{jk} = -h^{1/2}
    (\sum_{m}b^{(m)}_{jk}h^m)e^{-S_0/h},
\end{equation}
with
\begin{equation}
    b^{(0)}_{jk} = 2(2\pi)^{\frac{d-1}{2}} \sum_{l\in \Lambda_{jk}}
    \sqrt{\frac{f(x_{jk}^{(l)})}{\mathrm{det}\Big(\nabla^2 d_{jk}(x_{jk}^{(l)})\Big)}}a^{(j)}_0(x_{jk}^{(l)})a^{(k)}_0(x_{jk}^{(l)}).
\end{equation}
\end{proposition}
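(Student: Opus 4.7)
The plan is to evaluate the boundary integral \eq{explicitW} by Laplace's method. The exponential weight $e^{-d_{jk}(x)/h}$ localizes, as $h\to 0$, at the minimizers of the phase $d_{jk}$ on $\Gamma_{jk}$; by \assum{4} these are precisely the transverse intersection points $x_{jk}^{(l)}$ of the minimal Agmon geodesics $\gamma_{jk}^{(l)}$ with the hypersurfaces $\Gamma_{jk}^{(l)}$, and $d_{jk}(x_{jk}^{(l)}) = S_0$. Away from any fixed neighborhood of $\{x_{jk}^{(l)}\}_l$ one has $d_{jk}(x) \geq S_0 + c$ for some $c>0$, so the complementary contribution is $O(h^\infty) e^{-S_0/h}$ and can be absorbed into the remainder.

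The first substantive step is a geometric reduction at each $x_{jk}^{(l)}$. The eikonal equation $\|\nabla d(\cdot,x_j)\|^2 = f$, established in the proof of \lem{wkb}, implies that at $x_{jk}^{(l)}$ the gradient $\nabla d(\cdot,x_j)$ is tangent to $\gamma_{jk}^{(l)}$ with magnitude $\sqrt{f(x_{jk}^{(l)})}$ pointing away from $U_j$, and $\nabla d(\cdot,x_k)$ is tangent to the same geodesic with equal magnitude but opposite sense. Consequently $\nabla d_{jk}(x_{jk}^{(l)}) = 0$ (so $x_{jk}^{(l)}$ is a critical point of $d_{jk}$ on $\Gamma_{jk}^{(l)}$), and once the normal $n$ is oriented consistently with the convention of \prop{spePF},
\begin{equation*}
\frac{\partial d(x,x_k)}{\partial n} - \frac{\partial d(x,x_j)}{\partial n}\bigg|_{x=x_{jk}^{(l)}} = 2\sqrt{f(x_{jk}^{(l)})}.
\end{equation*}
The tangential Hessian of $d_{jk}$ at $x_{jk}^{(l)}$ is positive definite (this is the non-focality of a strictly minimizing geodesic at its midpoint, obtained from the second variation formula for arclength in the Agmon metric), and by the transversality in \assum{4} its determinant coincides with $\det(\nabla^2 d_{jk}(x_{jk}^{(l)}))$ of the statement.

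Next I substitute the WKB expansion $\phi_j = h^{-d/4} a^{(j)}(x) e^{-d(x,x_j)/h}$ from \lem{wkb} (pointwise validity on $\Gamma_{jk}^{(l)}$ is ensured by the $L^2$-decay of \lem{exdecay} combined with elliptic regularity, since both states satisfy an eigenvalue equation with smooth coefficients). Fixing local coordinates on $\Gamma_{jk}^{(l)}$ centered at $x_{jk}^{(l)}$, Taylor-expanding $d_{jk}$ to second order, freezing the smooth amplitudes at $x_{jk}^{(l)}$, and extending the resulting Gaussian integral to $\mathbb{R}^{d-1}$ at the cost of $O(h^\infty)e^{-S_0/h}$, the standard Laplace identity
\begin{equation*}
\int_{\mathbb{R}^{d-1}} e^{-\frac{1}{2h}\langle y, A y\rangle}\, dy = (2\pi h)^{(d-1)/2}(\det A)^{-1/2}
\end{equation*}
combined with the $h^{1-d/2}$ prefactor and the factor $2\sqrt{f(x_{jk}^{(l)})}$ above yields, after summing over $l \in \Lambda_{jk}$, the leading coefficient $b^{(0)}_{jk}$ exactly as stated. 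The full series $\sum_m b^{(m)}_{jk} h^m$ then comes from organizing the $O(h)$ error in \eq{explicitW}, the formal series $a^{(j)} = \sum_m a^{(j)}_m h^m$, and the higher Taylor coefficients of the phase $d_{jk}$, and invoking the all-orders Laplace asymptotic. Positivity of $b^{(0)}_{jk}$ is immediate since $f(x_{jk}^{(l)}) > 0$ (the saddle lies strictly between the wells) and $a^{(j)}_0, a^{(k)}_0 > 0$ (from the initial condition \eq{178} and the positivity-preserving transport equation \eq{transport}), which gives the two-sided bound $C^{-1} h^{1/2} \leq -\widetilde{W}_{jk} e^{S_0/h} \leq C h^{1-d/2}$; the upper bound accommodates the degenerate case in which more geodesics accumulate, while the lower bound reflects the single-geodesic Laplace prediction.

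The main obstacle will be the geometric step: proving non-focality of the minimal Agmon geodesic $\gamma_{jk}^{(l)}$ and identifying the restricted tangential Hessian of $d_{jk}$ with the full $\det(\nabla^2 d_{jk})$ in the statement. This requires a careful second-variation argument for the Agmon metric $f\, dx^2$ (noting that $f$ vanishes at $U_j, U_k$, so the endpoints are singular and one must argue via regularized endpoints or via the eikonal equation globally along $\gamma_{jk}^{(l)}$). A secondary technical point is promoting the WKB estimate for $\phi_j$ from the $L^2$-control in \lem{exdecay} to the uniform pointwise control on $\Gamma_{jk}^{(l)}$ needed to freeze amplitudes; for this I would use Sobolev embedding together with bootstrap on the eigenvalue equation $P\phi_j = \mu_j \phi_j$, exploiting that $\Gamma_{jk}^{(l)}$ lies in a region where $f$ is strictly positive so that $P$ is elliptic with no degeneracy.
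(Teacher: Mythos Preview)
Your proposal is correct and follows essentially the same route as the paper: localize the integral \eq{explicitW} near the points $x_{jk}^{(l)}$, Taylor-expand the phase $d_{jk}$ to second order, freeze the amplitudes, and evaluate the resulting $(d-1)$-dimensional Gaussian. You are more explicit than the paper about why the normal-derivative term equals $2\sqrt{f(x_{jk}^{(l)})}$ (via the eikonal equation) and about the technical upgrades (pointwise WKB control, nondegeneracy of the tangential Hessian), whereas the paper simply asserts these and computes.
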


\begin{proof}
We would use the semi-classical hypothesis to
calculate the leading term of \eq{explicitW}.
Note that only regions near $x^{(l)_{jk}},~l \in \Lambda_{jk}$ have contribution to the leading term:
\begin{align}
    \widetilde{W}_{jk} = -h^{1-d/2}e^{-\frac{S_0}{h}}\sum_{l\in \Lambda_{jk}}\int_{B_l(h)} \left[a_0^{(j)}(x_{jk}^{(l)})a^{(k)}_0(x_{jk}^{(l)})
    \Big(\frac{\partial d(x,x_k)}{\partial n}
    -\frac{\partial d(x,x_j)}{\partial n}\Big)+ O(h) \right]e^{-\frac{d_{jk}(x)-S_0}{h}}
    dS_{\Gamma_{jk}},
\end{align}
where $B_l(h) \subset \Gamma_{jk}$ is a neighborhood of $x^{(l)}_{jk}$. We can set $B_l(h) \to \{ x^{(l)}_{jk}\},~(h\to 0)$.
Recall that $x^{(l)}_{jk}$ is a minimum of $d_{jk}(x)$ in $\Gamma_{jk}$,
in the neighborhood of $x^{(l)}_{jk}$, we have
\begin{align}
   e^{-\frac{d_{jk}(x)-S_0}{h}} = e^{-\frac{X\trans\nabla^2d_{jk}(x_{jk}^{(l)})X}{2h}} + \mathrm{h.o.t},
\end{align}
where $X = \{x_1,\ldots,x_{d-1}\}$ is the local coordinate of $\Gamma_{jk}$ centered in $x_{jk}^{(l)}$ and h.o.t denotes higher order terms.
Examine the integral near $x_{jk}^{(l)}$:
\begin{align}
    \int_{B_l(h)}
    \Big(\frac{\partial d(x,x_k)}{\partial n}
    -\frac{\partial d(x,x_j)}{\partial n}\Big) e^{-\frac{d_{jk}(x)-S_0}{h}}
    dS_{\Gamma_{jk}}
    &=
    \int
    2\sqrt{f(x_{jk}^{(l)})} e^{-\frac{X\trans\nabla^2d_{jk}(x_{jk}^{(l)})X}{2h}}
    \d x_1\cdots \d x_{d-1}
    + \mathrm{h.o.t}
    \nonumber\\
    &=2\sqrt{f(x_{jk}^{(l)})} \sqrt{\frac{(2\pi h)^{d-1}}{\mathrm{det}\Big(\nabla^2d_{jk}(x_{jk}^{(l)})\Big)}}
    + \mathrm{h.o.t}.
\end{align}
So that we have
\begin{align}
    \widetilde{W}_{jk} = -h^{1/2}e^{-\frac{S_0}{h}}\sum_{l\in \Lambda_{jk}} 2(2\pi )^{\frac{d-1}{2}}a_0^{(j)}(x_{jk}^{(l)})a^{(k)}_0(x_{jk}^{(l)}) \sqrt{\frac{f(x_{jk}^{(l)})}{\mathrm{det}\Big(\nabla^2d_{jk}(x_{jk}^{(l)})\Big)}}
    + \mathrm{h.o.t}.
\end{align}
\end{proof}
\begin{theorem}[Energy gap]\label{thm:energygap}
Assume the assumptions for \prop{speW} are satisfied and $N^+ = N$.
The minimal energy gap $\Delta E$ of $P_{|\F}$, i.e., the minimal absolute difference between unequal eigenvalues of $P_{|\F}$,
is given by $\Delta E = \sqrt{h}(b + O(h))e^{-\frac{S_0}{h}}$
where $b>0$ is a constant that depends only on the potential $f$.
\end{theorem}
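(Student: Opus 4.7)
The plan is to combine the structural decomposition from \prop{spePF} with the leading-order tunneling formula from \prop{speW} to reduce the spectral problem on $P_{|\F}$ to one on an $h$-independent symmetric matrix $W_0$, and then invoke Weyl's perturbation bound.

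First I use \prop{spePF} together with \assum{TunEnCon} to write $P_{|\F} = \mu I + \widetilde W + R$, where $\mu$ is any fixed $\mu_\alpha$ (all agree modulo $O(h^\infty)e^{-S_0/h}$) and $\|R\| = O(h^\infty)e^{-S_0/h}$. Since $d(U_{j(\alpha)},U_{j(\alpha)})=0\neq S_0$, the diagonal of $\widetilde W$ vanishes, and \prop{speW} then yields
\begin{equation}
\widetilde W = -\sqrt{h}\,e^{-S_0/h}\bigl(W_0 + O(h)\bigr),
\end{equation}
where $W_0$ is a real symmetric $h$-independent matrix whose off-diagonal entries are $(W_0)_{jk}=b_{jk}^{(0)}$ on resonant pairs $(d(U_j,U_k)=S_0)$ and vanish otherwise.

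Consequently $P_{|\F} = \mu I - \sqrt{h}\,e^{-S_0/h}\,W_0 + E$ with $\|E\| = O(h^{3/2})e^{-S_0/h}$, and Weyl's inequality gives that each eigenvalue of $P_{|\F}$ has the form $\mu - \sqrt{h}\,e^{-S_0/h}\lambda_j(W_0) + O(h^{3/2})e^{-S_0/h}$. Hence the minimal absolute difference between two \emph{distinct} eigenvalues of $P_{|\F}$ (i.e., pairs coming from distinct eigenvalues of $W_0$) is
\begin{equation}
\Delta E = \sqrt{h}\,e^{-S_0/h}\bigl(\Delta(W_0) + O(h)\bigr),
\end{equation}
where $\Delta(W_0):=\min\{|\lambda_i(W_0)-\lambda_j(W_0)|:\lambda_i(W_0)\neq \lambda_j(W_0)\}$. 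Taking $b:=\Delta(W_0)$, which is manifestly independent of $h$, delivers the claimed expansion.

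The main obstacle is verifying $b>0$. For this I would show that the leading WKB amplitudes $a_0^{(j)}$ are strictly positive along their Agmon geodesics: the initial condition \eq{178} forces $a_0^{(j)}(x_j)>0$, and propagating this datum along characteristics of the transport equation \eq{transport}, which is a first-order linear ODE, preserves the sign. Combined with $f(x_{jk}^{(l)})>0$ at every interior crossing point (since $f>0$ off the wells by \assum{quantum1}) and the positivity of the geometric determinant appearing in \prop{speW}, this yields $b_{jk}^{(0)}>0$ on every resonant pair, and \assum{quantum2} supplies at least one such pair whenever $N\geq 2$. Thus $W_0$ is symmetric with vanishing diagonal and at least one strictly positive off-diagonal entry; in particular, it cannot be a scalar multiple of the identity and must admit at least two distinct eigenvalues, giving $\Delta(W_0)>0$.
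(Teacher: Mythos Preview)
Your proposal is correct and follows essentially the same approach as the paper's proof: both reduce $P_{|\F}$ to $\mu I$ plus $-\sqrt{h}\,e^{-S_0/h}$ times an $h$-independent symmetric matrix, then read off eigenvalue gaps. You are somewhat more explicit than the paper (invoking Weyl's inequality by name and supplying an argument for $b>0$ via positivity of the WKB amplitudes and the nonvanishing off-diagonal of $W_0$), whereas the paper simply asserts the eigenvalue form and does not justify $b>0$ at all.
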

\begin{proof}
Because of \prop{spePF}, we have now
\begin{align}
    P_{|\F} = \mathrm{diag}(\mu_{j}) + (\widetilde{W}_{i j}) + O(h^{\infty})e^{S_0/h}.
\end{align}
According to \prop{speW}, $\widetilde{W}_{i j}$ is of the form
$-h^{1/2}(b_{ij} + O(h))e^{-S_0/h}$, where $b_{ij}$ depends only on $i,j$, and the potential $f$.
By \assum{TunEnCon}, $|\mu_j - \mu_i| = O(h^{\infty})e^{-S_0/h}$,
so we have
\begin{align}
    P_{|\F} = \mathrm{diag}(\mu)  - h^{1/2}(b_{ij} + O(h))e^{-S_0/h} + O(h^{\infty})e^{S_0/h},
\end{align}
where $\mu$ is chosen to be one $\mu_j$.
The eigenvalues of $P_{|\F}$ should be of the form
\begin{align}
    E_j = \mu + h^{1/2}(b_j + O(h))e^{-S_0/h} + O(h^{\infty})e^{S_0/h},
\end{align}
where $b_j$ are constants depending only on $f$ and $j$.
Therefore, we have the energy gap $\Delta E = \min_{E_j\neq E_i}|E_j - E_i|$ as
\begin{align}
    \Delta E =  h^{1/2}(b + O(h))e^{-S_0/h} + O(h^{\infty})e^{S_0/h} =h^{1/2}(b + O(h))e^{-S_0/h},
\end{align}
where $b$ is a constant only related to $f$.
\end{proof}

\section{Auxiliary Algorithmic Techniques for Quantum Tunneling Walks}\label{append:auxiliary-quantum}
In this section, we introduce two quantum algorithmic techniques that are necessary for obtaining our results. One is quantum simulation of the Schr{\"o}dinger equation (\append{quantumsimulation}), and the other is initial state preparation (\append{initial}).

\subsection{Quantum Simulation of the Schr\"odinger Equation}\label{append:quantumsimulation}
The Schr\"odinger equation we want to simulate is
\begin{align}
    i\frac{\partial }{\partial t}\Phi = (-h^2\Delta + f(x))\Phi.
    \label{eq:simuSchr}
\end{align}
We regard the simulation of \eq{simuSchr} as solving this special partial differential equation (PDE) by quantum algorithms.
Following \cite{CJO19,zhang2021quantum}, we work in $\mathbb{R}^d$ and use the finite difference
method which discretizes the space into
grids with side-length $a$.
In this way, any function $\phi$ can be viewed as a vector
and the Hamiltonian $H = -h^2\Delta + f$
becomes a matrix.
For instance, in the one-dimensional case,
$\phi_j$ is the function value of $\phi$ on the $j$th grid which forms a
vector, $-h^2 \Delta$ is discretized to be $- \frac{h^2}{a^2} L$ where
$L$ is the Laplacian matrix of the graph of the grids (whose off-diagonal entries are $-1$ for connected grids and zero otherwise, and diagonal entries are the degree of the grids), giving
\begin{align}
    (\Delta \phi)_j \approx \frac{1}{a^2}(L\phi)_j = \frac{\phi_{j+1} - 2\phi_{j} + \phi_{j-1}}{a^2},
    \label{eq:discretized}
\end{align}
and $f$ in the Hamiltonian $H$ is discretized to be a matrix whose $j$th diagonal entry is the value of $f$ at the $j$th grid. Here the approximation $\approx$ follows the definition in Eq.~\eqn{approx-definition} and the LHS and RHS of \eq{discretized} is equal under the limit $a \to 0$.
The step of discretization would bring errors which can be
controlled by making $a$ smaller. For convenience, we define $A := -\frac{h^2}{a^2}L_k$,
where $L_k$ is the Laplacian matrix associated to the $k$th order finite difference method \cite{CLO21}, and define $B$ the diagonal matrix discretizing $f$.
We use the matrix $\hat{H}:= A+B$ to model the Hamiltonian of the Schr\"odinger equation under discretization.
To simulate the time evolution governed by $\hat{H}$, we use the following result:
\begin{lemma}[Lemma 6 in \cite{LW19}]\label{lem:QSinter}
Let $A$,$B\in \C^{n\times n}$ be two time-independent Hermitian matrices such that $\|A\|\leq \alpha_A$ and $\|B\|\leq \alpha_B$. Then, the time evolution operator $e^{-i(A+B)t}$ can be approximated up to error $\epsilon$ with
\begin{align}
    O\left(\alpha_B t \frac{\log (\alpha_B t/\epsilon)}{\log \log (\alpha_B t/\epsilon)}\right)
\end{align}
quires to the unitary oracle $O_B$.
\end{lemma}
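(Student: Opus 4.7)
The plan is to prove this by passing to the interaction picture with respect to $A$ and then applying a truncated Dyson series simulation, so that the query complexity depends only on $\alpha_B$ rather than $\alpha_A$. First, I would define $\ket{\psi_I(t)} := e^{iAt}\ket{\psi(t)}$, where $\ket{\psi(t)} = e^{-i(A+B)t}\ket{\psi(0)}$. A direct differentiation yields $i\partial_t \ket{\psi_I(t)} = H_I(t)\ket{\psi_I(t)}$ with $H_I(t) := e^{iAt}Be^{-iAt}$. The crucial observation is that $\|H_I(t)\| = \|B\| \le \alpha_B$ uniformly in $t$, so from the interaction-picture viewpoint the system is governed by a time-dependent Hamiltonian whose norm is independent of $\alpha_A$.

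Next, I would partition $[0,t]$ into $r = \lceil \alpha_B t\rceil$ segments of length $\tau = t/r \le 1/\alpha_B$, and on each segment approximate the evolution operator by the truncated Dyson series
\begin{equation*}
U_I(\tau) \;\approx\; \sum_{k=0}^{K} \frac{(-i)^k}{k!}\int_{[0,\tau]^k}\! \mathcal{T}\bigl[H_I(t_1)\cdots H_I(t_k)\bigr]\,\d t_1\cdots \d t_k,
\end{equation*}
where $\mathcal{T}$ denotes time ordering. The truncation tail is bounded by $\sum_{k>K}(\alpha_B\tau)^k/k!$, and since $\alpha_B\tau \le 1$ this is at most $\epsilon/r$ provided $K = O\bigl(\log(r/\epsilon)/\log\log(r/\epsilon)\bigr)$. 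Per-segment error $\epsilon/r$ accumulates to $\epsilon$ over the $r$ segments.

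Then I would implement each truncated Dyson segment via the LCU (linear combination of unitaries) construction together with oblivious amplitude amplification (following the Berry--Childs--Cleve--Kothari--Somma framework). The time integrals are discretized on a fine enough grid that the additional error is absorbed into $\epsilon$; each term in the LCU invokes $H_I(t_j) = e^{iAt_j}Be^{-iAt_j}$, which costs exactly one query to the oracle $O_B$ together with two applications of the free evolution $e^{\pm i A t_j}$ (not counted, as the lemma tracks only $O_B$ queries). Hence each segment uses $O(K)$ queries to $O_B$, and summing over the $r$ segments yields the claimed
\begin{equation*}
O\!\left(rK\right) \;=\; O\!\left(\alpha_B t\,\frac{\log(\alpha_B t/\epsilon)}{\log\log(\alpha_B t/\epsilon)}\right)
\end{equation*}
queries.

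The main obstacle will be the rigorous compressed-encoding step: one must carefully implement the time-ordered integrand using an ancilla register that stores $k$ time labels, show that the resulting block-encoded operator is sufficiently close to $U_I(\tau)$ after a constant number of rounds of amplitude amplification, and verify that the quadrature error from discretizing the time integrals does not inflate the $K$ dependence beyond $\log/\log\log$. Everything else (the interaction-picture reduction, segmenting, and tail estimate) is fairly routine once this block-encoding is in place.
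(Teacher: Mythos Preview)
The paper does not prove this lemma; it simply quotes it from Low and Wiebe~\cite{LW19} (their Lemma~6) and uses it as a black box. Your sketch---pass to the interaction picture so that the effective Hamiltonian $H_I(t)=e^{iAt}Be^{-iAt}$ has norm bounded by $\alpha_B$ independently of $\alpha_A$, segment time into $r=\Theta(\alpha_B t)$ pieces, truncate the Dyson series at order $K=O(\log(r/\epsilon)/\log\log(r/\epsilon))$, and implement each segment via LCU with oblivious amplitude amplification---is exactly the Low--Wiebe argument, so there is nothing to compare against here beyond noting that your outline matches the cited source.
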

\begin{remark}\label{rmk:errorsimu}
The $\epsilon$ in \lem{QSinter} originally denotes the error bound quantified by
the spectral norm. Namely, let $\mathcal{T}_t$ be the real operator obtained,
the error $\|\mathcal{T}_t - e^{-i(A+B)t}\| \leq \epsilon$.
For any initial state $\ket{\Phi(0)}$, $\ket{\Phi(t)} = e^{-i(A+B)t}\ket{\Phi(0)}$
is the state we want to prepare and $|\tilde{\Phi}(t)\y = \mathcal{T}_t\ket{\Phi(0)}$ the state we can prepare.
Nevertheless, It is straightforward to have $\|\tilde{\Phi}(t) - \Phi(t) \|\leq \epsilon$, which implies $\epsilon$ that can be understood as the error under $\ell^2$ norm.
\end{remark}
\begin{remark}\label{rmk:equioracle}
Note the Lemma 6 in \cite{LW19} gives the number of queries to the oracle HAM-T.
Because the construction of HAM-T only needs one query to $O_B$, \lem{QSinter} follows.
Moreover, if we specify $B$ as the diagonal matrix discretizing the objective function $f$, one quantum query to $O_B$ can be implemented by one query to the quantum evaluation oracle $U_f$ given by \eqn{quantumquery}.
\end{remark}

Classical and quantum methods approximating \eq{simuSchr} need to discretize the time.
The time step $\Delta t$ for classical methods to maintain convergence
is of the order $1/\|\hat{H}\|$.
Therefore, steps or queries needed for evolving a time $t$ is about $t/\Delta t \sim t \|\hat{H}\|$. The term $\|\hat{H}\|$ can be understood as the maximum energy, indicating that simulating higher energy states accurately costs more.
This intuition is partially true for quantum simulation.
By simulating the Hamiltonian in the interaction picture, the query complexity of
\lem{QSinter} depends on the maximum potential energy instead of the maximum energy because $\|B\| \leq \|f\|_{\infty}$.
With \lem{QSinter} on simulating $\hat{H}$, we now determine the number of queries needed to simulate $H$.
\begin{lemma}[Slightly modified version from {Lemma 2 in \cite{zhang2021quantum}}]\label{lem:quansimu}
Let $f\colon \mathbb{R}^d \to \mathbb{R}$ be bounded in a domain $\Omega \subset \mathbb{R}^d$, consider the Schr\"odinger equation \eq{simuSchr}
restricted in the domain $\Omega$ with periodic boundary condition
or Dirichlet boundary condition (value of the wave function vanishes at the boundary).\footnote{The set $\Omega$ should be appropriate such that the Schr\"odinger equation is solvable. For the purpose of optimization, $\Omega$ should be sufficiently large to contain all minima of interest, and the wave function can hardly hit the boundary $\partial \Omega$. For consistency in mathematics, the initial state should be in $H^1_0(\Omega)\cap H^2(\Omega)$ (see \defn{Dirichlet}) if we use the Dirichlet boundary condition.}
Having the quantum evaluation oracle $U_f$ in \eqn{quantumquery}
and an initial state adapting to the boundary condition at $t=0$, the evolution
of \eq{simuSchr} for time $t$ can be
simulated up to $L^2$ norm error $\epsilon$ using
\begin{align}
    O\left(\|f\|_{L^{\infty}(\Omega)} t \frac{\log (\|f\|_{L^{\infty}(\Omega)} t/\epsilon)}{\log \log (\|f\|_{L^{\infty}(\Omega)} t/\epsilon)}\right)
    \label{eq:querynum}
\end{align}
queries to $U_f$.
\end{lemma}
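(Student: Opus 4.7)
The plan is to reduce the continuous Schr\"odinger equation to the simulation of a time-independent matrix Hamiltonian by finite-difference discretization, then invoke \lem{QSinter} in the interaction picture so that the query complexity depends only on $\|B\|\le \|f\|_{L^\infty(\Omega)}$ and not on $\|A\|$, which blows up as the grid is refined.

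First, I would place a grid of side-length $a$ on $\Omega$ compatible with the prescribed boundary condition (periodic or Dirichlet), obtaining the discrete Hamiltonian $\hat H = A + B$ with $A = -\tfrac{h^2}{a^2} L_k$ the $k$-th order discrete Laplacian and $B$ the diagonal matrix of values of $f$ on the grid points. By standard convergence estimates for high-order finite-difference schemes applied to the Schr\"odinger equation (see \cite{CLO21} and Lemma~2 of \cite{zhang2021quantum}), the $L^2$-error between the discretized evolution $e^{-i\hat H t}$ acting on the discretization of $\Phi(0)$ and the true evolution $\Phi(t)$ can be bounded by a quantity $\epsilon_{\mathrm{disc}}(a,k,t)$ that tends to $0$ as $a\to 0$, at a rate polynomial in $a$ whose order is controlled by $k$. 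Choosing $k$ large enough and $a$ small enough makes $\epsilon_{\mathrm{disc}} \le \epsilon/2$ while keeping the grid size polynomially bounded in $1/\epsilon$, $t$, and the fixed geometric parameters of $\Omega$.

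Second, I would apply \lem{QSinter} to simulate $e^{-i(A+B)t}$ up to spectral-norm error $\epsilon/2$, which by \rmk{errorsimu} implies an $L^2$-error at most $\epsilon/2$ on any normalized input. Since $B$ is diagonal with entries $f(x_j)$ for grid nodes $x_j\in\Omega$, we have $\|B\|\le\|f\|_{L^\infty(\Omega)}$, hence the number of queries to the block-encoding oracle $O_B$ required is
\begin{align}
O\!\left(\|f\|_{L^{\infty}(\Omega)}\, t\, \frac{\log (\|f\|_{L^{\infty}(\Omega)} t/\epsilon)}{\log \log (\|f\|_{L^{\infty}(\Omega)} t/\epsilon)}\right).
\end{align}
By \rmk{equioracle}, each query to $O_B$ can be implemented by a single query to the evaluation oracle $U_f$ of \eqn{quantumquery}, so the same bound holds in terms of $U_f$ queries. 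A triangle inequality combining the discretization and simulation errors yields total $L^2$-error at most $\epsilon$.

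The main obstacle is the space-discretization step: we must verify that refining the grid to achieve $\epsilon_{\mathrm{disc}}\le \epsilon/2$ only enters logarithmically in the final cost, so that the leading $\|f\|_{L^\infty(\Omega)} t$ scaling is preserved. This works because the interaction-picture bound in \lem{QSinter} is insensitive to $\|A\|$, and the grid-size dependence enters the argument of the $\log/\log\log$ factor through $\epsilon$ only. The remaining technical point is matching the boundary condition on $\Omega$ with the chosen discrete Laplacian $L_k$ and ensuring the initial state, which we assume adapts to the boundary condition and lies in $H_0^1(\Omega)\cap H^2(\Omega)$ when Dirichlet conditions are used, is represented accurately enough on the grid; both are handled exactly as in Lemma~2 of \cite{zhang2021quantum}.
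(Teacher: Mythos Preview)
Your proposal is correct and follows essentially the same approach as the paper's proof: split the total $L^2$-error into a spatial-discretization part (controlled to $\epsilon/2$ via the finite-difference estimates of \cite{CLO21}) and a Hamiltonian-simulation part (controlled to $\epsilon/2$ via \lem{QSinter} with $\alpha_B=\|B\|\le\|f\|_{L^\infty(\Omega)}$), then combine by the triangle inequality and invoke \rmk{equioracle} to convert $O_B$ queries to $U_f$ queries. The paper adds only the minor bookkeeping step of absorbing the factor of $2$ in $\log(2\|f\|_{L^\infty(\Omega)}t/\epsilon)$ into the big-$O$, which your display already omits.
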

\begin{proof}[Proof of \lem{quansimu}]
There are two sources that producing error: the first comes from discretizing the spatial domain (approximating the Hamiltonian $H$ by the matrix $\hat{H}$) and the second comes from simulating the Hamiltonian matrix
$\hat{H}$ with the help of \lem{QSinter}.

Let the initial wave function be $\Phi(0,x)$ and the accurate wave function after time $t$ be $\Phi(t,x) = e^{-iHt} \Phi(0,x)$.
By discretization, let $\{x_j\}_{j\in I}$ be the grid points where $I$ is the set of index,
we then obtain a vector $\hat{\Phi}(0)$ whose $j$th entry is the value $\Phi(0,x_j)$.
Denote the vector $\hat{\Phi}(t) = e^{-i\hat{H}t}\hat{\Phi}(0)$,\footnote{The boundary condition can be satisfied by
setting some entries at the edge of the matrix $A$ in $\hat{H}$ as 0 or 1/$\Delta x^2$,
which is a standard process in solving differential equations by finite difference methods and can be found in extensive classical literature.} we can rebuild a function
$\hat{\Phi}^{\vee}(t,x)$ by setting the function value on the $j$th grid (a hypercube with side-length $a$) to be the $j$th entry of $\hat{\Phi}(t)$.
By setting $1/a = \mathrm{poly}(d)\mathrm{poly}(\log (2/\epsilon))$, it can be made that $\|\hat{\Phi}^{\vee}(t,\cdot) - \Phi(t,x)\|_{L^2(\Omega)} \leq \epsilon/2$ \cite{CLO21}.
Since $\|B\|\leq \|f\|_{L^{\infty}(\Omega)}$, \lem{QSinter} permits an operator $\mathcal{T}_t$ simulating $e^{-i\hat{H}t}$ with
\begin{align}
   Q(t,\epsilon) = O\left(\|f\|_{L^{\infty}(\Omega)} t \frac{\log (2\|f\|_{L^{\infty}(\Omega)} t/\epsilon)}{\log \log (2\|f\|_{L^{\infty}(\Omega)} t/\epsilon)}\right)
\end{align}
queries to $O_B$, such that $\|\tilde{\Phi}(t) - \hat{\Phi}(t)\| \leq \epsilon/2$ where $\tilde{\Phi}(t) := \mathcal{T}_t \hat{\Phi}(0)$ (see \rmk{errorsimu}).
Similar to rebuilding $\hat{\Phi}^{\vee}(t,x)$ from the vector
$\hat{\Phi}(t)$, we can rebuild a function $\tilde{\Phi}^{\vee}(t,x)$ from $\tilde{\Phi}(t)$ and
$\|\tilde{\Phi}(t) - \hat{\Phi}(t)\| \leq \epsilon/2$ is equivalent to $\|\tilde{\Phi}^{\vee}(t,\cdot) - \hat{\Phi}^{\vee}(t,\cdot)\|_{L^2(\Omega)} \leq \epsilon/2$.
Finally, by the triangle inequality,
\begin{align}
 \|\tilde{\Phi}^{\vee}(t,\cdot) - \Phi(t,\cdot)\|_{L^2(\Omega)} \leq \|\tilde{\Phi}^{\vee}(t,\cdot) - \hat{\Phi}^{\vee}(t,\cdot)\|_{L^2(\Omega)} + \|\hat{\Phi}^{\vee}(t,\cdot) - \Phi(t,x)\|_{L^2(\Omega)}
  \leq \epsilon.
\end{align}

In the present case, a query to $O_B$ can be implemneted by
one query to the quantum evaluation oracle $U_f$ (\rmk{equioracle}). Note that
\begin{align}
   \frac{\log (2\|f\|_{L^{\infty}(\Omega)} t/\epsilon)}{\log \log (2\|f\|_{L^{\infty}(\Omega)} t/\epsilon)}
   \leq  \frac{1 + \log (\|f\|_{L^{\infty}(\Omega)} t/\epsilon)}{\log \log (\|f\|_{L^{\infty}(\Omega)} t/\epsilon)}
   \leq \frac{2\log (\|f\|_{L^{\infty}(\Omega)} t/\epsilon)}{\log \log (\|f\|_{L^{\infty}(\Omega)} t/\epsilon)},
\end{align}
we can also write
\begin{align}
   Q(t,\epsilon) = O\left(\|f\|_{L^{\infty}(\Omega)} t \frac{\log (\|f\|_{L^{\infty}(\Omega)} t/\epsilon)}{\log \log (\|f\|_{L^{\infty}(\Omega)} t/\epsilon)}\right).
\end{align}
In summary, using $Q(t,\epsilon)$ queries to $U_f$, we are able to
obtain a wave function $\tilde{\Phi}^{\vee}(t,x)$
close to the actual solution $\Phi(t,x)$ up to error $\epsilon$,
which completes the proof.
\end{proof}
\begin{remark}
The discretization method introduced indicates that every grid is a hypercube with side-length $a$.
If the boundary $\Omega$ is curved, grids fitting the boundary and the boundary condition can be irregular. In this case, discretization of the
Laplacian, or equivalently the form of the matrix $A$, will be different. We clarify that \lem{quansimu} is still true for irregular grids since
grids do not affect the fact that $\|B\| \leq \|f\|_{L^{\infty}(\Omega)}$.
\end{remark}

Up to now, we introduce quantum simulation in $\mathbb{R}^d$ or more precisely, a subset $\Omega$ of $\mathbb{R}^d$.
We further claim that for smooth (compact) manifolds that can be embedded in $\mathbb{R}^d$, the query complexity presented in \lem{quansimu} should still be valid.
On such manifolds, say $M$, we can use discretization methods in $\mathbb{R}^d$
to approximate the discretization in $M$.
Note that the Laplacian should be replaced by the Laplace–Beltrami operator, adding difficulties for construct the matrix $A$.
However, since $A$ does not affect query complexity and $B$ will always be a diagonal matrix and $\|B\| \leq \|f\|_{L^{\infty}(M)}$,
the form of \eq{querynum} should be the same.
In general, quantum simulation on manifolds are interesting topics, deserving further studies.

Note that \lem{QSinter} we use bound the error of states by bounding the error of operators.
This may not provide the possible minimal queries needed if the initial state is in a low energy subspace which is exactly the case for QTW. Quantum simulation in a low energy subspace intuitively lowers the query complexity bound \cite{csahinouglu2021hamiltonian},
indicating larger speedups of QTW.

\subsection{Initial State Preparation}\label{append:initial}
QTW is a quantum simulation algorithm making use of quantum tunneling. To fulfill the tunneling phenomenon, not only should the potential function $f(x)$ satisfy
some assumptions (as is presented in \sec{quantumpre}), but also the
initial state should be in a low-energy subspace. For completeness, we introduce in this appendix possible ways for preparing initial states under different scenarios.

For a landscape $f(x)$ satisfying assumptions in \sec{quantumpre},
\prop{dimEdimF} in \append{interactionmatrix}
guarantees that for sufficiently small $h$, local eigenstates with cut-offs (see \defn{localcutoff})
can span a space $\E$ that is close to $\F$, a low-energy subspace of the Hamiltonian $H = -h^2\Delta + f$.
Moreover, orthonormalized local eigenstates (see \defn{ortholocal}) can span the space $\F$.
Note that we only consider tunneling between local ground states,
hence the above claim implies that orthonormalized local ground states can span the space $\F$.
\assum{quantum1} supposes that there are $N$ global minima (wells) of interest.
Therefore, we can label the minima by $j\in \{1,\ldots,N\}$ and denote
the orthonormalized local ground state corresponding to the $j$th minima by $\ket{j}$.
Let $\ket{E_j}$ ($j=1,2,\ldots$) be eigenstates of $H$ with eigenvalues $E_j$, respectively.
Mathematically, we can write
\begin{align}
    \F = \mathrm{span}\{\ket{j}\}_{j=1}^N = \mathrm{span}\{\ket{E_j}\}_{j=1}^N.
\end{align}

The goal is to prepare a state (nearly) in the subspace $\F$ such that QTW can be initiated.

First, we explore a simple scenario where we have previous knowledge of one minimum.

\paragraph{Local ground state preparation.}
With the knowledge of one minimum, say the $j$th one, an efficient way of state preparation is to
prepare or approximate the local ground state corresponding to the $j$th minimum.
In practice, $\ket{j}$ has no explicit form and we need to
prepare the original local ground state $\ket{\phi_j}$ (see \defn{localeigen}).
Restating \defn{localeigen} in the present case,
$M_j$ (rigorously given by \eq{smallregion})
is a bounded region that contains only the $j$th minimum
and $\ket{\phi_j}$ is the ground state of $H_{M_j}$ which
is the Dirichlet realization of $H$ in $M_j$ (see \defn{Dirichlet}).
Regardless of the rigorous mathematical definition,
the function $\phi_j = \ip{x}{\phi_j} \colon M_j \to \C$
could be understood as the function $\phi$ satisfying $(-h^2\Delta+f)\phi = E \phi$ and $\phi_{|\partial M_j} = 0$ whose $E$ is the smallest.

One way to prepare (approximate) $\ket{\phi_j}$ is through adiabatic quantum computing restricted in
$M_j$ (some set $\Omega_j \subset M_j$).
Adiabatic quantum computing presents such a picture, for a time-dependent Hamiltonian $H(t)$,
if the initial state $\psi(0)$ is the ground state of $H(0)$ and $H(t)$ changes sufficiently slow, the system will approximately stay at the ground state of $H(t)$ at $t$.
For the purpose of preparing ground state of $H_{M_j}$ ($H_{\Omega_j}$),
we can let $H(0)$ be some known Hamiltonian and let $H(T) = H_{M_j}$ ($H_{\Omega_j}$).
Sufficiently slowly changing $H(t)$ is equivalent to a sufficiently large $T$.
Let $G(t)$ be the fundamental gap of $H(t)$, i.e., the difference between the second smallest and smallest eigenvalues of $H(t)$,
time for state preparation can be bounded as follows.

\begin{lemma}[Informal quantum adiabatic theorem~\cite{Dua20}]\label{lem:adiabatic}
Let $\phi(t)$ be be the unit eigenstate of $H(t)$ and $\ip{\partial_t \phi(t)}{\phi(t)} = 0$. Simulating the time dependent Hamiltonian $H(t)$ when $\psi(0) = \phi(0)$,
we can obtain $\psi(T)$ such that $\|\psi(T) - \phi(t)\|\leq \epsilon$ if
\begin{align}
    T \geq \frac{1}{\epsilon}\left(\frac{4\|\partial_t H\| + 2 \| \partial_{t t} H\|}{G_m^2} + \frac{20\|\partial_{t} H \|^2}{G_m^3}\right),
\end{align}
where $G_m = \inf_{0 \leq t \leq T} G(t)$.
\end{lemma}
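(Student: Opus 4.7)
The statement is an explicit, quantitative form of the quantum adiabatic theorem and is attributed to \cite{Dua20}, so the proof plan is essentially to follow that reference; my task is to sketch the structure and highlight the delicate steps. The plan is to work in the instantaneous eigenbasis $\{\ket{\phi_k(t)}\}$ of $H(t)$, write the simulated state as $\ket{\psi(t)} = \sum_k c_k(t) e^{-i\int_0^t E_k(s)\d s}\ket{\phi_k(t)}$ with $c_0(0)=1$ by assumption, and derive from the Schr\"odinger equation the coupled system
\begin{equation}
\dot c_k(t) = -\sum_{\ell \neq k} c_\ell(t)\, \frac{\bra{\phi_k}\partial_t H\ket{\phi_\ell}}{E_\ell - E_k}\, e^{-i\int_0^t (E_\ell - E_k)\d s},
\end{equation}
having used the gauge condition $\ip{\partial_t\phi_k}{\phi_k}=0$ to eliminate the diagonal connection. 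The goal is to bound $\sum_{k\neq 0}|c_k(T)|^2$ by $\epsilon^2$, i.e., to show that the population leaking out of the instantaneous ground subspace is small.

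First I would integrate by parts in $t$ using the rapidly oscillating phase $e^{-i\int_0^t(E_\ell-E_0)\d s}$, which is justified because $|E_\ell - E_0|\geq G(t)\geq G_m>0$ throughout the evolution. One integration by parts converts an overall $1/T$-type decay into a boundary term proportional to $\|\partial_t H\|/G_m^2$ (one factor of $1/G$ from the energy denominator, one from the IBP), plus a new integrand involving $\partial_t(\partial_t H /(E_\ell-E_k)^2)$. A second IBP on the latter yields the $\|\partial_{tt} H\|/G_m^2$ and $\|\partial_t H\|^2/G_m^3$ contributions that appear in the statement, where the cubic-gap term arises from differentiating the denominator $E_\ell - E_k$ and bounding $|\partial_t(E_\ell - E_k)|\leq 2\|\partial_t H\|$ by Hellmann--Feynman. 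Collecting the three boundary/remainder terms and summing over $k\neq 0$ (which contributes the constants $4$, $2$, and $20$) bounds the leaked amplitude by $1/T$ times the parenthesised expression, so demanding this be at most $\epsilon$ gives the claimed lower bound on $T$.

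The main obstacle I anticipate is tracking the absolute constants cleanly. The qualitative $O(1/T)$ behaviour is straightforward, but pinning down the explicit prefactors $4$, $2$, $20$ requires a careful bookkeeping of: (i) the gauge choice in differentiating $\ket{\phi_k(t)}$; (ii) the triangle-inequality bounds on $\sum_{\ell\neq 0}|\bra{\phi_0}\partial_t H\ket{\phi_\ell}|/|E_\ell - E_0|$, which typically use a resolution of identity to replace the sum by $\|\partial_t H\|$; and (iii) the error between $\|\psi(T)-\phi(T)\|$ (an $L^2$ distance on the state) and the population $\sum_{k\neq 0}|c_k(T)|^2$, the two being related by $\|\psi - \phi\|^2 \leq 2\sum_{k\neq 0}|c_k|^2$ up to a phase alignment.

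Since these constants (and indeed the precise form of the bound) are exactly those derived in Duan's work, I would simply invoke \cite{Dua20} rather than redo the calculation. For our downstream use of \lem{adiabatic}, all that matters is the scaling $T = \widetilde O(\|\partial_t H\|^2/(\epsilon G_m^3))$ under smoothness of $H(t)$, which is what enters the total cost of initial-state preparation for QTW.
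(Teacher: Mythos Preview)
The paper does not prove this lemma at all; it is stated as a known result and attributed to \cite{Dua20}, with no proof or sketch provided. Your proposal correctly identifies this and ultimately lands on the same conclusion---invoke the reference---so there is nothing to compare. Your sketch of the standard adiabatic argument (instantaneous eigenbasis, integration by parts against the oscillatory gap phase, Hellmann--Feynman to bound eigenvalue derivatives) is a reasonable outline of how such bounds are obtained, but it goes beyond what the paper itself does.
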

Intuitively, the time cost is bounded by the inverse of the minimal fundamental gap.
For simplicity, we study $H_{\Omega_j}$ for small convex $\Omega_j$
where $f_{|\Omega_j}$ is also weakly convex.
The ground state of $H_{\Omega_j}$ can be close to that of $H_{M_j}$ if $h$ is small.
By \cite{AC11}, fundamental gaps of Hamiltonians restricted in $\Omega_j$
is $\Omega(h^2/D^2)$ where $D$ is the diameter of $\Omega_j$.
As is shown by \thm{informalQTW}, time cost of
QTW is dominated by $1/\Delta E$ where $\Delta E = O(e^{-S_0/h}) = O(h^{\infty})$ is also an energy gap.
Therefore, for small $h$ that ensuring tunneling and high accuracy,
$1/\Delta E$ is much larger than the inverse of any fundamental gaps of Hamiltonians restricted in $\Omega_j$, giving that local ground state preparation is much easier than QTW.

Classical methods could also help to prepare local ground states.
Still given the $j$th minimum $x_j$, we can use classical sampling to
learn the Hessian matrix at $x_j$, $\nabla^2f(x_j)$.
Then, we can directly put the following wave function centered at $x_j$.
\begin{align}
    \phi(x) = \frac{\big(\det\sqrt{\nabla^2 f(x_j)}\big)^{1/4}}{(\sqrt{2}\pi h)^{d/4}}
    e^{-\frac{(x-x_j)\trans\sqrt{\nabla^2 f(x_j)}(x-x_j)}{2\sqrt{2}h}}.
    \label{eq:samplingLG}
\end{align}
Note that if $f$ is quadratic, the $\phi(x)$ in \eq{samplingLG} will be the ground state precisely.
Using the ground state under quadratic landscape to approximate the actual ground state is actually the idea of WKB approximation in \append{WKB}.
The smaller $h$ is, the more accurate the approximation of \eq{samplingLG} can be.
In addition, Gaussian wave packets can be efficiently sampled on quantum computers~\cite{kitaev2008wavefunction,rattew2021efficient}.
It is still true that QTW takes more time than initial state preparation.

Local ground state preparation is convenient because it suffices to focus on a neighborhood of one minimum. However, it also has drawback as we cannot ensure that the state we obtain is in $\F$. Detailed error estimation is needed in future work.\footnote{To be specific, $\ket{j}$ is in $\F$ but we cannot prepare; what we can approximately prepare is $\ket{\phi_j}$. \defn{localeigen}, \defn{localcutoff}, and \defn{ortholocal} together give an estimation that $\ip{x}{j} = \phi_j + O(e^{-S_0/h})$.}

\paragraph{General initial state preparation.}
Next, we consider the general scenario where we do not have any priori knowledge on any minimum.

A straightforward way is to use classical algorithms (e.g., SGD) or quantum algorithms to find one minimum first and then apply above procedure of preparing local ground states.
Besides, we present a new idea that can directly yield a
state near the subspace $\F$.
The method is based on a generalized version of the quantum adiabatic theorem. Define $ G_{\F} := E_{N+1} - E_N$ as the fundamental gap of space $\F$.
The proof of \lem{adiabatic} suggests that
as long as the total time $T$
is long enough respective to the inverse of the fundamental gap of space $\F$, the system can always be near the space $\F$. See also~\cite{boixo2010necessary}.
The state $\ket{E_{N+1}}$ is orthogonal to $\F$ as $\F = \mathrm{span}\{E_j\}_{j=1}^N$.
Hence, $\ket{E_{N+1}}$ is at least a superposition of local first excited states
such that $G_{\F}$ has a similar value to the gap between local ground energy and local first excited energy.
Then, for any landscape $f$ satisfying assumptions in \sec{quantumpre},
we can prepare states in the low-energy subspace $\F$ with time $\sim \mathrm{poly}(1/G_{\F})$.

In summary, we provide several ways of preparing the desired initial states under different conditions. In principle, discussions on initial state preparation also suggest potential hybrid quantum-classical algorithms and potential quantum simulation algorithms containing both adiabatic quantum computing and QTW, which are of independent interest.

\section{Technical Details for Quantum Tunneling Walks}\label{append:section3}

\subsection{Details of quantum mixing time}
\subsubsection{Proof of \texorpdfstring{\lem{qtwmixingtime}}{Lemma 2}}\label{append:prooflem2}
\lem{qtwmixingtime} aims to bound the minimum time $\tau$ enabling $\|\rho_{\rm QTW}(\tau,x) -\mu_{\rm QTW}(x)\|_1 \leq \epsilon$ (i.e., the mixing time).
\begin{proof}
Straightforward calculations yield
\begin{align}
    \|\rho_{\rm QTW}(\tau,x) -\mu_{\rm QTW}(x)\|_1 &=
    \int \left|\sum_{E_k \neq E_{k^{\prime}}} \frac{1-e^{-i(E_k - E_{k^{\prime}})\tau}}{i(E_k - E_{k^{\prime}})\tau}
   \z x|E_k\y \z E_k |\Phi(0)\y \z \Phi(0) |E_{k^{\prime}}\y \z E_{k^{\prime}}| x\y\right| \d x \\
   & \leq \int \sum_{E_k \neq E_{k^{\prime}}}\frac{2}{|E_k - E_{k^{\prime}}|\tau}\left|
   \z x|E_k\y \z E_k |\Phi(0)\y \z \Phi(0) |E_{k^{\prime}}\y \z E_{k^{\prime}}| x\y\right| \d x \\
   &\leq \sum_{E_k \neq E_{k^{\prime}}}\frac{2}{|E_k - E_{k^{\prime}}|\tau} \sum_{j,j'}|
   \z j|E_k\y \z E_k |\Phi(0)\y \z \Phi(0) |E_{k^{\prime}}\y \z E_{k^{\prime}}| j'\y | \nonumber \\
    &\qquad \times \int |\ip{x}{j} \ip{j'}{x}| \d x.
\end{align}
Because of \assum{quantum1} in \sec{prelim} which can lead to \assum{TunEnCon}, for $j \neq j^{\prime}$, we have
\begin{equation}
    \z x|j\y \z j^{\prime}| x\y = O(h^{-N'} e^{-\frac{d(x_j,x)+d(x_{j^{\prime}},x)}{h}})
    \leq O(h^{-N'}e^{-\frac{S_0}{h}}) = O(h^{\infty}),
    \label{eq:normoverlap}
\end{equation}
where $N'>0$ is some constant.
Therefore,
\begin{equation}
    \int |\z x|j\y \z j^{\prime}| x\y| \d x =
    \left\{\begin{array}{ll}
        1,\quad j=j', \\
        O(h^{\infty}), \quad j\neq j'.
    \end{array}
    \right.
\end{equation}
By the Cauchy-Schwarz inequality,
\begin{align}
   |\z j|E_k\y \z E_{k^{\prime}}| j'\y| \leq \frac{1}{2}(|\z j|E_k\y|^2 + |\z E_{k^{\prime}}| j'\y|^2),
\end{align}
such that,
\begin{align}
    \|\rho_{\rm QTW}(\tau,x) -\mu_{\rm QTW}(x)\|_1
   &\leq \sum_{E_k \neq E_{k^{\prime}}}\frac{2}{|E_k - E_{k^{\prime}}|\tau} \sum_{j,j'}|
   \z j|E_k\y \z E_k |\Phi(0)\y \z \Phi(0) |E_{k^{\prime}}\y \z E_{k^{\prime}}| j'\y | \nonumber \\
    &\qquad \times \int |\ip{x}{j} \ip{j'}{x}| \d x\\
   &\leq\sum_{E_k \neq E_{k^{\prime}}}\frac{2}{|E_k - E_{k^{\prime}}|\tau} \sum_{j,j'}|
    \z E_k |\Phi(0)\y \z \Phi(0) |E_{k^{\prime}}\y|\frac{1}{2}(|\z j|E_k\y|^2 + |\z E_{k^{\prime}}| j'\y|^2) \nonumber \\
    &\qquad \times \int |\ip{x}{j} \ip{j'}{x}| \d x.
\end{align}
Splitting the sum on $j$ and $j'$, we have
\begin{align}
    \|\rho_{\rm QTW}(\tau,x) -\mu_{\rm QTW}(x)\|_1&\leq\sum_{E_k \neq E_{k^{\prime}}}\frac{2}{|E_k - E_{k^{\prime}}|\tau} \sum_{j=1}^n|
    \z E_k |\Phi(0)\y \z \Phi(0) |E_{k^{\prime}}\y|\frac{1}{2}(|\z j|E_k\y|^2 + |\z E_{k^{\prime}}| j\y|^2) \nonumber \\
    &+\sum_{E_k \neq E_{k^{\prime}}}\frac{2}{|E_k - E_{k^{\prime}}|\tau} \sum_{j\neq j'}|
    \z E_k |\Phi(0)\y \z \Phi(0) |E_{k^{\prime}}\y|\frac{1}{2}(|\z j|E_k\y|^2 + |\z E_{k^{\prime}}| j'\y|^2) |O(h^{\infty})| \\
    &\leq \sum_{E_k \neq E_{k^{\prime}}}\frac{2}{|E_k - E_{k^{\prime}}|\tau} |\z E_k |\Phi(0)\y \z \Phi(0) |E_{k^{\prime}}\y| \nonumber \\
    &+\sum_{E_k \neq E_{k^{\prime}}}\frac{2}{|E_k - E_{k^{\prime}}|\tau}|
    \z E_k |\Phi(0)\y \z \Phi(0) |E_{k^{\prime}}\y|(N-1) |O(h^{\infty})|\\
    & = \sum_{E_k \neq E_{k^{\prime}}}\frac{2}{|E_k - E_{k^{\prime}}|\tau} |\z E_k |\Phi(0)\y \z \Phi(0) |E_{k^{\prime}}\y| [1 + (N-1)|O(h^{\infty})|].
\end{align}
One of the sufficient conditions to obtain \eq{mixingqtw} is that
\begin{align}
   \tau \geq \frac{2}{\epsilon}\sum_{E_k \neq E_{k^{\prime}}} \frac{| \z E_k |\Phi(0)\y \z \Phi(0) |E_{k^{\prime}}\y|}{|E_k - E_{k^{\prime}}|}[1 + (N-1)|O(h^{\infty})|].
\end{align}
Since the mixing time is the minimum $\tau$ for \eq{mixingqtw} to be valid, an upper bound of the mixing time is given by the above equation. In other words,
\begin{align}
   T_{\rm mix} = O\bigg(\frac{1}{\epsilon}\sum_{E_k \neq E_{k^{\prime}}} \frac{| \z E_k |\Phi(0)\y \z \Phi(0) |E_{k^{\prime}}\y|}{|E_k - E_{k^{\prime}}|}[1 + (N-1)|O(h^{\infty})|]\bigg)
\end{align}
Note the definition of $\Delta E$ and use the Cauchy-Schwarz inequality, we have
\begin{align}
   \sum_{E_k \neq E_{k^{\prime}}} \frac{| \z E_k |\Phi(0)\y \z \Phi(0) |E_{k^{\prime}}\y|}{|E_k - E_{k^{\prime}}|}
   &\leq \frac{1}{\Delta E} \sum_{E_k \neq E_{k^{\prime}}} | \z E_k |\Phi(0)\y \z \Phi(0) |E_{k^{\prime}}\y|\nonumber\\
   & \leq \frac{1}{ \Delta E} \sum_{k, k^{\prime}} \frac{1}{2}(| \z E_k |\Phi(0)\y |^2 + |\z \Phi(0) |E_{k^{\prime}}\y|^2)
   = \frac{N}{\Delta E},
   \label{eq:n/E}
\end{align}
which complete the proof of \lem{qtwmixingtime}.
\end{proof}

\subsubsection{Proof of \texorpdfstring{\lem{limitdis}}{Lemma 3}}\label{append:prooflem3}
\lem{limitdis} reveals the relation between probabilities $p(\infty,j)$ for $j=1,\ldots,N$ and the limit distribution $\mu_{\rm QTW}$.
\begin{proof}
We further estimate $\mu_{\rm QTW}$ as follows
\begin{align}
     \mu_{\rm QTW}(x) &=
    \sum_{j,j^{\prime}}
    \sum_{E_k=E_{k^{\prime}}} \z x|j\y \z j |E_k\y \z E_k |\Phi(0)\y \z \Phi(0) |E_{k^{\prime}}\y \z E_{k^{\prime}}| j^{\prime}\y \z j^{\prime}| x\y \nonumber\\
    & = \sum_{j}
    \sum_{E_k=E_{k^{\prime}}} |\z x|j\y|^2 \z j |E_k\y \z E_k |\Phi(0)\y \z \Phi(0) |E_{k^{\prime}}\y \z E_{k^{\prime}}| j\y \nonumber\\
    &~~~+ \sum_{j\neq j^{\prime}}
    \sum_{E_k=E_{k^{\prime}}} \z j |E_k\y \z E_k |\Phi(0)\y \z \Phi(0) |E_{k^{\prime}}\y \z E_{k^{\prime}}| j^{\prime}\y \z j^{\prime}| x\y\z x|j\y \nonumber\\
    & = \sum_{j}
    p(\infty,j) |\z x|j\y|^2 + O(h^{\infty}),
\end{align}
where the last equality uses \eq{normoverlap} and
\begin{align}
    p(\infty,j) =
    \sum_{E_k=E_{k^{\prime}}}\z j |E_k\y \z E_k |\Phi(0)\y \z \Phi(0) |E_{k^{\prime}}\y \z E_{k^{\prime}}| j\y.
\end{align}
Because $|\z x|j\y|^2$ is the position distribution of the state $|j \y$, $\mu_{\rm QTW}$ can be roughly seen as a weighted sum of the distributions of local states when $h$ is small enough, which is intuitive.
\end{proof}

\subsubsection{Proof of \texorpdfstring{\lem{mixingqw}}{Lemma 4}}\label{append:prooflem4}
\lem{mixingqw} bounds the mixing time of quantum walks whose evolution is governed by the interaction matrix of QTW.
\begin{proof}
By the definition of $p(\tau,j)$, we have
\begin{align}
   \sum_{j=1}^N |p(\tau,j) - p(\infty,j)| &=
   \sum_{j=1}^N \left|\sum_{E_k \neq E_{k^{\prime}}} \frac{1-e^{-i(E_k - E_{k^{\prime}})\tau}}{i(E_k - E_{k^{\prime}})\tau}
   \z j|E_k\y \z E_k |\Phi(0)\y \z \Phi(0) |E_{k^{\prime}}\y \z E_{k^{\prime}}| j\y\right| \nonumber\\
   & \leq \sum_{j=1}^N \sum_{E_k \neq E_{k^{\prime}}} \frac{2}{|E_k - E_{k^{\prime}}|\tau}
   |\z j|E_k\y \z E_k |\Phi(0)\y \z \Phi(0) |E_{k^{\prime}}\y \z E_{k^{\prime}}| j\y|.
\end{align}
The Cauchy-Schwarz inequality gives
\begin{align}
   |\z j|E_k\y \z E_{k^{\prime}}| j\y| \leq \frac{1}{2}(|\z j|E_k\y|^2 + |\z E_{k^{\prime}}| j\y|^2),
\end{align}
and then,
\begin{align}
   \sum_{j=1}^N |p(\tau,j) - p(\infty,j)|
   &\leq \sum_{E_k \neq E_{k^{\prime}}}\sum_{j=1}^N \frac{2}{|E_k - E_{k^{\prime}}|\tau}
   | \z E_k |\Phi(0)\y \z \Phi(0) |E_{k^{\prime}}\y|\frac{1}{2}(|\z j|E_k\y|^2 + |\z E_{k^{\prime}}| j\y|^2) \nonumber\\
   &=\sum_{E_k \neq E_{k^{\prime}}} \frac{2}{|E_k - E_{k^{\prime}}|\tau}
   | \z E_k |\Phi(0)\y \z \Phi(0) |E_{k^{\prime}}\y|.
\end{align}
The condition \eq{disofp} is satisfied if
\begin{align}
   \tau \geq \frac{2}{\epsilon}\sum_{E_k \neq E_{k^{\prime}}} \frac{| \z E_k |\Phi(0)\y \z \Phi(0) |E_{k^{\prime}}\y|}{|E_k - E_{k^{\prime}}|},
\end{align}
Therefore, an upper bound of the quantum walk mixing time $t_{\rm mix}$ is
\begin{align}
   \frac{2}{\epsilon}\sum_{E_k \neq E_{k^{\prime}}} \frac{| \z E_k |\Phi(0)\y \z \Phi(0) |E_{k^{\prime}}\y|}{|E_k - E_{k^{\prime}}|} \leq \frac{2N}{\epsilon \Delta E}.
\end{align}
The last inequality uses \eq{n/E}, which finishes the proof.
\end{proof}

\subsection{Details of quantum hitting time}
\subsubsection{Proof of \texorpdfstring{\lem{qwhitting}}{Lemma 5}}\label{append:prooflem5}
We introduce by \lem{qwhitting} the typical upper bound of hitting time for quantum walks \cite{CCD+03}.

\begin{proof}
The definition of $p(\tau,j)$ gives
\begin{align}
   |p(\tau,j) - p(\infty,j)| &=
   \left|\sum_{E_k \neq E_{k^{\prime}}} \frac{1-e^{-i(E_k - E_{k^{\prime}})\tau}}{i(E_k - E_{k^{\prime}})\tau}
   \z j|E_k\y \z E_k |\Phi(0)\y \z \Phi(0) |E_{k^{\prime}}\y \z E_{k^{\prime}}| j\y\right| \nonumber\\
   & \leq \sum_{E_k \neq E_{k^{\prime}}} \frac{2}{|E_k - E_{k^{\prime}}|\tau}
   |\z j|E_k\y \z E_k |\Phi(0)\y \z \Phi(0) |E_{k^{\prime}}\y \z E_{k^{\prime}}| j\y|,
\end{align}
which naturally leads to
\begin{align}
   p(\tau,j)
    \geq  p(\infty,j) - \sum_{E_k \neq E_{k^{\prime}}} \frac{2}{|E_k - E_{k^{\prime}}|\tau}
   |\z j|E_k\y \z E_k |\Phi(0)\y \z \Phi(0) |E_{k^{\prime}}\y \z E_{k^{\prime}}| j\y|.
\end{align}
The second term of the right hand side of the above equation can be bounded as
\begin{align}
  \sum_{E_k \neq E_{k^{\prime}}}&\frac{2}{|E_k - E_{k^{\prime}}|\tau}
   |\z j|E_k\y \z E_k |\Phi(0)\y \z \Phi(0) |E_{k^{\prime}}\y \z E_{k^{\prime}}| j\y| \nonumber\\
   & \leq \frac{2}{\Delta E \tau}\sum_{E_k, E_{k^{\prime}}}
   |\z j|E_k\y \z E_k |\Phi(0)\y \z \Phi(0) |E_{k^{\prime}}\y \z E_{k^{\prime}}| j\y| \\
   & \leq \frac{1}{\Delta E \tau}\sum_{E_k, E_{k^{\prime}}}(|\z j|E_k\y \z E_{k^{\prime}}| j\y|^2 + |\z \Phi(0) |E_{k^{\prime}}\y\z E_k |\Phi(0)\y|^2 )
    = \frac{2}{\Delta E \tau}.
\end{align}
Here, the second inequality uses the Cauchy-Schwarz inequality.
Therefore, we have $p(\tau, j)\geq p(\infty,j)-2/\Delta E \tau$.
Because $t_{\rm hit} (j)$ is defined as the lower bound of $\tau/p(\tau,j)$ for all $\tau>0$. Define $\tau_{\epsilon}= 2/\Delta E \epsilon$ for $\epsilon<p(\infty,j)$, we must have
\begin{align}
    t_{\rm hit} (j) \leq \frac{\tau_{\epsilon}}{p(\tau_{\epsilon},j)} \leq \frac{2/\Delta E \epsilon}{p(\infty,j) - \epsilon}.
\end{align}
\end{proof}

\subsubsection{Proof of \texorpdfstring{\lem{qtwhitting}}{Lemma 6}}\label{append:prooflem6}
For QTW, the hitting time is estimated by \lem{qtwhitting} similar to \lem{qwhitting}.
\begin{proof}
We first estimate the quantity $\int_{\Omega_j}\rho_{\rm QTW}(\tau,x)\d x$:
\begin{align}
    \big| \int_{\Omega_j} (\rho_{\rm QTW}(\tau,x) -\mu_{\rm QTW}(x))\d x \big|
   & \leq \int_{\Omega_j} \sum_{E_k \neq E_{k^{\prime}}}\frac{2}{|E_k - E_{k^{\prime}}|\tau}\left|
   \z x|E_k\y \z E_k |\Phi(0)\y \z \Phi(0) |E_{k^{\prime}}\y \z E_{k^{\prime}}| x\y\right| \d x \\
   &\leq \sum_{E_k \neq E_{k^{\prime}}}\frac{2}{|E_k - E_{k^{\prime}}|\tau} \sum_{l,j'}|
   \z l|E_k\y \z E_k |\Phi(0)\y \z \Phi(0) |E_{k^{\prime}}\y \z E_{k^{\prime}}| j'\y | \nonumber \\
    &\qquad \times \int_{\Omega_j} |\ip{x}{l} \ip{j'}{x}| \d x.
\end{align}
Note that \assum{TunEnCon} is satisfied as assumed, for $l \neq j^{\prime}$, we have
\begin{equation}
    \z x|l\y \z j^{\prime}| x\y = O(h^{-N'} e^{-\frac{d(x_j,x)+d(x_{j^{\prime}},x)}{h}})
    \leq O(h^{-N'}e^{-\frac{S_0}{h}}) = O(h^{\infty}),
\end{equation}
where $N'>0$ is some constant.
Then, it is readily to have
\begin{align}
    \big| \int_{\Omega_j} (\rho_{\rm QTW}(\tau,x) -\mu_{\rm QTW}(x))\d x \big|
   &\leq \frac{2}{\Delta E\tau} \big[\sum_{E_k \neq E_{k^{\prime}}} \sum_{l}|
   \z l|E_k\y \z E_k |\Phi(0)\y \z \Phi(0) |E_{k^{\prime}}\y \z E_{k^{\prime}}| l\y | \nonumber \\
    &\qquad \times \int_{\Omega_j} |\ip{x}{l} \ip{l}{x}| \d x + |O(h^{\infty})| \big].
\end{align}
According to \cor{L2norm}, the integrals $\int_{\Omega_j} |\ip{x}{j} \ip{j}{x}| \d x = 1+ O(h^{\infty})$ and $\forall l\neq j,~\int_{\Omega_l} |\ip{x}{l} \ip{l}{x}| \d x = O(h^{\infty})$, which can yield
\begin{align}
    \big| \int_{\Omega_j} (\rho_{\rm QTW}(\tau,x) -\mu_{\rm QTW}(x))\d x \big|
   \leq \frac{2}{\Delta E\tau} \big[\sum_{E_k \neq E_{k^{\prime}}} |
   \z j|E_k\y \z E_k |\Phi(0)\y \z \Phi(0) |E_{k^{\prime}}\y \z E_{k^{\prime}}| j\y | + |O(h^{\infty})| \big].
\end{align}
Apply the Cauchy-Schwarz inequality, we have
\begin{align}
    \sum_{E_k \neq E_{k^{\prime}}} |
   \z j|E_k\y \z E_k |\Phi(0)\y \z \Phi(0) |E_{k^{\prime}}\y \z E_{k^{\prime}}| j\y |  &\leq \sum_{E_k, E_{k^{\prime}}} |
   \z j|E_k\y \z E_k |\Phi(0)\y \z \Phi(0) |E_{k^{\prime}}\y \z E_{k^{\prime}}| j\y | \\
   &\leq \frac{1}{2}\sum_{E_k, E_{k^{\prime}}}(|\z j|E_k\y \z E_{k^{\prime}}| j\y|^2 + |\z \Phi(0) |E_{k^{\prime}}\y\z E_k |\Phi(0)\y|^2 ) = 1,
\end{align}
and then
\begin{align}
    \big| \int_{\Omega_j} (\rho_{\rm QTW}(\tau,x) -\mu_{\rm QTW}(x))\d x \big|
   \leq \frac{2}{\Delta E\tau}(1 + |O(h^{\infty})|).
   \label{eq:rho-mudx}
\end{align}
So, $\int_{\Omega_j} \rho_{\rm QTW}(\tau,x) \d x$ can be bounded as
\begin{align}
    \int_{\Omega_j}\rho_{\rm QTW}(\tau,x) \d x \geq \int_{\Omega_j}\mu_{\rm QTW}(x) \d x - \frac{2}{\Delta E\tau}(1 + |O(h^{\infty})|).
\end{align}
Use \cor{L2norm} again, we get $\int_{\Omega_j}\mu_{\rm QTW}(x) \d x = p(\infty,j) + O(\infty)$ which completes the proof of \eq{qtwhitting}.

Because $T_{\rm hit} (\Omega_j)$ is the lower bound of $\tau/\int_{\Omega_j}\rho_{\rm QTW}(\tau,x)\d x$ for all $\tau>0$. Given that $\tau_{\epsilon}= 2(1+|O(h^{\infty})|)/\Delta E \epsilon$ (the term $O(h^{\infty})$ should be the same as that in \eq{rho-mudx}) for $\epsilon<\int_{\Omega_j}\mu_{\rm QTW}(x)\d x$, we must have
\begin{align}
    T_{\rm hit} (\Omega_j) \leq \frac{\tau_{\epsilon}}{\int_{\Omega_j}\rho_{\rm QTW}(\tau,x)\d x} \leq \frac{2}{\Delta E \epsilon} \frac{1+|O(h^{\infty})|}{\int_{\Omega_j}\mu_{\rm QTW}(x)\d x - \epsilon}.
\end{align}
\end{proof}

\subsection{Details about tensor decomposition}
\subsubsection{Proof of \texorpdfstring{\lem{TenDecEig}}{Lemma 7}}\label{append:prooflem7}
\lem{TenDecEig} explores the eigenstates and eigenvalues of $H_{|\F}$ given by \eq{TenDec}.To prove \lem{TenDecEig}, we first study a simpler matrix.
\begin{lemma}\label{lem:mostsymmetric}
Consider matrices of the following form:
\begin{equation}
    A = \left(
    \begin{array}{cccc}
        \mu & w & \cdots &  w \\
         w & \mu & w & \vdots \\
          \vdots & w & \ddots & w \\
          w & \cdots  &  w& \mu  \\
    \end{array}
    \right).
\end{equation}
The eigenvalues are
\begin{equation}
    E_1 = \cdots = E_{n-1} = \mu - w,~E_n = \mu + (n-1)w,
\end{equation}
And the corresponding eigenstates can be given by
\begin{equation}
    |E_k\y  = \frac{1}{\sqrt{n}}\sum_{j}e^{i\frac{2\pi}{n}k j}|j\y,
\end{equation}
\end{lemma}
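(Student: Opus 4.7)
The plan is to recognize $A$ as a rank-one perturbation of a scalar matrix, namely $A = (\mu - w)I + w J$, where $J$ denotes the $n\times n$ all-ones matrix. Since the spectrum of $J$ is well known ($J$ has eigenvalue $n$ on the uniform vector and eigenvalue $0$ on its orthogonal complement), the spectral decomposition of $A$ follows immediately by linearity.

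First I would verify the candidate eigenstates $\ket{E_k} = \frac{1}{\sqrt{n}}\sum_{j=1}^{n} e^{i\frac{2\pi}{n}kj}\ket{j}$ form an orthonormal basis. Orthonormality is a standard computation using the identity $\sum_{j=1}^{n} e^{i\frac{2\pi}{n}(k-k')j} = n\,\delta_{k,k'\,(\mathrm{mod}\,n)}$. Then I would compute $J\ket{E_k}$ by observing that every row of $J$ equals the all-ones row, so $\bra{j'}J\ket{E_k} = \frac{1}{\sqrt{n}}\sum_{j=1}^{n} e^{i\frac{2\pi}{n}kj}$. For $k = n$ (equivalently $k\equiv 0$), this sum equals $n$, yielding $J\ket{E_n} = n\ket{E_n}$; for $k = 1,\dots,n-1$, this is a nontrivial geometric sum of $n$-th roots of unity and vanishes, so $J\ket{E_k} = 0$.

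Combining these observations with $A = (\mu - w)I + wJ$ gives $A\ket{E_k} = (\mu - w)\ket{E_k}$ for $k = 1,\dots,n-1$ and $A\ket{E_n} = \bigl((\mu - w) + wn\bigr)\ket{E_n} = \bigl(\mu + (n-1)w\bigr)\ket{E_n}$, which matches the stated eigenvalues. Since we have produced $n$ orthonormal eigenvectors, this exhausts the spectrum.

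There is no serious obstacle here; the only subtle point is being careful with the index convention (the statement uses $k=1,\dots,n$ rather than $k=0,\dots,n-1$, so the ``uniform'' eigenvector corresponds to $k = n$ via $e^{i2\pi j} = 1$). Once that convention is fixed, the argument is a direct calculation. This lemma will subsequently be used to analyze the two diagonal $d\times d$ blocks appearing in \eq{TenDec} when combined with the symmetry that couples the $\pm$ sectors.
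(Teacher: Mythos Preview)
Your proof is correct. The paper does not really give a proof here: it simply remarks that the lemma ``can be easily verified'' by the standard route of solving $\det(E I - A)=0$ for the eigenvalues and then solving $(E I - A)\ket{E}=0$ for the eigenvectors. Your approach is more structural: you write $A=(\mu-w)I+wJ$ with $J$ the all-ones matrix and read off the spectrum from the (trivial) spectrum of $J$, verifying the DFT vectors are eigenvectors of $J$ via the geometric-sum identity. This buys you a cleaner explanation of \emph{why} the eigenvalues split as $(n-1)$-fold $\mu-w$ plus one $\mu+(n-1)w$, and it avoids any determinant computation; the paper's route is more mechanical but equally valid. Your caution about the index convention ($k=n$ playing the role of the uniform vector) is the only point that needs care, and you handled it correctly.
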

\lem{mostsymmetric} can be easily verified. To prove it, we use the standard way: eigenvalues are solved from $\mathrm{det}(E_i I - A) = 0$, and then a orthonormal set of eigenstates can be solve from $(E_i I -A)\ket{E_i} = 0$.
Now, we turn to the proof of \lem{TenDecEig}:
\begin{proof}[Proof of \lem{TenDecEig}]
By symmetry, we first introduce the new basis,
\begin{align}
    |j,S\y &= \frac{1}{\sqrt{2}}(|j,+\y+ | j,-\y),\nonumber\\
    |j,A\y &= \frac{1}{\sqrt{2}}(|j,+\y- | j,-\y).
\end{align}
Here, $S$ and $A$ refer to symmetric and anti-symmetric, respectively. In the basis
$\{|1,S\y,\ldots,|d,S\y,$ $
|1,A\y,\ldots,|d,A\y\}$,
$H_{|\F}$ should be changed to
\begin{equation}
    \left(
    \begin{array}{cccc|cccc}
        \mu & 2w  & \cdots & 2w &  &   &  &  \\
         2w & \mu & \ddots & \vdots &  &  &  &\\
         \vdots  &\ddots  & \ddots& 2w  &  &  & & \\
         2w  & \cdots& 2w & \mu  &  & &  &  \\
         \hline
          &   &  &  & \mu &   &  & \\
          &  & &  &  & \mu &  & \\
          & & &  &  &  & \ddots& \\
          & &  &  &  & &  & \mu \\
    \end{array}
    \right),
\end{equation}
where the upper left block is $d\times d$ and is of the case described by \lem{mostsymmetric}. So, we know that $d$ eigenstates of $H_{|\F}$ is given by
\begin{equation}
    |E_k\y = \frac{1}{\sqrt{d}}\sum_j e^{i\frac{2\pi}{d}kj}|j,S\y,~k=1,\ldots,d
\end{equation}
And the corresponding eigenvalues are
\begin{equation}
    E_1 = \cdots = E_{d-1} = \mu - 2w,~E_d = \mu + 2w(n-1).
\end{equation}
Clearly, $|j,A\y,~(j=1,\ldots,d)$ form the left $d$ eigenstates with the eigenvalue $\mu$.
\end{proof}

\subsubsection{Proof of \texorpdfstring{\lem{TenDecmix}}{Lemma 8}}\label{append:prooflem8}
\lem{TenDecmix} estimates the mixing time of QTW on the landscape \eq{TenDec}.
\begin{proof}
First recall that
\begin{align}
    \rho_{\rm QTW} - \mu_{\rm QTW} &= \sum_{E_k \neq E_{k^{\prime}}} \frac{1-e^{-i(E_k - E_{k^{\prime}})\tau}}{i(E_k - E_{k^{\prime}})\tau/}
   \z x|E_k\y \z E_k |\alpha\y \z \alpha |E_{k^{\prime}}\y \z E_{k^{\prime}}| x\y \nonumber\\
   &= \sum_{\beta} \sum_{E_k \neq E_{k^{\prime}}} \frac{1-e^{-i(E_k - E_{k^{\prime}})\tau}}{i(E_k - E_{k^{\prime}})\tau}
   \z\beta |E_k\y \z E_k |\alpha\y \z \alpha |E_{k^{\prime}}\y \z E_{k^{\prime}}| \beta \y |\z x | \beta \y|^2 + \frac{1}{w \tau}O(h^{\infty})e^{-\frac{S_0}{h}} \nonumber\\
   &= \sum_{\beta} \sum_{k^{\prime}\in \{1,\ldots,d-1\}}
   \frac{1-e^{-i(2dw)\tau}}{i2dw\tau}
   \z\beta |E_d\y \z E_d |\alpha\y \z \alpha |E_{k^{\prime}}\y \z E_{k^{\prime}}| \beta \y |\z x | \beta \y|^2 + \mathrm{c.c.}\nonumber\\
   &\quad+ \sum_{\beta} \sum_{\substack{k\in \{d+1,\ldots,2d\}\\k^{\prime}\in \{1,\ldots,d-1\}}}
   \frac{1-e^{-i(2w)\tau}}{i2w\tau}
   \z\beta |E_k\y \z E_k |\alpha\y \z \alpha |E_{k^{\prime}}\y \z E_{k^{\prime}}| \beta \y |\z x | \beta \y|^2 + \mathrm{c.c.}
   \nonumber\\
   &\quad+ \sum_{\beta} \sum_{k^{\prime}\in \{d+1,\ldots,2d\}}
   \frac{1-e^{-i(2(d-1)w)\tau}}{i2(d-1)w\tau}
   \z\beta |E_d\y \z E_d |\alpha\y \z \alpha |E_{k^{\prime}}\y \z E_{k^{\prime}}| \beta \y |\z x | \beta \y|^2 + \mathrm{c.c.}
   \nonumber\\
   &\quad+\frac{1}{w \tau}O(h^{\infty})e^{-\frac{S_0}{h}}.
   \label{eq:TenDecrpm}
\end{align}
Here, the second equality is true because
\begin{equation}
    \z x | \mu \y \z \nu |z \y \propto e^{-\frac{d_{x,\pi_{\mu}a_{j(\mu)}}+ d_{x,\pi_{\nu}a_{j(\nu)}}}{h}} \leq e^{-\frac{S_0}{h}},
\end{equation}
and the third equality uses the fact
\begin{equation}
    f(E_k, E_{k^{\prime}}, \alpha,\beta) = f(E_{k^{\prime}}, E_k,\alpha,\beta)^{*},
\end{equation}
where
\begin{equation}
    f(E_k, E_{k^{\prime}},\alpha,\beta) \equiv \frac{1-e^{-i(E_k - E_{k^{\prime}})\tau}}{i(E_k - E_{k^{\prime}})\tau}
   \z\beta |E_k\y \z E_k |\alpha\y \z \alpha |E_{k^{\prime}}\y \z E_{k^{\prime}}| \beta \y |\z x | \beta \y|^2.
\end{equation}
and c.c. denote ``complex conjugate." Now, we calculate
different terms separately using \lem{TenDecEig}:
\begin{align}
    \sum_{k^{\prime}\in \{1,\ldots,d-1\}} f(E_d,E_k^{\prime},\alpha,\beta)&=
   \sum_{k^{\prime}\in \{1,\ldots,d-1\}}\frac{1-e^{-i(2dw)\tau}}{i2dw\tau}
    \frac{1}{(2d)^2}e^{i\frac{2\pi}{d}k^{\prime}(j(\alpha)-j(\beta))} |\z x | \beta \y|^2 \nonumber\\
   &= \left\{\begin{array}{c}
         \frac{1-e^{-i2dw\tau}}{i2dw\tau}\frac{d-1}{(2d)^2}|\z x | \beta \y|^2,~j(\alpha) = j(\beta), \\
         \frac{e^{-i2dw\tau}-1}{i2dw\tau}\frac{1}{(2d)^2}|\z x | \beta \y|^2,~j(\alpha) \neq j(\beta),
    \end{array} \right.
\end{align}
\begin{align}
   \sum_{\substack{k\in \{d+1,\ldots,2d\}\\k^{\prime}\in \{1,\ldots,d-1\}}}  f(E_k,E_k^{\prime},\alpha,\beta)&=
   \sum_{\substack{k\in \{d+1,\ldots,2d\}\\k^{\prime}\in \{1,\ldots,d-1\}}}
   \frac{1-e^{-i2w\tau}}{i2w\tau}
    \frac{\pi_{\alpha}\pi_{\beta}\delta_{j(\alpha),j(\beta)}\delta_{j(\alpha),k-d}}{4d}e^{i\frac{2\pi}{d}k^{\prime}(j(\alpha)-j(\beta))} |\z x | \beta \y|^2 \nonumber\\
   &= \delta_{j(\alpha),j(\beta)}\frac{1-e^{-i2w\tau}}{i2w\tau}\pi_{\alpha}\pi_{\beta}
    \frac{d-1}{4d} |\z x | \beta \y|^2,
\end{align}
\begin{align}
    \sum_{k^{\prime}\in \{d+1,\ldots,2d\}} f(E_d,E_k^{\prime},\alpha,\beta)&=
   \sum_{k^{\prime}\in \{1,\ldots,d-1\}}\frac{1-e^{-i2(d-1)w\tau}}{i2(d-1)w\tau}
    \frac{1}{4d}\delta_{j(\alpha),j(\beta)}\delta_{j(\alpha),k^{\prime}-d}\pi_{\alpha}\pi_{\beta} |\z x | \beta \y|^2 \nonumber\\
   &= \frac{1-e^{-i2(d-1)w\tau}}{i2(d-1)w\tau}
    \frac{1}{4d}\delta_{j(\alpha),j(\beta)}\pi_{\alpha}\pi_{\beta} |\z x | \beta \y|^2.
\end{align}
Substituting the above results in \eq{TenDecrpm} gives
\begin{align}
    \rho_{\rm QTW} - \mu_{\rm QTW} &=
    \sum_{j(\beta) \neq j(\alpha)} -\frac{\sin(2dw\tau)}{dw\tau}\frac{1}{(2d)^2}
    |\z x | \beta \y|^2 +\sum_{j(\beta) = j(\alpha)}
    \frac{\sin(2dw\tau)}{dw\tau}\frac{d-1}{(2d)^2}
    |\z x | \beta \y|^2
    \nonumber\\
   &\quad+ \frac{\sin(2w\tau )}{w\tau }\frac{d-1}{4d}(|\z x|j(\alpha),\pi_{\alpha} \y|^2-|\z x|j(\alpha),-\pi_{\alpha} \y|^2)
   \nonumber\\
   &\quad+ \frac{\sin(2(d-1)w\tau )}{(d-1)w\tau }\frac{1}{4d}(|\z x|j(\alpha),\pi_{\alpha} \y|^2-|\z x|j(\alpha),-\pi_{\alpha} \y|^2)
   +\frac{1}{w \tau}O(h^{\infty})e^{-\frac{S_0}{h}}.
   \label{eq:TenDecrmmr}
\end{align}
The 1-norm here refers in particular to the integral
on $\mathbb{S}^{d-1}$, i.e., $\|\cdot\|_1 = \int_{\mathbb{S}^{d-1}}|\cdot|\d x$. We use $\d x$
to denote the volume element of $\mathbb{S}^{d-1}$ induced from $\mathbb{R}^d$.
As $\sin(\cdot)\leq 1$, it is easily estimated from \eq{TenDecrmmr} that
\begin{align}
    |\rho_{\rm QTW} - \mu_{\rm QTW}|
   &\leq \sum_{j(\beta) \neq j(\alpha)} \frac{1}{d|w|\tau}\frac{1}{(2d)^2}
    |\z x | \beta \y|^2 +\sum_{j(\beta) = j(\alpha)}
    \frac{1}{d|w|\tau}\frac{d-1}{(2d)^2}
    |\z x | \beta \y|^2
    \nonumber\\
   &\quad+ \frac{1}{|w|\tau}\frac{d-1}{4d}(|\z x|j(\alpha),\pi_{\alpha} \y|^2+|\z x|j(\alpha),-\pi_{\alpha} \y|^2)
   \nonumber\\
   &\quad+ \frac{1}{(d-1)|w|\tau}\frac{1}{4d}(|\z x|j(\alpha),\pi_{\alpha} \y|^2+|\z x|j(\alpha),-\pi_{\alpha} \y|^2)
   +\frac{1}{|w| \tau}O(h^{\infty})e^{-\frac{S_0}{h}},
\end{align}
Note that for any $\beta$, $\int_{\mathbb{S}^{d-1}}|\z x|\beta\y|^2\d x=1$, so we can obtain
\begin{align}
    \|\rho_{\rm QTW} - \mu_{\rm QTW}\|_1 & \leq \frac{1}{|w| \tau}\Big(\frac{d-1}{d^3} + \frac{d-1}{2d} + \frac{1}{2d(d-1)}+ O(h^{\infty})\Big),
\end{align}
which gives \eq{TenDecmix}.
\end{proof}

\subsubsection{Proof of \texorpdfstring{\lem{TenDecprobability}}{Lemma 9}}\label{append:prooflem9}

\begin{proof}
\begin{align}
    p(\infty,\beta|\alpha) &= \sum_{E_k = E_{k^{\prime}}}
    \z \beta | E_k \y\z E_k |\alpha \y\z \alpha | E_{k^{\prime}}
    \y\z E_{k^{\prime}} | \beta\y \nonumber\\
    &= \sum_{k,k^{\prime}\in \{1,\ldots,d-1\}} +
    \sum_{k = k^{\prime} = d} +
    \sum_{k,k^{\prime}\in \{d+1,\ldots,2d\}}
    \z \beta | E_k \y\z E_k |\alpha \y\z \alpha | E_{k^{\prime}}
    \y\z E_{k^{\prime}} | \beta\y
\end{align}
Apply the results from \lem{TenDecEig}, we have
\begin{align}
    \sum_{k,k^{\prime}\in \{1,\ldots,d-1\}}\z \beta | E_k \y\z E_k |\alpha \y\z \alpha | E_{k^{\prime}}
    \y\z E_{k^{\prime}} | \beta\y &=
    \sum_{k,k^{\prime} \in \{1,\ldots,d-1\}} \frac{1}{(2d)^2}
    e^{i\frac{2\pi}{d}[(j(\alpha) - j(\beta)) (k^{\prime} - k ) ]}
     \nonumber\\
    &= \left\{\begin{array}{c}
         \frac{(d-1)^2}{(2d)^2},~j(\alpha) = j(\beta), \\
         \frac{1}{(2d)^2},~j(\alpha) \neq j(\beta),
    \end{array} \right.
\end{align}
\begin{equation}
    \z \beta | E_d \y\z E_d |\alpha \y\z \alpha | E_d
    \y\z E_d| \beta\y = \frac{1}{(2d)^2},
\end{equation}
\begin{align}
    \sum_{k,k^{\prime}\in \{d+1,\ldots,2d\}}\z \beta | E_k \y\z E_k |\alpha \y\z \alpha | E_{k^{\prime}}
    \y\z E_{k^{\prime}} | \beta\y
    &=
    \sum_{k,k^{\prime}\in \{d+1,\ldots,2d\}} \frac{1}{4}
    \delta_{k-d,j(\alpha)}\delta_{k-d,j(\beta)}
    \delta_{k^{\prime}-d,j(\alpha)}\delta_{k^{\prime}-d,j(\beta)}
     \nonumber\\
     &=
    \sum_{k,k^{\prime}\in \{d+1,\ldots,2d\}} \frac{1}{4}
    \delta_{k,k^{\prime}}\delta_{j(\alpha),j(\beta)}
    \delta_{k-d,j(\alpha)}
     \nonumber\\
    &= \left\{\begin{array}{c}
         \frac{1}{4},~j(\alpha) = j(\beta), \\
         0,~j(\alpha) \neq j(\beta),
    \end{array} \right.
\end{align}
which complete the proof.
\end{proof}

\subsubsection{Proof of \texorpdfstring{\lem{TenDecw}}{Lemma 10}}\label{append:prooflem10}
The proof of \lem{TenDecw} makes use of \prop{speW}.
So, we should first study the Agmon geodesics and Agmon distances between minima.
The following Lemma and Corollary give the explicit value of the shortest Agmon distance $S_0 = \min_{\alpha\neq \beta} d(\alpha,\beta)$ whose proof indicates there is at most one geodesic with the Agmon length $S_0$ linking two different minima.
\begin{lemma}\label{lem:TenDecagmondis}
Constrained on $\mathbb{S}^{d-1}$, for two wells $\alpha \neq \beta$,
\begin{equation}
    d( \pi_{\alpha}a_{j(\alpha)}, \pi_{\beta}a_{j(\beta)})
    =\left\{\begin{array}{c}
         \sqrt{2},~j(\alpha) = j(\beta), \\
         \sqrt{2}/2,~j(\alpha) \neq j(\beta),
    \end{array} \right.
\end{equation}
where, $d(\cdot ,\cdot)$ refers to the Agmon distance (see \defn{Agmon-distance} or \append{agmon} for more details).
\end{lemma}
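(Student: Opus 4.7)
The plan is to establish the stronger explicit formula
\[
d(a_j, x) \;=\; \frac{1 - x_j\,|x_j|}{\sqrt{2}}, \qquad x\in\mathbb{S}^{d-1},
\]
from which \lem{TenDecagmondis} follows by direct substitution: with $p=\pi_\alpha a_{j(\alpha)}$ and $q=\pi_\beta a_{j(\beta)}$, the $j(\alpha)$-th coordinate of $q$ vanishes when $j(\alpha)\neq j(\beta)$ (giving $d=\sqrt{2}/2$) and equals $-\pi_\alpha$ when $j(\alpha)=j(\beta),\,\pi_\alpha\neq\pi_\beta$ (giving $d=\sqrt{2}$). By the $\mathbb{Z}_2^d\rtimes S_d$ symmetry of $f$, it suffices to treat $j=1$ and match upper and lower bounds for $\psi(x):=(1-x_1|x_1|)/\sqrt{2}$.

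For the upper bound, I would exhibit the great circle in $\mathrm{span}(a_1,a_k)$ parametrised by $\gamma(\theta)=\cos\theta\,a_1+\sin\theta\,a_k$. A direct computation using $1-\cos^4\theta-\sin^4\theta=2\sin^2\theta\cos^2\theta$ gives $f(\gamma(\theta))=\tfrac{1}{2}\sin^2(2\theta)$, so
\[
\int_0^{\theta_0}\!\sqrt{f(\gamma)}\,\d\theta \;=\; \frac{1}{\sqrt{2}}\int_0^{\theta_0}|\sin 2\theta|\,\d\theta
\]
evaluates to $\sin^2\theta_0/\sqrt{2}=x_k^2/\sqrt{2}$ for $\theta_0\in[0,\pi/2]$ and to $(1+\cos^2\theta_0)/\sqrt{2}=(1+x_1^2)/\sqrt{2}$ for $\theta_0\in[\pi/2,\pi]$, reproducing $\psi$ along the arc; specialising to $\theta_0=\pi/2$ and $\theta_0=\pi$ recovers $\sqrt{2}/2$ and $\sqrt{2}$ respectively.

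For the matching lower bound, I invoke the standard eikonal comparison: if $\psi\in\mathrm{Lip}(\mathbb{S}^{d-1})$ satisfies $\psi(a_1)=0$ and $|\nabla_{\mathbb{S}^{d-1}}\psi|^2\le f$ almost everywhere, then for any admissible path $\gamma$ from $a_1$ to $x$,
\[
\psi(x) \;=\; \int_0^1 \nabla_{\mathbb{S}^{d-1}}\psi(\gamma)\cdot\gamma'\,\d t \;\le\; \int_0^1 \sqrt{f(\gamma)}\,|\gamma'|\,\d t,
\]
whence $\psi(x)\le d(a_1,x)$. For our candidate, the ambient gradient is $\nabla\psi=-\sqrt{2}\,|x_1|\,e_1$ for $x_1\neq 0$, and the tangential formula $|\nabla_{\mathbb{S}^{d-1}}\psi|^2=|\nabla\psi|^2-(x\cdot\nabla\psi)^2$ yields $|\nabla_{\mathbb{S}^{d-1}}\psi|^2=2x_1^2(1-x_1^2)$. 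The desired inequality then follows from the algebraic identity $f=2\sum_{i<l}x_i^2x_l^2\ge 2x_1^2\sum_{l>1}x_l^2=2x_1^2(1-x_1^2)$, with equality precisely on the great circles through $a_1$ and some $\pm a_l$, which is exactly consistent with the uniqueness-up-to-symmetry of the minimising geodesic.

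The principal technical obstacle is the mild non-smoothness of $\psi$ at the equator $\{x_1=0\}$: both one-sided $x_1$-derivatives vanish there, so $\psi$ is $C^1$ but not $C^2$, and $\nabla\psi$ is only Lipschitz. Fortunately the eikonal comparison above requires only Lipschitz regularity (Rademacher's theorem supplies a.e.\ differentiability, and the fundamental theorem of calculus is applied along almost every admissible path), so no refinement of the formula is needed. Once $d(a_1,x)=\psi(x)$ is established globally, evaluating at $x=\pi_\beta a_{j(\beta)}$ gives the two cases of \lem{TenDecagmondis}; moreover the equality analysis above verifies the unique-geodesic claim needed by the corollary that immediately follows.
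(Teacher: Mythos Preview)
Your argument correctly proves the lemma, but by a different route from the paper's. Both proofs share the upper-bound computation along the great circle in $\mathrm{span}(a_1,a_k)$, yielding $\int_0^{\theta_0}\frac{1}{\sqrt2}|\sin 2\theta|\,\d\theta$. For the lower bound the paper works in spherical coordinates $(\psi^1,\dots,\psi^{d-1})$ with $x_1=\cos\psi^1$, and bounds every path integral from below by combining the pointwise inequality $f(\psi^1,\psi^2,\dots)\ge f(\psi^1,0,\dots,0)$ with the metric inequality $\d x\ge \d\psi^1$; this reduces any admissible path to the great-circle integral in the single variable $\psi^1$. Your eikonal sub-solution argument packages the same two ingredients more cleanly (the inequality $f\ge 2x_1^2(1-x_1^2)$ is exactly the paper's first inequality rewritten), and connects directly to the Agmon machinery used elsewhere in the paper. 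Your identification of the equality locus (the coordinate $2$-planes through $a_1$) also supplies the uniqueness of the minimising geodesic, which the paper establishes separately and needs for \lem{TenDecw}.

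One caveat: the ``stronger explicit formula'' $d(a_1,x)=(1-x_1|x_1|)/\sqrt2$ for \emph{all} $x\in\mathbb{S}^{d-1}$ is not established by your argument and is in fact false at generic points. Your upper bound is only exhibited along the coordinate great circles; away from those you have shown $|\nabla_{\mathbb{S}^{d-1}}\psi|^2<f$ strictly, so $\psi$ is a strict sub-solution there. Since the true Agmon distance satisfies the eikonal equation $|\nabla d(a_1,\cdot)|^2=f$ where it is smooth, equality $\psi=d(a_1,\cdot)$ cannot hold off the coordinate $2$-planes (concretely, the Cauchy--Schwarz step in your comparison becomes strict along any minimising path ending at such a point). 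This does not matter for the lemma, because the target points $\pm a_{j(\beta)}$ all lie on those great circles, but you should drop the global claim and state only that $\psi\le d(a_1,\cdot)$ with equality on $\mathrm{span}(a_1,a_k)\cap\mathbb{S}^{d-1}$.
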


\begin{corollary}
$S_0 = \sqrt{2}/2$.
\end{corollary}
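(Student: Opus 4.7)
The plan is to derive the corollary as an immediate consequence of \lem{TenDecagmondis} by taking a minimum over the two cases enumerated there. Specifically, \lem{TenDecagmondis} gives that for any two distinct wells $\alpha \neq \beta$, the Agmon distance between the corresponding minima on $\mathbb{S}^{d-1}$ is either $\sqrt{2}$ (when $j(\alpha) = j(\beta)$, i.e., the two wells are antipodal on the same coordinate axis $\pm a_{j}$) or $\sqrt{2}/2$ (when $j(\alpha) \neq j(\beta)$, i.e., the two wells lie on different coordinate axes and the great-circle Agmon geodesic has half the antipodal length).

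I would then simply note that the defining minimum $S_0 = \min_{\alpha \neq \beta} d(\pi_\alpha a_{j(\alpha)}, \pi_\beta a_{j(\beta)})$ ranges over all ordered pairs of distinct wells; since at least one such pair with $j(\alpha) \neq j(\beta)$ exists (as $d \geq 2$, which is the nontrivial regime for tensor decomposition), the minimum is realized by such a pair and equals $\sqrt{2}/2$, while the antipodal pairs contribute the strictly larger value $\sqrt{2}$. Taking the minimum of $\{\sqrt{2}/2, \sqrt{2}\}$ yields $S_0 = \sqrt{2}/2$.

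There is essentially no obstacle: the corollary is a one-line extraction from the preceding lemma, and the only thing to mention for clarity is that the infimum defining $S_0$ is attained (which it is, because there are finitely many wells, namely $2d$). I would not redo any geodesic calculations, since those are entirely handled by \lem{TenDecagmondis}; the corollary is purely a statement about the minimum of the two-element set of values produced there.
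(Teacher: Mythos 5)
Your proof is correct and matches the paper's (implicit) reasoning exactly: the corollary is an immediate consequence of Lemma~\ref{lem:TenDecagmondis}, taking the minimum of $\sqrt{2}$ and $\sqrt{2}/2$ over the finitely many pairs of distinct wells. The paper presents the corollary without a separate proof for precisely this reason, so nothing more is needed.
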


\begin{proof}[Proof of \lem{TenDecagmondis}]
First, we prove that for $j(\alpha) \neq j(\beta)$, there exists an Agmon geodesic linking local minima $\pi_{\alpha}a_{j(\alpha)}$ and $\pi_{\beta}a_{j(\beta)}$ which is in the plane spanned by
$a_{j(\alpha)}$ and $a_{j(\beta)}$. Also, this geodesic will also pass through $-\pi_{\alpha} a_{j(\alpha)}$, joining $\pm \pi_{\alpha} a_{j(\alpha)}$.
Because $\{a_j \}$ symmetrically distribute on $\mathbb{S}^{d-1}$ and are orthonormal, the proof can be completed if we prove there exists such a geodesic between $a_1$ and $a_2$.
We work in the coordinate $\psi^{\mu} = \{\psi^1,\ldots,\psi^{d-1}\}$ which can represent $\{x^1,\ldots,x^d\}$, the coordinate in $\mathbb{R}^d$, as follows:
\begin{align}
    x^1 &= \cos\psi^1 \nonumber\\
    x^2 &= \sin\psi^1 \cos\psi^2 \nonumber\\
    x^3 &= \sin\psi^1 \sin\psi^2 \cos\psi^3 \nonumber\\
    &\cdots \nonumber\\
    x^{d-1} &= \sin\psi^1 \sin\psi^2\cdots \sin\psi^{d-2}\cos\psi^{d-1} \nonumber\\
    x^d &= \sin\psi^1 \sin\psi^2\cdots \sin\psi^{d-2}\sin\psi^{d-1},
    \label{eq:coordinate}
\end{align}
where, $\psi^{\mu} \in [0,\pi]\times [0,\pi] \times \cdots \times [0,\pi] \times [0,2\pi]$.
The objective function $f(x) = 1 - \sum_{i}(x^i)^4$ can be rewritten as
\begin{equation}
    f(\psi^{\mu}) = 1 - \cos^4\psi^1 - \sin^4\psi^1 \cos^4\psi^2 -\cdots - \sin^4\psi^1 \sin^4\psi^2\cdots \sin^4\psi^{d-2}\sin^4\psi^{d-1}.
\end{equation}
The metric of $\mathbb{S}^{d-1}$, which is induced by the Euclidean metric of $\mathbb{R}^d$, is given by
\begin{equation}
    d x^2 = (\d \psi^1)^2 + \sin^2\psi^1 (\d \psi^2)^2 +
    \sin^2\psi^1\sin^2\psi^2 (\d \psi^3)^2 + \cdots + \prod_{\nu=1}^{d-2}\sin^2\psi^{\nu} (\d \psi^{d-1})^2,
\end{equation}
or equivalently,
\begin{equation}
    g_{\mu \nu} = \left(
    \begin{array}{ccccc}
       1 & & & & \\
        &\sin^2\psi^1 & & & \\
        & & \sin^2\psi^1\sin^2\psi^2& & \\
        & & &\ddots & \\
        & & & & \prod_{k=1}^{d-2}\sin^2\psi^{k}\\
    \end{array}
    \right)
\end{equation}
with $d x^2 = g_{\mu \nu} \d\psi^{\mu} \d \psi^{\nu}$.
Through out this proof, we use Einstein's summation convention.
The Agmon metric is given by $\d s^2 = V \d x^2$, indicating that
\begin{equation}
    d s^2 = G_{\mu \nu} \d\psi^{\mu} \d \psi^{\nu},\quad G_{\mu \nu}
    =f g_{\mu \nu}.
\end{equation}
A geodesic is also a curve which can be described by a real parameter (i.e., $\psi^{\mu} =\psi^{\mu}(\tau)$, $\tau \in \mathbb{R}$).
Geodesics associated to the Agmon metric should satisfy geodesic equations:
\begin{equation}
    \frac{\d^2 \psi^{\mu}}{\d \tau^2} + \Gamma^{\mu}_{\rho \sigma} \frac{\d \psi^{\rho}}{\d \tau}\frac{\d \psi^{\sigma}}{\d \tau}=0,
    \label{eq:geodesic}
\end{equation}
where $\Gamma^{\mu}_{\rho \sigma}$ are the Christoffel symbols associated to the Agmon metric.
For $G_{\mu \nu}$, which is a diagonal metric, the Christoffel symbols are given by
\begin{align}
    \Gamma^{\lambda}_{\mu \nu} &= 0 \nonumber\\
    \Gamma^{\lambda}_{\mu \mu} &= -\frac{1}{2}(G_{\lambda \lambda})^{-1}\partial_{\lambda}G_{\mu \mu} \nonumber\\
    \Gamma^{\lambda}_{\lambda \mu} &= \partial_{\mu} \big(\ln \sqrt{|G_{\lambda \lambda}|}\big)\nonumber\\
    \Gamma^{\lambda}_{\lambda \lambda} &= \partial_{\lambda} \big(\ln \sqrt{|G_{\lambda \lambda}|}\big),
\end{align}
where, in these expressions, $\mu \neq \nu \neq \lambda$, and repeated indices are not summed over.
Recall that $\Gamma^{\lambda}_{\mu \nu} = \Gamma^{\lambda}_{\nu \mu}$, all the Christoffel symbols have been given by the upper four identities.
In the plane spanned by $a_1$ and $a_2$, we can have $\psi^k = 0~(k\geq 2)$, which satisfies \eq{geodesic} for $\mu \geq 2$ naturally. And for $\mu = 1$, \eq{geodesic} reduces to
\begin{equation}
    \frac{\d^2 \psi^{1}}{\d \tau^2} + \frac{2\sin4\psi^1}{1-\cos4\psi^1}\left(\frac{\d\psi^1 }{\d \tau}\right)^2 = 0.
\end{equation}
Let $y = \frac{\d \psi^1}{\d \tau}$, we have
\begin{align}
    \frac{\d y}{\d \tau} = -\frac{2\sin4\psi^1}{1-\cos4\psi^1} y^2 &\Rightarrow \d\left(\frac{1}{y}\right) = \frac{2\sin4\psi^1}{1-\cos4\psi^1}\d \tau  \nonumber\\
    &\Rightarrow \frac{\d}{\d \psi^1}\left(\frac{\d \tau}{\d \psi^1}\right) = \frac{2\sin4\psi^1}{1-\cos4\psi^1}\frac{\d \tau}{\d \psi^1} \nonumber\\
    &\Rightarrow \left(\frac{\d \tau}{\d \psi^1}\right)\Big/\sqrt{1-\cos4\psi^1} = \rm const.
\end{align}
Given initial conditions, this ordinary differential equation clearly has a solution. Here, we set $\tau$ to be the Agmon length of the curve and $\tau = 0$ when $\psi^1 = 0$. Then, the solution is
given by
\begin{equation}
    \tau = \int_{0}^{\psi^1}\frac{1}{2}\sqrt{1-\cos4\psi} \d \psi \equiv \gamma(\psi^1).
\end{equation}
So, we prove the existence of a geodesic which can be described by
\begin{equation}
    \psi^1 = \gamma^{-1}(\tau),~\psi^k = 0~(k\geq 2).
    \label{eq:geodesicRes}
\end{equation}
This geodesic associated to the Agmon metric
is the intersection of $\mathbb{S}^{d-1}$ and the plane spanned by $a_1$ and $a_2$, which is an one-dimensional circle, namely, a geodesic associated to the metric $g_{\mu \nu}$ (shown in \fig{geodesic}). We can easily see that the shorter Agmon geodesic linking $a_1$ and $a_2$ in the plane spanned by $a_1$ and $a_2$ has the Agmon length $\gamma(\pi/2)$. And, there are two paths with identical Agmon length $\gamma(\pi)$ joining $a_1$ and $-a_1$ which are Agmon geodesics.
The integrals of the Agmon lengths are given by \begin{align}
    \gamma(\pi/2) &= \int_{0}^{\pi/2}\frac{1}{2}\sqrt{1-\cos4\psi} \d \psi \nonumber\\
    & = 2\int_{0}^{\pi/4}\frac{1}{2}\sqrt{1-\cos4\psi} \d \psi \nonumber\\
    &=2\int_{0}^{2}\frac{1}{8}\frac{1}{\sqrt{2-t}} \d t \quad (\mathrm{let}~t = 1-\cos4\psi)\nonumber\\
    &=- \frac{1}{2} \sqrt{2-t}|_0^2 = \frac{\sqrt{2}}{2},
\end{align}
\begin{equation}
    \gamma(\pi) = 2\gamma(\pi/2) = \sqrt{2}.
\end{equation}
\begin{figure}
  \centerline{
  \includegraphics[width=0.8\textwidth]{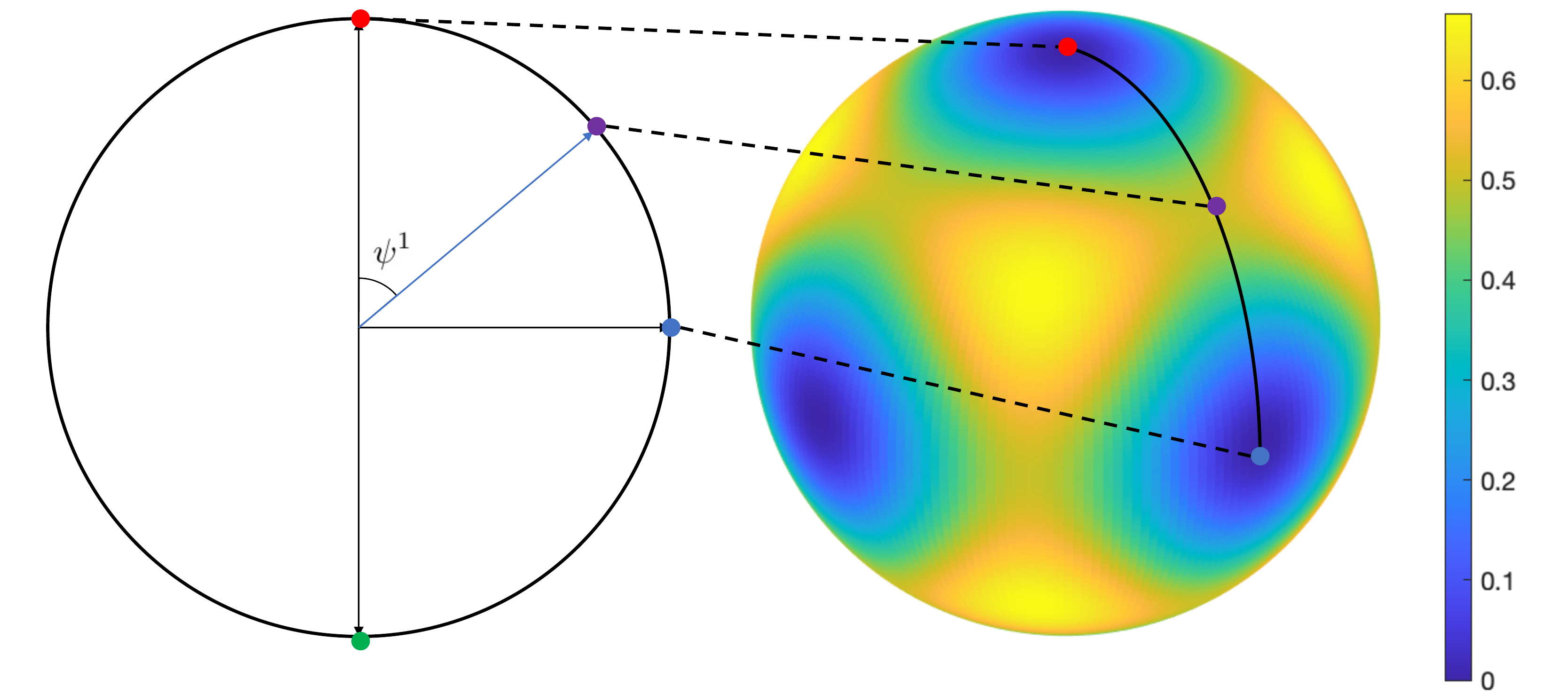}}
    \caption{Sketch of the geodesic associated to the Agmon metric in the plane spanned by $a_1$ and $a_2$: red point refers to $a_1$, blue point $a_2$, green point $-a_1$, and the purple point the separating saddle point (see \defn{SSP}) between $a_1$ and $a_2$, respectively.
    }
\label{fig:geodesic}
\end{figure}

Recall that $d(\cdot,\cdot)$ is the Agmon distance between two points or sets. From now on, we will prove that
for $j(\alpha) \neq j(\beta)$, $d(\pi_{\alpha}a_{j(\alpha)}, \pi_{\beta}a_{j(\beta)}) = \sqrt{2}/2$, and for any $\alpha$,
 $d(\pi_{\alpha}a_{j(\alpha)}, -\pi_{\alpha}a_{j(\alpha)}) = \sqrt{2}$. Because of the landscape symmetry, we only need to
 prove that $d(a_1,a_2) = \sqrt{2}/2$ and $d(a_1,-a_1) = \sqrt{2}$, which is equivalent to say that the Agmon geodesic
 described by \eq{geodesicRes} with $0\leq \tau \leq \sqrt{2}/2$ ($0\leq \tau \leq \sqrt{2}$) is the shortest joining $a_1$ and $a_2$ ($a_1$ and $-a_1$).

First note that for $\sum_{i} (x^i)^2 =1$ or $\sum_{i=2}^d (x^i)^2 =1 - (x^1)^2$, equivalently,
\begin{equation}
    \sum_{i} (x^i)^4 = (x^1)^4 + \sum_{i=2}^d (x^i)^4 \leq
    (x^1)^4 + (1-(x^1)^2)^2.
\end{equation}
Therefore,
\begin{equation}
    f(\psi^{\mu}) = f(\psi^{1}, \psi^{2},\cdots, \psi^{d})
    = 1 - \sum_{i} (x^i)^4 \geq 1- (x^1)^4 - (1-(x^1)^2)^2
    = f(\psi^{1}, 0,\cdots, 0),
\end{equation}
where the coordinates $\psi^{\mu}$ and $x^{\nu}$ obey the transformation relation \eq{coordinate}.
For any curve whose parameter is $\tau$, we have
\begin{equation}
    \d x = \sqrt{g_{\mu \nu}\frac{\partial \psi^{\mu}}{\partial \tau}
    \frac{\partial \psi^{\nu}}{\partial \tau}} \d \tau.
\end{equation}
And if the parameter $\tau$ can be set to be $\psi^1$,
\begin{align}
    \d x = \sqrt{g_{\mu \nu}\frac{\partial \psi^{\mu}}{\partial \psi^1}
    \frac{\partial \psi^{\nu}}{\partial \psi^1}} \d \psi^1
     = \sqrt{\sum_i g_{i i}\left(\frac{\partial \psi^{i}}{\partial \psi^1}\right)^2
    } \d \psi^1
    \geq \sqrt{g_{1 1}\left(\frac{\partial \psi^{1}}{\partial \psi^1}\right)^2
    } \d \psi^1 = \d \psi^1.
\end{align}
For any curve $C$ with finite length that can be described by $\psi^1$ (i.e., $\psi^{\mu} = \psi^{\mu}(\psi^1)$ and $\psi^1 \in [a,b]$),
we can conclude that
\begin{equation}
    \int_C \sqrt{f} \d x = \int_{a}^{b}  \sqrt{f g_{\mu \nu}\frac{\partial \psi^{\mu}}{\partial \psi^1}
    \frac{\partial \psi^{\nu}}{\partial \psi^1}} \d \psi^1
    \geq \int_{a}^{b}  \sqrt{f(\psi^1,0,\cdots,0)} \d \psi^1
    = \gamma(b) - \gamma(a).
    \label{eq:agmoninequality}
\end{equation}

Let $\psi^i = l^i(\tau)~(0\leq \tau \leq M_l)$ be an arbitrary smooth path joining $a_1$ and $a_2$ ($\tau = 0$ and $\tau = M_l$ correspond to $a_1$ and $a_2$, respectively). If this path can be described by $\psi^1$ (i.e., $\psi^1 = l^1 (\tau)$ is a bijection and $\psi^i = l^i((l^1)^{-1}(\psi^1))~(0\leq \psi^1 \leq \pi/2)$), the Agmon length between $a_1$ and $a_2$ along this path is given by
\begin{equation}
    l(a_1,a_2) \equiv \int_{0}^{M_l}  \sqrt{f g_{\mu \nu}\frac{\partial \psi^{\mu}}{\partial \tau}
    \frac{\partial \psi^{\nu}}{\partial \tau}} \d \tau
    = \int_{0}^{\pi/2}  \sqrt{f g_{\mu \nu}\frac{\partial \psi^{\mu}}{\partial \psi^1}
    \frac{\partial \psi^{\nu}}{\partial \psi^1}} \d \psi^1 \geq \gamma(\pi/2).
\end{equation}
However, if $\psi^1 = l^1 (\tau)$ is not a bijection, or there is no one-to-one correspondence between $\psi^1$ and the points in the path, the second equality in the above equation will be invalid.
In this case, we should restrict the domain on $\mathcal{D} = \bigcup_i [\tau_i^-,\tau_i^+] \subset [0,M_l]$ ($\tau_i^+ \leq \tau_{i+1}^+$) such that $l^1: \mathcal{D} \to [0,\pi/2]$ is a bijection and $[0,M_l]-\mathcal{D}$ is a zero measure set, which can always be done as the path is smooth. Then, we have
\begin{align}
    l(a_1,a_2) &\geq \sum_i\int_{\tau_i^-}^{\tau_i^+}  \sqrt{f g_{\mu \nu}\frac{\partial \psi^{\mu}}{\partial \tau}
    \frac{\partial \psi^{\nu}}{\partial \tau}} \d \tau \nonumber\\
    &= \sum_i\int_{l^1(\tau_i^-)}^{l^1(\tau_i^+)}  \sqrt{f g_{\mu \nu}\frac{\partial \psi^{\mu}}{\partial \psi^1}
    \frac{\partial \psi^{\nu}}{\partial \psi^1}} \d \psi^1 \nonumber\\
    &\geq \sum_i [\gamma(l^1(\tau_i^+)) -\gamma(l^1(\tau_i^-))] =\gamma(\pi/2).
\end{align}

To sum up, for paths joining $a_1$ and $a_2$, $\gamma(\pi/2)$ is the shortest Agmon length, and thus, the Agmon distance between $a_1$ and $a_2$. For the same reason \eq{agmoninequality}, $\gamma(\pi)$ can be proved to be the Agmon distance between $a_1$ and $-a_1$ (more specifically, the proof is the above paragraph with $a_2$ and $\pi/2$ replaced by $-a_1$ and $\pi$, respectively), which completes the proof of \lem{TenDecagmondis}.
\end{proof}

Now, we are able to prove \lem{TenDecw} as follows.
\begin{proof}[Proof of \lem{TenDecw}]
Because of the symmetry, we can calculate $w$ focusing on the interaction between $a_1$ and $a_2$.
The point on the geodesic (see \fig{geodesic}) separating
the two wells containing $a_1$ and $a_2$ should be $x^{\circ}_{12}:=(a_1+a_2)/\sqrt{2}$ which is also a saddle point.
Restate \prop{speW} in the present case,
we have
\begin{align}
w = -\sqrt{h} (b + O(h))e^{-\frac{S_0}{h}},
\end{align}
where
\begin{align}
b = 2(2\pi)^{\frac{d-2}{2}}\sqrt{\frac{1/2}{\mathrm{det}' \nabla^2 ( d(x^{\circ}_{12},a_1)+d(x^{\circ}_{12},a_2))}}a_0^{(1)}(x^{\circ}_{12})a_0^{(2)}(x^{\circ}_{12}),
\end{align}
and $a_0^{(1)}(x^{\circ}_{12}) = a_0^{(2)}(x^{\circ}_{12})$ (due to the symmetry) are factions in the WKB estimation.

We first evaluate the function $a^{(1)}_0(x)$.
The initial condition \eq{178} should be
\begin{align}
    a^{(1)}_0(a_1) = \left(\frac{\det \sqrt{\nabla^2 f(a_1)}}{(\sqrt{2}\pi)^{d-1}}\right)^{\frac{1}{4}}.
\end{align}
The transport equation \eq{transport} in this case should be
\begin{align}
    2\nabla d(x, a_1) \cdot \nabla a^{(1)}_0(x) + (\Delta d(x,a_1) - \mu)a^{(1)}_0(x) = 0.
\end{align}
Note that all the differential operator $\nabla$ here adapts to the manifold $\mathbb{S}^{d-1}$.
Because $\mathbb{S}^{d-1}$ can be embedded in $\mathbb{R}^d$, we may use the differential operator $\nabla_{\mathbb{R}^d}$ to calculate the Hessian matrix $\nabla^2 f(a_1)$:
\begin{align}
    \nabla^2 f(a_1) = \mathbf{P}_{a_1^{\perp}} [\nabla^2_{\mathbb{R}^d} f(a_1) - (\nabla f(a_1))\trans a_1 \mathbf{I} ]\mathbf{P}_{a_1^{\perp}},
\end{align}
where $\mathbf{I}$ is the identity matrix in $\mathbb{R}^d$, $\mathbf{P}_{a_1^{\perp}} = \mathbf{I} - a_1 a_1\trans $ is the orthogonal projector onto the tangent space $T_{a_1}\mathbb{S}^{d-1}$,
$\nabla^2_{\mathbb{R}^d} f(a_1)$ is the Hessian matrix in $\mathbb{R}^d$, and $(\nabla f(a_1))\trans a_1 \mathbf{I}$ reflects the curvature of the sphere.
Under the coordinate $\{a_2,a_3,\ldots,a_d \}$ which can also be seen
as a local orthonormal coordinate of $T_{a_1}\mathbb{S}^{d-1}$,
we can explicitly write $\nabla^2 f(a_1) = \mathrm{diag}(4,\ldots4)$.
So, we have $\det \sqrt{\nabla^2 f(a_1)} = 2^{d-1}$.
To get $x^{\circ}_{12}$, we should use the transport equation.
For any point $x$ on the geodesic linking $a_1$ and $a_2$,
in the tangent space $T_{x}\mathbb{S}^{d-1}$,
we set $\frac{\partial}{\partial x^1}$ to be the tangent vector of the geodesic.
Then, because of the symmetry,
\begin{align}
    \nabla d(x, a_1) \cdot \nabla a^{(1)}_0(x) = \frac{\partial d(x, a_1) }{\partial x^1} \frac{\partial a^{(1)}_0(x) }{\partial x^1}.
\end{align}
Similarly, $\Delta d(x,a_1)$ should be of the form
\begin{align}
    \Delta d(x,a_1) = f_1(x) + (d-2)f_2(x),
\end{align}
for some functions $f_1$ and $f_2$.
For sufficiently small $h$, the energy of the ground state $\mu$ should be
\begin{align}
  \mu  =  h\tr \sqrt{\nabla^2 f(a_1)}\propto (d-1).
\end{align}
At last, the transport equation restricted on the geodesic linking $a_1$ and $a_2$ can be changed to
\begin{align}
    \frac{\partial a^{(1)}_0(x(\tau)) }{\partial \tau} = (d-1)f_3(\tau) a^{(1)}_0(x(\tau)),
\end{align}
for some function $f_3(\tau)$. Here, $\tau$ is the parameter of the geodesic.
In this way, we know
\begin{align}
   \ln \frac{a^{(1)}_0(x^{\circ}_{12})}{a^{(1)}_0(a_1)} \propto d-1,
\end{align}
and the coefficient is independent of $d$.
As a result, we know that $\ln a^{(1)}_0(x^{\circ}_{12}) \propto d-1$
and the coefficient is independent of $d$.

Note that although $d(x,a_1)+d(x,a_2))$
is a complicated function, it is symmetric in the $d-2$-dimension subspace of $T_{x^{\circ}_{12}}\mathbb{S}^{d-1}$ which is orthogonal to the tangent vector of the geodesic linking $a_1$ and $s_2$.
Thus, we claim that, similar to the Hessian matrix of $f$,
$\nabla^2 (d(x,a_1)+d(x,a_2)))$ has $d-2$ entries of the same value which are independent of $d$.
This fact give that $\ln \mathrm{det}' \nabla^2 ( d(x^{\circ}_{12},a_1)+d(x^{\circ}_{12},a_2)) \propto d-2$.

Combining all the results above, we find that $b$ is of the form $C_1C_2^{d-1}$ where $C_1$ and $C_2$ are two constants independent of $d$.

Given that $S_0 = \sqrt{2}/2$, we can have
\begin{align}
    w = -\sqrt{h} (C_1C_2^{d-1} + O(h))e^{-\frac{\sqrt{2}}{2h}}
\end{align}
as expected.
\end{proof}

\subsubsection{Proof of \texorpdfstring{\prop{TenDecTtot}}{Proposition 2}}\label{append:proofprop2}
\prop{TenDecTtot} presents a general upper bound for the total time needed for finding all orthogonal components by QTW.
Constrain on the expected risk, $\mathrm{E}_{x\sim\mu_{\rm QTW}}f(x) = \delta$, control the accuracy of QTW and will give $h$.
\begin{proof}
We first estimate the expected risk, $\mathbb{E}_{x\sim\mu_{\rm QTW}}f(x)$.
For sufficiently small $h$, $\mu_{\rm QTW}$ localizes near minima.
So, only regions near minima can contribute to the integral
\begin{align}
    \mathbb{E}_{x\sim\mu_{\rm QTW}}f(x) = \int_{\mathbb{S}^{d-1}} f(x)\mu_{\rm QTW}(x) \d x.
\end{align}
Namely,
\begin{align}
    \mathbb{E}_{x\sim\mu_{\rm QTW}}f(x) = \sum_{\alpha}\int_{\Omega_{\alpha}} f(x) \mu_{\rm QTW}(x) \d x
    + O(h^{\infty}),
\end{align}
where $\Omega_{\alpha}$ is a small neighborhood of the minima labeled by $\alpha$.
The term $O(h^{\infty})$ appears because $\mu_{\rm QTW}$ is a mixing of $|\ip{x}{\beta}|^2$ which decay exponentially with respect to $d(x,\beta)/h$ (or, apply \cor{L2norm}).
Still, we choose $h$ to be enough small, such that $\Omega_{\alpha}$
can be seen as a subset of $\mathbb{R}^d$.
Then, with the same Gaussian integral in \lem{expectedrisk},
we have
\begin{equation}
    h = \frac{\delta}{\frac{\sqrt{2}}{4}\sum_{\beta} p(\infty,\beta|\alpha)\tr\big(\sqrt{\nabla^2f(\beta)}\big)+ o_{\delta}(1)} = \frac{\delta}{\frac{\sqrt{2}}{4}\tr\big(\sqrt{\nabla^2f(a_1)}\big)+ o_{\delta}(1)}.
\end{equation}
Here, $\nabla$ should still be the differential operator adapted to the sphere manifold and the second equality is valid because $\nabla^2f(\beta)$ should be the same regardless of $\beta$.
As is calculated in the proof of \lem{TenDecw},
the matrix form of $\sqrt{\nabla^2f(a_1)}$ can be $\mathrm{diag}(2,\ldots,2)$. Thus, $\tr \sqrt{\nabla^2f(a_1)} = 2(d-1)$, which gives
\begin{align}
    h = \frac{\delta}{\frac{\sqrt{2}}{2}(d-1)+ o_{\delta}(1)}.
\end{align}
Substituting this result in $w$ and note the relationship between
$T_{\rm tot}$ and $w$ in \eq{Ttot1}, we have
\begin{align}
    T_{\rm tot} = \tilde{O}(d^2)\frac{1}{\epsilon (\sqrt{\delta}C_1C_2^{d-1} + O(\delta))} e^{\frac{d-1 + o_{\delta}(1)}{2\delta}}.
\end{align}
Note that $C_2$ should be less $1$ as the tunneling effect $w$ would not increase exponentially with respect to $d$,
for small $\delta$, we can have
\begin{align}
    T_{\rm tot} = O(\mathrm{poly}(1/\delta, e^d, 1/\epsilon)) e^{\frac{d-1 + o_{\delta}(1)}{2\delta}},
\end{align}
which completes the proof.
\end{proof}

\section{Technical Details for Quantum-Classical Comparisons}\label{append:section4}

\subsection{Details of comparison standards}
\subsubsection{Proof of \texorpdfstring{\lem{stand1-quad}}{Lemma 11}}\label{append:prooflem11}
In \lem{stand1-quad}, we estimate the expected risk yielded by QTW and SGD for a quadratic function.
\begin{proof}
Firstly, as there is only one (local) minimum, $\mu_{\rm QTW}$ is the distribution of the ground state $|E_0\y$,
namely,
\begin{equation}
    \mu_{\rm QTW} = |\z x|E_0\y|^2 = \frac{\big(\det\sqrt{\nabla^2 f(0)}\big)^{1/2}}{(\sqrt{2}\pi h)^{d/2}}
    e^{-\frac{x\trans\sqrt{\nabla^2 f(0)}x}{\sqrt{2}h}}
\end{equation}
Let $\{x_i\}$ be a coordinate where $\nabla^2 f(0)$ is a
diagonal matrix denoted by $\mathrm{diag}(\lambda_1,..,\lambda_d)$.
For simplicity, set $\omega_i = \sqrt{\lambda_i}$,
$\mathbb{E}_{x\sim \mu_{\rm QTW}}f(x)$ is then given by
\begin{align}
    \mathbb{E}_{x\sim \mu_{\rm QTW}}f(x) &=
    \int \frac{(\prod_{i} \omega_i )^{1/2}}{(\sqrt{2}\pi h)^{d/2}}
    e^{-\frac{\sum_i \omega_i x_i^2}{\sqrt{2}h}} \frac{1}{2} \sum_i \lambda_i x_i^2 \d x_1\cdots \d x_d \nonumber\\
    & =\frac{1}{2}\frac{(\prod_{i} \omega_i )^{1/2}}{(\sqrt{2}\pi h)^{d/2}} \sum_j \lambda_j \int e^{-\frac{\sum_i \omega_i x_i^2}{\sqrt{2}h}} x_j^2 \d x_1\cdots \d x_d \nonumber\\
    & = \frac{\sqrt{2}h}{4}\sum_j \omega_j,
\end{align}
where the last equality uses the facts
\begin{equation}
    \int_{-\infty}^{\infty} e^{-ax^2}\d x = \sqrt{\frac{\pi}{a}},
\end{equation}
\begin{equation}
    \int_{-\infty}^{\infty} x^2 e^{-ax^2}\d x = \frac{1}{2}\sqrt{\frac{\pi}{a^3}}.
\end{equation}
Similarly, we have
\begin{equation}
    \int e^{\frac{-2f}{s}} \d x = \int e^{\frac{-\sum_i \lambda_i x_i^2}{s}}\d x_1 \cdots \d x_d = \frac{(\pi s)^{d/2}}{\sqrt{\prod_i \lambda_i }},
\end{equation}
and
\begin{equation}
    \int e^{\frac{-2f}{s}}f(x) \d x = \frac{1}{2}\sum_j \lambda_j \int e^{\frac{-\sum_i \lambda_i x_i^2}{s}}x_j^2 \d x_1 \cdots \d x_d = \frac{sd}{4}\frac{(\pi s)^{d/2}}{\sqrt{\prod_i \lambda_i }},
\end{equation}
which complete the proof.
\end{proof}

\subsubsection{Proof of \texorpdfstring{\lem{expectedrisk}}{Lemma 12}}\label{append:prooflem12}

With the help of \lem{stand1-quad}, \lem{expectedrisk} captures $h$ and $s$ by \stand{risk} for general landscapes satisfying assumptions in \sec{classicalpre} and \sec{quantumpre}. Note that in \sec{standard},
normal lower-case letters, $x$, $y$,\ldots, are used to denote vectors without ambiguity.
However, in the proof, we will use coordinates and components.
Therefore, bold lower-case letters $\bf{x}$, $\bf{y}$,\ldots, are used to denote vectors or points in $\mathbb{R}^d$.

\begin{proof}
For a general landscape $f$ satisfying assumptions in \sec{classicalpre} and \sec{quantumpre},
we can find a
sufficiently large but bounded set $\Omega$ containing all the
minima $\{\mathbf{x}_j:j=1,\ldots,N\}$.
First recall \lem{limitdis} which gives
\begin{align}
     \mu_{\rm QTW}(\mathbf{x})
     = \sum_{j}
    p(\infty, j) |\z \mathbf{x}|j\y|^2 + O(h^{\infty}),
\end{align}
where $p(\infty, j)$ is the probability of finding the system at state $\ket{j}$ for $\tau \to \infty$.
Every local ground state decreases exponentially with respect to $1/h$
out of $\Omega$, giving that
\begin{align}
     \int_{\mathbb{R}^d\backslash \Omega}\mu_{\rm QTW}(\mathbf{x})\d \mathbf{x} = O(h^{\infty}).
\end{align}

In the region $\Omega$, we first study the integral near one minima $\mathbf{x}_j$. Without loss of generality, we can translate the coordinate to $\{ x_k, k=1,\ldots,d\}$ such that $\mathbf{x}_j = 0$. By properly rescaling the coordinate, we can find
a coordinate $\{y_k:k=1,\ldots,d\}$ and let $\mathbf{y} = (y_1,y_2,\ldots,y_d)\trans$ such that $\sqrt{\nabla^2_\mathbf{y} f(0)} = I$ where $I$
is the $d\times d$ identity matrix.
Under the coordinate $\{y_k:k=1,\ldots,d\}$, we can always find a fixed ball $\Omega_j$ only containing one minimum $\mathbf{x}_j=0$ and for any $\mathbf{y} \in \Omega_j$, $\nabla^2_\mathbf{y} f(\mathbf{y})>0$ (because $f$ is smooth).
The local ground state satisfies
\begin{equation}
    |\z \mathbf{y} | j \y|^2 = \frac{1}{(\sqrt{2}\pi h)^{d/2}}(1+ O(\|\mathbf{y}\|))
    e^{-\frac{\mathbf{y}\trans \mathbf{y} + O(\|\mathbf{y}\|^3)}{\sqrt{2}h}},
\end{equation}
by WKB approximation, and the function can be rewritten as
\begin{align}
    f(\mathbf{y}) = \frac{1}{2}\mathbf{y}\trans \sqrt{\nabla^2_\mathbf{x} f(0)} \mathbf{y} + O(\|\mathbf{y}\|^3).
\end{align}
Then, we define a hypercube $\Omega^{(h)}_j$ whose edge length (under the coordinate of $\mathbf{y}$) is $\Theta(h^{5/12})$.
Evaluate the integral:
\begin{align}
    \int_{\Omega^{(h)}_j} f(\mathbf{x})|\z \mathbf{x} | j \y|^2 \d \mathbf{x} &= \int_{\Omega^{(h)}_j} f(\mathbf{y})|\z \mathbf{y} | j \y|^2 \d \mathbf{y} \\
    & = \frac{1}{2}\int_{\Omega^{(h)}_j} \frac{1+ O(h^{5/12})}{(\sqrt{2}\pi h)^{d/2}}
    e^{-\frac{\mathbf{y}\trans \mathbf{y}}{\sqrt{2}h}}(1+ O(h^{1/4}))(\mathbf{y}\trans \sqrt{\nabla^2_\mathbf{x} f(0)} \mathbf{y} + O(h^{5/4})) \d \mathbf{y} \\
    & = \frac{1}{2}\int_{\Omega^{(h)}_j} \frac{1}{(\sqrt{2}\pi h)^{d/2}}
    e^{-\frac{\mathbf{y}\trans\mathbf{y}}{\sqrt{2}h}}(\mathbf{y}\trans \sqrt{\nabla^2_\mathbf{x} f(0)} \mathbf{y}) \d \mathbf{y} + o(h). \\
    & = \frac{1}{2}\int_{\Omega^{(h)}_j}\frac{\big(\det\sqrt{\nabla^2_{\mathbf{x}} f(0)}\big)^{1/2}}{(\sqrt{2}\pi h)^{d/2}}
    e^{-\frac{\mathbf{x}\trans\sqrt{\nabla^2 f(0)}\mathbf{x}}{\sqrt{2}h}}(\mathbf{x}\trans \nabla^2_{\mathbf{x}} f(0) \mathbf{x} ) \d \mathbf{x} + o(h)
    \\
    & = \frac{\sqrt{2}h}{4} \tr\sqrt{\nabla^2_{\mathbf{x}}f(\mathbf{x}_j)} + o(h).
\end{align}
The last equality is valid because $h^{5/12}\gg \sqrt{h}$ for small $h$ and $\sqrt{h}$ is the scale of the standard deviation of the Gaussian distribution.
Calculations of the last integral in can be seen in the proof of \lem{stand1-quad}.
Define $f_j = \max_{\mathbf{x}\in \Omega_j}f(\mathbf{x})$,
we have
\begin{align}
    \int_{\Omega_j \backslash\Omega^{(h)}_j} f(\mathbf{x})|\z \mathbf{x} | j \y|^2 \d \mathbf{x} \leq  \int_{\Omega_j \backslash\Omega^{(h)}_j} f_j |\z \mathbf{x} | j \y|^2 \d \mathbf{x} = O(h^{\infty}).
\end{align}
The last equality relies on two facts: $\Omega_j$ is a bounded region and $\sqrt{h} = o(h^{5/12})$. The first fact ensures that the difference between $|\z \mathbf{x} | j \y|^2$ and the Gaussian distribution can be bounded by multiplying some constant, and the second fact make the integral of the Gaussian distribution be of $O(h^{\infty})$.
We now have
\begin{align}
    \int_{\Omega_j } f(\mathbf{x})|\z \mathbf{x} | j \y|^2 \d \mathbf{x} = \frac{\sqrt{2}h}{4} \tr\sqrt{\nabla^2_{\mathbf{x}}f(\mathbf{x}_j)} + o(h).
\end{align}
In the region $\Omega\backslash \Omega_j$,
$f$ is bounded and by \cor{L2norm}, the integral of distributions of local ground states should be of $O(h^{\infty})$.
Therefore, we claim that $\int_{\Omega\backslash  \Omega_j} \mu_{\rm QTW}f \d \mathbf{x} = O(h^{\infty})$.

Combine all the results above, we can have
\begin{align}
    \mathbb{E}_{\mathbf{x}\sim\mu_{\rm QTW}}f(\mathbf{x}) = \sum_{j=1}^N \frac{\sqrt{2}h}{4} \tr\sqrt{\nabla^2_{\mathbf{x}}f(\mathbf{x}_j)} + o(h).
\end{align}
By setting $\mathbb{E}_{\mathbf{x}\sim\mu_{\rm QTW}}f(\mathbf{x}) = \delta$,
\begin{align}
    h = \frac{\delta}{\frac{\sqrt{2}}{4}\sum_{j=1}^N p(\infty,j)\tr\sqrt{\nabla^2_{\mathbf{x}}f(\mathbf{x}_j)}+ o_{h}(1)}.
\end{align}
And because for sufficiently small $h$, we have $h\propto \delta$ from the above equation, the term $o_{h}(1)$ in the above equation can be replaced by $o_{\delta}(1)$, giving \eq{hstand1}.

For $\mu_{\rm SGD}$, we still begin with a large region $\Omega$ containing all the minima $\{ \mathbf{x}_j :j=1,\ldots,N\}$.
Note that
\begin{align}
    \mu_{\rm SGD}(x) = \frac{e^{-\frac{2f(\mathbf{x})}{s}}}{\int_{\mathbb{R}^d}e^{-\frac{2f(\mathbf{x})}{s}} \d \mathbf{x}},
\end{align}
we first evaluate
\begin{align}
    \int_{\mathbb{R}^d\backslash \Omega} \mu_{\rm SGD}(\mathbf{x})f(\mathbf{x})\d \mathbf{x}
    = \frac{\int_{\mathbb{R}^d\backslash \Omega} e^{-\frac{2f(\mathbf{x})}{s}} f(\mathbf{x})\d \mathbf{x} }{\int_{\mathbb{R}^d}e^{-\frac{2f(\mathbf{x})}{s}} \d \mathbf{x}}.
\end{align}
Without loss of generality, we set $f(\partial \Omega) = \{ C \}$ and $f(\mathbf{x})\leq C$ for $\mathbf{x}\in \Omega$.
Define $g = \max(f-C,0)$, we have
\begin{align}
    \frac{\int_{\mathbb{R}^d\backslash \Omega} e^{-\frac{2f(\mathbf{x})}{s}} f(\mathbf{x})\d \mathbf{x} }{\int_{\mathbb{R}^d}e^{-\frac{2f(\mathbf{x})}{s}} \d \mathbf{x}}
    &= \frac{e^{-\frac{2C}{s}}\int_{\mathbb{R}^d\backslash \Omega} e^{-\frac{2g(\mathbf{x})}{s}} g(\mathbf{x})\d \mathbf{x} }{e^{-\frac{2C}{s}}\int_{\mathbb{R}^d}e^{-\frac{2(f(\mathbf{x})-C)}{s}} \d \mathbf{x}} \\
    &= \frac{\int_{\mathbb{R}^d} e^{-\frac{2g(\mathbf{x})}{s}} g(\mathbf{x})\d \mathbf{x} }{\int_{\mathbb{R}^d}e^{-\frac{2g}{s}} \d \mathbf{x}+\int_{\Omega}(e^{-\frac{2(f(\mathbf{x})-C)}{s}}-1) \d \mathbf{x}}.
\end{align}
Note that $I_g(s): = \int_{\mathbb{R}^d} e^{-\frac{2g(\mathbf{x})}{s}} g(\mathbf{x})\d \mathbf{x}$ decreases when $s$ decreases.
Let $\Omega':=\{ \mathbf{x}\mid f(\mathbf{x})\leq C/2\}$, we have
\begin{align}
    \int_{\Omega}(e^{-\frac{2(f(\mathbf{x})-C)}{s}}-1) \d \mathbf{x} &\geq \int_{\Omega'}(e^{-\frac{2(f(\mathbf{x})-C)}{s}}-1) \d \mathbf{x} \\
    &\geq \int_{\Omega'}(e^{\frac{C}{s}}-1) \d \mathbf{x} = (e^{\frac{C}{s}}-1) \mathrm{Volume}(\Omega').
\end{align}
Therefore,
\begin{align}
    \frac{\int_{\mathbb{R}^d} e^{-\frac{2g(\mathbf{x})}{s}} g(\mathbf{x})\d \mathbf{x} }{\int_{\mathbb{R}^d}e^{-\frac{2g}{s}} \d \mathbf{x}+\int_{\Omega}(e^{-\frac{2(f(\mathbf{x})-C)}{s}}-1) \d \mathbf{x}}
    \leq \frac{I_g(s)}{(e^{\frac{C}{s}}-1) \mathrm{Volume}(\Omega')}
     = O(s^{\infty}).
\end{align}
That is,
\begin{align}
    \int_{\mathbb{R}^d\backslash \Omega} \mu_{\rm SGD}(\mathbf{x})f(\mathbf{x})\d \mathbf{x}
    = O(s^{\infty}).
\end{align}

Now, we consider the small neighborhood $\Omega_j$ only containing one minimum $\mathbf{x}_j$. Similar to the quantum case, we let $\mathbf{x}_j= 0$ and introduce a rescaled coordinate $\{y_k,k=1,\ldots,d\}$ and $\mathbf{y} = (y_1,\ldots,y_d)\trans$ such that $\nabla^2_{\mathbf{y}}f(0) = I$.
Also define the hypercube $\Omega^{s}_j$ with edge length (under the coordinate of $\mathbf{y}$) $\Theta(s^{5/12})$,
apply the same argument as the quantum case,
we have
\begin{align}
    \int_{\Omega_j^s} e^{-\frac{2f}{s}} \d \mathbf{x} = \frac{(\pi s)^{d/2}}{\sqrt{\det \nabla^2_{\mathbf{x}}f(0)}}(1+ O(s^{1/4})) = \frac{(\pi s)^{d/2}}{\sqrt{\det \nabla^2_{\mathbf{x}}f(0)}}(1+ o_s(1)),
\end{align}
\begin{align}
    \int_{\Omega_j^s} e^{-\frac{2f}{s}} f \d \mathbf{x} = \frac{sd}{4}\frac{(\pi s)^{d/2}}{\sqrt{\det \nabla^2_{\mathbf{x}}f(0)}}(1+ o_s(1)),
\end{align}
\begin{align}
    \int_{\Omega_j\backslash \Omega_j^s} e^{-\frac{2f}{s}} \d \mathbf{x} = O(s^{\infty})
\end{align}
\begin{align}
    \int_{\Omega_j\backslash \Omega_j^s} e^{-\frac{2f}{s}} f \d \mathbf{x} = O(s^{\infty}).
\end{align}
In the region $\Omega\backslash (\bigcup_{j=1}^N \Omega_j )$, $f$ is bounded from below and both the integrals of $e^{-\frac{2f}{s}}$ and $e^{-\frac{2f}{s}} f$ should be of $O(s^{\infty})$.

Finally, we can obtain that
\begin{align}
    \mathbb{E}_{\mathbf{x}\sim \mu_{\rm SGD}}f(\mathbf{x})  = \int_{\mathbb{R}^d} \mu_{\rm SGD}(\mathbf{x})f(\mathbf{x})\d \mathbf{x}
    &= \frac{\int_{\mathbb{R}^d} e^{-\frac{2f(\mathbf{x})}{s}} f(\mathbf{x})\d \mathbf{x} }{\int_{\mathbb{R}^d}e^{-\frac{2f(\mathbf{x})}{s}} \d \mathbf{x}}
    = \frac{\frac{sd}{4}(1+o_s(1))}{1+ O(s^{\infty})} + O(s^{\infty})\nonumber\\
    &=\frac{sd}{4}(1+o_s(1)),
\end{align}
and then
\begin{align}
   s = \frac{\delta}{\frac{d}{4}(1+o_s(1))}.
\end{align}
Since $s\propto \delta$ for small $\delta$ or $s$, we can replace $o_s(1)$ in the above equation to $o_{\delta}(1)$.

The two equations in \lem{expectedrisk} are obtained finishing the proof.
\end{proof}

\subsubsection{Proof of \texorpdfstring{\lem{expecteddis}}{Lemma 13}}\label{append:prooflem13}
We estimate the expected distances from the minimum and establish an asymptotic relation between $h$ and $s$ by \stand{distance} in \lem{expecteddis}.

\begin{proof}
Note that
\begin{equation}
    \mu_{\rm QTW} = \frac{\big(\det\sqrt{\nabla^2 f(0)}\big)^{1/2}}{(\sqrt{2}\pi h)^{d/2}}
    e^{-\frac{x\trans\sqrt{\nabla^2 f(0)}x}{\sqrt{2}h}}
\end{equation}
We choose $\{x_i\}$ be a coordinate where $\nabla^2 f(0)$ is a
diagonal matrix denoted by $\mathrm{diag}(\lambda_1,..,\lambda_d)$.
Define $\omega_i = \sqrt{\lambda_i}$,
$\mathbb{E}_{x\sim \mu_{\rm QTW}}D(x,0)$ is then given by
\begin{align}
    \mathbb{E}_{x\sim \mu_{\rm QTW}}D(x,0) &=
    \int \frac{(\prod_{i} \omega_i )^{1/2}}{(\sqrt{2}\pi h)^{d/2}}
    e^{-\frac{\sum_i \omega_i x_i^2}{\sqrt{2}h}}\sum_i  x_i^2 \d x_1\cdots \d x_d \nonumber\\
    & = \frac{\sqrt{2}h}{2}\sum_j \frac{1}{\omega_j},
\end{align}
where the last equality uses the facts
\begin{equation}
    \int_{-\infty}^{\infty} e^{-ax^2}\d x = \sqrt{\frac{\pi}{a}},
\end{equation}
\begin{equation}
    \int_{-\infty}^{\infty} x^2 e^{-ax^2}\d x = \frac{1}{2}\sqrt{\frac{\pi}{a^3}}.
\end{equation}
Still use the above two Gaussian integrals, we have
\begin{equation}
    \int e^{\frac{-2f}{s}} \d x = \int e^{\frac{-\sum_i \lambda_i x_i^2}{s}}\d x_1 \cdots \d x_d = \frac{(\pi s)^{d/2}}{\sqrt{\prod_i \lambda_i }},
\end{equation}
and
\begin{equation}
    \int e^{\frac{-2f}{s}}D(x,0) \d x = \sum_j \int e^{\frac{-\sum_i \lambda_i x_i^2}{s}}x_j^2 \d x_1 \cdots \d x_d = \frac{s}{2}\frac{(\pi s)^{d/2}}{\sqrt{\prod_i \lambda_i }}\sum_i\frac{1}{\lambda_i},
\end{equation}
which can readily give \lem{expecteddis}.
\end{proof}

\subsection{Details about the exponential quantum speedup}
\subsubsection{Proof of \texorpdfstring{\lem{mconcentration}}{Lemma 22}}\label{append:prooflem22}
\begin{proof}
As shown in \fig{capandcone}, we use $\mathrm{Cone}(\epsilon)$ to denote the spherical cone subtended at one end by $\mathrm{Cap}(\epsilon)$. Let $\mathbb{B}^d$ be the $d$-dimensional unit ball, $\mathrm{Cone}(\epsilon)$ can be enclosed in a ball with radius $\sqrt{1-\epsilon^2}$. Therefore,
\begin{align}
    \frac{\mathrm{Area}(\mathrm{Cap}(\epsilon))}{\mathrm{Area}(\mathbb{S}^{d-1})} = \frac{\mathrm{Volume}(\mathrm{Cone}(\epsilon))}{\mathrm{Volume}(\mathbb{B}^{d})} \leq \frac{\mathrm{Volume}(\sqrt{1-\epsilon^2}\mathbb{B}^{d})}{\mathrm{Volume}(\mathbb{B}^{d})} = (1-\epsilon^2)^{d/2} \leq e^{-d\epsilon^2/2}.
\end{align}
\end{proof}

\subsubsection{Proof of \texorpdfstring{\lem{provable-onep}}{Lemma 23}}\label{append:prooflem23}
\begin{proof}
\begin{align}
    P(\mathbf{x} \notin S_{\mathbf{v}}) &= P(\mathbf{x}\in \mathbb{B}(\mathbf{0},R),~|\mathbf{x}\cdot\mathbf{v}|\leq w) \\
    &=
    P(\mathbf{y}\in \mathbb{B}(\mathbf{0},1),~|\mathbf{y}\cdot\mathbf{v}|\leq w/R)\quad \mathrm{where}~\mathbf{y}=\mathbf{x}/R\\
    &\leq P(\mathbf{z}\in \mathbb{S}^{d-1},~|\mathbf{z}\cdot\mathbf{v}|\leq w/R)\\
    &= 2\frac{\mathrm{Area}(\mathrm{Cap}(w/R))}{\mathrm{Area}(\mathbb{S}^{d-1})} \leq 2 e^{-\frac{dw^2}{2R^2}}.
\end{align}
\end{proof}

\subsubsection{Proof of \texorpdfstring{\lem{provable-manyp}}{Lemma 24}}\label{append:prooflem24}
\begin{proof}
For the algorithm, the point $\mathbf{x}_i$ it queried is fixed while
the direction $\mathbf{v}$ is unknown. Define $C_i = \{\mathbf{v}\in \mathbb{S}^{d-1}:|\mathbf{x}_i\cdot \mathbf{v}|>w\}$, we have $\mathbf{v}\notin C_i \iff \mathbf{x}_i\in S_{\mathbf{v}}$. Similar to \lem{provable-onep},
\begin{align}
    \frac{\mathrm{Area}(C_i)}{\mathrm{Area}(\mathbb{S}^{d-1})}
    =2\frac{\mathrm{Area}(\mathrm{Cap}(w/\|\mathbf{x}_1\|))}{\mathrm{Area}(\mathbb{S}^{d-1})}
    \leq 2\frac{\mathrm{Area}(\mathrm{Cap}(w/R))}{\mathrm{Area}(\mathbb{S}^{d-1})}
   \leq 2e^{-\frac{dw^2}{2R^2}}.
   \label{eq:provable-area}
\end{align}
Now, we need to relate the area ratios to the probability we want to calculate. Firstly, examine the term $P(\mathbf{x}_t\in S_{\mathbf{v}}\mid \forall \tau < t : \mathbf{x}_{\tau} \in S_{\mathbf{v}})$. Given that $\{ \forall \tau < t : \mathbf{x}_{\tau} \in S_{\mathbf{v}} \}$, we know $\mathbf{v} \in \mathbb{S}^{d-1} - \cup_{i=1}^{t-1}C_i$. Because $\forall \tau<t,~q(\mathbf{x}_{\tau})$ is independent of $\mathbf{v}$ conditioned on $\{ \forall \tau < t : \mathbf{x}_{\tau} \in S_{\mathbf{v}} \}$, we have no information about $\mathbf{v}$ except $\mathbf{v} \in \mathbb{S}^{d-1} - \cup_{i=1}^{t-1}C_i$ when determining $\mathbf{x}_t$. In this case, $\mathbf{v}$ should be uniformly
distributed over $\mathbb{S}^{d-1} - \cup_{i=1}^{t-1}C_i$, and we have
\begin{align}
    P(\mathbf{x}_t\in S_{\mathbf{v}}\mid \forall \tau < t : \mathbf{x}_{\tau} \in S_{\mathbf{v}}) = \frac{\mathrm{Area}(\mathbb{S}^{d-1} - \cup_{i=1}^{t}C_i)}{\mathrm{Area}(\mathbb{S}^{d-1} - \cup_{i=1}^{t-1}C_i)}.
    \label{eq:provable-oncondition}
\end{align}
By the product rule, we can get
\begin{align}
    P(\forall t\leq T:\mathbf{x}_t\in S_{\mathbf{v}}) = \prod_{t=1}^T P(\mathbf{x}_t\in S_{\mathbf{v}}\mid\forall \tau < t : \mathbf{x}_{\tau} \in S_{\mathbf{v}})
    =\frac{\mathrm{Area}(\mathbb{S}^{d-1} - \cup_{i=1}^{T}C_i)}{\mathrm{Area}(\mathbb{S}^{d-1})},
\end{align}
which leads to
\begin{align}
    P(\exists t\leq T:\mathbf{x}_t\notin S_{\mathbf{v}}) &= 1-P(\forall t\leq T:\mathbf{x}_t\in S_{\mathbf{v}})\\
    &= \frac{\mathrm{Area}( \cup_{i=1}^{T}C_i)}{\mathrm{Area}(\mathbb{S}^{d-1})}
    \leq \sum_i\frac{\mathrm{Area}(C_i)}{\mathrm{Area}(\mathbb{S}^{d-1})}
    \leq 2Te^{-\frac{dw^2}{2R^2}}.
\end{align}
\end{proof}

\subsubsection{Proof of \texorpdfstring{\prop{provable-exp}}{Proposition 4}}\label{append:proofprop4}
\begin{proof}
On the condition that $\mathbf{x} \in S_{\mathbf{v}}$,
\begin{align}
    f(\mathbf{x}) = \left\{
    \begin{array}{ll}
    \frac{1}{2}\omega^2 \|x\|^2,~\|x\| \leq a \\[3pt]
     H_2,~\|x\| > a,
    \end{array}
    \right.
\end{align}
which is independent of $\mathbf{v}$. Any local query $q(\mathbf{x})$ is then independent of $\mathbf{v}$.

By \lem{provable-manyp}, restricted in $\mathbb{B}(\mathbf{0},R)$, for any classical algorithm with only $T = o(e^{\frac{dw^2}{4R^2}})$ local queries,
\begin{align}
    P(\exists t\leq T: \mathbf{x}_t \notin S_{\mathbf{v}}) \leq 2Te^{-\frac{dw^2}{2R^2}} \leq e^{-\frac{dw^2}{4R^2}}.
\end{align}
Because $W_+ \cap S_{\mathbf{v}} = \varnothing$, with high probability, namely, $1-e^{-\frac{dw^2}{4R^2}}$, any classical algorithm cannot land in $W_+$.

Next, we consider \prb{provable} in $\mathbb{R}^d$. Let $F = \mathbb{R}^d\backslash \mathbb{B}(\mathbf{0},R)$ and $U_{\mathbf{v}} = \mathbb{B}(\mathbf{0},R)\backslash S_{\mathbf{v}}$.
The probabilities $P(\mathbf{x}_1\in U_{\mathbf{v}}\mid \mathbf{x}_1 \in \mathbb{B}(\mathbf{0},R)) \leq 2e^{-\frac{dw^2}{2R^2}}$ and $P(\mathbf{x}_1\in U_{\mathbf{v}}\mid \mathbf{x}_1 \in F ) =0$, such
that $P(\forall t\leq T: \mathbf{x}_t\notin U_{\mathbf{v}}) \geq 1 - 2Te^{-\frac{dw^2}{2R^2}}$ is valid for $T=1$.
Assume that $P(\forall t\leq T-1: \mathbf{x}_t\notin U_{\mathbf{v}}) \geq 1- 2(T-1)e^{-\frac{dw^2}{2R^2}}$ is true. We should examine the term $P(\mathbf{x}_T \notin U_{\mathbf{v}} \mid \forall t< T: \mathbf{x}_t\notin U_{\mathbf{v}})$.
Given that $\{\mathbf{x}_T \in F\}$, $P(\mathbf{x}_T \notin U_{\mathbf{v}} \mid \mathbf{x}_T \in F,~ \forall t< T: \mathbf{x}_t\notin U_{\mathbf{v}}) = 1$.
And we have
\begin{align}
    P(\mathbf{x}_T \notin U_{\mathbf{v}} \mid \forall t< T: \mathbf{x}_t\notin U_{\mathbf{v}}) \geq P(\mathbf{x}_T \notin U_{\mathbf{v}} \mid \mathbf{x}_T \in \mathbb{B}(\mathbf{0},R),~ \forall t< T: \mathbf{x}_t\notin U_{\mathbf{v}}).
\end{align}
In the queried $T-1$ pints, assume that there are $T_1 - 1$ points in $S_{\mathbf{v}}$.
All other $T-T_1$ points and queries in $F$ provide no information inside $\mathbb{B}(\mathbf{0},R)$ and are independent of $\mathbf{v}$. Therefore,
\begin{align}
\hspace{-2mm}     P(\mathbf{x}_T \notin U_{\mathbf{v}} \mid \mathbf{x}_T \in \mathbb{B}(\mathbf{0},R),~ \forall t< T: \mathbf{x}_t\notin U_{\mathbf{v}})
     & =P(\mathbf{x}_{T_1} \in S_{\mathbf{v}} \mid \forall t< T_1: \mathbf{x}_t\in S_{\mathbf{v}})\\
     & = \frac{\mathrm{Area}(\mathbb{S}^{d-1} - \cup_{i=1}^{T_1} C_i)}{\mathrm{Area}(\mathbb{S}^{d-1} - \cup_{i=1}^{T_1 - 1} C_i)}
     \geq 1 - \frac{\mathrm{Area}(C_{T_1})}{\mathrm{Area}(\mathbb{S}^{d-1} - \cup_{i=1}^{T_1 - 1} C_i)}\\
     &\geq 1 - \frac{2e^{-\frac{dw^2}{2R^2}}}{1-2(T_1-1)e^{-\frac{dw^2}{2R^2}}}
     \geq 1 - \frac{2e^{-\frac{dw^2}{2R^2}}}{1-2(T-1)e^{-\frac{dw^2}{2R^2}}}.
\end{align}
In the second equality, we use \eq{provable-oncondition} as queries in $S_{\mathbf{v}}$ are independent of $\mathbf{v}$. The second inequality is deduced from \eq{provable-area}.
Finally, we have
\begin{align}
     P(\forall t\leq T: \mathbf{x}_t\notin U_{\mathbf{v}}) = P(\mathbf{x}_T \notin U_{\mathbf{v}} \mid \forall t< T: \mathbf{x}_t\notin U_{\mathbf{v}})P(\forall t\leq T-1: \mathbf{x}_t\notin U_{\mathbf{v}}) \geq 1- 2Te^{-\frac{dw^2}{2R^2}}.
\end{align}
By mathematical induction, $P(\forall t\leq T: \mathbf{x}_t\notin U_{\mathbf{v}})\geq 1- 2Te^{-\frac{dw^2}{2R^2}}$ is true for any $T\in \mathbb{N}_+$. In $\mathbb{R}^d$, for any classical algorithm given only $T = o(e^{-\frac{dw^2}{4R^2}})$ queries, the probability of finding any point in $W_+$ is given by
\begin{align}
     P(\exists t\leq T: \mathbf{x}_t\in W_+) \leq P(\exists t\leq T: \mathbf{x}_t\in U_{\mathbf{v}}) \leq 2Te^{-\frac{dw^2}{2R^2}}\leq e^{-\frac{dw^2}{4R^2}},
\end{align}
which completes the proof.
\end{proof}

\subsubsection{Proof of \texorpdfstring{\lem{provableS0}}{Lemma 25}}\label{append:prooflem25}
\begin{proof}
First, the integral in \eq{provable-S0} can be easily verified. We need to prove that $\frac{1}{\sqrt{2}}\omega a^2 + 2(b-a)\sqrt{H_1}$ is the shortest Agmon length of paths linking $U_-$ and $U_+$.

Without loss of generality, we choose an orthonormal basis $\{\mathbf{e}_1,\ldots,\mathbf{e}_d\}$, and write any vector $\mathbf{x} = \sum_i x^i\mathbf{e}_i$.
We set $\mathbf{e}_1 = \mathbf{v}$.
Let $x^i = l^i(\tau)~(0\leq \tau \leq M_l)$ be an arbitrary smooth path joining $U_-$ and $U_+$, where $\tau$ is the parameter of the curve, $l^i(0) = 0$, and $l^i(M_l) = \delta_{i1} 2b$.
If $x^1$ can be the parameter of this path (i.e., $x^1 = l^1 (\tau)$ is a bijection and $x^i = l^i((l^1)^{-1}(x^1))~(0\leq x^1 \leq 2b)$), the Agmon length between $U_-$ and $U_+$ along this path is given by
\begin{align}
    l(U_-,U_+) &= \int_{0}^{M_l}  \sqrt{f \delta_{\mu \nu}\frac{\partial x^{\mu}}{\partial \tau}
    \frac{\partial x^{\nu}}{\partial \tau}} \d \tau
    = \int_{0}^{2b}  \sqrt{f \delta_{\mu \nu}\frac{\partial x^{\mu}}{\partial x^1}
    \frac{\partial x^{\nu}}{\partial x^1}} \d x^1 \\
    & \geq \int_{0}^{2b}  \sqrt{f(x^1\mathbf{e}_1)} \d x^1 = \frac{1}{\sqrt{2}}\omega a^2 + 2(b-a)\sqrt{H_1}.
    \label{eq:64}
\end{align}
If $\psi^1 = l^1 (\tau)$ is not a bijection the second equality in the above equation will be invalid.
In this case, we should restrict the domain on $\mathcal{D} = \bigcup_i [\tau_i^-,\tau_i^+] \subset [0,M_l]$ ($\tau_i^+ \leq \tau_{i+1}^-$) s.t. $l^1: \mathcal{D} \to [0,2b]$ is a bijection and $[0,M_l] - \mathcal{D}$ is a zero measure set. Then, we have
\begin{align}
    l(U_-,U_+) &\geq \sum_i\int_{\tau_i^-}^{\tau_i^+}  \sqrt{f \delta_{\mu \nu}\frac{\partial x^{\mu}}{\partial \tau}
    \frac{\partial x^{\nu}}{\partial \tau}} \d \tau \\
    &= \sum_i\int_{l^1(\tau_i^-)}^{l^1(\tau_i^+)}  \sqrt{f
    \delta_{\mu \nu}\frac{\partial x^{\mu}}{\partial x^1}
    \frac{\partial x^{\nu}}{\partial x^1}} \d x^1\\
    & \geq \sum_i \int_{l^1(\tau_i^-)}^{l^1(\tau_i^+)}  \sqrt{f(x^1\mathbf{e}_1)
    } \d x^1 = \frac{1}{\sqrt{2}}\omega a^2 + 2(b-a)\sqrt{H_1}.
    \label{eq:67}
\end{align}
This implies the correctness of \eq{provable-S0}.

Next, observe that $(\partial x^i/\partial x^1)^2>0$, if any path $x^i = l^i(\tau)~(0\leq \tau \leq M_l)$ is not the one given by \eq{provable-geodesic}, inequalities in \eq{64} and \eq{67} will be strict. Therefore, we prove that \eq{provable-geodesic} is the only geodesic linking $U_-$ and $U_+$ with the Agmon length $S_0$.
\end{proof}

\subsubsection{Proof of \texorpdfstring{\lem{provablenu}}{Lemma 26}}\label{append:prooflem26}
\begin{proof}
Because of the symmetry of the two local ground states,
we can naturally have \eq{provable-matrix}.
Since the constructed landscape satisfies assumptions needed for
\prop{speW}, we can use \prop{speW} to calculate the tunneling amplitude
$\nu$ directly. By the geometrical symmetry of the two wells,
the special point on the geodesic $\boldsymbol{\gamma}_{-+}$ should be
$b\mathbf{v}$.
The next-to-leading term of $\nu$ is then given by
\begin{align}
    \nu = -h^{1/2}
    2(2\pi)^{\frac{d-1}{2}}
    \sqrt{\frac{f(b\mathbf{v})}{\mathrm{det}'\Big(\nabla^2 d_{-+}(b\mathbf{v})\Big)}}a^{(-)}_0(b\mathbf{v})a^{(+)}_0(b\mathbf{v})e^{-S_0/h},
    \label{eq:raww}
\end{align}
where $\mathrm{det}'$ denotes the usual determinant with the
zero mode removed, and $a^{(\pm)}_0$ are the leading terms in the WKB approximations of the local ground states.

Choose an orthonormal basis $\{\mathbf{e}_1, \ldots,\mathbf{e}_d\}$ in $\mathbb{R}^d$ where $\mathbf{e}_1 = \mathbf{v}$, we have $\mathbf{x} = \sum_{i=1}^d x_i \mathbf{e}_i$.
Since along the geodesic $\boldsymbol{\gamma}_{-+}$, $d_{-+} \equiv S_0$,
we obtain that $\partial d_{-+}/\partial x_1 = 0$ and $\partial^2 d_{-+}/\partial x_1^2 = 0$. That is, $\forall j,~(\nabla^2d_{-+}(b\mathbf{v}))_{ij} = 0$.
Constrained on the hypersurface perpendicular to $\mathbf{e}_1$, consider an infinitesimal $\delta x = \sum_{i=2}^d \delta x_i$ and we have
\begin{align}
    d(b\mathbf{v} + \delta \mathbf{x},\mathbf{0}) - d(b\mathbf{v},\mathbf{0})
    = \frac{\|\delta x\|^2}{b}\sqrt{H_1} + o(\|\delta x\|^2).
\end{align}
Given that
\begin{align}
    d(b\mathbf{v} + \delta \mathbf{x},\mathbf{0}) - d(b\mathbf{v},\mathbf{0})
    = \sum_{i,j=2}^d \left.\frac{\partial^2 d(\mathbf{x},\mathbf{0})}{\partial x_i \partial x_j}\right|_{\mathbf{x}=b\mathbf{v}}\delta x_i \delta x_j,
\end{align}
it is readily to get $(\nabla^2 d(\mathbf{x},\mathbf{0})|_{\mathbf{x}= b\mathbf{v}})_{i,j} = \delta_{ij}2\sqrt{H_1}/b~(i,j\geq 2)$.
Similarly, by symmetry, we can obtain $(\nabla^2 d(\mathbf{x},2b\mathbf{v})|_{\mathbf{x}= b\mathbf{v}})_{i,j} = \delta_{ij}2\sqrt{H_1}/b~(i,j\geq 2)$.
Use the fact $d_{-+}(\mathbf{x}) = d(\mathbf{x},\mathbf{0}) + d(\mathbf{x}, 2b\mathbf{v})$,
we deduce the following
\begin{align}
    \nabla^2 d_{-+}(b\mathbf{v}) = \mathrm{diag}\left(0,\frac{4\sqrt{H_1}}{b},\ldots,\frac{4\sqrt{H_1}}{b}\right)~\mathrm{under~the~basis}~\{\mathbf{e}_1, \ldots,\mathbf{e}_d\},
\end{align}
which leads to
\begin{align}
    \mathrm{det}' \big(\nabla^2 d_{-+}(b\mathbf{v})\big) = (4\sqrt{H_1}/b)^{d-1}.
    \label{eq:detbv}
\end{align}

Next, we need to calculate $a^{(\pm)}_0(b\mathbf{v})$. Using the symmetry between the two local states, we know $a^{(-)}_0(b\mathbf{v}) = a^{(+)}_0(b\mathbf{v})$.
So, we can only focus on the calculation of $a^{(-)}_0(\mathbf{x})$, which is the leading term in WKB approximation of the local ground state of the well $U_-$.
By \lem{wkb}, $a^{(-)}_0(\mathbf{x})$ should satisfy the first transport equation
\begin{align}
    2 \nabla d(\mathbf{x},\mathbf{0}) \cdot \nabla a^{(-)}_0 + (\Delta d(\mathbf{x},\mathbf{0}) - E_1)a^{(-)}_0 = 0,
\end{align}
where $E_1 = \omega d /\sqrt{2}$ as the neighborhood of the well is quadratic. On any point $\mathbf{x} = x_1 \mathbf{v} \in \boldsymbol{\gamma}_{-+}$, the transport equation is reduced to
\begin{align}
    \nabla a^{(-)}_0 = 0,\quad x_1\leq a,
\end{align}
\begin{align}
    \frac{\partial a^{(-)}_0}{\partial x_1} + d \left(\frac{d-1}{x_1d} -
    \frac{\omega}{2\sqrt{2H_1}}\right)a^{(-)}_0 = 0,\quad a< x_1 \leq b.
\end{align}
With the initial condition $a^{(-)}_0(\mathbf{0}) = (\frac{\omega}{\sqrt{2}\pi})^{d/4}$, $a^{(-)}_0(b\mathbf{v})$ is determined as
\begin{align}
    a^{(-)}_0(b\mathbf{v}) = (\frac{\omega}{\sqrt{2}\pi})^{d/4} \exp\left(\frac{\omega d(b-a)}{2\sqrt{2H_1}} - d \ln \frac{b}{a}\right).
    \label{eq:a0bv}
\end{align}
Substitute the results \eq{detbv} and \eq{a0bv} into \eq{raww}, we can get \eq{provable-w}, which finished the proof.
\end{proof}

\subsubsection{Proof of \texorpdfstring{\prop{provable-poly}}{Proposition 5}}\label{append:proofprop5}
\begin{proof}
First of all, $H_2$ is a constant independent of $d$.
We set $h$ to be of the form $h = \frac{4\delta}{\sqrt{2}\omega d}$ and $\delta \in (0,\delta_0]$ is bounded where $\delta_0 = O(H_0)$.
To understand this setting, we evaluate the energy of the local ground state $\ket{\Phi_-}$ corresponding to the well $U_-$.
Recall that $H_0\ll H_2$, if $\delta$ is small enough to let
$\ket{\Phi_-}$ localize in the quadratic region $W_-$,
the energy of $\ket{\Phi_-}$ should be approximately $h\tr \nabla^2 f(\mathbf{0})/\sqrt{2} = hd\omega/\sqrt{2}=2\delta \ll H_2$.
If $\delta$ is of the order $H_0$, a conservative estimation is that
the potential energy $\bra{\Phi_-} f \ket{\Phi_-} = O(H_0)$ and the
kinetic energy is $\bra{\Phi_-} -h^2\Delta \ket{\Phi_-} = O( d h^2(\frac{2\pi}{D})^2)$ where $D = \Omega(a)$ can be seen as the
diameter of the region where $\ket{\Phi_-}$ is localized.
Therefore, the energy of $\ket{\Phi_-}$ can be always far less than $H_2$, and $\ket{\Phi_-}$ should be localized in the region with
low potential energy $W_-\cup B_{\mathbf{v}} \cup W_+$.
A more quantitative way of seeing $h = \frac{4\delta}{\sqrt{2}\omega d}$ is that this setting solves the problem associated to measure concentration.

Together with \lem{provableS0} and \lem{provablenu},
we have
\begin{align}
    \nu = - \sqrt{\frac{8hH_1\sqrt{H_1}}{\pi b}}\exp\left(\Gamma\Big(\frac{H_0}{\delta}, \frac{b}{a}, \frac{H_0}{H_1}\Big) d \right),
\end{align}
where
\begin{align}
    \Gamma(x,y,z) := -\frac{1}{2}x + (y-1)\sqrt{z}\big(1-\frac{x}{z}\big) - \frac{3}{2} \ln y - \ln \sqrt{2} + \frac{1}{4}\ln z.
\end{align}
Let $\ket{\Phi_+}$ be the local ground state corresponding to $U_+$.
The key to make our analysis on $\nu$ valid is that orthonormalized local ground states $\ket{\Phi_-}$ and $\ket{\Phi_+}$ span the 2-dimensional subspace spanned by $\ket{E_0}$ and $\ket{E_1}$. Here, $\ket{E_0}$ and $\ket{E_1}$ are the ground state and the first excited state, respectively.
To be more specific,
let $E_0$ and $E_1$ be the energy of $\ket{E_0}$ and $\ket{E_1}$, respectively, then
\begin{align}
    \ket{E_0} = \frac{1}{\sqrt{2}} (\ket{\Phi_-} + \ket{\Phi_+}),\quad \ket{E_1} = \frac{1}{\sqrt{2}} (\ket{\Phi_-} - \ket{\Phi_+}),
\end{align}
and
\begin{align}
    E_0 = \mu + \nu,\quad E_1= \mu - \nu,
\end{align}
where $\mu$ is the energy of $\ket{\Phi_-}$ (or $\ket{\Phi_+}$) and without loss of generality, inner products  $\ip{\mathbf{x}}{\Phi_-}$ and $\ip{\mathbf{x}}{\Phi_+}$ are real and positive.
Only studying $\nu$, it seems we can let $|\nu|$ increase exponentially with respect to $d$, but this is infeasible in certain.
To make this scenario valid, we should have $|\nu| \ll \mu$ and $|\nu| \ll E_2 - E_1$, where $E_2$ is the energy of the second excited state.
Next, we show these constraints can be satisfied while $|\nu|$ is not too small.
Let $\mu'$ be the energy of the local first excited states,\footnote{In general, there may be many local eigenstates with the same energy $\mu'$. This phenomenon is called the degeneracy of energy levels.} we set $|\nu'| = \mu'- E_2$.
Then, $|\nu| \ll E_2 - E_1$ is equivalent to $|\nu| + |\nu'| \ll \mu' - \mu$.
Since $g_1:=\mu' - \mu$ is the energy gap between the local ground state and local first excited states which all concentrate near the region $W_-$ (or $W_+$),\footnote{Because of the symmetry of $W_-$ and $W_+$, it suffices to study one region $W_-$ and the conclusion is also true for $W_+$.}
its value satisfies $g_1 =\Omega(h^2) = \Omega(1/\mathrm{poly}(d))$ \cite{AC11}.
Let $g_2 = 1/\mathrm{poly}(d)$ be a function of $d$ such that $g_2 = o_d(g_1)$, we further demand that
\begin{align}
    \frac{1}{\sqrt{d}}\exp\left(\Gamma\Big(\frac{H_0}{\delta}, \frac{b}{a}, \frac{H_0}{H_1}\Big) d \right) = g_2(d),
\end{align}
which gives
\begin{align}
    \Gamma\Big(\frac{H_0}{\delta}, \frac{b}{a}, \frac{H_0}{H_1}\Big) = -\frac{1}{d}\ln \mathrm{poly}(d) \to 0,~(d \to \infty).
    \label{eq:provable-valid1}
\end{align}
Eq. \eq{provable-valid1} can always be valid by properly choosing the parameters $(\frac{H_0}{\delta}, \frac{b}{a}, \frac{H_0}{H_1})$.
If we demand $H_1$ and $b$ to be two constants independent of $d$,
$(\frac{H_0}{\delta}, \frac{b}{a}, \frac{H_0}{H_1})$ are still three free variables.
Adding the restriction \eq{provable-valid1}, there are two free variables in $(\frac{H_0}{\delta}, \frac{b}{a}, \frac{H_0}{H_1})$ left and
we can obtain $|\nu| = o_d(g_1)$.
Although $|\nu'|$ which captures tunneling effects between
local excited states is difficult to be calculated explicitly,
we claim that $|\nu'|$, as another tunneling amplitude, still depends exponentially on a function of
$(\frac{H_0}{\delta}, \frac{b}{a}, \frac{H_0}{H_1})$.
Therefore, by the same procedure of restricting $|\nu|$, $|\nu'|$ can be set to be $o_d(g_1)$.
Now, there is only one free variable left in $(\frac{H_0}{\delta}, \frac{b}{a}, \frac{H_0}{H_1})$.
Making use of the left one free variable, we can demand $(\frac{H_0}{\delta}, \frac{b}{a}, \frac{H_0}{H_1})$ to converge to non-zero constants when $d\to \infty$.
In this case, $a$, $b$, $H_0$, $H_1$, and $\delta$ all have limits and
previous constraints (independent of determining $\nu$ and $\nu'$) on these variables can be satisfied.
Since $\delta$ has a non-zero limit, and $|\nu| = \Theta(1/\mathrm{poly}(d))$ the condition $|\nu| \ll \mu = \Omega(\delta)$ can be satisfied
for sufficiently large $d$.

In the double-well case, initiating QTW from the given local ground state $\ket{\Phi_-}$, \lem{qtwmixingtime} can be simplified.
Note that the $O(h^{\infty})$ term in \lem{qtwmixingtime} is due to the integral
$\int |\ip{\Phi_-}{\mathbf{x}}\ip{\mathbf{x}}{\Phi_+} | \d \mathbf{x}$
which can be re-estimated as
\begin{align}
    \int |\ip{\Phi_-}{\mathbf{x}}\ip{\mathbf{x}}{\Phi_+} | \d \mathbf{x} \leq \Big(\int |\ip{\Phi_-}{\mathbf{x}}|^2 \d \mathbf{x}\Big)^{1/2}\Big(\int |\ip{\Phi_+}{\mathbf{x}}|^2 \d \mathbf{x}\Big)^{1/2} = 1.
\end{align}
Thus, for the present case, we can write $T_{\rm mix} = O(1/\epsilon |\nu|) = O(\mathrm{poly}(d)/\epsilon)$ (note that $\Delta E = 2|\nu|$ here).
Under $T_{\rm mix} = O(1/(\epsilon/2)|\nu|) = O(1/\epsilon|\nu|)$, we have
\begin{align}
    \|\rho_{\rm QTW}(T_{\rm mix}, \cdot) - \mu_{\rm QTW}(\cdot) \|_1 \leq \frac{\epsilon}{2}.
\end{align}
Let $\Phi(t)=e^{-iH t}\Phi(0)$ with the initial state $\Phi(0)$.
By quantum simulation, we cannot get exactly the state $\Phi(t)$
but an approximation $\tilde{\Phi}(t)$.
Set $\tilde{\rho}_{\rm QTW}(T_{\rm mix},x) = \int_{0}^{T_{\rm mix}} |\bra{x} \tilde{\Phi}(t)\y|^2 \d t/T_{\rm mix}$,
the condition
\begin{align}
   \| \tilde{\Phi}(t) - \Phi(t)\| \leq \frac{\epsilon}{2}~(t\leq T_{\rm mix})
\end{align}
gives
\begin{align}
   \|\rho_{\rm QTW}(T_{\rm mix}, \cdot) - \tilde{\rho}_{\rm QTW}(T_{\rm mix}, \cdot)\|_1 \leq \frac{\epsilon}{2},
\end{align}
which requires the number of quantum queries to be
\begin{align}
    O\left(\|f\|_{L^{\infty}} t \frac{\log (2\|f\|_{L^{\infty}} t/\epsilon)}{\log \log (2\|f\|_{L^{\infty}} t/\epsilon)}\right)
    = O\left(H_2 t \frac{\log (2 H_2 t/\epsilon)}{\log \log (2H_2 t/\epsilon)}\right)
    \label{eq:428}
\end{align}
(see \lem{quansimu} in \append{quantumsimulation}).
After simulating for some time bounded by $T_{\rm mix}$, measurements are equivalent to sampling from $\tilde{\rho}_{\rm QTW}(T_{\rm mix}, \cdot)$,
and we have
\begin{align}
   \|\tilde{\rho}_{\rm QTW}(T_{\rm mix}, \cdot)-\mu_{\rm QTW}(\cdot)\|_1 \leq \epsilon.
\end{align}

Next, we need to analyze $\mu_{\rm QTW}$ to determine the number of iterations and then the number of total queries.
In our case, the probability of finding the particle at $W_+$ is given by
\begin{align}
    \int_{W_+} \mu_{\rm QTW} \d \mathbf{x}
    &= \int_{W_+} [\frac{1}{2}(|\ip{\mathbf{x}}{\Phi_+}|^2 + |\ip{\mathbf{x}}{\Phi_-}|^2 )
    + 2\mathrm{Re}(\ip{\Phi_-}{\Phi(0)}\ip{\Phi(0)}{\Phi_+})\ip{\Phi_-}{\mathbf{x}}\ip{\mathbf{x}}{\Phi_+} ] \d \mathbf{x}\\
    &\geq \frac{1}{2} \int_{W_+}|\ip{\mathbf{x}}{\Phi_+}|^2 \d \mathbf{x} - \int_{W_+} |\ip{\Phi_-}{\mathbf{x}}\ip{\mathbf{x}}{\Phi_+}| \d \mathbf{x}.
\end{align}
For $\mathbf{x}\in W_+$, we have $\mathbf{x}-2(\mathbf{x}\trans\mathbf{v}) \mathbf{v}+2\mathbf{b} \in W_-$, and thus the following holds because of the symmetry between $\Phi_+$ and $\Phi_-$:
\begin{align}
    \frac{|\ip{\Phi_-}{\mathbf{x}}|}{|\ip{\Phi_+}{\mathbf{x}}|} = \frac{|\ip{\Phi_-}{\mathbf{x}}|}{|\ip{\Phi_-}{\mathbf{x}-2(\mathbf{x}\trans\mathbf{v}) \mathbf{v}+2\mathbf{b}}|},~\mathrm{for}~\mathbf{x}\in W_+.
\end{align}
Under previously determined parameters, $\Phi_-$ and $\Phi_+$ concentrate near $W_-$ and $W_+$ respectively, giving $\frac{|\ip{\Phi_-}{\mathbf{x}}|}{|\ip{\Phi_+}{\mathbf{x}}|}\leq 1$ for any $\mathbf{x}\in W_+$. Manipulating the parameters, for example, changing $S_0/h$ to $S_0/h+C$ for some constant $C>0$ independent of $d$, $1/\nu$ will be a larger polynomial and the ratio $\frac{|\ip{\Phi_-}{\mathbf{x}}|}{|\ip{\Phi_-}{\mathbf{x}-2(\mathbf{x}\trans\mathbf{v}) \mathbf{v}+2\mathbf{b}}|}~(\mathbf{x}\in W_+)$ at least reduces by constant times. So, it can always be done in making $\frac{|\ip{\Phi_-}{\mathbf{x}}|}{|\ip{\Phi_+}{\mathbf{x}}|}\leq \frac{1}{4}~(\mathbf{x}\in W_+)$ while maintaining $1/\nu$ as a polynomial in $d$.
And we have
\begin{align}
    \int_{W_+} \mu_{\rm QTW} \d \mathbf{x}
   \geq \frac{1}{4} \int_{W_+}|\ip{\mathbf{x}}{\Phi_+}|^2 \d \mathbf{x}.
\end{align}
Next, we should estimate $\int_{W_+}|\ip{\mathbf{x}}{\Phi_+}|^2 \d \mathbf{x}$.
Note that
\begin{align}
     1 = \int |\ip{\mathbf{x}}{\Phi_+}|^2 = (\int_{\mathbb{R}^d\backslash(W_-\cup B_{\mathbf{v}} \cup W_+)} +\int_{W_-} + \int_{B_{\mathbf{v}}} + \int_{W_+})|\ip{\mathbf{x}}{\Phi_+}|^2 \d \mathbf{x},
\end{align}
we estimate the integrals separately.
The potential energy of $\ket{\Phi_+}$ should be of $O(H_0)$, so that
\begin{align}
    \int_{\mathbb{R}^d\backslash(W_-\cup B_{\mathbf{v}} \cup W_+)}H_2 |\ip{\mathbf{x}}{\Phi_+}|^2 = O(H_0),
\end{align}
giving that
\begin{align}
      \int_{\mathbb{R}^d\backslash(W_-\cup B_{\mathbf{v}} \cup W_+)} |\ip{\mathbf{x}}{\Phi_+}|^2 \d \mathbf{x} = O(H_0/H_2).
\end{align}
So, we know $(\int_{W_-} + \int_{B_{\mathbf{v}}} + \int_{W_+})|\ip{\mathbf{x}}{\Phi_+}|^2 \d \mathbf{x}$ is larger than some constant $C_1$ independent of $d$.
Because $\ket{\Phi_+}$ is the local ground state corresponding to $U_+$, it concentrates in $U_{+}$ and promises that $\int_{W_+}|\ip{\mathbf{x}}{\Phi_+}|^2 \d \mathbf{x} \geq \int_{W_-}|\ip{\mathbf{x}}{\Phi_+}|^2 \d \mathbf{x}$.
Note that the volume of $B_{\mathbf{v}}$ is exponentially smaller than that of $W_+$, that is,
\begin{align}
    \frac{\mathrm{Volume(B_{\mathbf{v}})}}{\mathrm{Volume}(W_+)}
    \leq \frac{(\sqrt{3}a/2)^{d-1} \mathrm{Area}(\mathbb{S}^{d-1})(2b-a)}{a^d\mathrm{Volume}(\mathbb{S}^{d})} = O(\mathrm{poly}(e^{-d})).
\end{align}
After steps of determining $\nu$ and $\nu'$, we still have a free variable, say, $H_0/H_1$.
We set $H_0/H_1 =1$,
and the energy of a particle restricted in $B_{\mathbf{v}}$ will significantly larger than that restricted in $W_+$.
Therefore, for the local ground state $\ket{\Phi_+}$, we can also have $\int_{W_+}|\ip{\mathbf{x}}{\Phi_+}|^2 \d \mathbf{x} \geq \int_{B_{\mathbf{v}}}|\ip{\mathbf{x}}{\Phi_+}|^2 \d \mathbf{x}$.
In all,
\begin{align}
    C_1 \leq \Big(\int_{W_-} + \int_{B_{\mathbf{v}}} + \int_{W_+}\Big)|\ip{\mathbf{x}}{\Phi_+}|^2 \d \mathbf{x} \leq 3\int_{W_+}|\ip{\mathbf{x}}{\Phi_+}|^2 \d \mathbf{x},
\end{align}
and then
\begin{align}
    \int_{W_+} \mu_{\rm QTW} \d \mathbf{x} \geq C_2,
\end{align}
where $C_2 := C_1/12$ is a constant independent of $d$.
Choose $\epsilon = C_2/2$ to be a constant independent of $d$,
we have
\begin{align}
    \int_{W_+}\tilde{\rho}_{\rm QTW}(T_{\rm mix}, \mathbf{x}) \d \mathbf{x}
    &\geq \int_{W_+} \mu_{\rm QTW}(\mathbf{x}) \d \mathbf{x} - \int_{W_+}|\tilde{\rho}_{\rm QTW}(T_{\rm mix},\mathbf{x}) - \mu_{\rm QTW}(\mathbf{x})| \d \mathbf{x}\\
    &\geq C_2 - \|\tilde{\rho}_{\rm QTW}(T_{\rm mix}, \cdot) - \mu_{\rm QTW}(\cdot) \|_1
    \geq  C_2/2.
\end{align}
Let $C_3 := C_2/2$ be also a constant independent of
$d$.
After evolving the system for at most $T_{\rm mix}$ once
the probability of hitting $W_+$ is at least $C_3$.
Then, using $n$ interactions which takes a total time bounded by $T_{\rm tot} = n T_{\rm mix} = n O(\mathrm{poly}(d))$,
\footnote{Here, $\epsilon$ has been set to be a constant and can be omitted.} the probability of successfully hitting $W_+$ is $p_{\rm suc} \geq 1 - (1-C_3 )^n$.\footnote{In the proof, we have determined $H_2$, $h$, $H_1$, $H_0$, $a$ and $b$; the only remained parameter is $w$. We demand $w$ to be a constant independent of $d$. In this case, as long as $h = 4\delta/\sqrt{2}\omega d$, the Gaussian integral used in \prop{speW} which gives \lem{provablenu} is accurate enough for large $d$, and our conclusion based on \lem{provablenu} is not affected.}
Because of \eq{428} and note that $H_2$ and $\epsilon$ are constants independent of $d$,
the number of total quantum quires is $n\tilde{O}(T_{\rm mix})$. Since $T_{\rm mix}$ is polynomial in $d$, the number of total quantum quires needed is polynomial in $d$.
\end{proof}

\subsubsection{Proof of \texorpdfstring{\prop{nospeedup}}{Proposition 6}}\label{append:nospeedup}
Let $\bf{n}$ be a unit vector, we define the cone $\mathcal{C}_{q}(\bf{n})$ as
\begin{align}
    \mathcal{C}_{q}(\bf{n}) := \{\bf{x}: \bf{x}\cdot \bf{n}/\|\bf{x}\| > q\}.
\end{align}
For $f$ defined by \eq{hardinstance}, the regions
$B_{\bf{v}}$ and $W_+$ can be in a cone $\mathcal{C}_{q}(\bf{v})$ for some constant $q>\sqrt{3}/2$.

Suppose there are $N$ directions $\bf{n}_j~(j=1,...,N)$ such that all $\mathcal{C}_{q}(\bf{n}_j)~(j=1,...,N)$ are disjoint.
We first show that an unstructured search with datasize $N$ can be solved by solving \prb{provable}.
Let the datapoints be labeled by $\alpha \in \{1,...,N\}$.
In the unstructured search problem, we are given an oracle $\mathcal{O}$ giving $\mathcal{O}(\alpha)$ the value assigned to $\alpha$. Recall that there are only one point is assigned $1$ and others are assigned $0$. Our goal is to find the only point, say $\alpha_s$, such that $\mathcal{O}(\alpha_s)=1$.
Every point in unstructured search, $\alpha$, can be made uniquely correspond to a cone $\mathcal{C}_{q}(\bf{n}_{\alpha})$.

Let $f_0$ be a function satisfying \eq{hardinstance} whose $\bf{v} := \bf{n}_0$. Using the oracle $\mathcal{O}$ and $f_0$, we can construct a new function $f$. If $\|\bf{x}\|\leq a$, $f(\bf{x}) = f_0(\bf{x})$, For any $\bf{x}> a$, first determine whether $\bf{x}\in \mathcal{C}_{q}(\bf{n}_j)$ for some $j$. If so, $f(\bf{x}) = f_0(R(j,0)\bf{x})$ for $\mathcal{O}(j) = 1$ and $f(\bf{x}) = H_2$ for $\mathcal{O}(j) = 0$, where $R(j,0)$ is a rotation mapping $\bf{n}_j$ to $\bf{n}_0$. Otherwise, $f(\bf{x}) = H_2$.
From the above construction, we know that $f$ satisfies the following properties:
\begin{itemize}
\item This $f$ is a function of the form \eq{hardinstance} whose $\bf{v} = \bf{n}_{\alpha_s}$.
\item One query to the function value of $f$ need at most one query to the oracle $\mathcal{O}$.
\end{itemize}
Therefore, if we can solve \prb{provable} on this $f$ within queries (to $f$) polynomial in $d$, we can have an algorithm for the $N$-size unstructured search finding $\alpha_s$ within queries (to $\mathcal{O}$) polynomial in $d$. With the fact $N$ is exponential in $d$, assuming \prb{provable} can be solved within queries polynomial in $d$ will violate the quantum lower bound for unstructured search, i.e., $\Omega(\sqrt{N})$ \cite{BBBV97}.

Next, to prove \prop{nospeedup} by contradiction, we just need to show that $N$ can be exponential in $d$.
The only constrain on $N$ is that $\mathcal{C}_{q}(\bf{n}_j)\cap \mathcal{C}_{q}(\bf{n}_k) = \varnothing$ for $j\neq k$, which is equivalent to $\|\bf{n}_j - \bf{n}_k\|\geq \varepsilon$ for $j\neq k$ and some $\varepsilon<1$.
The following Lemma provides a desired estimation of $N$.
\begin{lemma}
There exists a set of $N$ unit vectors in $\mathbb{R}^d$ (or $N$ points on $\mathbb{S}^{d-1}$) such that
\begin{itemize}
\item $\forall \bf{x} \neq \bf{y}$ in the set, $\|\bf{x}-\bf{y}\|\geq \varepsilon$;
\item $N \geq (\frac{1}{2\varepsilon}+ \frac{1}{2})^d - (\frac{1}{2\varepsilon} - \frac{1}{2})^d$.
\end{itemize}
\end{lemma}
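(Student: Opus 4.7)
The plan is to prove this via a maximal packing argument combined with a volume comparison in an annulus. I would start by taking any maximal $\varepsilon$-separated set $\{\bf{n}_1,\ldots,\bf{n}_N\}$ of unit vectors in $\mathbb{R}^d$ (one exists by Zorn's lemma, and by maximality we cannot extend it). The first condition in the lemma is automatic; the content of the claim is the lower bound on $N$.

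The key observation I would exploit is that maximality forces the enlarged balls to cover the whole sphere: for any unit vector $\bf{u}$ there exists some $j$ with $\|\bf{u}-\bf{n}_j\|<\varepsilon$, otherwise $\bf{u}$ could be added to the set, contradicting maximality. Starting from this, I would extend the covering from $\mathbb{S}^{d-1}$ to the full thick annulus $A=\{x\in\mathbb{R}^d:1-\varepsilon\le\|x\|\le 1+\varepsilon\}$ by a short triangle-inequality computation: for any $x\in A$, its radial projection $x/\|x\|$ lies within $\varepsilon$ of some $\bf{n}_j$, and $\|x-x/\|x\|\| = |\|x\|-1|\le\varepsilon$, giving $\|x-\bf{n}_j\|\le 2\varepsilon$. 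Hence $A \subset \bigcup_{j=1}^N B(\bf{n}_j,2\varepsilon)$.

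From here the proof is a one-line volume bound. Writing $V_d$ for the volume of the unit ball in $\mathbb{R}^d$, the annulus has volume $V_d[(1+\varepsilon)^d-(1-\varepsilon)^d]$, while the union on the right-hand side has volume at most $N\cdot V_d(2\varepsilon)^d$. Rearranging gives
\begin{equation}
N \;\ge\; \frac{(1+\varepsilon)^d-(1-\varepsilon)^d}{(2\varepsilon)^d}\;=\;\left(\frac{1+\varepsilon}{2\varepsilon}\right)^d-\left(\frac{1-\varepsilon}{2\varepsilon}\right)^d\;=\;\left(\frac{1}{2\varepsilon}+\frac{1}{2}\right)^d-\left(\frac{1}{2\varepsilon}-\frac{1}{2}\right)^d,
\end{equation}
as required.

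There is no substantial obstacle, and essentially no calculation beyond the algebraic identity above. The only place where one must be a bit careful is the radial-projection step: the bound $\|x-x/\|x\||\le \varepsilon$ uses $|\|x\|-1|\le\varepsilon$, which is exactly why the annulus has to be taken with width $\varepsilon$ on each side (not, say, $\varepsilon/2$); matching this width against the $2\varepsilon$-radius covering balls is what produces the clean closed-form bound with $\frac{1}{2\varepsilon}\pm\frac{1}{2}$. After that the result is immediate and the rest of the argument invoking this lemma (deducing an exponential number of disjoint cones $\mathcal{C}_q(\bf{n}_j)$, hence the reduction to unstructured search of size $N$ exponential in $d$) is essentially bookkeeping.
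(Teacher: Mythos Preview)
Your proposal is correct and follows essentially the same approach as the paper's own proof: take a maximal $\varepsilon$-separated set on the sphere, use the triangle inequality via radial projection to show the $2\varepsilon$-balls cover the annulus $\{x:1-\varepsilon\le\|x\|\le 1+\varepsilon\}$, and conclude by volume comparison. The only cosmetic difference is that the paper phrases the construction as an iterative ``add points until you cannot'' argument rather than invoking Zorn's lemma, but the substance is identical.
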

\begin{proof}
Let the set of points be $A$, we consider an argument that as long as there is a point $\bf{x}$ satisfying $\min_{\bf{y}\in A}\|\bf{x} - \bf{y}\|\geq \varepsilon$, we add $\bf{x}$ to $A$.

If we cannot add any more point to $A$, the $N$ is possibly the largest. In this case, consider any point $\bf{z}$ with $1- \varepsilon \leq \|\bf{z}\|\leq 1+ \varepsilon$, there must exists $\bf{x}_{\bf{z}}\in A$ such that $\|\bf{x}_{\bf{z}} - \bf{z}/\|\bf{z}\|\| < \varepsilon$.
Note that $\|\bf{z} - \bf{z}/\|\bf{z}\|\| < \varepsilon$,
by the triangle inequality, we obtain $\|\bf{x} - \bf{z}\|\leq 2\varepsilon$.
This means that the balls $\{\mathbb{B}(\bf{x},2\varepsilon)\mid \bf{x}\in A\}$ cover the region $\{\bf{z}: 1- \varepsilon \leq \|\bf{z}\|\leq 1+ \varepsilon\}$, giving
\begin{align}
    N(2\varepsilon)^d \geq (1+\varepsilon)^d - (1-\varepsilon)^d,
\end{align}
which completes the proof.
\end{proof}

\subsection{Mollified functions}\label{append:mollified}
For any $r>0$, let $m_r\colon \mathbb{R}^d \to \mathbb{R}$ to be the mollifier function of width $r$, which is given by
\begin{align}
    m_r(x) = \left\{
    \begin{array}{l}
        \frac{1}{I_d} \exp\left( -\frac{1}{1-\|x/r\|^2 }\right),~x\in \mathbb{B}(0,r)  \\
          0,~\mathrm{otherwise},
    \end{array}
    \right.
\end{align}
where $I_d$ is the normalize constant such that $\int_{\mathbb{B}(0,r)}m_r(x)\d x = 1 $.
The mollification of $f$ by $m_r(x)$ is defined as
\begin{align}
    F_r(x) := (f*m_r)(x) = \int f(x-y)m_r(y)\d x.
\end{align}
The mollified functions satisfy the following desired property:
\begin{lemma}\label{lem:mollify}
Let $f\colon \mathbb{R}^d \to \mathbb{R}$ and $F_r = f*m_r$, then $F_r$ is infinitely differentiable and $F_r\to f_r~(r\to0)$.
\end{lemma}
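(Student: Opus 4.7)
The plan is to recognize this as the standard mollifier/approximation-to-the-identity theorem and verify the two assertions by essentially classical means, adapted to the hypotheses implicit in how $f$ is used in the paper (bounded, measurable, continuous except on a lower-dimensional set). I will interpret $F_r \to f_r$ in the statement as shorthand for $F_r \to f$ pointwise at points of continuity of $f$ (which is what subsequent arguments in the paper actually invoke after \lem{mollify}). No heavy machinery is needed; the substance is dominated-convergence bookkeeping.

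First I would establish the smoothness assertion. Writing the convolution as $F_r(x) = \int f(x-y)\,m_r(y)\,\d y = \int f(z)\,m_r(x-z)\,\d z$, I would differentiate under the integral sign in the second form. Because $m_r \in C_c^\infty(\mathbb{R}^d)$ with $\mathrm{supp}\,m_r \subseteq \overline{\mathbb{B}(0,r)}$, for every multi-index $\alpha$ and every compact $K \subset \mathbb{R}^d$ the partial derivative $\partial_x^\alpha m_r(x-z)$ is uniformly bounded in $x \in K$ and supported in the compact set $K + \overline{\mathbb{B}(0,r)}$. Since $f$ is (locally) bounded on bounded sets in all landscapes considered in the paper, $|f(z)\,\partial_x^\alpha m_r(x-z)|$ is dominated by a compactly supported $L^1$ majorant, so dominated convergence justifies passing $\partial_x^\alpha$ under the integral. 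Thus $F_r \in C^\infty(\mathbb{R}^d)$, with $\partial^\alpha F_r = f * \partial^\alpha m_r$.

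Next I would prove the convergence assertion. Because $\int m_r = 1$ and $m_r \geq 0$ is supported in $\overline{\mathbb{B}(0,r)}$,
\begin{equation}
F_r(x) - f(x) \;=\; \int_{\mathbb{B}(0,r)} \bigl(f(x-y) - f(x)\bigr)\,m_r(y)\,\d y,
\end{equation}
so $|F_r(x) - f(x)| \leq \sup_{\|y\| \leq r} |f(x-y) - f(x)|$. At any continuity point of $f$ this supremum tends to $0$ as $r \to 0$, giving pointwise convergence. On compact sets where $f$ is uniformly continuous the same bound yields uniform convergence. For the landscapes constructed in the paper (\eqref{eq:illuVx}, \eqref{eq:hardinstance}), $f$ is continuous off a set of Lebesgue measure zero, so the convergence holds almost everywhere and locally uniformly on the open regions where $f$ is smooth --- which is exactly what the body of the paper uses when it asserts that conclusions valid for $F_r$ can be transferred to $f$ by taking $r \to 0$.

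There is no real obstacle here; the only minor subtlety is being honest about the mode of convergence, since the statement as written is slightly ambiguous. I would therefore add one sentence clarifying that convergence is pointwise (and locally uniform on open continuity sets of $f$), which is the sense required downstream, and observe that this suffices because all subsequent uses of \lem{mollify} only exploit $F_r$ being a smooth function that matches $f$ away from a neighborhood of the discontinuity set of width $O(r)$.
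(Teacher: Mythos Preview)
Your proposal is correct and is essentially a fleshed-out version of the paper's argument: the paper's proof is a single sentence stating that the lemma ``follows from the properties of convolution and the fact $m_r$ is infinitely differentiable,'' and your write-up simply supplies those properties explicitly (differentiation under the integral for smoothness, the approximate-identity estimate for convergence). Your clarification about the mode of convergence is a reasonable addition given the slight ambiguity in the statement, but there is no substantive difference in approach.
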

\lem{mollify} follows from the properties of convolution and the fact $m_r$ is infinitely differentiable.

\subsection{Distribution of wells}\label{append:welldistr}
Here we show that in high dimensions, the distribution of wells
will affect coefficient of the exponential term $e^{S_0/h}$ in the
evolution time of QTW.
We assume that tunneling amplitude $w$ is the same and all local ground
states have the energy $\mu$. Thus, the only variable is the ``distribution".

Let us see the example that the wells form a ring. The interaction matrix can be written as
\begin{equation}
    H_{|\F} = \left(
    \begin{array}{cccccc}
        \mu & w &  &  &  & w \\
         w & \mu & w & &  & \\
           & w & \mu & w &  & \\
           &   &  \ddots& \ddots  & \ddots &  \\
           &   &  & w & \mu &  w\\
           w  & &  &  & w &  \mu\\
    \end{array}
    \right).
\end{equation}
We have
\begin{equation}
    E_k = 2|w|\cos\frac{k\pi}{N+1} + \mu,~k=1,2,\ldots,N.
\end{equation}
Therefore,
\begin{equation}
    \sum_{E_k=E_{k^{\prime}}} \z j|E_k\y \z E_k |i\y \z i |E_{k^{\prime}}\y \z E_{k^{\prime}}| j\y = \sum_{k}
   |\z E_k| i\y|^2 |\z E_k| j\y|^2 \geq \frac{1}{2N}.
\end{equation}
The last equality is secured by the fact that
\begin{equation}
    \z E_k |j\y \sim \frac{1}{\sqrt{N}}e^{-\frac{i k j \pi}{N+1}}.
\end{equation}
The energy gap can be calculated as
\begin{equation}
   \Delta E = 2|w|\left(\cos\frac{\pi}{N+1}-\cos\frac{2\pi}{N+1} \right) = \Omega\left(\frac{1}{N^2}\right)|w|.
\end{equation}
If we set $\tau > \frac{N}{\epsilon \Delta E}$ for a small constant $\epsilon>0$, then
\begin{equation}
   p(\tau,j|i)\geq \frac{1}{2N}\left(1-\frac{3N}{\tau \Delta E} \right) \geq \frac{1}{2N}\left(1-3\epsilon\right).
\end{equation}
And the time for one trial can be
\begin{equation}
   \frac{N}{\epsilon \Delta E} = O\left(\frac{N^3}{\epsilon} \right)\frac{1}{|w|}.
\end{equation}
Repeating trials for $1/p(\tau,j|i)$ times, we can hit the $j$th well with probability $\Omega(1)$, and the total time can be bounded by
\begin{equation}
   \frac{N}{\epsilon \Delta E p(\tau,j|i)} = O\left(\frac{N^4}{\epsilon} \right)\frac{1}{|w|}.
   \label{eq:442}
\end{equation}

Consider the most symmetric case, where the interaction matrix is given by
\begin{equation}
    H_{|\F} = \left(
    \begin{array}{cccc}
        \mu & w & \cdots &  w \\
         w & \mu & w & \vdots \\
          \vdots & w & \ddots & w \\
          w & \cdots  &  w& \mu  \\
    \end{array}
    \right).
    \label{eq:Hamsym}
\end{equation}
The eigenvalues are
\begin{equation}
    E_1 = \cdots = E_{N-1} = \mu - w,~E_N = \mu + (N-1)w,
\end{equation}
giving the energy gap $\Delta E = N|w|$.
The corresponding eigenstates can be given by
\begin{equation}
    |E_k\y  = \frac{1}{\sqrt{N}}\sum_{j}e^{i\frac{2\pi}{N}k j}|j\y,
\end{equation}
which gives
\begin{equation}
    \sum_{E_k=E_{k^{\prime}}} \z j|E_k\y \z E_k |i\y \z i |E_{k^{\prime}}\y \z E_{k^{\prime}}| j\y
    =\left\{\begin{array}{c}
         \frac{2}{N^2},~i \neq j,\\
         1-\frac{2(N-1)}{N^2},~i=j,
    \end{array} \right.
\end{equation}
and
\begin{equation}
   \sum_{k}
   |\z E_k| i\y|^2 |\z E_k| j\y|^2 = \frac{1}{N}.
\end{equation}
For $i\neq j$, we will have that
\begin{equation}
   p(\tau,j|i) \geq \frac{2}{N^2} -  \frac{4}{\tau \Delta E}\frac{N-1}{N^2}\geq \frac{2}{N^2}\left(1-2\epsilon\right),
\end{equation}
if $\tau > \frac{N}{\epsilon \Delta E}$ where $\epsilon$ is a small constant.
The time needed for one trial can be bounded by
\begin{align}
    \frac{N}{\epsilon \Delta E} = O(1/\epsilon)\frac{1}{|w|},
\end{align}
which is independent of $N$.
Repeating $1/p(\tau,j|i)$ times, we are able to hit the $j$th well with high probability and the total time is bounded by
\begin{align}
    \frac{N}{\epsilon \Delta E p(\tau,j|i)} = O(N^2/\epsilon)\frac{1}{|w|}.
    \label{eq:449}
\end{align}

Comparing \eq{449} with \eq{442}, the distribution of well
can reduce the coefficient before $1/|w|$ from $O(N^4/\epsilon)$ to $O(N^2/\epsilon)$,
which is meaningful especially when $N$ is large.

\section{Full Numerical Experiments}\label{append:fullnum}
In this appendix, we present detailed numerical results testing our theoretical analysis and demonstrating our quantum speedup claims.
All experiments are performed on
a classical computer (with Dual-Core Intel
Core i5 Processor, 16GB memory) using MATLAB 2020b.

In \append{numcomparison}, we run QTW and SGD on concrete examples
provided by \sec{illustration}. Initial state preparation is numerically explored and discussed. Subsequently, QTW is implemented
and compared with SGD, demonstrating the power of quantum tunneling
as expected by our theory.
In \append{numdimdep},
classical lower bound proved in \sec{separation} (i.e., \prop{provable-exp}) is tested for a specific classical algorithm, SGD.
At last, the dependence of QTW running time on quantum learning rate
$h$ is highlighted by a experiment in \append{numqlrdep},
supporting our result, \thm{informalQTW}.
For experiments involving QTW,
we only deal with low-dimensional landscapes due to the limitation of solving the Schr\"odinger equation by classical numerical methods.

For numerical integration of the Schr\"odinger equation,
we follow \cite{zhang2021quantum} and \cite{GM96}. We first discretize the space such that the Schr\"odinger equation is approximated by a linear system (see \append{quantumsimulation} for details of discretization),
\begin{align}
    i \frac{\d \Phi}{\d t} = \hat{H} \Phi,
\end{align}
where $\hat{H}$ is a matrix approximating the Hamiltonian and $\Phi$ is a complex valued vector simulating the wave function.
Then, we write $\Phi(t) = Q(t) + i P(t)$ for $Q$ and $P$ being the real and imaginary part of $\Phi$, respectively.
The discretized Schr\"odinger equation is equivalent to
a separable Hamiltonian system:
\begin{align}
    \frac{\d}{\d t}Q = \hat{H} P,\quad \frac{\d}{\d t}P = - \hat{H} Q.
\end{align}
In practice, we use a symplectic leap frog scheme to solve
this Hamiltonian system \cite{Fra20}. This also has connection to recent literature on symplectic optimization~\cite{betancourt2018symplectic,jordan2018dynamical}.

SGD refers to the iterative algorithm $x_{k+1} = x_k - s\nabla f(x_k) -s \xi_k$, where $f$ is the objective function and $\xi_k$ is the noise of the $k$th step.
Queries to the gradients of $f$ are permitted and $\xi_k$
is normally distributed with unit variance.

\subsection{Tests on quantum-classical comparisons}\label{append:numcomparison}
Our first set of experiments are conducted on one-dimensional
examples in \sec{illustration}: the critical case (\examp{critical}),
the case stressing flatness of minima (\examp{flatness}), and the case
stressing sharpness of barriers (\examp{sharp}).
Parameters of the landscapes used in \append{numcomparison} are defined in \sec{illustration}.

To make the experiments non-trivial and
still can be efficiently executed on our computer,
we choose the number of minima $N=3$.
In solving the discretized Schr\"odinger equation, we fix the spatial domain to be $\Omega = [0, 6a + 4b]$ and the mesh number to be 512, where $a$, $3a+2b$, $5a + 4b$ are the three minima.
The problem to be solved is specified as hitting $[4a + 4b, 6a + 4b]$ beginning at one minimum $a$.

With the knowledge of one minimum $a$,
we approximate the local ground state corresponding to $a$.
Let $\Omega_1$ be a neighborhood of $a$, the Hamiltonian restricted in the domain $\Omega_1$ is also discretized as a matrix $\hat{H}_{\Omega_1}$.
We solve the eigenvectors of $\hat{H}_{\Omega_1}$ and $\hat{H}$ in MATLAB.
By our theory, the ideal initial state $\ket{\Phi(0)}$ prepared should be in
the low energy subspace $\F$ spanned by the three eigenstates of $\hat{H}$ with smallest eigenvalues, i.e., $\ket{E_j},~j=1,2,3$.
Whether the ground state of $\hat{H}_{\Omega_1}$ is nearly in
$\F$ depends on the quantum learning rate $h$ and the region $\Omega_1$.
We numerically discuss error estimation of initial state preparation (local ground state preparation)
here as a supplement to \append{initial}.
To this end, we prepare the ground state of $\hat{H}_{\Omega_1}$
for small local region (i.e., $\Omega_1 = [0,2a]$), middle local region (i.e., $\Omega_1 = [0,2a+b]$), and large local region (i.e., $\Omega_1 = [0,2a+2b]$) under various $h$. The probability or overlap of $\ket{\Phi(0)}$ in $\F$ given by
\begin{align}
    |\ip{\Phi(0)}{E_1}|^2 + |\ip{\Phi(0)}{E_2}|^2 + |\ip{\Phi(0)}{E_3}|^2
\end{align}
is numerically calculated whose results are shown in \fig{overlapinF}.
\begin{figure}
\centering
\includegraphics[width=0.32\linewidth]{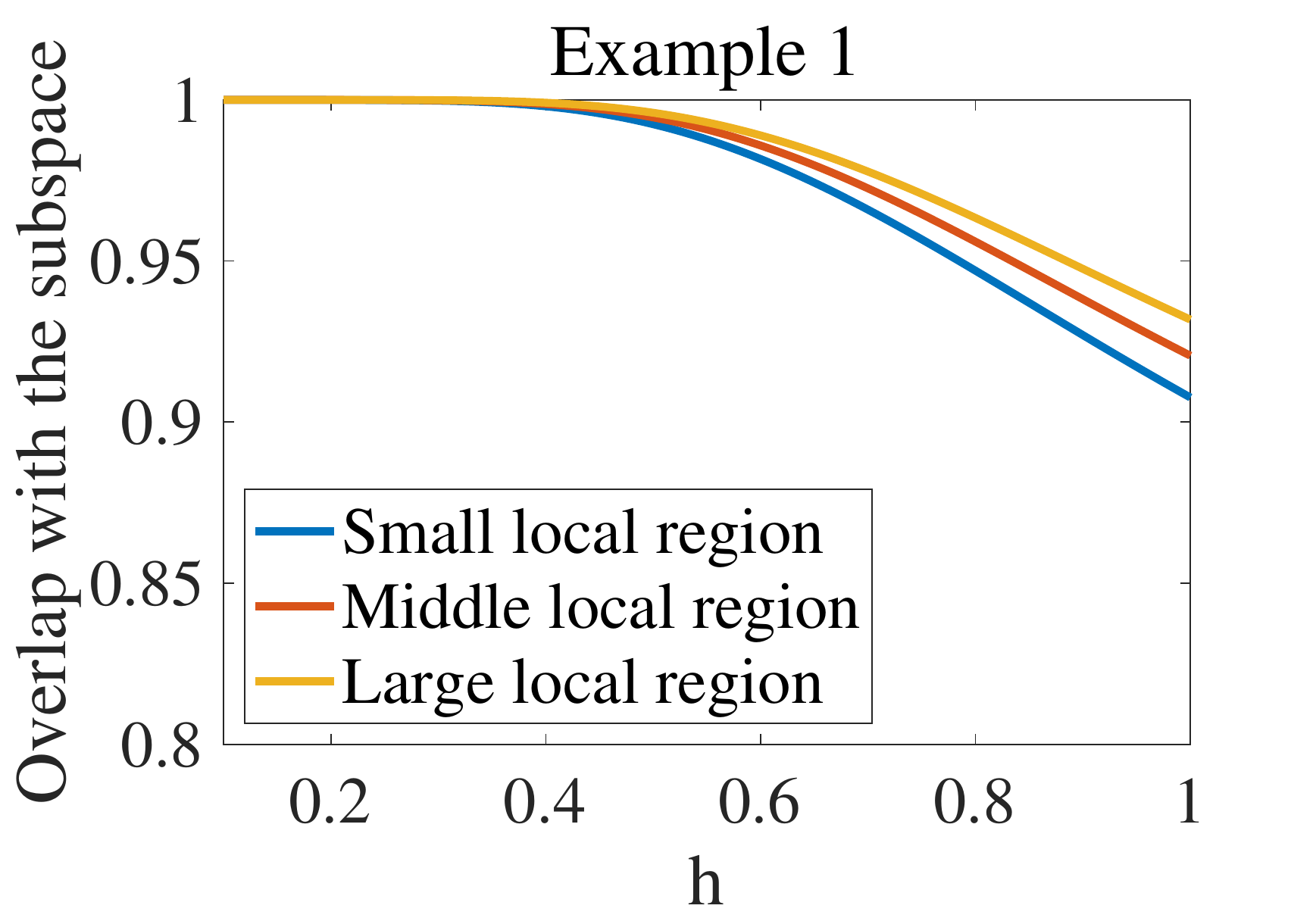}
\includegraphics[width=0.32\linewidth]{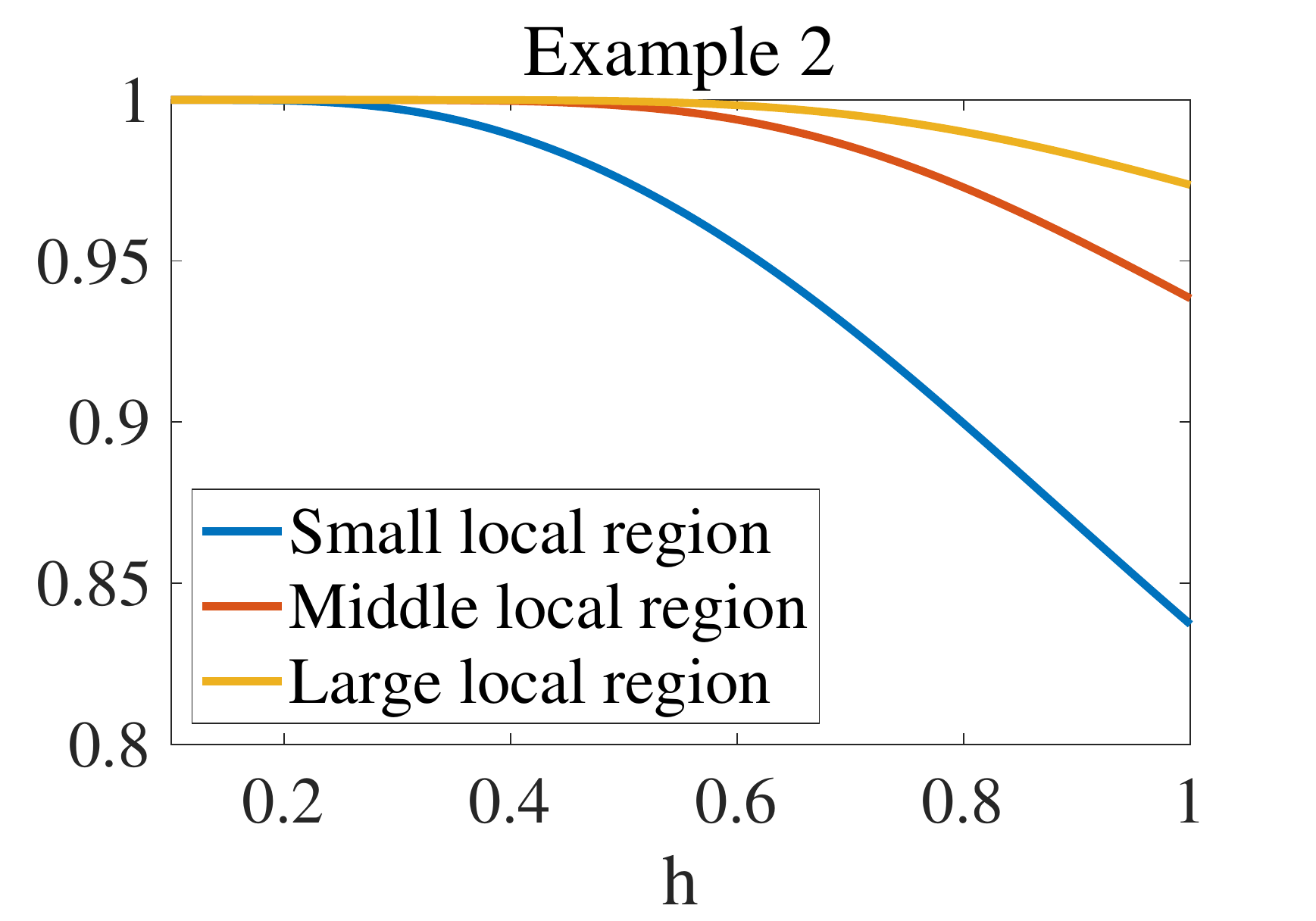}
\includegraphics[width=0.32\linewidth]{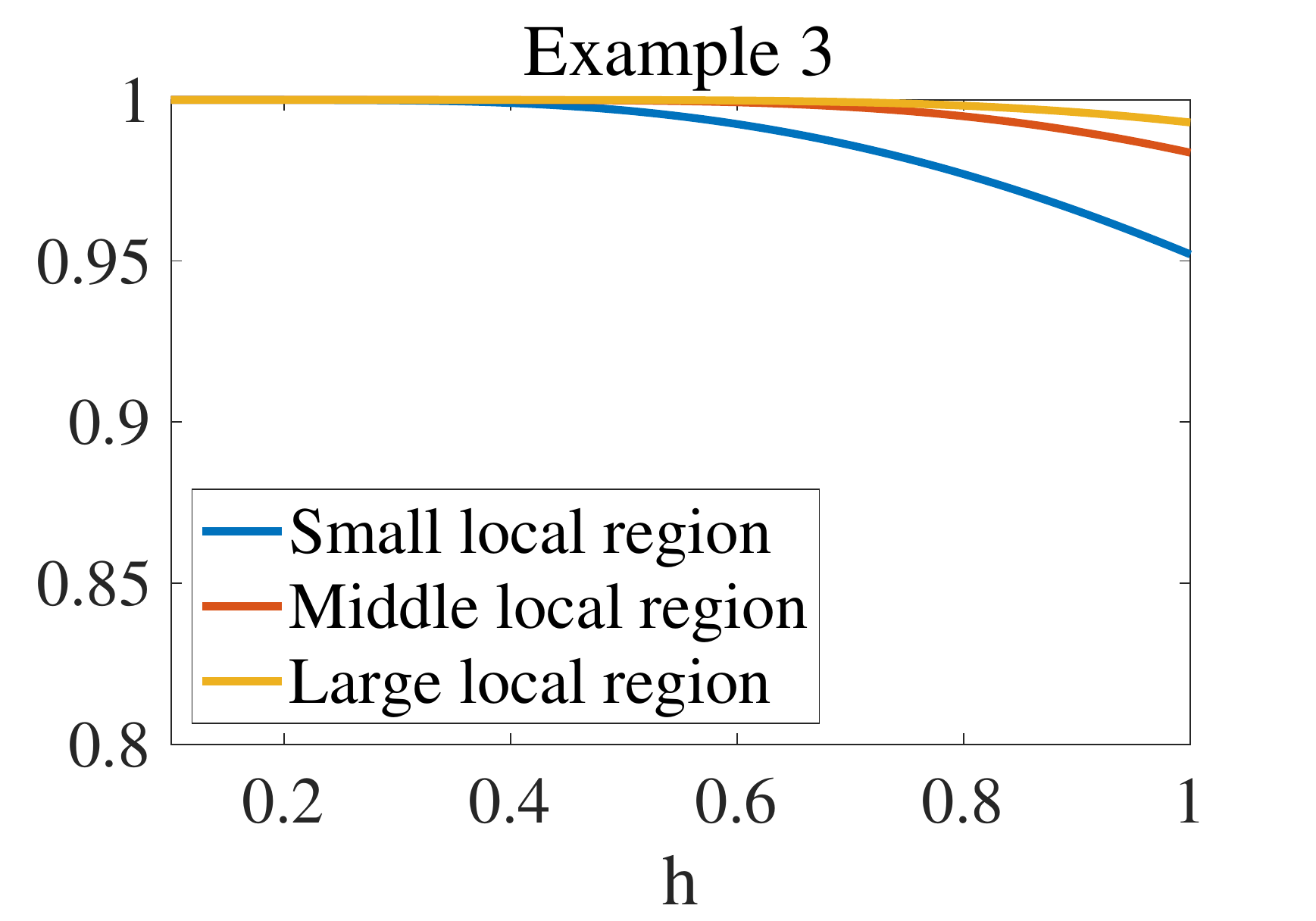}
\caption{Overlap between the local ground state and the low energy subspace $\F$: Example 1, 2, and 3 correspond to \examp{critical}, \examp{flatness}, and \examp{sharp}, respectively.}
\label{fig:overlapinF}
\end{figure}
For all examples, the larger $h$ is, the more unlikely the local state prepared $\Phi(0)$ is to be in $\F$.
For \examp{critical}, small, middle, and large local regions have similar results as $b\ll a$ in \examp{critical}.
Comparing \examp{critical} with \examp{flatness},
a thicker barrier (a larger $b$) yields lager differences
between small, middle, and large local regions.
However, if the barrier is high, as shown by \examp{sharp},
results for the three local regions can be close to each other again.
This may be accounted for by that wave functions decay rapidly in the barrier region, $[2a,2a+2b]$, and the local ground state concentrate in $[0,2a]$ regardless of whether $\Omega_1 = [0,2a]$ or $[0,2a+b]$ or $[a,2a+2b]$.

Due to the large running time cost of solving the Schr\"odinger equation on classical computers,\footnote{In this section, we use \emph{running time} to refer to the actual time spent by our classical computer to solve the problem, and we use \emph{evolution time} to denote the time variable in simulating QTW and SGD.} we
simulate QTW for each example choosing one $h$ and one initial state.
In practice, preparing the initial state in a smaller region may be
more convenient while producing a larger error.
We choose the small local region for each example to prepare the initial state, and the chosen $h$ for different examples are shown in \tab{choiceh}.
On the one hand, we need $h$ to be small to obtain an initial state largely staying in $\F$.
On the other hand, $h$ should not be too small, such that the running time is tractable on classical computers. As shown in \tab{choiceh}, all chosen $h$ manage to make at least 90\% of the initial state
be in $\F$.
\begin{table}
    \centering
    \begin{tabular}{c|c c c}
         & \examp{critical} & \examp{flatness} & \examp{sharp} \\\hline
       $h$  & 0.8 & 0.8 & 1.0 \\
       Overlap  & $\approx$ 0.95 & $\approx$ 0.90 & $\approx$ 0.95 \\
    \end{tabular}
    \caption{Choices of $h$ for different examples. Overlap denotes the overlap of the initial state prepared
    in the small local region with the subspace $\F$.}
    \label{tab:choiceh}
\end{table}

Having obtained $h$ and the prepared initial state $\Phi(0)$ for each
example, we need to determine the time $\tau$ for QTW.
Recall that the evolution time of one QTW trial, $t$, is uniformly chosen from
$[0,\tau]$ (see \sec{QTW}), and we repeat the trials until success.
As long as $\tau$ is large enough, the probability of success in one trial will be larger than some constant (see \sec{Qmixing} and \sec{Qhitting}).
Practically, we determine $\tau$ by testing a geometric series (this is a common trick in randomized algorithms), i.e., $\tau=1,2,4,8,\ldots$ and repeat the experiments for many times. If the success probability is too small, we just double $\tau$ until we reach a value where the success probability is decent.
For simplicity, we discuss and determine $\tau$ here by directly observe
the evolution of $\Phi(t)$.
\fig{illuevoExp1} illustrates
a typical quantum tunneling process on the landscape \examp{critical}.
The time $288$ when the wave function concentrates in the target well $[4a+ab,6a+4b]$ is defined as the characteristic time $T^{\star}_1$ for \examp{critical}.
Similarly, for \examp{flatness} and \examp{sharp}, we can find
the characteristic time $T^{\star}_2 = 800$ and $T^{\star}_3 = 600$, respectively.

\begin{figure}
    \centering
    \includegraphics[width=\linewidth]{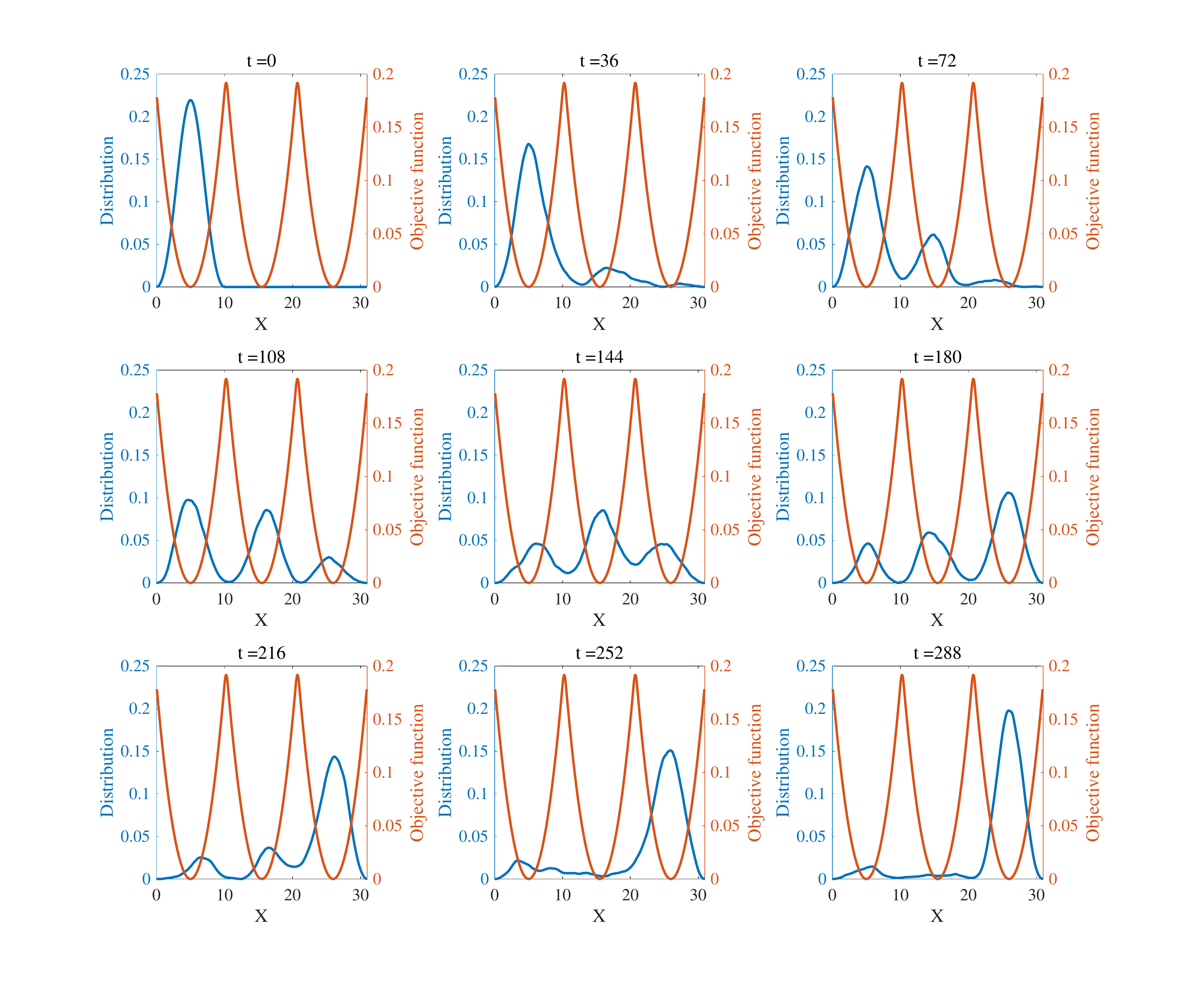}
    \caption{Evolution of the wave function on \examp{critical}:
    the blue line represents the probability distributing $|\ip{\Phi(t)}{x}|^2$
    and the red line is the landscape $f$.}
    \label{fig:illuevoExp1}
\end{figure}
Setting $\tau$ as the half characteristic time, the characteristic time, and the double characteristic time, respectively, we repeat our experiments and obtain the average hitting time.
Rigorously, we use an \emph{experiment} to denote a process repeating trials until
successfully hitting $[4a+ab,6a+4b]$.
A \emph{trial} initiates the simulation at $\Phi(0)$ and
measures the position at $t$ randomly chosen from $[0,\tau]$.
The average hitting time means the
average total evolution time of different experiments. Results of average hitting time are presented in \tab{choisetau}, each of which is obtained from 20 experiments.

\begin{figure}
\centering
\includegraphics[width=0.32\linewidth]{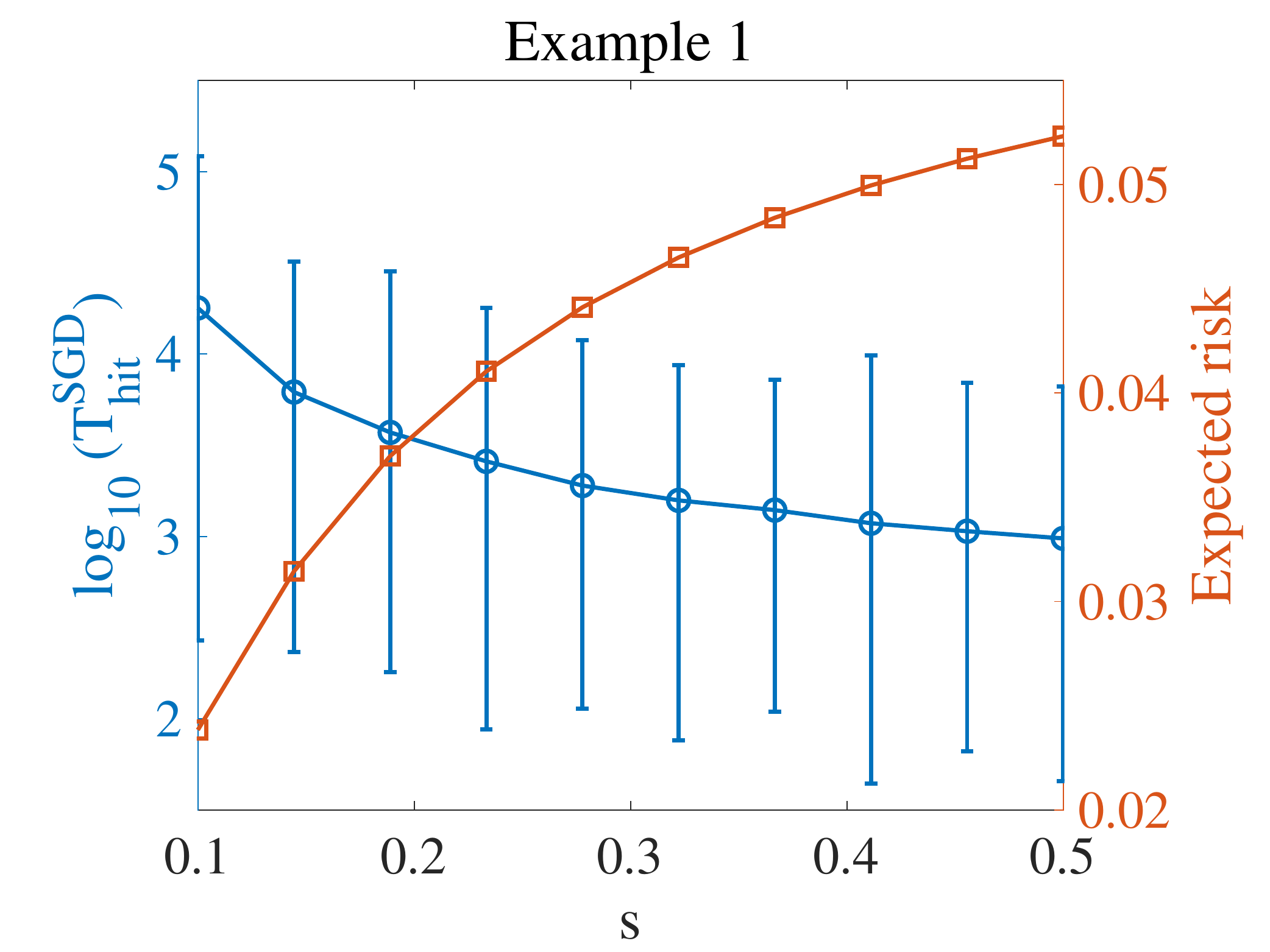}
\includegraphics[width=0.32\linewidth]{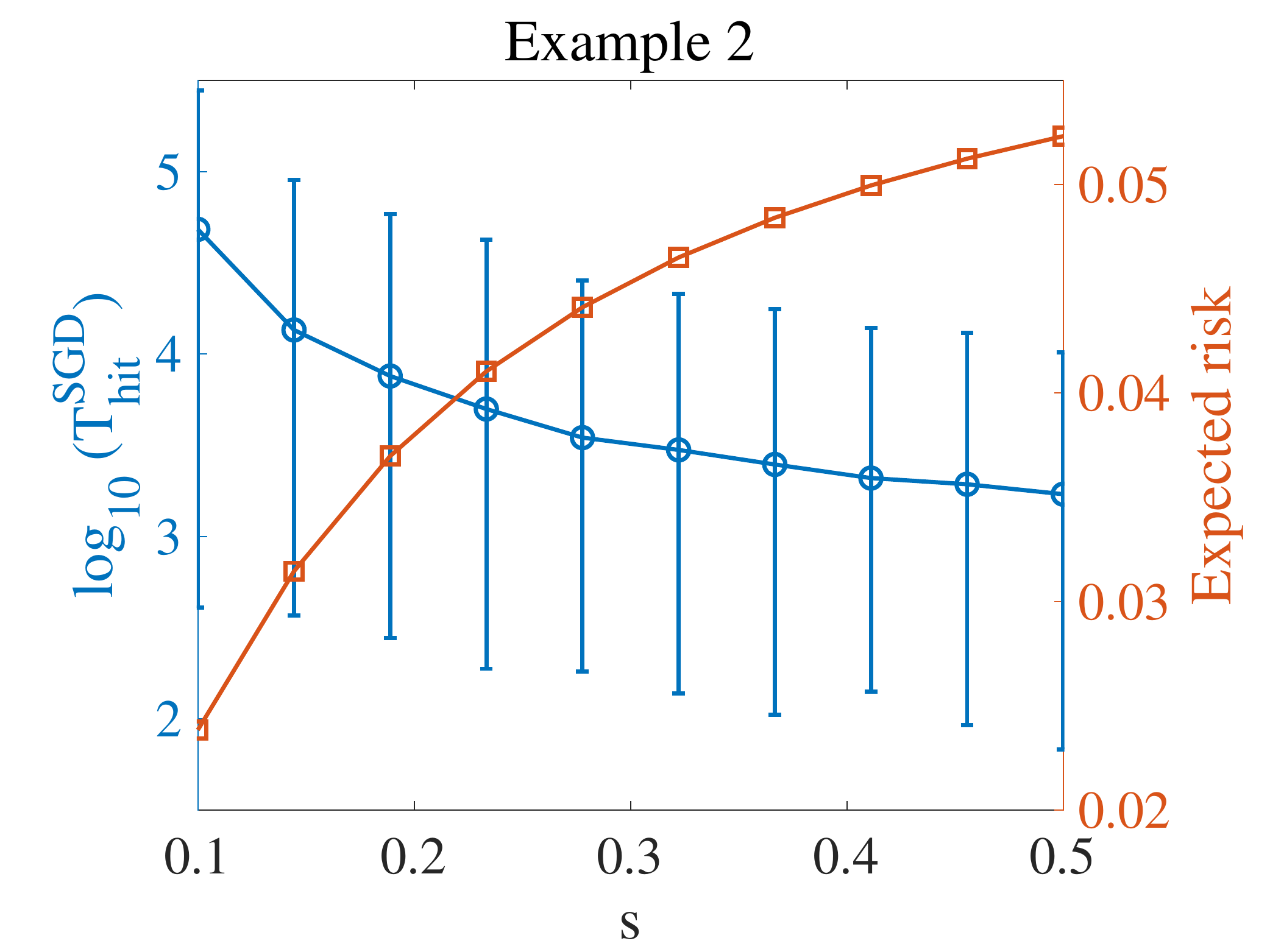}
\includegraphics[width=0.32\linewidth]{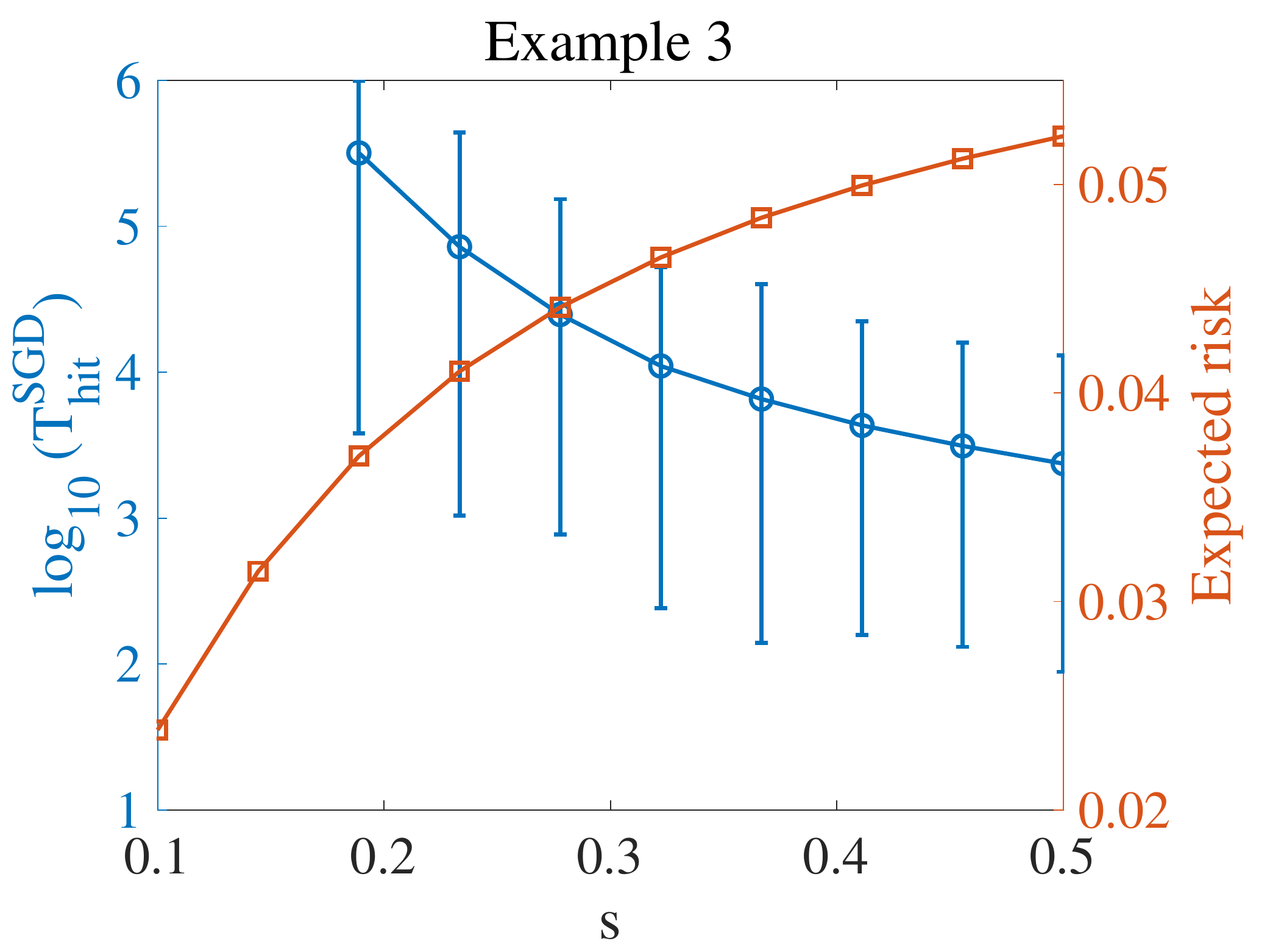}
\caption{Dependence of the SGD hitting time $T^{\rm SGD}_{\rm hit}$ and the SGD expected risk on the learning rate $s$. Each experiment is repeated for 1000 times, the (blue) circles denote the average hitting time, and one error bar represents the range of the corresponding 1000 data.}
\label{fig:lrdepenExp}
\end{figure}

\begin{table}
    \centering
    \begin{tabular}{l|c c c}
         & \examp{critical} & \examp{flatness} & \examp{sharp} \\\hline
       $\tau = T^{\star}/2$ (half characteristic time)  & $655 \pm 648$ & $1372 \pm 953$ & $708\pm 519$ \\
       $\tau = T^{\star}$ (characteristic time) & $409\pm 259$ & $1184 \pm 561$& $758 \pm 471$ \\
       $\tau = 2T^{\star}$ (double characteristic time) & $759 \pm 489$ & $1623 \pm 1010$ & $1464\pm 1067$
    \end{tabular}
    \caption{Average hitting time obtained from 20 experiments for each $\tau$ and each example.}
    \label{tab:choisetau}
\end{table}
Theoretically, when $\tau$ is large enough, the probability of successfully
hitting in one trial is near a constant $p_{\rm suc}$, so that the number of trials needed in one experiment is also approximately a constant $1/p_{\rm suc}$.
The expected time for each trial is $\tau/2$. Thus, for large enough $\tau$,
the expected time of one experiment is approximately $\tau/p_{\rm suc} \propto \tau$.
Based on results of \tab{choisetau}, the characteristic time $T^{\star}$
is large enough, and is chosen as $\tau$ in later experiments.

Having specified the values of $h$ and $\tau$,
we conduct 1000 experiments for each example and study the distribution of hitting time.
The statistics of the results are shown in \fig{histogramsExp123}.

Next, we proceed by studying the hitting time of SGD.
Taking fair comparisons into consideration, we employ \stand{risk} in
\sec{standard} which equalizes the expected risks yielded by SGD and QTW.
The expected risk of QTW is numerically determined by
calculating
\begin{align}
   \frac{1}{T}\int_{0}^T \Big(\int_{\Omega} f(x)|\ip{\Phi(t)}{x}|^2 \d x\Big)\d t
   \label{eq:numquanexprisk}
\end{align}
for $T = 1000$.
The expected risk st $t$, $\int_{\Omega} f(x)|\ip{\Phi(t)}{x}|^2 \d x$,
oscillates rapidly with respect to time but converges to a fixed value; see for instance \fig{2dimexpectedrisk} for a one-dimensional case. Thus, we pick a long enough time $T$ to calculate an expected risk for all time.
For \examp{critical}, \examp{flatness}, and \examp{sharp}, the expected risks given by \eq{numquanexprisk} are $0.0324\pm 0.0054$, $0.0222 \pm 0.0051$, and $0.0377 \pm  0.0070$, respectively.
Using the Gibbs distribution \eq{Gibbs} $\mu_{\rm SGD}$ for the calculation of
the expected risk
\begin{align}
    \int \mu_{\rm SGD}(x)f(x)\d x,
\end{align}
we can plot the relationship between the expected risk and the learning rate $s$
as shown in \fig{lrdepenExp}.

For each example, we calculate the hitting time for various $s$, which is also plotted in \fig{lrdepenExp}.
Theoretically, $\ln T^{\rm SGD}_{\rm hit} \sim 2H_f/s$ with $2H_f$ being the Morse saddle barrier of $f$, which agrees well with the results in \fig{lrdepenExp}.

To make the expected risks of QTW and SGD equal,
we set learning rates $s =0.1525$, $s = 0.129$, and $s = 0.189$ for
\examp{critical}, \examp{flatness}, and \examp{sharp}, respectively.
For each example and corresponding learning rate $s$, we
run SGD for 1000 times and draw the histogram of SGD hitting time in
\fig{histogramsExp123}.
We use $T^{\rm QTW}_{\rm hit}$ and $T^{\rm SGD}_{\rm hit}$
to denote the evolution time of one experiment for QTW and SGD, respectively.
In \fig{histogramsExp123}, the histograms compare $T^{\rm QTW}_{\rm hit}$ with $T^{\rm SGD}_{\rm hit}/10$, and
all presented examples demonstrate that QTW is faster.
The number of quantum queries is approximately
$\tilde{O}(\|f\|_{L^{\infty}_{\Omega}} T^{\rm QTW}_{\rm hit})$ and the number of classical queries is $\Omega(T^{\rm SGD}_{\rm hit}/s)$.
In addition, in the three examples $\|f\|_{L^{\infty}_{\Omega}} \leq 0.85$ and $s<0.2$, and
quantum advantage exists in terms of query complexity.

This result matches our theory at large.
For \examp{critical}, we make direct comparison between the exponential terms $e^{S_0/h}$ and $e^{2H_f/s}$,and to remove the coefficients in front of them,
we divide $T^{\rm SGD}_{\rm hit}$ by $10$ such that $T^{\rm SGD}_{\rm hit}/10$ has similar distribution t $T^{\rm QTW}_{\rm hit}$ for \examp{critical}.
In this way, we observe that whether $T^{\rm SGD}_{\rm hit}/10$ is relatively larger than $T^{\rm QTW}_{\rm hit}$ is determined only by $e^{S_0/h}$ and $e^{2H_f/s}$.

For \examp{flatness}, $T^{\rm QTW}_{\rm hit}$ is not much smaller than $T^{\rm SGD}_{\rm hit}/10$, which is not completely coherent with our theory.
This result can be explained as that for \examp{flatness}, the quantum learning rate $h$ is not small enough such that the initial state prepared does not well stay near a low energy subspace.
Specifically, \tab{choiceh} shows that initial states of both \examp{critical} and \examp{sharp} are largely
in $\F$ (more than 95\%), while about 90\% of the initial state of \examp{flatness} overlaps with
$\F$.
Experiments on \examp{flatness} suggest that higher energy may not be able to help quantum tunneling to run faster.

For \examp{sharp}, significant quantum speedup is achieved as expected. As shown by \fig{histogramsExp123}, $T^{\rm SGD}_{\rm hit}/10$ is even several orders of magnitude larger than $T^{\rm QTW}_{\rm hit}$.

\subsection{Dimension dependence}\label{append:numdimdep}
\begin{figure}
\centering
\includegraphics[width=\linewidth]{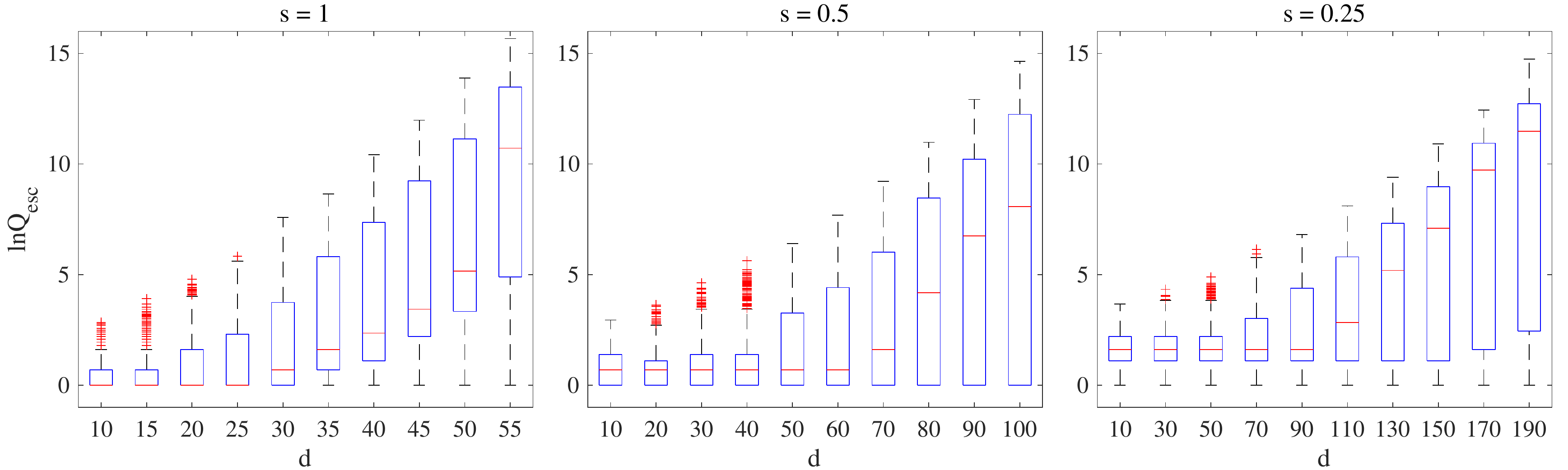}
\caption{Distribution of the number of steps $Q_{\rm esc}$ needed to escape $S_{\mathbf{v}}$: for each $s$ and each $d$, 1000 samples are gathered.}
\label{fig:boxlnT}
\end{figure}
Here we examine part of \sec{separation} by testing SGD and the classical lower bound.

The classical lower bound in \sec{separation} ensures that
for any $s$, SGD cannot cannot escape from $S_{\mathbf{v}}$ with subexponential queries with high probability.
Based on the constructed landscape \eq{hardinstance} with parameters specified by $a=1$, $R = 4\sqrt{2}a$, $w=a/2$, $\omega = 0.5$, $b = 1.4$, $H_1 = H_0$, and $H_2 = 20H_0$, we test SGD with different learning rates ($s\in [0.1,1]$)
in various dimensions ($d \in [15,95]$).
For each dimension and each $s$, 1000 experiments are conducted.
The number of steps spent to escaping from $S_{\mathbf{v}}$ in one experiment is denoted as $Q_{\rm esc}$.

The distributions of $Q_{\rm esc}$ under different dimensions for $s = 1$, 0.5, and 0.25 are shown in \fig{boxlnT}.
With the increase of $d$, all characterized values of the distribution gradually grows exponentially in terms of $d$, supporting our theory.

We also present the relationship between the average $Q_{\rm esc}$ and the dimension $d$ in \fig{dimdepenmean}.
A counterintuitive fact is that for a fixed dimension $d$, SGD with larger $s$ is more difficult to escape $S_{\mathbf{v}}$, which needs further explanation.
For each fixed learning rate $s$, we observe that with the increase of $d$, the average $Q_{\rm esc}$ remains constant initially and then increase exponentially with respect to $d$.
Increasing $s$ yields a smaller initial constant but larger exponential rate.
Nevertheless, for all $s$, $Q_{\rm esc}$ eventually increases
exponentially with respect to $d$.
With the prediction in \sec{separation}, if the number of queries is of the order $e^{\frac{dw^2}{2R^2}} = e^{d/256}$, the probability of escaping $S_{\mathbf{v}}$ will no longer exponentially small.
The lower bound of average $Q_{\rm esc}$ increases larger than $e^{d/256}$, supporting the prediction.

\subsection{Quantum tunneling and the quantum learning rate \texorpdfstring{$h$}{h}}\label{append:numqlrdep}
In QTW, the quantum learning rate $h$ is one of the most important variables.
\thm{informalQTW} gives a general relationship between $h$ and the evolution time of QTW.
We further test the relationship on the landscape constructed in \sec{separation} (dimension $d=2$) with specified parameters given in the same as \append{numdimdep}: $a=1$, $R = 4\sqrt{2}a$, $w=a/2$, $\omega = 0.5$, $b = 1.4$, $H_1 = H_0$, and $H_2 = 20H_0$. The Schr\"odinger equation is solved in the region $\Omega = \{(x,y):|x|<4, |y|<4\}$
and the mesh number is 399 on each edge.

Since the landscape has two symmetric wells, the time for tunneling from one well to the other, $T_{\rm half}$, is explicitly linked to $\Delta E$, i.e., $T_{\rm half} = \pi/\Delta E$.
On this concrete landscape, $\Delta E$ can be predicted using \lem{provablenu} and note that $\Delta E = 2|\nu|$, we have
\begin{align}
    \ln T_{\rm half} = \frac{S_0}{h} - \frac{1}{2}\ln\frac{2h}{\pi} + \ln\frac{\pi}{2} -\frac{1}{2}\ln\frac{2H_1 \omega^2}{4\sqrt{H_1}/b}
    -\frac{2\omega(b-a)}{\sqrt{2H_1}} + 4\ln\frac{b}{a}.
\end{align}

The initial state is prepared in the small region $\Omega_1 = \{(x,y):|x|<2, |y|<2\}$. Varying $h$, the overlap between the initial state and the low-energy subspace $\F$ is shown in \fig{overlapExp4}.
We choose $h\in[0.16,0.28]$ such that the overlap is approximately in $[0.85, 0.95]$.
\begin{figure}
\centering
\includegraphics[width=0.4\linewidth]{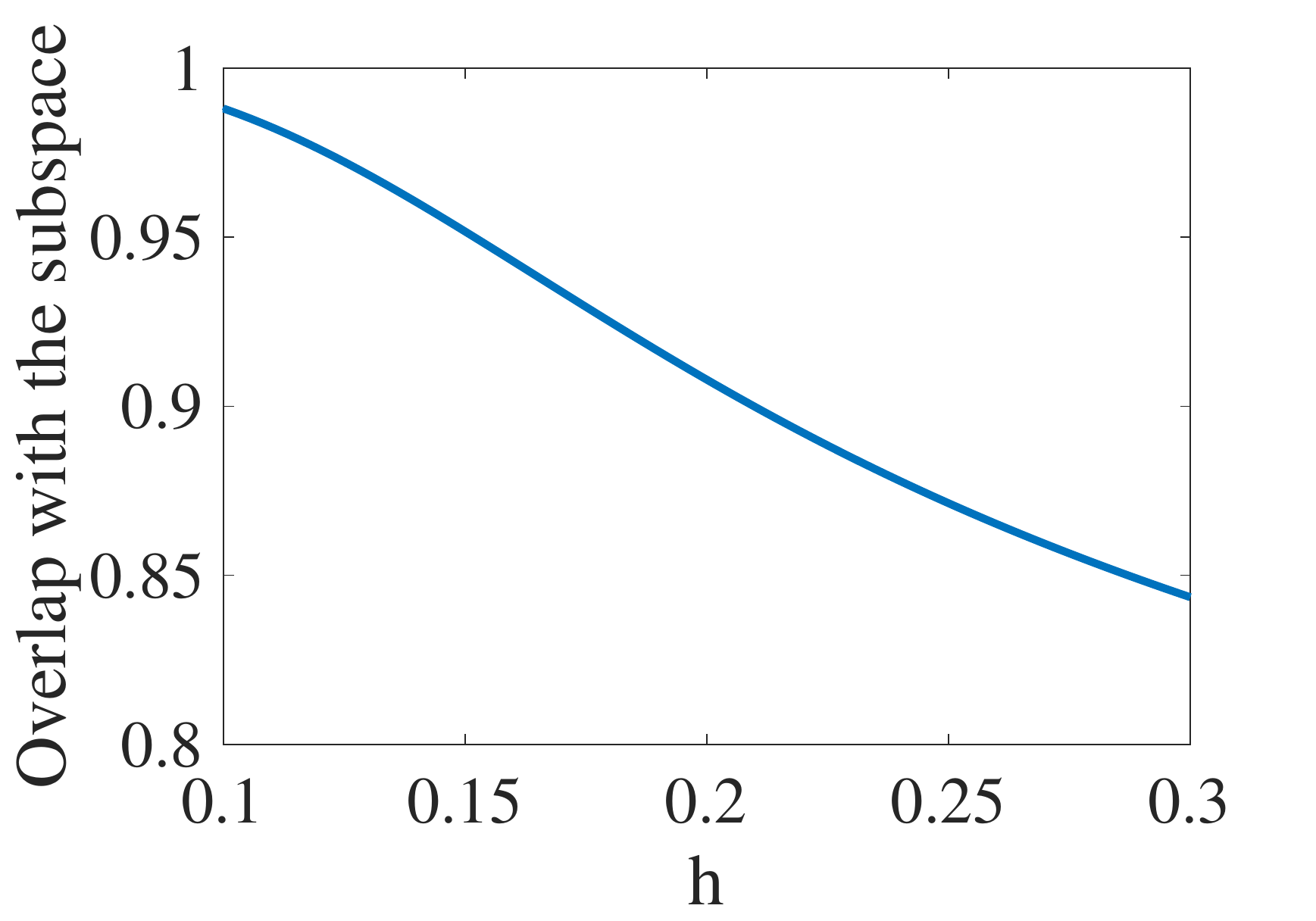}
\caption{The relationship between the overlap and $h$.}
\label{fig:overlapExp4}
\end{figure}

Starting from one well, we stop when the probability of finding
the other well exceeds 90\% and record the evolution time as $T_{\rm half}$. The probability of finding
the other well is numerically calculated based on
\begin{align}
    \int_{W_+} |\ip{x}{\Phi(t)}|^2 \d x.
\end{align}
For $h = 0.2$, $T_{\rm half}$ is approximately 200, and we show the
evolution of the wave function in \fig{h0.2t2002}.
\begin{figure}[!b]
    \centering
    \includegraphics[width=\linewidth]{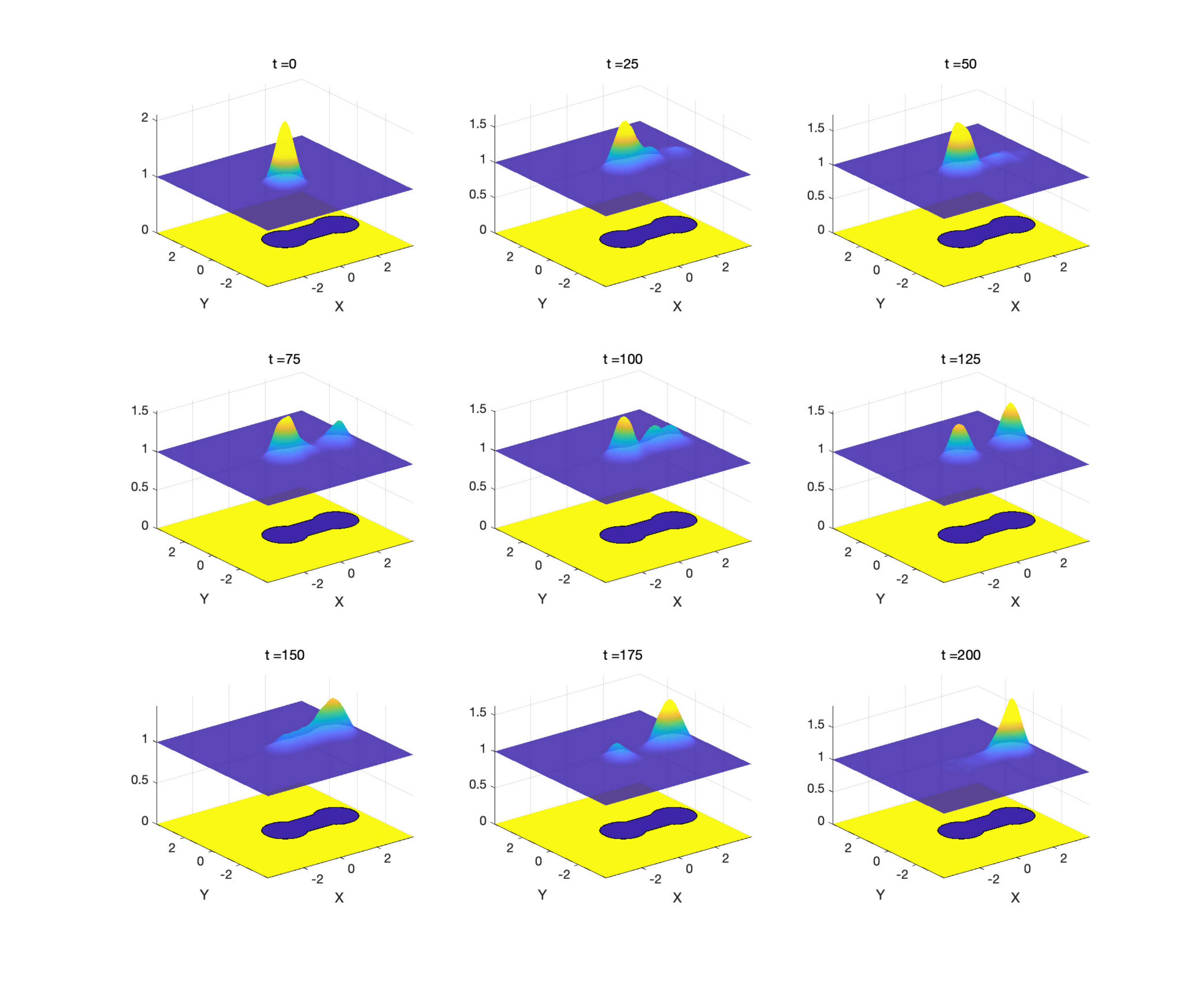}
    \caption{Wave function evolution for $h=0.2$. The upper part of all the nine figures shows the wave packet at different time and the lower part is the contour map of the landscape showing $W_-$, $B_{\mathbf{v}}$, and $W_+$.}
 \label{fig:h0.2t2002}
\end{figure}

The expect risk
\begin{align}
    \int_{\Omega} f(x) |\ip{x}{\Phi(t)}|^2 \d x
\end{align}
with respect to time is plotted in \fig{2dimexpectedrisk}, which oscillates rapidly but maintains to be near 0.065.

Numerical results on $T_{\rm half}$ are shown in \fig{qlrdepend}. The results match our theory
except a constant difference between the predicted and experimental $\ln T_{\rm half}$, indicating the correctness of $\ln T_{\rm half} = \frac{S_0}{h} - \frac{1}{2}\ln h + \cdots$.
The constant difference emerges because we stop evolution when the probability of tunneling exceeds 90\%, while the theoretical $T_{\rm half}$ takes the time when the probability is nearly 100\%.
Anyway, in studying the dependence of $T_{\rm half}$ on $h$, a constant coefficient is insignificant.

\begin{figure}[htbp]
\centering
\includegraphics[width=0.4\linewidth]{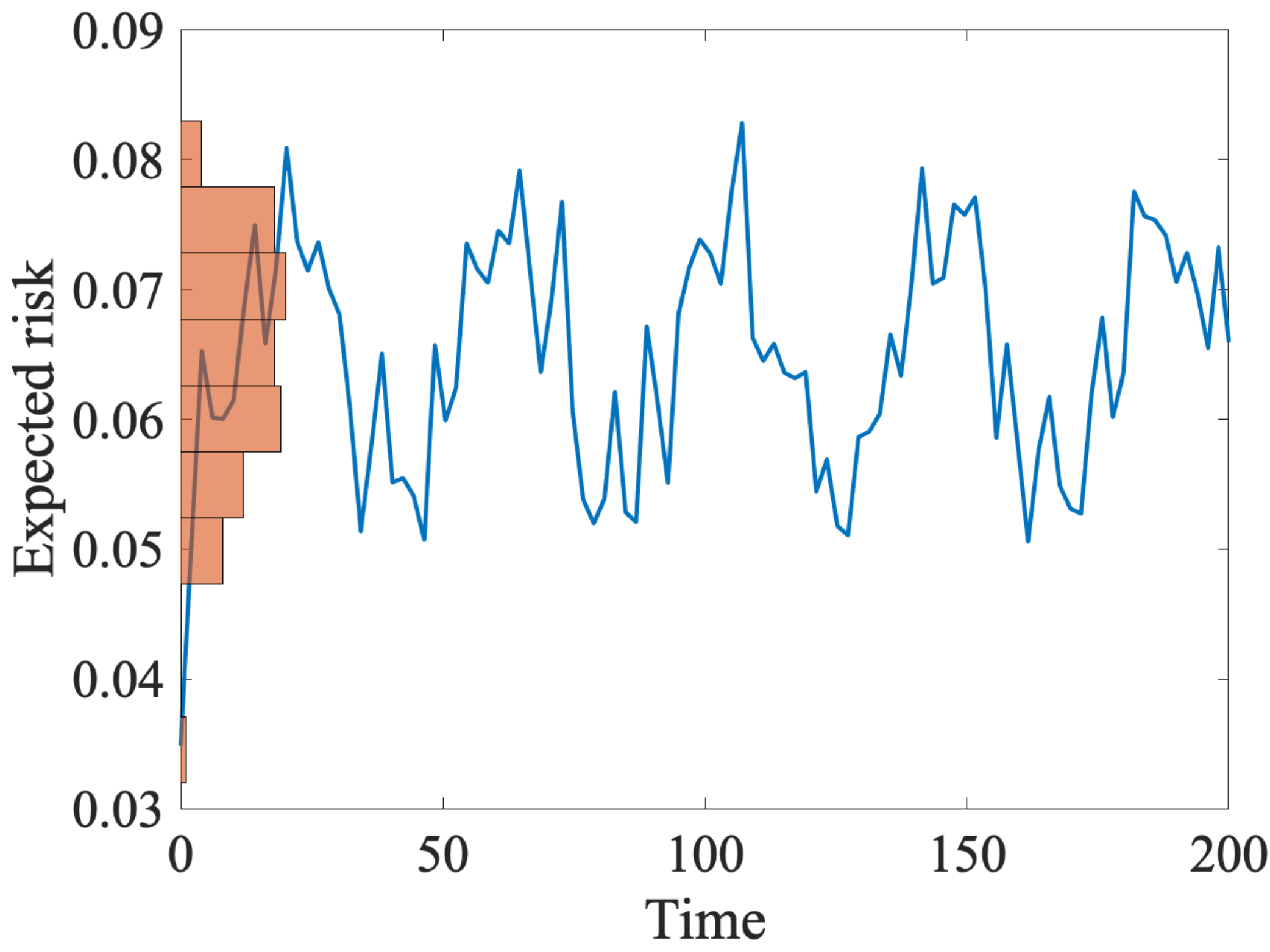}
\caption{Expected risk when $h=0.2$ with total running time 200.}
\label{fig:2dimexpectedrisk}
\end{figure}


\end{document}